\newcommand*\owedge{\mathpalette\@owedge\relax}
\newcommand*\@owedge[1]{%
	\mathbin{%
		\ooalign{%
			$#1\m@th\bigcirc$\cr
			\hidewidth$#1\m@th\wedge$\hidewidth\cr
		}%
	}%
}
\newcounter{mnotecount}
\newcommand{\mnote}[1]
{\protect{\stepcounter{mnotecount}}$^{\mbox{\tiny
			$\,\bullet$\themnotecount}}$ \marginpar{
		\raggedright\tiny\em
		$\,\bullet$\themnotecount: #1} }
\newtheorem{teo}{Theorem}[section]
\newtheorem{cor}[teo]{Corollary}
\newtheorem{prop}[teo]{Proposition}
\newtheorem{lema}[teo]{Lemma}
\newtheorem{defi}[teo]{Definition}
\newtheorem{rmk}[teo]{Remark}
\newtheorem{nota}[teo]{Notation}
\newtheorem{con}[teo]{Conjecture}
\newcommand{\douwidehat}[2]{%
	\sbox0{$\m@th#1\widehat{\hphantom{#2}}$}%
	\sbox2{$\m@th#1x$}
	\sbox4{$\m@th#1#2$}
	\dimen0=\ht0
	\advance\dimen0 -.8\ht2
	\dimen2=\dp4
	\rlap{%
		\raisebox{\dimexpr\dimen0-\dimen2}{%
			\scalebox{1}[-1]{\box0}%
		}%
	}%
	{#2}%
}
\renewcommand\part{%
	\if@openright
	\cleardoublepage
	\else
	\clearpage
	\fi
	\thispagestyle{empty}%
	\if@twocolumn
	\onecolumn
	\@tempswatrue
	\else
	\@tempswafalse
	\fi
	\null\vfil
	\secdef\@part\@spart}
\renewcommand{\H}{\mc H}
\newcommand{\uH}{\mc{\ul H}}
\newcommand{\pH}{\partial\mc H}
\newcommand{\puH}{\partial\mc{\ul H}}
\newcommand{\kt}{\mf{K}}
\newcommand{\ktt}{\mathbb{k}}
\renewcommand{\k}{\mc{K}}
\newcommand{\mS}{\mathbb{\Sigma}}
\newcommand{\wh}{\widehat}
\newcommand{\ul}{\underline}
\newcommand{\elltwo}{\ell^{(2)}}
\newcommand{\ntwo}{n^{(2)}}
\newcommand{\nablacero}{\accentset{\circ}{\nabla}}
\newcommand{\bY}{\textup{\textbf Y}}
\newcommand{\Y}{\textup{Y}}
\newcommand{\bT}{\textup{\textbf T}}
\newcommand{\T}{\textup{T}}
\newcommand{\bZ}{\textup{\textbf Z}}
\newcommand{\bz}{\textup{\textbf z}}
\newcommand{\z}{\textup{z}}
\newcommand{\Z}{\textup{Z}}
\newcommand{\bU}{\textup{\textbf U}}
\newcommand{\U}{\textup{U}}
\newcommand{\bF}{\textup{\textbf F}}
\newcommand{\F}{\textup{F}}
\newcommand{\br}{\textup{\textbf r}}
\renewcommand{\r}{\textup{r}}
\newcommand{\bs}{\textup{\textbf s}}
\newcommand{\s}{\textup{s}}
\newcommand{\bK}{\textup{\textbf K}}
\newcommand{\K}{\textup{K}}
\newcommand{\bPi}{\bm\Pi}
\newcommand{\bg}{\bm{\gamma}}
\newcommand{\mf}{\mathfrak}
\newcommand{\real}{\mathbb R}
\newcommand{\tr}{\operatorname{tr}}
\newcommand{\R}{\bm{\mc R}}
\newcommand{\st}{\stackrel}
\newcommand{\hess}{\operatorname{Hess }}
\newcommand{\para}{\parallel}
\newcommand{\Rcero}{\accentset{\circ}{R}}
\newcommand{\X}{\mathfrak{X}}
\renewcommand{\d}{\coloneqq}
\newcommand{\riem}{\operatorname{Riem}}
\newcommand{\ric}{\operatorname{Ric}}
\newcommand{\ein}{\operatorname{Ein}}
\newcommand{\lie}{\mathcal{L}}
\newcommand{\mc}{\mathcal}
\renewcommand{\to}{\longrightarrow}
\title{\vspace*{-1.35cm}\textbf{Killing and homothetic initial data for general hypersurfaces}}
\author{Marc Mars\footnote{\href{mailto:marc@usal.es}{marc@usal.es}}\,\, and Gabriel Sánchez-Pérez\footnote{Corresponding author, \href{mailto:gasape21@usal.es}{gasape21@usal.es}} \\
	Departamento de Física Fundamental, Universidad de Salamanca\\
	Plaza de la Merced s/n, 37008 Salamanca, Spain}
\date{\today}
\begin{document}
	\maketitle
	\begin{abstract}
In this paper we present a collection of general identities relating the deformation tensor $\k=\lie_{\eta}g$ of an arbitrary vector field $\eta$ with the tensor $\Sigma=\lie_{\eta}\nabla$ on an abstract hypersurface $\mc H$ of any causal character. As an application we establish necessary conditions on $\mc H$ for the existence of a homothetic Killing vector on the spacetime where $\mc H$ is embedded. The sufficiency of these conditions is then analysed in three specific settings. For spacelike hypersurfaces, we recover the well-known homothetic KID equations \cite{coll1977evolution,garcia2019conformal} in the language of hypersurface data. For two intersecting null hypersurfaces, we generalize a previous result \cite{chrusciel2013kids}, valid for Killings, to the homothetic case and, moreover, demonstrate that the equations can be formulated solely in terms of the initial data for the characteristic Cauchy problem, i.e., without involving a priori spacetime quantities. This puts the characteristic KID problem on equal footing with the spacelike KID problem. Furthermore, we highlight the versatility of the formalism by addressing the homothetic KID problem for smooth spacelike-characteristic initial data. Other initial value problems, such as the spacelike-characteristic with corners, can be approached similarly.
	\end{abstract}
	\section{Introduction}
	
Killing and homothetic vector fields are of an utmost importance in general relativity, as they encode the symmetries of the spacetime, providing crucial insights into its physical and geometrical properties. Killing vectors are associated with isometries of the spacetime, leading to conserved quantities and corresponding, for instance, to equilibrium states of the gravitational field. Homothetic vectors describe scale invariances, allowing the study of self-similar solutions of the Einstein equations \cite{rodnianski}, which arise, for instance, in critical gravitational collapse \cite{gundlach2007critical}. Homotheties also play a prominent role in the Fefferman-Graham approach to conformal geo\-metry by means of the ambient space construction \cite{fefferman2012ambient}, which is also of significant importance in several physically motivated research areas such as holography (see \cite{parisini2024ambient} and references therein). One particularly interesting problem is how to encode the existence of a symmetry at the initial data level. This allows for instance to transform a problem from a Lorentzian setting in dimension $\mf n+1$ into a simpler problem in dimension $\mf n$. This translation is often the right setup to tackle several interesting problems, e.g. the existence and uniqueness of black holes in equilibrium (see \cite{robinson2009four,sansom2023dain}).  \\ 
	

There are various ways to prescribe initial data for the Einstein equations. One common \mbox{approach} involves specifying an abstract $\mf n$-dimensional Riemannian manifold $(\mc H,h)$ along with a symmetric, two covariant tensor field $K$, subject to the constraint equations. The foundational result of Choquet-Bruhat \cite{Choquet,choquet1969global} establishes that such initial data gives rise to an \mbox{$(\mf n+1)$-dimensional} $\lambda$-vacuum spacetime where the data is embedded (see also \cite{ringstrom}). \mbox{Another} approach is to prescribe initial data on a pair of characteristic hypersurfaces. As shown by Rendall \cite{Rendall} it suffices to specify the spacetime metric (again subject to suitable constraint equations) on the hypersurfaces and the torsion one-form on their intersection to construct a $\lambda$-vacuum spacetime in a future neighbourhood of the intersection. This result has been extended in subsequent works to cover the entire neighbourhood of the two hypersurfaces \cite{Luk,chrusciel2023neighborhood}. In a recent work \cite{Mio1,Mio2} we have developed a formalism to define two intersecting null \mbox{hypersurfaces} from a detached viewpoint via the notion of \textit{double null data}, thereby placing the characteristic Cauchy problem on the same footing as the spacelike one. The initial value problem is also known to be well-posed with initial data on a light-cone \cite{cone}. \\


The connection between initial value problems in general relativity and the existence of (homothetic) Killing vector fields in the bulk spacetime leads to the so-called homothetic Killing initial data (KID) problem. This question has received considerable attention over the years. Here we only make a brief summary of results, and direct the reader to the cited works for details and additional references. For spacelike initial data, the classic strategy \cite{moncrief1975spacetime,coll1977evolution,beig1997killing} begins by prescribing the initial value of the candidate to Killing vector $\eta|_{\mc H}$ and imposing the so-called KID equations, which involve solely the abstract data $(h,K)$ as well as the va\-lues of the tangential component of $\eta|_{\H}$ (a vector) and its transversal component (a scalar). These equations ensure that, once the spacetime is constructed, the conditions $\Phi^{\star}\lie_{\eta}g=0$ and $\Phi^{\star}\big(\lie_{\nu}\lie_{\eta}g\big)=0$ hold, where $\nu$ is a normal of $\mc H$. The next step is to specify $\lie_{\nu} \eta|_{\mc H}$ such that the remaining components of $\lie_{\eta} g$ vanish at $\mc H$. The vector field $\eta$ is then extended off $\mc H$ by solving a homogeneous wave-type equation with the prescribed initial conditions $(\eta|_{\mc H}, \lie_{\nu} \eta|_{\mc H})$. The wave-type equation must of course be compatible with $\eta$ being a Killing vector and must have the essential properties of (i) make sure that the remaining components of $\lie_{\nu}\lie_{\eta}g$ at $\mc H$ vanish identically and (ii) that $\lie_{\eta}g$ also satisfies a homogeneous wave-type equation. Combining both things one concludes that $\eta$ is a Killing vector of the Cauchy development of $(\mc H,h,K)$. This strategy also applies to homothetic and conformal Killing vectors \cite{garcia2019conformal}. \\

In the null case, the KID problem has been analysed for initial data posed on light-cones and pairs of intersecting characteristic hypersurfaces, both for the Einstein equations \cite{chrusciel2013kids} and for the conformal Einstein equations \cite{paetz2014kids} (restricted to four dimensions if one of the intersec\-ting characteristic hypersurfaces is null infinity). The strategy put forward in \cite{chrusciel2013kids} was to consider two characteristic hypersurfaces embedded in a $\lambda$-vacuum spacetime and adapt spacetime \mbox{coordinates} to the initial characteristic hypersurfaces. The initial value of $\eta$ are the components of this vector in this adapted coordinate system. The extension of $\eta$ is also achieved using a wave-type equation, and sufficient conditions are derived to ensure that $\lie_{\eta} g$ vanishes on the initial hypersurfaces. Since $\lie_{\eta} g$ satisfies a wave-type equation as well, this implies that $\eta$ is a Killing of the spacetime metric $g$. \\

A key difference between the spacelike and null KID problems is that, in the former, the KID equations are formulated entirely in terms of the abstract data, enabling to decide whether an initial data set gives rise to a spacetime admitting a Killing vector \textit{before} solving the Einstein equations. In the null case, however, the KID equations depend on the ambient Levi-Civita connection and Riemann tensor, making it necessary to first solve the Einstein equations and subsequently verify whether the KID equations are satisfied or not.\\

Motivated by this issue, in this paper we find a collection of completely general identities relating the deformation tensor $\k[\eta]=\lie_{\eta}g$ of an arbitrary vector field $\eta$ to the associated tensor $\Sigma[\eta]=\lie_{\eta}\nabla$ on an abstract hypersurface of any causal character. These identities are fully gauge and diffeomorphism covariant and can be applied regardless the ambient field equations and the properties of $\eta$. As an application, we derive the KID equations for characteristic initial data and show that they are at the level of the initial data for the Einstein equations. This result puts the KID problem for two null hypersurfaces on equal footing with the standard KID problem for spacelike hypersurfaces. Although this application is our main interest in this paper, we emphasize that the identities we derive are obtained in full generality and their range of applicability extends far beyond this concrete example. In fact, had our interest been just working with homotheties (or Killings), several of the computations could have been made simpler. We believe that this extra effort is worth doing in view of other potential applications of the identities, e.g. to deal with conformal Killing or more generally in any situation where one has a priori information on the metric Lie derivative tensor $\k[\eta]$ or the  connection Lie derivative tensor $\Sigma[\eta]$ of a vector field $\eta$.\\

In order to deal with abstract hypersurfaces it is convenient to employ the so-called hypersurface data formalism, as it provides a detached framework to analyse hypersurfaces of arbitrary causal character. Given a smooth manifold $\mc H$, hypersurface data consists of a 2-covariant, symmetric tensor field $\bg$, a one-form $\bm\ell$, a scalar function $\elltwo$ and a 2-covariant, symmetric tensor $\bY$. When $\mc H$ happens to be embedded in an ambient manifold $(\mc M,g)$, the tensor $\bg$ corresponds to its first fundamental form, $\bm\ell$ and $\elltwo$ capture the transverse-tangent and transverse-transverse information of $g$ at $\mc H$, while $\bY$ coincides with the first derivative of $g$ along a transverse vector $\xi$ (also known as \textit{rigging}) at $\mc H$. From the set $\{\bg,\bm\ell,\elltwo\}$ one can construct a torsion-free connection $\nablacero$ and the dual set $\{P,n,\ntwo\}$ consisting of a 2-contravariant, symmetric tensor $P$, a vector $n$ and a scalar $\ntwo$. Hypersurface data is equipped with an internal gauge freedom that encodes, at the abstract level, all possible riggings of $\mc H$. Hypersurface data satisfying $\ntwo=0$ corresponds to an abstract null hypersurface with $n$ as its null generator.\\

Given a semi-Riemannian manifold $(\mc M,g)$ and any vector field $\eta$ it is possible to relate the deformation tensor of $\eta$, $\mc K[\eta]_{\mu\nu}=\lie_{\eta}g_{\mu\nu}$, with the so-called Lie derivative of the connection $\Sigma[\eta]=\lie_{\eta}\nabla$, and various others geometric quantities associated to $g$. By pulling back these identities to a hypersurface $\mc H$, we obtain explicit relations connecting geometric quantities associated to $\k[\eta]_{\mu\nu}$ at $\mc H$ with the fully tangential components of $\Sigma[\eta]$ at $\mc H$, as well as with its one transverse-two tangent components (denoted by $\mS_{ab}$). These identities are fully gauge and diffeomorphism covariant, do not assume any field equations for $g$ and hold for hypersurfaces of any causal character, as well as for any vector field $\eta$ (not necessarily a homothety or a Killing vector). Applying them to spacelike hypersurfaces, we recover the homothetic KID problem for spacelike initial data in the language of hypersurface data (Theorem \ref{teo_espacial}). A key advantage of this approach is its gauge freedom, allowing one to adapt the rigging vector to specific requirements (in particular the rigging need not be normal to $\mc H$). \\



Inspired by the ideas in \cite{chrusciel2013kids}, for null hypersurfaces we establish that (i) the tangential components of $\k[\eta]$ at $\mc H$ are expressible in terms of hypersurface data, and (ii) the transverse-tangent ($\kt_a$) and transverse-transverse ($\ktt$) components satisfy a hierarchical, inhomogeneous system of transport equations along the null generator $n$, with a source determined by $\mS$ (see Lemma \ref{lemaagrupado}). Furthermore, Corollary \ref{cor_junto} demonstrates that specific contractions of $\mS$, namely $P^{ab}\mS_{ab}$ and $\mS_{ab}n^b$ (but not $\mS(n,n)$), also satisfy transport equations. All in all, this results in a closed hierarchical system for $\{\kt,\ktt,\mS\}$. We prove that this system becomes homogeneous provided $\k_{ab}$ is pure trace (i.e., proportional to $\gamma_{ab}$) and $\mS(n,n)=0$. These two conditions are the natural replacement for the KID equations in the null case. We use this fact in Theorem \ref{teorema_null} to propagate $\{\kt,\ktt,P^{ab}\mS_{ab},\mS_{ab}n^b\}$ from initial conditions at a cross-section under the assumption that $\mc H$ admits a product topology.\\

One direct application of this formalism is the characteristic KID problem. For two intersecting null hypersurfaces $\H$ and $\uH$ embedded in a $\lambda$-vacuum spacetime, the intersection surface serves as initial cross-section to propagate $\{\kt,\ktt,\mS\}$ using Theorem \ref{teorema_null}. It turns out that only by requiring $\mS(n,n)=0$ and $\k_{ab}$ to be pure trace on the two null hypersurfaces, as well as making sure that the scalar $\kt(n)$ and the one-form $\mS_{ab}n^b$ take suitable values at the intersection, the deformation tensor of $\eta$ satisfies $\lie_{\eta}g=\mu g$ at $\mc H\cup\mc{\ul H}$, and hence everywhere. The crucial point (and the key difference with \cite{chrusciel2013kids} besides the fact that we deal with homotheties and not just Killings) is that all these conditions are written solely in terms of initial data (double null data), and hence they can be ascertained, or imposed, \textit{a priori} (i.e., before solving the Einstein equations). This result puts the characteristic KID problem on the same footing as the standard KID problem. \\

Other initial value problems with data posed on hypersurfaces that are a mixture of the pre\-vious ones can be treated similarly. As an example, we examine the homothetic KID problem for smooth spacelike-characteristic initial data (assuming that this is well-posed, see Conjectures \ref{conj0}-\ref{conj}). By imposing the standard KID equations on the spacelike region and the null KID equations on the characteristic region (both written in terms of detached data), we show that the Cauchy development of the initial data admits a homothetic Killing field (see Theorem \ref{teo_smooth}). A similar conclusion holds when initial data is given on a spacelike-characteristic hypersurface with corner \cite{chrusciel2015characteristic,czimek2022spacelike}. In this case, no additional conditions on the ``corner'' are required. For the sake of brevity, this result is not discussed and proved explicitly in the text and its validity is only mentioned (without proof) in the conclusions.\\

This paper is organized as follows. Section \ref{sec_hypersurfacedata} provides an overview of the fundamental aspects of hypersurface data. In Section \ref{sec_general}, we review several identities relating $\k[\eta]$ and $\Sigma[\eta]$ for an arbitrary vector field $\eta$. Section \ref{sec_hip} analyses these identities in the context of embedded hypersurface data of arbitrary causal character. Focusing on the non-null case, in Section \ref{sec_nonnull} we revisit the spacelike homothetic KID problem in the language of hypersurface data. Section \ref{sec_null} focuses on the null case, proving that the identities of Section \ref{sec_hip} lead to transport equations for specific components of $\k[\eta]$ and $\Sigma[\eta]$ along the null generator of the hypersurface. After reviewing the essential elements of double null data in Section \ref{sec_DND}, Section \ref{sec_charact} presents the homothetic KID problem for two characteristic null hypersurfaces. Finally, in Section \ref{sec_sp-ch} we illustrate another application of the formalism by addressing the smooth spacelike-characteristic problem. Additionally, three appendices complement the main text: Appendix \ref{app} provides the computation of pullbacks of ambient tensor fields to arbitrary hypersurfaces; Appendix \ref{app_proof} contains the (somewhat long) proof of Proposition \ref{prop_lienSigma}; and Appendix \ref{app_conn} details several contractions used throughout the paper.

	\section*{Notation and conventions}
	
All manifolds in this paper are assumed to be connected and smooth and, depending on convenience, both index-free and abstract index notation are used to denote tensorial operations. Spacetime indices are denoted with Greek letters, indices on a hypersurface are written in lowercase Latin letters, and indices at cross-sections of a hypersurface are expressed in uppercase Latin letters. As usual, square brackets enclosing indices denote antisymmetrization and parenthesis are for symmetrization. The symmetrized tensor product is denoted with $\otimes_s$. By $\mc F(\mc M)$, $\X(\mc M)$ and $\X^{\star}(\mc M)$ we denote, respectively, the set of smooth functions, vector fields and one-forms on a manifold $\mc M$. The subset $\mc F^{\star}(\mc M)\subset\mc F(\mc M)$ consists of the nowhere vanishing functions on $\mc M$. The pullback of a function $f$ via a diffeomorphism $\Phi$ will be denoted by $\Phi^{\star}f$ or simply by $f$ if the specific meaning can be inferred from the context. A $(p,q)$-tensor refers to a tensor field $p$ times contravariant and $q$ times covariant. Given any pair of $(2,0)$ and $(0,2)$ tensors $A^{ab}$ and $B_{cd}$ we denote $\tr_A \bm B \d A^{ab}B_{ab}$. We employ the symbol $\nabla$ for the Levi-Civita connection of $g$. A condition in parenthesis right next to an identity means that the identity has been obtained under the assumption that the condition holds. Finally, we say that an $(\mf n+1)$-dimensional manifold $(\mc M,g)$ with $\mf n\ge 2$ satisfies the $\lambda$-vacuum equations provided that 
\begin{equation}
	\label{notacion}
\ric[g]=\lambda g, \qquad \mbox{ or equivalently, }\qquad  \ein[g] = \frac{1-\mf{n}}{2}\lambda g.
\end{equation}

	\section{Review of hypersurface data formalism}
	\label{sec_hypersurfacedata}
	
In this section we summarize the basic notions of the \textit{hypersurface data formalism}. Further details can be found in \cite{Marc1,Marc2,Marc3}. Let us start by introducing the necessary objects for this paper.
	
	\begin{defi}
		\label{def_hypersurfacedata}
		Let $\mc H$ be an $\mf n$-dimensional manifold, $\bg$ a symmetric (0,2)-tensor field, $\bm\ell$ a one-form and $\elltwo$ a scalar function on $\mc H$. We say that $\{\mc H,\bg,\bm\ell,\elltwo\}$ defines \textbf{metric hypersurface data} provided that the (0,2) symmetric tensor $\bm{\mc A}|_p$ on $T_p\mc H\times\real$ defined by 
		\begin{equation}
			\label{def_A}
			\mc A|_p\left((W,a),(V,b)\right) \d \bg|_p (W,V) + a\bm\ell|_p(V)+b\bm\ell|_p(W)+ab\ell^{(2)}|_p
		\end{equation}
		is non-degenerate at every $p\in\mc H$. A five-tuple $\{\mc H,\bg,\bm\ell,\elltwo,\bY\}$, where $\bY$ is a (0,2) symmetric tensor field on $\mc H$, is called \textbf{hypersurface data}.
	\end{defi}
	The non-degeneracy of $\bm{\mc A}$ allows us to introduce its ``inverse'' $\bm{\mc A}^{\sharp}$ by $\bm{\mc A}^{\sharp}\big(\bm{\mc A}((V,a),\cdot),\cdot\big)=(V,a)$ for every $(V,a)\in\X(\mc H)\otimes\mc{F}(\mc H)$. From $\bm{\mc A}^{\sharp}$ one can define a $(2,0)$ symmetric tensor field $P$, a vector $n$ and a scalar $n^{(2)}$ on $\mc H$ by the decomposition 
	\begin{equation}
		\label{def_Asharp}
		\mc A^{\sharp}\left((\bm\alpha,a),(\bm\beta,b)\right) = P (\bm\alpha,\bm\beta)+a n(\bm\beta)+bn(\bm\alpha)+ab n^{(2)} \quad\, \forall\, (\bm\alpha,a),(\bm\beta,b)\in\X^{\star}(\mc H)\times F(\mc H).
	\end{equation} 
	Equivalently, $P$, $n$ and $n^{(2)}$ can be defined by
	\begin{multicols}{2}
		\noindent
		\begin{align}
			\gamma_{ab}n^b + n^{(2)}\ell_a&=0,\label{gamman}\\
			\ell_an^a+n^{(2)}\ell^{(2)}&=1,\label{ell(n)}
		\end{align}
		\begin{align}
			P^{ab}\ell_b+\ell^{(2)} n^a&=0,\label{Pell}\\
			P^{ac}\gamma_{cb} + \ell_b n^a &=\delta^a_b.\label{Pgamma}
		\end{align}
	\end{multicols}
The main idea behind Definition \ref{def_hypersurfacedata} is to encode the geometric properties of hypersurfaces detached from the ambient manifold. The connection between this and the standard notion of a hypersurface is as follows.
	\begin{defi}
		\label{defi_embedded}
		Metric hypersurface data $\{\mc H,\bg,\bm\ell,\ell^{(2)}\}$ is $\bm{(\Phi,\xi)}$\textbf{-embedded} in a semi-Rie\-mannian manifold $(\mc M,g)$ if there exists an embedding $\Phi:\mc H\hookrightarrow\mc M$ and a vector field $\xi$ along $\Phi(\mc H)$ everywhere transversal to $\Phi(\mc H)$, called rigging, such that
		\begin{equation}
			\label{embedded_equations}
			\Phi^{\star}(g)=\bg, \hspace{0.5cm} \Phi^{\star}\left(g(\xi,\cdot)\right) = \bm\ell,\hspace{0.5cm} \Phi^{\star}\left(g(\xi,\xi)\right) = \ell^{(2)}.
		\end{equation}
		Hypersurface data $\{\mc H,\bg,\bm\ell,\ell^{(2)},\bY\}$ is $(\Phi,\xi)$-embedded provided that, in addition, 
		\begin{equation}
			\label{Yembedded}
			\dfrac{1}{2}\Phi^{\star}\left(\lie_{\xi} g\right) = \bY.
		\end{equation}
	\end{defi}
	In the context of embedded hypersurface data, $\bg$ being degenerate is equivalent to $\Phi(\mc H)$ being an embedded null hypersurface. It is easy to show \cite{Marc2} that the degeneracy of $\bg$ is equivalent to $n^{(2)}=0$. Hence, hypersurface data satisfying $n^{(2)}=0$ are called \textbf{null hypersurface data}. In such case, a cross-section (or simply a section) $\mc S$ of $\mc H$ is an embedded hypersurface $\mc S\hookrightarrow\mc H$ with the property that every integral curve of $n$ crosses $\mc S$ exactly once.\\
	

	Given hypersurface data one can define the symmetric (0,2) tensor fields\\
	
	\begin{minipage}{0.5\textwidth}
	\begin{equation}
		\label{defU}
		\bU\d \dfrac{1}{2}\lie_n\bg + \bm\ell\otimes_s d\ntwo,
	\end{equation} 
\end{minipage}
\begin{minipage}{0.5\textwidth}
	\begin{equation}
		\label{defK}
	\bK\d \bU+\ntwo\bY.	
	\end{equation}
\end{minipage}
\vspace{0.2cm}

	When the data is embedded $\bK$ corresponds to the second fundamental form of $\Phi(\mc H)$ w.r.t the unique normal one-form $\bm{\nu}$ satisfying $\bm{\nu}(\xi)=1$ \cite{Marc1}. It is also convenient to introduce the $(0,2)$ tensors\\
	
	\begin{minipage}{0.5\textwidth}
		\begin{equation}
			\label{def_F}
			\bF\d \dfrac{1}{2}d\bm\ell,
		\end{equation}
	\end{minipage}
	\begin{minipage}{0.5\textwidth}
		\begin{equation}
			\label{defPi}
			\bPi \d \bY+\bF.
		\end{equation}
	\end{minipage}
	\vspace{0.2cm}
	
	Let $\{\mc H,\bg,\bm\ell,\elltwo\}$ be metric hypersurface data $(\Phi,\xi)$-embedded in $(\mc M,g)$. By the transversality of $\xi$, given a (local) basis $\{e_a\}$ on $\mc H$, the set $\{\wh{e}_a\d\Phi_{\star}e_a,\xi\}$ is a (local) basis of $T_p\mc M$ for all $p\in\Phi(\mc H)$. We write the dual basis as $\{\bm\theta^a,\bm\nu\}$ and note that by construction $\bm\nu$ agrees with the normal mentioned before. Raising indices we can introduce $\nu\d g^{\sharp}(\bm\nu,\cdot)$ and $\theta^a \d g^{\sharp}(\bm\theta^a,\cdot)$, which as a consequence of \eqref{gamman}-\eqref{Pgamma} are given in terms of $\{\xi, \wh{e}_a\}$ by 
	\begin{equation}
		\label{nuthetaa}
		\nu =  \ntwo\xi + n^a \wh{e}_a,\qquad \theta^a = P^{ab}\wh{e}_b + n^a\xi.
	\end{equation}
For the sake of simplicity we will often abuse the notation and denote $\wh{e}_a$ by $e_a$. From \eqref{def_Asharp} the inverse metric $g^{\alpha\beta}$ at points on $\mc H$ can be written in the basis $\{\xi, {e}_a\}$ as
	\begin{equation}
		\label{inversemetric}
		g^{\alpha\beta}\st{\mc H}{=}	P^{ab}e_a^{\alpha} e_b^{\beta} + n^{a} e_a^{\alpha}\xi^{\beta} + n^{b} e_b^{\beta}\xi^{\alpha} +\ntwo \xi^{\alpha}\xi^{\beta} .
	\end{equation}
The rigging vector is not unique, since given a rigging $\xi$ any other vector of the form $\xi' = z(\xi+\Phi_{\star}V)$ with $(z,V)\in\mc{F}^{\star}(\mc H)\times\X(\mc H)$ is also transverse to $\Phi(\mc H)$. This is translated into an abstract gauge group that acts on hypersurface data by

\begin{minipage}{0.4\textwidth}
\begin{align}
\mc{G}_{(z,V)}\left(\bg \right)&\d \bg,\label{transgamma}\\
\mc{G}_{(z,V)}\left( \bm{\ell}\right)&\d z\left(\bm{\ell}+\bg(V,\cdot)\right),\label{tranfell}
\end{align}
\end{minipage}
\begin{minipage}{0.6\textwidth}
	\begin{align}
\mc{G}_{(z,V)}\big( \ell^{(2)} \big)&\d z^2\big(\ell^{(2)}+2\bm\ell(V)+\bg(V,V)\big),\label{transell2}\\
\mc{G}_{(z,V)}\left( \bY\right)&\d z \bY + \bm\ell\otimes_s d z +\dfrac{1}{2}\lie_{zV}\bg.\label{transY}
	\end{align}
\end{minipage}
\vspace{0.3cm}

Applying the Cartan identity $\lie_n = \iota_n \circ d + d\circ \iota_n$ to the one-form $\bm\ell$, taking into account $2\bF = d\bm\ell$ and using \eqref{ell(n)} one has 
\begin{equation}
	\label{lienell}
	\lie_n\bm\ell = 2\bs - d\big(\ntwo\elltwo\big),
\end{equation}
where $\bs\d\bF(n,\cdot)$. From this and \eqref{gamman}-\eqref{ell(n)} the contraction of $\bU$ with $n$ reads
	\begin{equation}
		\label{Un}
\bU(n,\cdot)=\ntwo\left(-\bs+\dfrac{1}{2}\ntwo d\elltwo\right)+\dfrac{1}{2}d\ntwo,
	\end{equation} 
 Metric hypersurface data $\{\mc H,\bg,\bm\ell,\elltwo\}$ is endowed with a torsion-free connection $\nablacero$ defined by \cite{Marc2}

	\begin{multicols}{2}
		\noindent
		\begin{equation}
			\label{nablagamma}
			\nablacero_a\gamma_{bc} = -\ell_c\U_{ab} - \ell_b\U_{ac},
		\end{equation}
		\begin{equation}
			\label{nablaell}
			\nablacero_a\ell_b  = \F_{ab} - \elltwo\U_{ab}.
		\end{equation}
	\end{multicols}
	The following immediate consequence of \eqref{nablagamma} will be used several times below
\begin{equation}
	\label{combina}
2\nablacero_{(a}\gamma_{b)c}-\nablacero_c\gamma_{ab} = -2\ell_c \U_{ab}.
\end{equation}
The action of $\nablacero$ on the contravariant data $\{P,n\}$ is given by \cite{Marc2}
\begin{align}
	\nablacero_a n^b & =\s_a n^b + P^{bc}\U_{ca}-\ntwo\big(n^b (d\elltwo)_a +P^{bc}\F_{ac}\big),\label{derivadan}\\
	\nablacero_a P^{bc} & = -\big(n^bP^{cd}+n^cP^{bd}\big) \F_{ad} - n^bn^c (d\elltwo)_a.\label{derivadaP}
\end{align}

The curvature tensor of $\nablacero$ is denoted by $\Rcero^a{}_{bcd}$ and the Ricci tensor $\Rcero_{bc}$ is the contraction $\Rcero_{bc}=\Rcero^a{}_{bac}$. 
When the data is $(\Phi,\xi)$-embedded in $(\mc M,g)$, $\nablacero$ is related with the Levi-Civita connection $\nabla$ of $g$ by 
	\begin{equation}
		\label{connections}
		\nabla_{\Phi_{\star}X}\Phi_{\star}Y \st{\mc H}{=} \Phi_{\star}\nablacero_X Y - \bY(X,Y)\Phi_{\star}n - \big(\bU+\ntwo\bY\big)(X,Y)\xi \qquad \forall X,Y\in\X(\mc H).
	\end{equation}
		Unless otherwise indicated, scalar functions related by $\Phi^{\star}$ are denoted with the same symbol. Later we will also need the $\nabla$-derivative of $\xi$ along tangent directions to $\mc H$ \cite{Marc1}. We write it in two different forms
	\begin{align}
	{e}_a^{\mu}\nabla_{\mu}\xi^{\beta}&\st{\mc H}{=} (\r-\s)_a\xi^{\beta} + P^{bc}\big(\Y_{ac}+\F_{ac}\big){e}^{\beta}_b + \dfrac{1}{2}\nu^{\beta}\nablacero_a\elltwo ,\label{nablaxi1}\\
{e}_a^{\mu}\nabla_{\mu}\xi^{\beta}&\st{\mc H}{=} \left(\r_a-\s_a +\dfrac{1}{2}\ntwo \nablacero_a\elltwo\right)\xi^{\beta} + V^b{}_a e_b^{\beta},\label{nablaxi}
	\end{align}
	where $V^b{}_a\d P^{bc}\big(\Y_{ac}+\F_{ac}\big) + \dfrac{1}{2}n^b\nablacero_a\elltwo$ and we have introduced $\br\d\bY(n,\cdot)$. The following contractions of the tensor $V$ will also be needed. They can be easily proven using \eqref{gamman}-\eqref{Pgamma},
	\begin{align}
\ell_b V^b{}_a &= -\elltwo\left(\r_a-\s_a+\dfrac{1}{2}\ntwo\nablacero_a\elltwo\right) + \dfrac{1}{2}\nablacero_a\elltwo,\label{ellV}\\
\gamma_{bc}V^b{}_a &= \Y_{ac}+\F_{ac} -\left(\r_a-\s_a + \dfrac{1}{2}\ntwo \nablacero_a\elltwo\right)\ell_c,\label{gammaV}\\
V^b{}_a n^a &= P^{bc}\big(\r_c+\s_c\big)+\dfrac{1}{2}n(\elltwo) n^b.\label{Vn}
	\end{align}
In particular, from \eqref{nablaxi} and the decomposition of $\nu$ in \eqref{nuthetaa},
\begin{align*}
\nu^{\mu}\nabla_{\mu}\xi^{\beta} &\st{\mc H}{=}	\left(\dfrac{1}{2}\ntwo n(\elltwo)-\kappa_n\right) \xi^{\beta}  + V^b{}_an^a e_b^{\beta}+\ntwo \xi^{\mu}\nabla_{\mu}\xi,
\end{align*}
where $\kappa_n\d - \bm{r}(n)$, and hence in the null case 
\begin{flalign}
\label{nablanuxiconV}
&& \nu^{\mu}\nabla_{\mu}\xi^{\beta} \st{\mc H}{=}	-\kappa_n \xi^{\beta}  + V^b{}_an^a e_b^{\beta}. &&  (\ntwo=0)
\end{flalign}
Generally speaking we will substitute \eqref{ellV}-\eqref{Vn} whenever a contraction of $V$ with $\bg$, $\bm\ell$ or $n$ appears, and leave $V$ unsubstituted otherwise. We only skip this rule if a subsequent simplification becomes more evident.\\

	As proven in \cite{Marc1} (see also \cite{Marc3}), the completely tangential components of the ambient Riemann tensor, as well as its 3-tangential, 1-transverse components can be written in terms of hypersurface data as
	\begin{equation}
		\label{ABembedded}
		R_{\alpha\beta\mu\nu}\xi^{\alpha}e^{\beta}_be^{\mu}_ce^{\nu}_d \st{\mc H}{=} A_{bcd}, \hspace{1cm} R_{\alpha\beta\mu\nu} e^{\alpha}_ae^{\beta}_be^{\mu}_ce^{\nu}_d \st{\mc H}{=} B_{abcd},
	\end{equation}
	where $A$ and $B$ are the tensors on $\mc H$ defined by 
	\begin{align}
	A_{bcd}&\d 2\nablacero_{[d}\F_{c]b} + 2\nablacero_{[d}\Y_{c]b} + \U_{b[d}\nablacero_{c]}\elltwo + 2\Y_{b[d}(\r-\s)_{c]} + \ntwo\Y_{b[d}\nablacero_{c]}\elltwo,\label{A} \\
	B_{abcd}&\d \gamma_{af}\Rcero^f{}_{bcd} + 2\ell_a\nablacero_{[d}\U_{c]b}+2\U_{a[d}\Y_{c]b} + 2\U_{b[c}\Pi_{d]a} + 2\ntwo \Y_{b[c}\Y_{d]a}.\label{B}
	\end{align}
As a consequence, the contraction of the ambient Einstein tensor with the normal $\nu$ can be computed in terms of hypersurface data as follows
	\begin{align}
G^{\alpha}{}_{\beta}\nu_{\alpha}\xi^{\beta} &\st{\mc H}{=}	P^{ac}A_{abc}n^b + \dfrac{1}{2}P^{ac}P^{bd}B_{abcd} \label{Einxi},\\
G^{\alpha}{}_{\beta}\nu_{\alpha} e_c^{\beta} &\st{\mc H}{=}	\ntwo P^{bd}A_{bcd} - A_{bcd}n^b n^d + P^{bd}B_{abcd}n^a \label{Eine}.
\end{align}		
In the null case ($\ntwo=0$), \eqref{ABembedded} and \eqref{inversemetric} imply that all the tangential components of the ambient Ricci tensor can be written in terms of hypersurface data
\begin{flalign*}
&& g^{\mu\nu} R_{\alpha \mu \beta \nu} e_a^{\alpha} e_b^{\beta}\st{\mc H}{=}B_{acbd}P^{cd}- (A_{bca}+A_{acb})n^c. && (\ntwo=0)
\end{flalign*}

For non-null hypersurfaces it is not possible to write down the tangential components of the ambient Ricci tensor solely in terms of hypersurface data due to the presence of the $\ntwo$-term in \eqref{inversemetric} and the fact that the contraction of the ambient Riemann tensor with two riggings is not expressible in terms of hypersurface data. This is a well-known fact and follows e.g. from the following general identity that relates such Riemann tensor components with second order derivatives of the metric along $\xi$ \cite{Mio3}.
\begin{lema}
	\label{propriemxixi1}
	Let $(\mc M,g)$ be a semi-Riemannian manifold and $\xi,X,Y\in\X(\mc M)$. Then,
	\begin{equation}
		\label{riemxixi}
		\riem(X,\xi,Y,\xi) = -\dfrac{1}{2}\left(\lie_{\xi}^{(2)}g - \lie_{\nabla_{\xi}\xi} g\right)(X,Y) + g(\nabla_X\xi,\nabla_Y\xi).
	\end{equation}
\end{lema}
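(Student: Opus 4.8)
The plan is to compute both sides in terms of covariant derivatives and compare. The cleanest route is to start from the right-hand side. First I would recall the coordinate-free expression for the second Lie derivative of the metric along $\xi$: for any vector fields $X,Y$,
\begin{equation*}
\big(\lie_{\xi}^{(2)}g\big)(X,Y) = \xi\big(\xi(g(X,Y))\big) - \text{terms with } \lie_{\xi}X,\ \lie_{\xi}Y,
\end{equation*}
but it is much more efficient to use $\lie_{\xi}g = \nabla\xi^{\flat} \text{ symmetrized}$, i.e. $(\lie_{\xi}g)(X,Y) = g(\nabla_X\xi,Y)+g(X,\nabla_Y\xi)$, and then apply $\lie_{\xi}$ once more to this identity, using that $\lie_{\xi}$ is a derivation and commutes with contractions. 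Differentiating, one gets a term where $\lie_{\xi}$ hits $g$ again (reproducing first-order pieces) and terms where it hits $\nabla_X\xi$ and $\nabla_Y\xi$; the latter I would rewrite using $\lie_{\xi}(\nabla_X\xi) = \nabla_X\nabla_{\xi}\xi - \riem(\xi,X)\xi + \nabla_{\lie_{\xi}X}\xi$, which is just the definition of the curvature operator together with the torsion-free condition, combined with the standard formula for $\lie_\xi$ acting on a $\nabla$-derivative.

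The key step is therefore the identity
\begin{equation*}
\big(\lie_{\xi}\nabla\big)(X,Y) = \nabla_X\nabla_Y\xi - \nabla_{\nabla_X Y}\xi - \riem(X,\xi)Y,
\end{equation*}
which expresses $\Sigma[\xi] = \lie_{\xi}\nabla$ in terms of second derivatives of $\xi$ and the Riemann tensor (this is exactly the type of relation the paper advertises in its introduction, so it is legitimate to invoke). Contracting this appropriately and symmetrizing in $X,Y$ produces $g(\nabla_X\nabla_Y\xi,\,\cdot\,)$-type terms on one side and $\riem(X,\xi,Y,\xi)$-type terms on the other. Meanwhile, expanding $g(\nabla_X\xi,\nabla_Y\xi)$ and the symmetrized $\lie_\xi$ of $g(\nabla_X\xi,Y)+g(X,\nabla_Y\xi)$ gives matching $\nabla\nabla\xi$ terms. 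The term $\lie_{\nabla_{\xi}\xi}g$ on the right-hand side is precisely what is needed to absorb the contribution coming from $\nabla_X\nabla_{\xi}\xi$ after one uses $\lie_Z g(X,Y) = g(\nabla_X Z,Y) + g(X,\nabla_Y Z)$ with $Z = \nabla_\xi\xi$. I would organize the computation so that all $\nabla\nabla\xi$ and $\lie_\xi X$, $\lie_\xi Y$ terms cancel in pairs, leaving only the curvature term with the correct coefficient $-\tfrac12$ (the factor $\tfrac12$ coming from the symmetrization $2\riem(X,\xi,Y,\xi) = \riem(X,\xi,Y,\xi)+\riem(Y,\xi,X,\xi)$, valid by the pair symmetry of the Riemann tensor).

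The main obstacle I anticipate is bookkeeping: keeping track of the terms involving $\lie_\xi X$ and $\lie_\xi Y$ (equivalently, evaluating everything at a point where one may choose $X,Y$ with vanishing covariant derivatives, or simply checking the identity is tensorial in $X,Y$ so that such auxiliary terms must cancel by themselves). A tensoriality check is the safest shortcut: since both sides of \eqref{riemxixi} are manifestly $\mc F(\mc M)$-linear and symmetric in $X,Y$ — the Lie-derivative combinations $\lie_\xi^{(2)}g - \lie_{\nabla_\xi\xi}g$ happen to be tensorial even though $\lie_\xi^{(2)}g$ alone is not obviously so — one may compute both sides at an arbitrary point $p$ using vector fields $X,Y$ that are $\nabla$-parallel at $p$ and satisfy $\lie_\xi X = \lie_\xi Y = 0$ near $p$ (extend from a frame). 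Then $\nabla_X Y|_p = 0$, $\nabla_X\nabla_Y\xi|_p = \nabla_X\nabla_Y\xi$, and the computation collapses to a few lines: $\big(\lie_\xi^{(2)}g\big)(X,Y)|_p = 2g(\nabla_X\nabla_Y\xi, \cdot)$-type terms plus $2g(\nabla_X\xi,\nabla_Y\xi)$, from which \eqref{riemxixi} follows after substituting the Ricci identity $\nabla_X\nabla_Y\xi - \nabla_\xi\nabla_X\xi - \nabla_{[X,\xi]}\xi = \riem(X,\xi)\xi$ at $p$ and recognizing $\nabla_\xi\nabla_X\xi|_p$ together with $\nabla_X\nabla_\xi\xi|_p$ as the pieces assembling into $\lie_{\nabla_\xi\xi}g$. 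Verifying that the auxiliary choice of $X,Y$ is consistent (i.e. that such vector fields exist and that the claimed tensoriality genuinely holds) is the only subtle point; everything else is a short, mechanical expansion.
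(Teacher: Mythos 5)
The paper itself does not prove this lemma: it is quoted from \cite{Mio3}, so there is no in-paper argument to compare against. Your overall strategy is sound and does yield \eqref{riemxixi}: expanding $\lie_{\xi}\big(\lie_{\xi}g\big)$ from $(\lie_{\xi}g)(X,Y)=g(\nabla_X\xi,Y)+g(X,\nabla_Y\xi)$, commuting $\lie_{\xi}$ with $\nabla$ via the Ricci identity (equivalently via $\Sigma[\xi]$), and letting $\lie_{\nabla_{\xi}\xi}g$ absorb the $\nabla\nabla_{\xi}\xi$ contributions is exactly the right bookkeeping, and the curvature term comes out with the factor $-\tfrac12$ as you predict. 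In fact the shortest version of your argument is already available inside the paper: apply Lemma \ref{lema_Marc}, Eq.~\eqref{zetaSigma1}, with $\zeta=\eta=\xi$, substitute \eqref{def_S} for $\xi^{\alpha}\Sigma[\xi]_{\alpha\beta\mu}$, and use $\xi^{\alpha}\nabla_{\beta}\nabla_{\mu}\xi_{\alpha}=\tfrac12\nabla_{\beta}\nabla_{\mu}\big(g(\xi,\xi)\big)-g(\nabla_{\beta}\xi,\nabla_{\mu}\xi)$ together with $(\lie_{\xi}\bm\xi)_{\mu}=(\nabla_{\xi}\xi)_{\mu}+\tfrac12\nabla_{\mu}\big(g(\xi,\xi)\big)$; the Hessians of $g(\xi,\xi)$ cancel and \eqref{riemxixi} follows in two lines.

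There is, however, one genuine flaw in your ``safest shortcut'': you cannot choose $X$ with both $\nabla X|_p=0$ and $\lie_{\xi}X=0$ near $p$. Since $\lie_{\xi}X=\nabla_{\xi}X-\nabla_X\xi$, the two conditions together would force $\nabla_X\xi|_p=0$, which fails for generic $\xi$ --- and $\nabla_X\xi$ is precisely the quantity the identity is about, so you cannot normalize it away. The fix is immediate: both sides of \eqref{riemxixi} are manifestly tensorial in $X,Y$ (note that $\lie_{\xi}^{(2)}g$ is already a tensor field, so no cancellation is needed for tensoriality in $X,Y$; the subtle independence statement in the paper concerns the extension of $\xi$ off a hypersurface, which is a different matter), so impose only one normalization, e.g.\ $\lie_{\xi}X=\lie_{\xi}Y=0$, and carry the $\nabla_X\xi$, $\nabla_Y\xi$ terms explicitly --- they are exactly what assembles into $g(\nabla_X\xi,\nabla_Y\xi)$. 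With that correction (and keeping the sign of the curvature term consistent with whichever convention you adopt for $\riem(X,\xi)Y$), your computation closes.
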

An immediate consequence of this expression is that on any embedded hypersurface $\Phi: \mc H \hookrightarrow \mc M$ the tensor constructed from second transverse derivatives by means of $$\Phi^{\star} \big(\lie^{(2)}_{\xi}  g - \lie_{\nabla_{\xi} \xi} g \big)$$ is independent of the extension of $\xi$ off $\Phi(\mc H)$. We can now write down the pullback of $\riem(\cdot,\xi,\cdot,\xi)$ in terms of this tensor and hypersurface data. 
\begin{prop}
	\label{propriemxixi}
	Let $\{\mc H,\bg,\bm\ell,\elltwo,\bY\}$ be hypersurface data $(\Phi,\xi)$-embedded in $(\mc M,g)$. Then, for any extension of $\xi$ off $\Phi(\mc H)$,
	\begin{equation}
		\label{riemxixiH}
		\Phi^{\star}\left(\riem(\cdot,\xi,\cdot,\xi)\right) = -\dfrac{1}{2}\Phi^{\star}\big(\lie_{\xi}^{(2)}g-\lie_{\nabla_{\xi}\xi} g\big) + \bPi\cdot\bPi +\left(\br-\bs+\dfrac{1}{4}\ntwo d\elltwo\right)\otimes_s d\elltwo,
	\end{equation}	
	where $\big(\bPi\cdot\bPi\big)_{ab}\d P^{cd}\Pi_{ac}\Pi_{bd}$.
	\begin{proof}
From \eqref{nablaxi} together with \eqref{embedded_equations} it follows
		\begin{align*}
			g_{\alpha\beta} \nabla_{{e}_a}\xi^{\alpha}\nabla_{{e}_b}\xi^{\beta} & = \elltwo\left(\r_a-\s_a + \dfrac{1}{2}\ntwo\nablacero_a\elltwo\right)\left(\r_b-\s_b + \dfrac{1}{2}\ntwo\nablacero_b\elltwo\right)\\
			&\quad\,  +2\left(\r_{(a}-\s_{(a} + \dfrac{1}{2}\ntwo\nablacero_{(a}\elltwo\right)\ell_c V^c{}_{b)} + \gamma_{cd}V^c{}_a V^d{}_b \\
			&= P^{cd}\Pi_{ac}\Pi_{bd} + \left(\r_{(a}-\s_{(a} + \dfrac{1}{4}\ntwo\nablacero_{(a}\elltwo\right)\nablacero_{b)}\elltwo ,
		\end{align*}
		where in the second equality we used \eqref{ellV}-\eqref{gammaV}. Equation \eqref{riemxixiH} follows from this and \eqref{riemxixi}.
	\end{proof}
\end{prop}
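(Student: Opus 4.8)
\emph{Proof proposal.} The plan is to specialize the pointwise identity \eqref{riemxixi} of Lemma \ref{propriemxixi1} to the vectors $X=\Phi_\star e_a$, $Y=\Phi_\star e_b$ tangent to $\Phi(\mc H)$ and then pull it back with $\Phi$. The term $-\tfrac12\big(\lie_\xi^{(2)}g-\lie_{\nabla_\xi\xi}g\big)(X,Y)$ produces directly the first summand $-\tfrac12\Phi^\star\big(\lie_\xi^{(2)}g-\lie_{\nabla_\xi\xi}g\big)$ of the claimed formula, and by the remark following Lemma \ref{propriemxixi1} this pullback does not depend on how $\xi$ is extended off $\Phi(\mc H)$. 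Hence it remains only to rewrite $\Phi^\star\big(g(\nabla_\cdot\xi,\nabla_\cdot\xi)\big)$ entirely in terms of hypersurface data.

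For this I would use the tangential derivative formula \eqref{nablaxi}, which decomposes $e_a^\mu\nabla_\mu\xi^\beta$ into a term proportional to $\xi^\beta$ with coefficient $\r_a-\s_a+\tfrac12\ntwo\nablacero_a\elltwo$ and a purely tangential term $V^b{}_a e_b^\beta$. Contracting this with the analogous expression carrying the index $b$ and inserting the embedding relations \eqref{embedded_equations}, i.e. $g(\xi,\xi)=\elltwo$, $g(\xi,e_b)=\ell_b$ and $g(e_c,e_d)=\gamma_{cd}$, one gets three groups of terms: an $\elltwo$-weighted product of the two $\xi$-coefficients, two cross terms containing the contraction $\ell_c V^c{}_b$, and the fully tangential piece $\gamma_{cd}V^c{}_a V^d{}_b$. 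I would then substitute the identities \eqref{ellV} and \eqref{gammaV} for $\ell_c V^c{}_b$ and $\gamma_{cd}V^c{}_a V^d{}_b$, together with the explicit form of $V$ and the definition $\bPi=\bY+\bF$ in \eqref{defPi}, so that the only genuinely quadratic object left is $P^{cd}\Pi_{ac}\Pi_{bd}=(\bPi\cdot\bPi)_{ab}$.

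The main obstacle is purely the bookkeeping of the cancellations. The crucial points are: (i) the $\elltwo$-weighted quadratic terms appearing in the $\xi$-$\xi$ product, in the cross terms via \eqref{ellV}, and inside $\gamma_{cd}V^c{}_a V^d{}_b$ via \eqref{gammaV} combine with total coefficient $1-2+1=0$ and drop out; and (ii) after these cancellations the remaining $\nablacero\elltwo$-terms must reassemble into the symmetrized product $\big(\r_{(a}-\s_{(a}+\tfrac14\ntwo\nablacero_{(a}\elltwo\big)\nablacero_{b)}\elltwo$, where the coefficient $\tfrac14$ (rather than $\tfrac12$) arises from averaging the $\tfrac12\ntwo\nablacero\elltwo$ sitting inside the $\xi$-coefficient against the $\tfrac12\nablacero\elltwo$ term of \eqref{ellV}. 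Once $\Phi^\star\big(g(\nabla_\cdot\xi,\nabla_\cdot\xi)\big)=\bPi\cdot\bPi+\big(\br-\bs+\tfrac14\ntwo d\elltwo\big)\otimes_s d\elltwo$ is established, substituting it back into the pulled-back \eqref{riemxixi} yields \eqref{riemxixiH}. No conceptual difficulty beyond this is expected.
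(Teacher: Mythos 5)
Your proposal is correct and follows essentially the same route as the paper: specialize Lemma \ref{propriemxixi1} to tangential arguments, pull back, and expand $g(\nabla_{e_a}\xi,\nabla_{e_b}\xi)$ via \eqref{nablaxi} and the contractions \eqref{ellV}--\eqref{gammaV}; your stated cancellations (the $1-2+1=0$ coefficient of $\elltwo$-weighted quadratic terms and the origin of the $\tfrac14$) are exactly what occurs in the paper's computation.
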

This result suggests the following detached definition.
\begin{defi}
	A set $\{\mc H,\bg,\bm\ell,\elltwo,\bY,\bZ^{(2)}\}$ defines \textbf{extended hypersurface data} provided the set $\{\mc H,\bg,\bm\ell,\elltwo,\bY\}$ is hypersurface data and $\bZ^{(2)}$ is a (0,2) symmetric tensor field on $\mc H$. Moreover, an extended hypersurface data set is said to be $(\Phi,\xi)$-embedded in a semi-Riemannian manifold $(\mc M,g)$ provided $\{\mc H,\bg,\bm\ell,\elltwo,\bY\}$ is $(\Phi,\xi)$-embedded in the sense of Definition \ref{defi_embedded} and, in addition, 
	\begin{equation}
		\label{Z2embedded}
		\bZ^{(2)} = \dfrac{1}{2}\Phi^{\star}\big(\lie_{\xi}^{(2)}g-\lie_{\nabla_{\xi}\xi}g\big),
	\end{equation} 
	where $\xi$ is any extension of the rigging off $\Phi(\mc H)$. 
\end{defi}
The fully tangential components of the ambient Ricci tensor are then computable in terms of extended hypersurface data,
		\begin{align}
			R_{\alpha\beta}e^{\alpha}_a e^{\beta}_b=R_{ab} & = \left(P^{cd}e_c^{\mu} e_d^{\nu} + n^c e_c^{\mu}\xi^{\nu} + n^d e_d^{\nu}\xi^{\mu}+\ntwo\xi^{\mu}\xi^{\nu}\right)e_a^{\alpha}e_b^{\beta} R_{\alpha\mu\beta\nu}\nonumber\\
			&= P^{cd}B_{cadb} + A_{bca}n^c + A_{acb}n^c + \ntwo D_{ab},\label{ricci}
		\end{align}
		where (cf. \eqref{riemxixiH})
		\begin{align}
			D_{ab}&\d -\Z^{(2)}_{ab} + P^{cd}\Pi_{ac}\Pi_{bd} + \left(\r_{(a}-\s_{(a}+\dfrac{1}{4}\ntwo\nablacero_{(a}\elltwo\right)\nablacero_{b)}\elltwo .
		\end{align}
Solving for $\ntwo\bZ^{(2)}$ in \eqref{ricci} one has $$\ntwo \Z^{(2)}_{ab} = P^{cd}B_{cadb} + 2 A_{(a|c|b)}n^c - {R}_{ab} + \ntwo\left(P^{cd}\Pi_{ac}\Pi_{bd} + \left(\r_{(a}-\s_{(a}+\dfrac{1}{4}\ntwo\nablacero_{(a}\elltwo\right)\nablacero_{b)}\elltwo \right).$$ We now recall the following general identity\footnote{This is obtained from \eqref{A}-\eqref{B} when as many derivatives of $\bY$ as possible are expressed in terms of Lie derivatives along the direction $n$.} \cite{tesismiguel}
\begin{align*}
	P^{cd}B_{cadb} + 2A_{(a|c|b)}n^c  &= \Rcero_{(ab)} -2\lie_n\Y_{ab} - \left(2\kappa_n+\tr_P\bU-\ntwo\big(n(\elltwo)-\tr_P\bY\big)\right)\Y_{ab} \\
	&\quad\, + \nablacero_{(a}\big(\s+2\r\big)_{b)} - 2\r_a\r_b +4\r_{(a}\s_{b)} - \s_a\s_b -\big(\tr_P\bY\big) \U_{ab} \\
	&\quad\,+ 2P^{cd}\U_{c(a}\big(2\Y+\F\big)_{b)d} + \ntwo\left(\big(\s-3\r\big)_{(a}\nablacero_{b)}\elltwo + P^{cd}\Pi_{ca}\Pi_{db}\right),
\end{align*}
and arrive at 
\begin{align}
\ntwo \Z^{(2)}_{ab}& = \Rcero_{(ab)}- {R}_{ab} -2\lie_n\Y_{ab} - \left(2\kappa_n+\tr_P\bU-\ntwo\big(n(\elltwo)-\tr_P\bY\big)\right)\Y_{ab} \nonumber\\
	&\quad\, + \nablacero_{(a}\big(\s+2\r\big)_{b)} - 2\r_a\r_b +4\r_{(a}\s_{b)} - \s_a\s_b -\big(\tr_P\bY\big) \U_{ab} + 2P^{cd}\U_{c(a}\big(2\Y+\F\big)_{b)d}\nonumber\\
	&\quad\, + \ntwo\left(-2\r_{(a}\nablacero_{b)}\elltwo + P^{cd}\big(\Pi_{ca}\Pi_{db}+\Pi_{ac}\Pi_{bd}\big)+ \dfrac{1}{4}\ntwo\nablacero_{(a}\elltwo\nablacero_{b)}\elltwo\right) . \label{Z2}
\end{align} 
This expression shows, in particular, that given hypersurface data with $\ntwo\neq 0$ the second transversal derivative $\bZ^{(2)}$ is uniquely determined in terms of the tangential components of the ambient Ricci tensor. This is of course a well-known result that dates back at least to the classic ADM equations in general relativity \cite{ADM}. Expression \eqref{Z2} is interesting because it provides second order derivatives of the metric along an arbitrary transversal direction $\xi$ in terms of hypersurface data (and the ambient Ricci tensor) in a fully gauge covariant manner.\\

 When $\ntwo=0$ we see that $R_{ab}$ can be determined using just hypersurface data. This suggests defining, for null hypersurface data $\{\mc H,\bg,\bm\ell,\elltwo,\bY\}$, the so-called \textit{constraint tensor} $\R$ by \cite{tesismiguel,miguel3}
		\begin{equation}
				\label{constraint}
				\begin{aligned}
						\mc R_{ab}& \d \accentset{\circ}{R}_{(ab)} -2\lie_n \Y_{ab} - (2\kappa_n+\tr_P\bU)\Y_{ab} + \nablacero_{(a}\left(\s_{b)}+2\r_{b)}\right)\\
						&\quad -2\r_a\r_b + 4\r_{(a}\s_{b)} - \s_a\s_b - (\tr_P\bY)\U_{ab} + 2P^{cd}\U_{d(a}\left(2\Y_{b)c}+\F_{b)c}\right).
					\end{aligned}
			\end{equation}
Therefore, in the null case, the only components of the ambient Ricci tensor that may depend on second transverse derivatives of the metric are the transverse-tangent and transverse-transverse ones. Indeed, using \eqref{inversemetric}, \eqref{ABembedded} and \eqref{riemxixiH} gives \cite{Mio3}
\begin{flalign}
R_{\alpha\beta}\xi^{\alpha}\xi^{\beta}	 &= -\tr_P\bZ^{(2)} +P^{ab}P^{cd}\Pi_{ac}\Pi_{bd}+ P^{ab}(\r-\s)_a\nablacero_b\elltwo,  &&(\ntwo=0)\label{trPZ2}\\
R_{\alpha\beta}\xi^{\alpha}e_b^{\beta}	& =  \z^{(2)}_b - P^{ac}A_{acb} - P^{ac}\big(\r+\s)_a\Pi_{bc} +\dfrac{1}{2}\kappa_n \nablacero_b\elltwo- \dfrac{1}{2} n(\elltwo)(\r-\s)_{b},&&(\ntwo=0)\label{z2}
\end{flalign}
where $\bz^{(2)}\d \bZ^{(2)}(n,\cdot)$. Conversely, this also shows that in the null case (in contrast to the non-null one) only the trace of $\bZ^{(2)}$ w.r.t $P$ and the one-form $\bz^{(2)}$ are determinable in terms of the ambient Ricci tensor evaluated on the hypersurface.

	\section{General identities for the deformation tensor}
	\label{sec_general}

Let $(\mc M,g)$ be a semi-Riemannian manifold, $\eta\in\X(\mc M)$ any vector field and $\nabla$ the Levi-Civita connection. We introduce the tensors $\mc K[\eta]$ and $\Sigma[\eta]$ by $\mc K[\eta] \d \lie_{\eta} g$ and $\Sigma[\eta]\d\lie_{\eta}\nabla$, or more specifically $$\Sigma[\eta](V,W) \d \lie_{\eta}\nabla_V W - \nabla_{\lie_{\eta}V}W- \nabla_V\lie_{\eta}W ,\qquad V,W\in\X(\mc M).$$ The former is called ``deformation tensor'' of $\eta$. The latter can be expressed in terms of $\k[\eta]$ by means of \cite{yano2020theory}
\begin{equation}
	\label{Sigmaindex}
	\Sigma[\eta]^{\mu}{}_{\nu\beta} = \dfrac{1}{2} \left(\nabla_{\nu}\mc{K}[\eta]^{\mu}{}_{\beta}+\nabla_{\beta}\mc{K}[\eta]^{\mu}{}_{\nu}-\nabla^{\mu}\mc{K}[\eta]_{\nu\beta}\right),
	 \end{equation} 
or lowering the index,
\begin{equation}
	\label{Sigmaindexdown}
\Sigma[\eta]_{\mu\nu\beta} = \dfrac{1}{2} \left(\nabla_{\nu}\mc{K}[\eta]_{\mu\beta}+\nabla_{\beta}\mc{K}[\eta]_{\mu\nu}-\nabla_{\mu}\mc{K}[\eta]_{\nu\beta}\right),
\end{equation}
which immediately implies the symmetries
\begin{multicols}{2}
	\noindent
	\begin{equation}
		\label{S2}
		2\Sigma[\eta]_{(\mu\nu)\beta} = \nabla_{\beta}\mc{K}[\eta]_{\mu\nu},
	\end{equation}
	\begin{equation}
		\label{S3}
		\Sigma[\eta]_{\mu[\nu\beta]} = 0.
	\end{equation}
\end{multicols}
The tensor $\Sigma[\eta]$ determines the commutation between covariant and Lie derivatives. Specifically, for any $(q,p)$ tensor field $A$ it holds \cite{yano2020theory}
\begin{equation}
	\label{yano}
	\begin{aligned}
\lie_{\eta}\nabla_{\beta}A^{\alpha_1\cdots\alpha_q}{}_{\mu_1\cdots\mu_p} &= \nabla_{\beta}\lie_{\eta}A^{\alpha_1\cdots\alpha_q}{}_{\mu_1\cdots\mu_p} + \sum_{j=1}^{q}A^{\alpha_1\cdots\alpha_{j-1}\sigma\alpha_{j+1}\cdots\alpha_q}{}_{\mu_1\cdots\mu_p}\Sigma[\eta]^{\alpha_j}{}_{\beta\sigma}\\&\quad\, -  \sum_{i=1}^{p}A^{\alpha_1\cdots\alpha_q}{}_{\mu_1\cdots\mu_{i-1}\sigma\mu_{i+1}\cdots\mu_p}\Sigma[\eta]^{\sigma}{}_{\beta\mu_i}.
	\end{aligned}
\end{equation}
The tensors $\mc{K}[\eta]$ and $\Sigma[\eta]$ admit the following alternative expressions \cite{yano2020theory}
\begin{multicols}{2}
	\noindent
\begin{equation}
	\label{def_K}
\mc{K}[\eta]_{\mu\nu} = 2\nabla_{(\mu}\eta_{\nu)} ,
\end{equation}
\begin{equation}
	\label{def_S}
\Sigma[\eta]_{\mu\alpha\beta} = \nabla_{\alpha}\nabla_{\beta}\eta_{\mu} + R_{\alpha\nu\beta\mu}\eta^{\nu}.
\end{equation}
\end{multicols}
A key object in this paper is the vector field $\mathcal{Q}[\eta]$ defined by 
\begin{equation}
	\label{defQ}
\mathcal{Q}[\eta]^{\mu} \d g^{\alpha\beta}\Sigma[\eta]^{\mu}{}_{\alpha\beta}=\square_g \eta^{\mu} + R^{\mu}{}_{\nu}\eta^{\nu},
\end{equation}
the second equality being a consequence of \eqref{def_S}. By \eqref{def_K} we note that $\mc{Q}[\eta]$ can also be written as
\begin{equation}
	\label{traza}
	\mathcal{Q}[\eta]_{\nu} = \nabla_{\mu}\mc{K}[\eta]^{\mu}{}_{\nu} - \dfrac{1}{2}\nabla_{\nu}\tr_g\mc{K}[\eta].
\end{equation}
 The tensors $\Sigma[\eta]$, $\mc{K}[\eta]$ and $\mc{Q}[\eta]$ are related by the following known identities \cite{yano2020theory}. We add a proof for completeness.
\begin{lema}
Let $(\mc M,g)$ be a semi-Riemannian manifold, $\eta\in\X(\mc M)$ and $\nabla$ the Levi-Civita connection. Then, the following identities hold
\begin{align}
\nabla_{\mu}\Sigma[\eta]^{\mu}{}_{\beta\nu} = \lie_{\eta} R_{\beta\nu} +\dfrac{1}{2}\nabla_{\beta}\nabla_{\nu}\tr_g\mc{K}[\eta],\label{id2}\\
\square_g \mc{K}[\eta]_{\beta\nu} - \mc{K}[\mc{Q}[\eta]]_{\beta\nu} - 2R_{\sigma(\beta}\mc{K}[\eta]^{\sigma}{}_{\nu)} + 2R_{\sigma\beta\mu\nu}\mc{K}[\eta]^{\sigma\mu} + 2\lie_{\eta}R_{\beta\nu} = 0.\label{id1}
\end{align}
\begin{proof}
We start by computing the trace and the divergence of $\Sigma[\eta]$. Using \eqref{Sigmaindex},
\begin{align}
\Sigma[\eta]^{\mu}{}_{\mu\beta} &= \dfrac{1}{2} \nabla_{\beta}\k[\eta]^{\mu}{}_{\mu}= \dfrac{1}{2}\nabla_{\beta} \tr_g\k[\eta],\label{idm0}\\
\nabla_{\mu}\Sigma[\eta]^{\mu}{}_{\beta\nu} &= -\dfrac{1}{2} \nabla_{\mu}\nabla^{\mu}\k_{\beta\nu}+\nabla_{\mu}\nabla_{(\beta}\k[\eta]^{\mu}{}_{\nu)}\nonumber\\
&= -\dfrac{1}{2} \nabla_{\mu}\nabla^{\mu}\k_{\beta\nu} + \nabla_{(\beta|}\nabla_{\mu}\k[\eta]^{\mu}{}_{|\nu)}+R^{\mu}{}_{\sigma\mu(\beta}\k[\eta]^{\sigma}{}_{\nu)}-R^{\sigma}{}_{(\nu|\mu|\beta)}\k[\eta]^{\mu}{}_{\sigma}.\label{idm1}
\end{align}	
In order to prove \eqref{id2} we use the well-known identity \cite{yano2020theory}
\begin{equation}
	\label{Yano}
	\lie_{\eta} R_{\beta\nu} = \nabla_{\mu}\Sigma[\eta]^{\mu}{}_{\beta\nu} - \nabla_{\nu}\Sigma[\eta]^{\mu}{}_{\mu\beta},
\end{equation}
which combined with \eqref{idm0} gives \eqref{id2}. To prove \eqref{id1} we insert \eqref{traza} into \eqref{idm1}, 
\begin{equation*}
\nabla_{\mu}\Sigma[\eta]^{\mu}{}_{\beta\nu} = -\dfrac{1}{2}\nabla_{\mu}\nabla^{\mu}\k[\eta]_{\beta\nu} + \nabla_{(\beta}\mc{Q}[\eta]_{\nu)} + \dfrac{1}{2}\nabla_{\beta}\nabla_{\nu}\tr_g\k + R_{\sigma(\beta}\k[\eta]^{\sigma}{}_{\nu)} - R_{\sigma\beta\mu\nu}\k[\eta]^{\sigma\mu}.
\end{equation*}
Using \eqref{id2} and taking into account $\k[\mc{Q}[\eta]]_{\beta\nu} = 2\nabla_{(\beta}\mc{Q}[\eta]_{\nu)}$, \eqref{id1} follows at once.
\end{proof}
\end{lema}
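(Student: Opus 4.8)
The plan is to derive both identities from the single relation \eqref{Sigmaindex} expressing $\Sigma[\eta]$ through $\k[\eta]$, by computing the trace and the divergence of $\Sigma[\eta]$ and then matching them against the classical formula for the Lie derivative of the Ricci tensor. First I would compute the trace of $\Sigma[\eta]$: contracting the symmetry \eqref{S2} with $g$ (equivalently, tracing \eqref{Sigmaindex}) gives $\Sigma[\eta]^\mu{}_{\mu\beta}=\frac{1}{2}\nabla_\beta\tr_g\k[\eta]$. Next I would take the divergence $\nabla_\mu\Sigma[\eta]^\mu{}_{\beta\nu}$ again from \eqref{Sigmaindex}: the term $\nabla^\mu\k[\eta]_{\beta\nu}$ produces $-\frac{1}{2}\square_g\k[\eta]_{\beta\nu}$, while in the remaining two terms I would commute $\nabla_\mu$ past $\nabla_\beta$ and $\nabla_\nu$ using the Ricci identity; this pulls the divergence of $\k[\eta]$ to the outside (giving $\nabla_{(\beta}\nabla_\mu\k[\eta]^\mu{}_{\nu)}$) and produces a Ricci contraction $R^\mu{}_{\sigma\mu(\beta}\k[\eta]^\sigma{}_{\nu)}$ together with a Riemann contraction. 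This divergence formula is the master identity from which both \eqref{id2} and \eqref{id1} follow.

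For \eqref{id2} I would combine the master formula with the Lie-derivative analogue of the Palatini identity, $\lie_\eta R_{\beta\nu}=\nabla_\mu\Sigma[\eta]^\mu{}_{\beta\nu}-\nabla_\nu\Sigma[\eta]^\mu{}_{\mu\beta}$, which is obtained by applying $\lie_\eta$ to the expression of $R_{\beta\nu}$ in terms of the Christoffel symbols (using that $\lie_\eta$ of the connection coefficients is precisely $\Sigma[\eta]$), or equivalently from the commutation rule \eqref{yano} applied to the connection. Solving this relation for $\nabla_\mu\Sigma[\eta]^\mu{}_{\beta\nu}$ and inserting the trace computed above yields \eqref{id2}, after using $\nabla_\nu\nabla_\beta f=\nabla_\beta\nabla_\nu f$ for the scalar $f=\tr_g\k[\eta]$ (torsion-freeness of $\nabla$).

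For \eqref{id1} I would feed the contracted relation \eqref{traza}, i.e.\ $\nabla_\mu\k[\eta]^\mu{}_\nu=\mc{Q}[\eta]_\nu+\frac{1}{2}\nabla_\nu\tr_g\k[\eta]$, into the master divergence formula; then $R^\mu{}_{\sigma\mu\beta}=R_{\sigma\beta}$ disposes of the Ricci term, and the pair symmetry of $\riem$ together with the symmetry of $\k[\eta]$ puts the Riemann term in the form $-R_{\sigma\beta\mu\nu}\k[\eta]^{\sigma\mu}$. Substituting the left-hand side from \eqref{id2}, the two $\frac{1}{2}\nabla_\beta\nabla_\nu\tr_g\k[\eta]$ terms cancel; recognizing $2\nabla_{(\beta}\mc{Q}[\eta]_{\nu)}=\k[\mc{Q}[\eta]]_{\beta\nu}$ from \eqref{def_K} applied to the vector $\mc{Q}[\eta]$, multiplying through by $-2$ and rearranging produces \eqref{id1}.

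The main obstacle is purely the index bookkeeping in the commutator step: one must keep straight the signs and exact index placements of the two curvature terms generated by $[\nabla_\mu,\nabla_\beta]\k[\eta]^\mu{}_\nu$, and ensure all tensors are symmetrized over $\beta\nu$ consistently, so that the pieces line up when \eqref{id2} is used to eliminate $\nabla_\mu\Sigma[\eta]^\mu{}_{\beta\nu}$ in the derivation of \eqref{id1}. Everything else is routine algebra; a more computational alternative starting from \eqref{def_S} and commuting Laplacians in $\square_g\k[\eta]_{\beta\nu}=2\square_g\nabla_{(\beta}\eta_{\nu)}$ would also work but generates more curvature terms and is messier.
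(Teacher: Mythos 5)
Your proposal is correct and follows essentially the same route as the paper: trace and divergence of $\Sigma[\eta]$ from \eqref{Sigmaindex} with a Ricci-identity commutation, the Palatini-type identity \eqref{Yano} to obtain \eqref{id2}, and then substitution of \eqref{traza} together with $\k[\mc{Q}[\eta]]_{\beta\nu}=2\nabla_{(\beta}\mc{Q}[\eta]_{\nu)}$ to obtain \eqref{id1}. No gaps.
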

For any $\mu\in\real$, identity \eqref{id1} can be rewritten as
\begin{equation*}
	\square_g \big(\mc{K}[\eta]_{\beta\nu}-\mu g_{\beta\nu}\big) - \mc{K}[\mc{Q}[\eta]]_{\beta\nu} - 2R_{\sigma(\beta}\big(\mc{K}[\eta]^{\sigma}{}_{\nu)} - \mu\delta^{\sigma}_{\nu)}\big) + 2R_{\sigma\beta\mu\nu}\big(\mc{K}[\eta]^{\sigma\mu}-\mu g^{\sigma\mu}\big) + 2\lie_{\eta}R_{\beta\nu} = 0.
\end{equation*}
Whenever $R_{\mu\nu}=\lambda g_{\mu\nu}$ this identity becomes
\begin{equation}
	\label{equationk}
	\begin{aligned}
	\square_g \big(\mc{K}[\eta]_{\beta\nu}-\mu g_{\beta\nu}\big) - \mc{K}[\mc{Q}[\eta]]_{\beta\nu} - 2R_{\sigma(\beta}\big(\mc{K}[\eta]^{\sigma}{}_{\nu)} - \mu\delta^{\sigma}_{\nu)}\big)&\\
	 + 2R_{\sigma\beta\mu\nu}\big(\mc{K}[\eta]^{\sigma\mu}-\mu g^{\sigma\mu}\big) + 2\lambda\big(\mc{K}_{\beta\nu}-\mu g_{\beta\nu}\big) +2\lambda\mu g_{\beta\nu} &= 0.
	\end{aligned}
\end{equation}
When the vector $\eta$ satisfies $\mc{Q}[\eta]=0$ this becomes a homogeneous PDE for $\mc{K}[\eta]_{\beta\nu}-\mu g_{\beta\nu}$ provided $\mu\lambda=0$. This gives a propagation equation that can be exploited to ensure that a vector field is a (proper) homothety in a Ricci flat manifold, or a Killing vector in a $\lambda$-vacuum manifold. Note that a homothety $\eta$ satisfying $\lie_{\eta} g = \mu g$ with $\mu \neq 0$ cannot exist in an Einstein space with $\lambda \neq 0$, because homotheties satisfy $\Sigma[\eta] =0$ and hence $0 =\lie_{\eta}\ric = \lambda\mu g$.\\

To make the computations below as tractable as possible we shall make use of the following property of the tensor $\Sigma[\eta]$.
\begin{lema}
	\label{lema_Marc}
	Let $(\mc M,g)$ be a semi-Riemannian manifold, $\eta,\zeta\in\X(\mc M)$ arbitrary vector fields, $\k[\eta]\d\lie_{\eta}g$, $\Sigma[\eta]\d\lie_{\eta}\nabla$ and define $\bm\zeta \d g(\zeta,\cdot)$. Then, the following two identities hold 
	\begin{align}
	\zeta^{\alpha}\Sigma[\eta]_{\alpha\beta\mu} &= \nabla_{(\beta}(\lie_{\eta}\bm\zeta)_{\mu)} - \dfrac{1}{2}\lie_{\eta}\lie_{\zeta}g_{\beta\mu},\label{zetaSigma1}\\
\zeta^{\alpha}\Sigma[\eta]_{\alpha\beta\mu}	& = \dfrac{1}{2}\lie_{k^{(\zeta)}} g_{\beta\mu}-\dfrac{1}{2}\lie_{\zeta}\k[\eta]_{\beta\mu},\label{zetaSigma2}
	\end{align} 
where $(k^{(\zeta)})^{\mu} \d \k[\eta]^{\mu}{}_{\beta}\zeta^{\beta}$. 
	\begin{proof}
From \eqref{yano} applied to $A=\bm\zeta$ it follows $$\lie_{\eta}\nabla_{\beta}\bm\zeta_{\mu} = \nabla_{\beta}(\lie_{\eta}\bm\zeta)_{\mu} - \Sigma[\eta]^{\alpha}{}_{\beta\mu}\zeta_{\alpha}\qquad\Longrightarrow\qquad \big(\lie_{\eta}\lie_{\zeta}g\big)_{\beta\mu} = 2\nabla_{(\beta}(\lie_{\eta}\bm\zeta)_{\mu)}-2\Sigma[\eta]^{\alpha}{}_{\beta\mu}\zeta_{\alpha},$$ which proves \eqref{zetaSigma1}. The second identity follows from the first after commuting $\lie_{\eta}\lie_{\zeta}=\lie_{\zeta}\lie_{\eta} + \lie_{[\eta,\zeta]}$ and using $\nabla_{(\beta}\lie_{\eta}\big(g_{\mu)\alpha}\zeta^{\alpha}\big) = \nabla_{(\beta}\big(\k[\eta]_{\mu)\alpha}\zeta^{\alpha}\big) + \nabla_{(\beta}([\eta,\zeta])_{\mu)}=\dfrac{1}{2}\lie_{k^{(\zeta)}}g_{\beta\mu}+\dfrac{1}{2}\lie_{[\eta,\zeta]}g_{\beta\mu}$.
	\end{proof}
\end{lema}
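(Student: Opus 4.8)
The plan is to obtain both identities directly from the Yano commutation formula \eqref{yano} specialised to a one-form, together with the symmetry \eqref{S3} of $\Sigma[\eta]$ in its last two indices and the elementary fact that $\lie_X g_{\beta\mu}=2\nabla_{(\beta}X_{\mu)}$ for any vector field $X$ (a consequence of $\nabla g=0$).

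For \eqref{zetaSigma1} I would apply \eqref{yano} to the $(0,1)$-tensor $A=\bm\zeta$, obtaining
$$\lie_{\eta}\nabla_{\beta}\bm\zeta_{\mu}=\nabla_{\beta}(\lie_{\eta}\bm\zeta)_{\mu}-\zeta_{\alpha}\Sigma[\eta]^{\alpha}{}_{\beta\mu},$$
and then symmetrise in $\beta,\mu$. On the left $\nabla_{(\beta}\bm\zeta_{\mu)}=\frac12\lie_{\zeta}g_{\beta\mu}$, so the symmetrised left-hand side equals $\frac12\lie_{\eta}\lie_{\zeta}g_{\beta\mu}$; on the right $\Sigma[\eta]^{\alpha}{}_{(\beta\mu)}=\Sigma[\eta]^{\alpha}{}_{\beta\mu}$ by \eqref{S3}. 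Rearranging, and noting $\zeta_{\alpha}\Sigma[\eta]^{\alpha}{}_{\beta\mu}=\zeta^{\alpha}\Sigma[\eta]_{\alpha\beta\mu}$, gives \eqref{zetaSigma1}.

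For \eqref{zetaSigma2} I would start from \eqref{zetaSigma1} and rewrite its right-hand side. For the double Lie derivative I use $\lie_{\eta}\lie_{\zeta}=\lie_{\zeta}\lie_{\eta}+\lie_{[\eta,\zeta]}$, which turns $-\frac12\lie_{\eta}\lie_{\zeta}g$ into $-\frac12\lie_{\zeta}\k[\eta]-\frac12\lie_{[\eta,\zeta]}g$. For the gradient term I expand $(\lie_{\eta}\bm\zeta)_{\mu}=(\lie_{\eta}g)_{\mu\alpha}\zeta^{\alpha}+g_{\mu\alpha}[\eta,\zeta]^{\alpha}=(k^{(\zeta)})_{\mu}+([\eta,\zeta])_{\mu}$ and apply $\nabla_{(\beta}X_{\mu)}=\frac12\lie_X g_{\beta\mu}$ to each summand, getting $\nabla_{(\beta}(\lie_{\eta}\bm\zeta)_{\mu)}=\frac12\lie_{k^{(\zeta)}}g_{\beta\mu}+\frac12\lie_{[\eta,\zeta]}g_{\beta\mu}$. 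Substituting both rewritings into \eqref{zetaSigma1}, the two $\lie_{[\eta,\zeta]}g$ contributions cancel and \eqref{zetaSigma2} drops out.

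I do not expect a genuine obstacle: the whole argument is short index bookkeeping. The points that need care are the sign convention in \eqref{yano} for a covariant slot (it contributes $-\Sigma$, not $+\Sigma$), the fact that all contractions and index movements use the fixed metric $g$ so that $\zeta_{\alpha}\Sigma[\eta]^{\alpha}{}_{\beta\mu}=\zeta^{\alpha}\Sigma[\eta]_{\alpha\beta\mu}$ even though $g$ is not $\eta$-invariant, and not confusing $\lie_{\eta}\bm\zeta$ with $g(\lie_{\eta}\zeta,\cdot)$, the mismatch being precisely the $k^{(\zeta)}$ term.
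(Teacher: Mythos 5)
Your proposal is correct and follows essentially the same route as the paper: both identities are obtained by applying the commutation formula \eqref{yano} to the one-form $\bm\zeta$, symmetrising using the symmetry of $\Sigma[\eta]$ in its last two indices, and then passing from \eqref{zetaSigma1} to \eqref{zetaSigma2} via $\lie_{\eta}\lie_{\zeta}=\lie_{\zeta}\lie_{\eta}+\lie_{[\eta,\zeta]}$ together with the split $\lie_{\eta}\bm\zeta=\bm{k}^{(\zeta)}+g([\eta,\zeta],\cdot)$, after which the $\lie_{[\eta,\zeta]}g$ terms cancel. The cautionary remarks you list (sign of the covariant-slot term in \eqref{yano}, raising/lowering with the fixed metric, and the distinction between $\lie_{\eta}\bm\zeta$ and $g(\lie_{\eta}\zeta,\cdot)$) are exactly the right places to be careful.
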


\section{Identities for $\k[\eta]$ and $\Sigma[\eta]$ on abstract hypersurfaces}

\label{sec_hip}

In this section we analyse the consequences of the identities in Section \ref{sec_general} on embedded hypersurface data $\{\mc H,\bg,\bm\ell,\elltwo,\bY\}$. The computations rely on general expressions for the pullback of ambient tensor fields into an arbitrary hypersurface. They are obtained in Appendix \ref{app}. In particular we refer to this appendix for the notation ${}^{(i)}T$, ${}^{(i,j)}T$, ${}^{(i)}\nabla T$ and so on. For simplicity from now on we drop the label ``$[\eta]$'' in the tensors $\k$, $\Sigma$ and $\mc{Q}$. The structure of this section is as follows. First we compute the fully tangential components of $\k$ in terms of hypersurface data and $\eta|_{\mc H}$. Then, we relate the transverse-tangent and transverse-transverse components of $\k|_{\mc H}$ with the first transverse derivative of $\eta$ at $\mc H$. We then move on to the tensor $\Sigma$ and write down its fully tangential components as well as the pullbacks of $\Sigma$ contracted with one or several transverse vectors $\xi$. We conclude the section by expressing the tangential and transversal components of $\mc{Q}[\eta]$ in terms of objects defined on the hypersurface.\\

Given $\eta\in\X(\mc M)$ we define the vector $\bar\eta\in\X(\mc H)$ and the scalar $C\in\mc{F}(\mc H)$ by means of $\eta|_{\Phi(\mc H)} = C\xi + \Phi_{\star}\bar\eta$. Using $\bm\eta_a$ to denote the pullback on $\mc H$ of $\bm\eta\d g(\eta,\cdot)$ one easily has
  \begin{equation}
  	\label{etatangxi}
  	\bm\eta_a =C\ell_a+ \gamma_{ab}\bar\eta^b ,\qquad \bm\eta(\xi) = C\elltwo+\bm\ell(\bar\eta).
  \end{equation}  

  In the next proposition we compute the fully tangent components of the tensor $\mc{K}$ and we show that they can be written solely in terms of hypersurface data and the pair $(C,\bar\eta)$.
\begin{prop}
Let $\{\mc H,\bg,\bm\ell,\elltwo,\bY\}$ be hypersurface data $(\Phi,\xi)$-embedded in $(\mc M,g)$, $\eta\in\X(\mc M)$ and $\mc{K}\d\lie_{\eta} g$. Define $\eta\st{\Phi(\mc H)}{=}C\xi+\Phi_{\star}\bar\eta$. Then,
\begin{equation}
	\label{Adown}
	\mc{K}_{ab} = 2C\Y_{ab}  + 2\ell_{(a}\nablacero_{b)}C +\lie_{\bar\eta}\gamma_{ab}.
\end{equation}
\begin{proof}
Using Proposition \ref{propliezetaT} with $T=g$ and $\zeta=\eta$ and taking into account $\k = \lie_{\eta}g$, $\Phi^{\star}\lie_{\xi}g = 2\bY$ and $\Phi^{\star}g=\bg$ (see Def. \ref{defi_embedded}) the result follows at once.
\end{proof}
\end{prop}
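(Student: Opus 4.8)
The plan is to apply the general pullback formula for the Lie derivative of an ambient symmetric $(0,2)$-tensor along a vector field, as recorded in Appendix \ref{app} (Proposition \ref{propliezetaT}), to the specific case $T=g$ and $\zeta=\eta$. The starting point is the decomposition $\eta\st{\Phi(\mc H)}{=}C\xi+\Phi_{\star}\bar\eta$, which splits $\eta|_{\mc H}$ into a transverse part (the scalar $C$ times the rigging) and a tangential part (the vector $\bar\eta$). The pullback identity expresses $\Phi^{\star}(\lie_{\eta}T)_{ab}$ in terms of $\lie_{\bar\eta}$ acting on the pullback of $T$, plus terms involving $C$, $\nablacero C$, and the pullback of $\lie_{\xi}T$. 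For $T=g$ these ingredients are all dictated by the embedding equations \eqref{embedded_equations}-\eqref{Yembedded}: $\Phi^{\star}g=\bg$, $\Phi^{\star}(g(\xi,\cdot))=\bm\ell$, $\Phi^{\star}(g(\xi,\xi))=\elltwo$, and crucially $\tfrac12\Phi^{\star}(\lie_{\xi}g)=\bY$.

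Concretely, I would first write out what Proposition \ref{propliezetaT} gives: schematically the fully tangential pullback of $\lie_{\zeta}T$ is the sum of $\lie_{\bar\zeta}(\Phi^{\star}T)_{ab}$, a contribution proportional to $C$ times $\Phi^{\star}(\lie_{\xi}T)_{ab}$, and a ``derivative of $C$'' contribution of the form $2\,\Phi^{\star}(T(\xi,\cdot))_{(a}\nablacero_{b)}C$ (the one-form $\Phi^{\star}(T(\xi,\cdot))$ symmetrized against $\nablacero C$). Substituting $T=g$: the first term becomes $\lie_{\bar\eta}\gamma_{ab}$, the second becomes $2C\,\bY_{ab}$ using \eqref{Yembedded}, and the third becomes $2\ell_{(a}\nablacero_{b)}C$ using the second equation in \eqref{embedded_equations}. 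Adding these yields exactly \eqref{Adown}. One small point to check is that the $\nablacero$ appearing in the pullback formula of the appendix is indeed the canonical connection \eqref{nablagamma}-\eqref{nablaell} of the metric data, so that the $\nablacero C$ in the statement is unambiguous; since $C$ is a scalar, $\nablacero_a C = \partial_a C$ and there is no ambiguity at all, which makes this step immediate.

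The main (and essentially only) obstacle is bookkeeping: one must make sure the general pullback identity in Appendix \ref{app} is stated in a form directly applicable here, i.e.\ that it handles a vector field $\zeta$ whose restriction to $\Phi(\mc H)$ has \emph{both} a transverse component $C\xi$ and a tangential component $\bar\eta$, and that the coefficients of the $C$, $\nablacero C$ and $\lie_{\xi}T$ terms carry the correct numerical factors and symmetrizations. Once that lemma is in hand the computation is a one-line substitution with no genuine analytic content, which is why the proof in the text is a single sentence. I would therefore keep the proof short: cite Proposition \ref{propliezetaT} with $T=g$, $\zeta=\eta$, insert $\Phi^{\star}\lie_{\xi}g=2\bY$, $\Phi^{\star}(g(\xi,\cdot))=\bm\ell$ and $\Phi^{\star}g=\bg$, and conclude \eqref{Adown}.
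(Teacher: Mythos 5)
Your proposal is correct and follows exactly the paper's own argument: apply Proposition \ref{propliezetaT} with $T=g$ and $\zeta=\eta$, then substitute $\Phi^{\star}g=\bg$, $\Phi^{\star}(g(\xi,\cdot))=\bm\ell$ and $\Phi^{\star}(\lie_{\xi}g)=2\bY$ to read off the three terms of \eqref{Adown}. Nothing is missing.
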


\begin{rmk}
When $C=1$ and $\bar\eta=0$ (i.e. when $\eta|_{\mc{H}}=\xi$) one gets $\mc{K}_{ab} = 2\Y_{ab}$, as required from the definition of embedded data regarding the tensor $\bY$. Similarly, when $C=\ntwo$ and $\bar\eta = n$ (and thus $\eta|_{\mc{H}}=\nu$) we obtain $\mc{K}_{ab}=2\ntwo\Y_{ab}+2\U_{ab}=2\K_{ab}$, as required from the geometric interpretation of $\bK$ as the second fundamental form of $\Phi(\mc H)$ along the normal $\nu$.
\end{rmk}

Unlike $\k$, the fully tangential components of $\Sigma$ cannot be written solely in terms of hypersurface data and $(C,\bar\eta)$. Indeed, after contracting \eqref{Sigmaindexdown} (or \eqref{def_S}) with three tangent vectors and using Proposition \ref{proppullback} one cannot avoid the appearance of the transverse components of the deformation tensor $\k$ that we introduce next.

\begin{nota}
	Given $\mc{K}\d \lie_{\eta} g$ and an embedded hypersurface $\Phi:\mc H\hookrightarrow\mc M$ with rigging $\xi$, we introduce the one-form $\kt\d\Phi^{\star}\big(\mc{K}(\xi,\cdot)\big)$ and the scalar $\ktt\d \Phi^{\star}\big(\mc{K}(\xi,\xi)\big)$.
\end{nota}

Up to terms involving hypersurface data and $(C,\bar\eta)$, the tensors $\kt_a$, $\ktt$ are in one-to-one corres\-pondence with the transverse derivative of the one-form $\bm\eta$. Indeed,
\begin{align}
\kt_a = 2e_a^{\alpha}\xi^{\beta}\nabla_{(\alpha}\bm\eta_{\beta)} & = e_a^{\alpha}\xi^{\beta}\nabla_{\alpha}\bm\eta_{\beta} + e_a^{\alpha}\xi^{\beta}\nabla_{\beta}\bm\eta_{\alpha}\nonumber\\
&=\nablacero_a\big(\bm\eta(\xi)\big) - 2 e_a^{\alpha} \bm\eta_{\beta}\nabla_{\alpha}\xi^{\beta} + e_a^{\alpha}\big(\lie_{\xi}\bm\eta\big)_{\alpha}\nonumber\\
&\hspace{-0.1cm}\st{\eqref{nablaxi}}{=}\nablacero_a \big(\bm\eta(\xi)\big) - 2\left(\r_a-\s_a+\dfrac{1}{2}\ntwo\nablacero_a\elltwo\right)\bm\eta(\xi) - 2V^b{}_a \eta_b + \big(\lie_{\xi}\bm\eta\big)_a\nonumber\\
&= \elltwo\nablacero_a C + \nablacero_a\big(\bm\ell(\bar\eta)\big) +  \big(\lie_{\xi}\bm\eta\big)_a - 2\big(\Y_{ac}+\F_{ac}\big)\bar\eta^c,\label{liet}
\end{align}
where in the last line we inserted \eqref{etatangxi} and simplified after using \eqref{ellV}-\eqref{gammaV}. Similarly, 
\begin{align}
\ktt =2\xi^{\alpha}\xi^{\beta}\nabla_{\alpha}\bm\eta_{\beta} &= 2\xi^{\beta}\left(\lie_{\xi}\bm\eta_{\beta}-\eta_{\alpha}\nabla_{\beta}\xi^{\alpha}\right)=2\big(\lie_{\xi}\bm\eta\big)(\xi) -2 g(\eta,\nabla_{\xi}\xi).\label{lietr}
\end{align}

Next we write down the pullback of the tensor $\Sigma$ in terms of hypersurface data and the values of $\k$ on the hypersurface.

\begin{prop}
Let $\{\mc H,\bg,\bm\ell,\elltwo,\bY\}$ be hypersurface data $(\Phi,\xi)$-embedded in $(\mc M,g)$, $\eta\in\X(\mc M)$, $\mc{K}\d \lie_{\eta}g$ and $\Sigma\d \lie_{\eta}\nabla$. Then,
\begin{equation}
	\label{Sigmattt}
\Sigma_{abc} = \nablacero_{(b}\mc{K}_{c)a} - \dfrac{1}{2}\nablacero_a\mc{K}_{bc} +\Y_{bc}\mc{K}_{ad}n^d + \big(\U_{bc}+\ntwo\Y_{bc}\big)\mf{K}_a.
\end{equation}
\begin{proof}
Contracting \eqref{Sigmaindexdown} with three tangent vectors, applying \eqref{identity1} in Proposition \ref{proppullback} to $T=\k$ and replacing \eqref{defK}, \eqref{Sigmattt} follows at once. 
\end{proof}
\end{prop}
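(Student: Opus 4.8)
The plan is to derive equation \eqref{Sigmattt} directly from the index expression \eqref{Sigmaindexdown} for $\Sigma[\eta]_{\mu\nu\beta}$ by contracting with three tangent vectors $e_a^\mu e_b^\nu e_c^\beta$ and then systematically replacing each ambient covariant derivative of $\mc{K}$ by its hypersurface counterpart using the pullback formulas of Appendix \ref{app}. Concretely, starting from
$$\Sigma_{abc} = e_a^\mu e_b^\nu e_c^\beta\, \Sigma_{\mu\nu\beta} = \dfrac{1}{2}\left( e_a^\mu e_b^\nu e_c^\beta\nabla_\nu\mc{K}_{\mu\beta} + e_a^\mu e_b^\nu e_c^\beta\nabla_\beta\mc{K}_{\mu\nu} - e_a^\mu e_b^\nu e_c^\beta\nabla_\mu\mc{K}_{\nu\beta}\right),$$
the three terms are of the same type: each is the fully tangential pullback of $\nabla\mc{K}$, i.e. an object of the form ${}^{(0)}\nabla\mc{K}$ in the notation of the appendix. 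The key input is the general pullback identity — labelled \eqref{identity1} in Proposition \ref{proppullback} — which expresses $e_a^\alpha e_b^\beta e_c^\gamma \nabla_\alpha T_{\beta\gamma}$ for a symmetric $(0,2)$-tensor $T$ in terms of $\nablacero T$, plus correction terms involving $\bY$, $\bU$ (via the connection relation \eqref{connections}), the contraction $T(n,\cdot)$, and the transverse component of $T$ along $\xi$. Applying this with $T=\mc{K}$ introduces precisely $\mc{K}_{ad}n^d$ and $\mf{K}_a=\kt_a$, which is why these transverse components unavoidably appear.

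The main steps in order: (i) write $\Sigma_{abc}$ as the half-sum of three tangentially pulled-back covariant derivatives of $\mc{K}$; (ii) invoke Proposition \ref{proppullback}, equation \eqref{identity1}, for each of the three terms, obtaining $\nablacero_b\mc{K}_{ca}$, $\nablacero_c\mc{K}_{ab}$ (wait — one must be careful: actually $\nablacero_c\mc{K}_{ba}$ appears from the second term) and $\nablacero_a\mc{K}_{bc}$, each dressed with its correction terms built from $\bY$, $\bU+\ntwo\bY$, $\mc{K}(n,\cdot)$ and $\kt$; (iii) collect the three $\nablacero\mc{K}$ pieces into the symmetrized combination $\nablacero_{(b}\mc{K}_{c)a} - \tfrac12\nablacero_a\mc{K}_{bc}$, exactly mirroring the symmetric structure of \eqref{Sigmaindexdown}; (iv) collect the correction terms, using the symmetry $\mc{K}_{\mu\nu}=\mc{K}_{\nu\mu}$ and the definition $\bK=\bU+\ntwo\bY$ from \eqref{defK}, to recognize that the $\bY$-type corrections assemble into $\Y_{bc}\mc{K}_{ad}n^d$ and the $\bK$-type corrections into $(\U_{bc}+\ntwo\Y_{bc})\mf{K}_a$. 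Replacing $\bK$ back by $\bU+\ntwo\bY$ as in the statement then gives \eqref{Sigmattt}.

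The step I expect to be the main obstacle is step (iv): bookkeeping the correction terms coming from the three separate applications of \eqref{identity1} and checking that the antisymmetric-looking pieces cancel while the surviving pieces combine into the clean tensors $\Y_{bc}\mc{K}_{ad}n^d$ and $(\bU+\ntwo\bY)_{bc}\mf{K}_a$. In particular, one must track which index of $\mc{K}$ gets contracted with $n$ in each term and use the symmetry of $\mc{K}$ to merge them; the $\ell_{(a}\dots$-type terms that would naively appear from the connection difference \eqref{connections} must be shown to drop out (they do, because the combination entering \eqref{Sigmaindexdown} is precisely the one for which the torsion-free Christoffel-type corrections telescope — this is the same algebraic mechanism by which \eqref{Sigmaindexdown} follows from \eqref{def_S} in the ambient manifold). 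Since this is exactly the pattern already encoded in Appendix \ref{app}, the computation is routine modulo careful symmetrization, and no genuinely new idea is needed beyond invoking \eqref{identity1} and \eqref{defK}.
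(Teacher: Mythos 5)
Your proposal is correct and follows exactly the paper's own (one-line) proof: contract \eqref{Sigmaindexdown} with three tangent vectors, apply \eqref{identity1} with $T=\k$ to each of the three $\nabla\k$ terms, and use the symmetry of $\k$, $\bY$ and $\bU$ together with \eqref{defK} to collect the corrections into $\Y_{bc}\k_{ad}n^d$ and $(\U_{bc}+\ntwo\Y_{bc})\kt_a$. The only small inaccuracy is your worry about $\ell_{(a}\cdots$-type corrections: no such terms ever arise, since the corrections in \eqref{identity1} involve only $\bY$ and $\bU+\ntwo\bY$ contracted with $T(n,\cdot)$ and ${}^{(i)}T$.
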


\begin{rmk}
	\label{rmk1}
We observe that if $\k=\mu\bg$ with $\mu\in\real$ then $\Sigma_{abc}$ vanishes identically at the points where $\kt=\mu\bm\ell$. To show this we insert \eqref{nablaell} and \eqref{gamman} into \eqref{Sigmattt} and get
	\begin{align*}
\Sigma_{abc}&= -\mu \ell_{a}\U_{bc}-\mu\ntwo\ell_a \Y_{bc} +\mu\ell_a \big(\U_{bc}+\ntwo\Y_{bc}\big)=0.
	\end{align*}
\end{rmk}

Another key object in this work is the pullback tensor $\Phi^{\star}\big(\Sigma(\xi,\cdot,\cdot)\big)\eqqcolon\mathbb{\Sigma}$, i.e. the one-transverse--two-tangent components of $\Sigma$. This quantity is important because it arises in several other expressions involving the tensor $\nabla\k$ contracted with one transversal direction and two tangential ones. Let us first find these relationships. We start with $\Phi^{\star}\big({}^{(2)}\nabla\k\big)$, for which we use \eqref{identity3} in Appendix \ref{app} applied to $T=\k$ and $j=1$ to get
\begin{equation}
	\label{2K}
\big({}^{(2)}\nabla\k\big)_{ab} = \nablacero_a\kt_b + \kt(n)\Y_{ab} + \mathbb{k}\big(\U_{ab}+\ntwo\Y_{ab}\big) - \left(\r_{a}-\s_{a}+\dfrac{1}{2}\ntwo\nablacero_{a}\elltwo\right)\kt_b - V^c{}_a \k_{bc}.
\end{equation}
Concerning $\Phi^{\star}\big({}^{(1)}\nabla\k\big)$ we use \eqref{Sigmaindexdown}, namely $\mS_{ab}=	\xi^{\mu} e^{\nu}_a e^{\beta}_b \Sigma_{\mu\nu\beta} =  \big({}^{(2)}\nabla\k\big)_{(ab)} - \dfrac{1}{2}\big({}^{(1)}\nabla\k\big)_{ab}$, which combined with \eqref{2K} gives
\begin{equation}
	\label{1K}
	\begin{aligned}
\hspace{-0.2cm}\big({}^{(1)}\nabla\k\big)_{ab}  &= 2\nablacero_{(a}\kt_{b)} + 2\kt(n)\Y_{ab} + 2\mathbb{k}\big(\U_{ab}+\ntwo\Y_{ab}\big) - 2\left(\r_{(a}-\s_{(a}+\dfrac{1}{2}\ntwo\nablacero_{(a}\elltwo\right)\kt_{b)} \\
&\quad\, - 2V^c{}_{(a} \k_{b)c} - 2\mS_{ab}.
\end{aligned}
\end{equation}
For later use we shall need the contraction of $\big({}^{(2)}\nabla\k\big)_{ab}$ and $\big({}^{(1)}\nabla\k\big)_{ab}$ with $n^b$ in the null case. The former follows after using \eqref{contra1} in Lemma \ref{lemmannabla} with $\bm\theta=\kt$ and $\U_{ab}n^b=0$,
\begin{flalign}
	\label{2Kn}
&&\big({}^{(2)}\nabla\k\big)_{ab}n^b = \nablacero_a\big(\kt(n)\big)  -P^{bc}\U_{ac}\kt_b-V^c{}_a\k_{bc}n^b, &&(\ntwo=0)
\end{flalign}
while the latter follows from \eqref{nnablacerotheta} also in Lemma \ref{lemmannabla} with $\bm\theta=\kt$,
\begin{flalign}
\big({}^{(1)}\nabla\k\big)_{ab}n^b &=	\lie_n\kt_a + \nablacero_a\big(\kt(n)\big) + \kt(n)\big(\r_a-\s_a\big)-2P^{bc}\U_{ac}\kt_b + \kappa_n\kt_a - V^c{}_bn^b\k_{ac} \nonumber\\
		&\quad\,  - V^c{}_a \k_{bc}n^b- 2\mS_{ab}n^b. && \hspace{-1cm}(\ntwo=0)\label{1Kn}
\end{flalign}
It is also useful to write down the pullback of the Lie derivative of $\k$ along $\xi$. Lemma \ref{lema_Marc} [Eq. \eqref{zetaSigma2}] with $\zeta=\xi$ gives $\lie_{\xi}\k_{\alpha\beta} = 2\nabla_{(\alpha}k^{(\xi)}_{\beta)} - \xi^{\mu}\Sigma_{\mu\alpha\beta}$. Pulling this back to $\mc H$ and using \eqref{identity1} in Proposition \ref{proppullback} applied to $T_{\mu} = \k_{\mu\nu}\xi^{\nu}$ we arrive at
\begin{equation}
	\label{aux}
	\big(\lie_{\xi}\mc{K}\big)_{ab} =	2\nablacero_{(a}\mf{K}_{b)} + 2\kt(n)\Y_{ab} +2\mathbb{k}\big(\U_{ab}+\ntwo\Y_{ab}\big)-2\mS_{ab}.
\end{equation}


\begin{rmk}
	\label{rmkmS}
	We note that if $\kt=\mu\bm\ell$, $\ktt=\mu\elltwo$ and $\big(\lie_{\xi}\mc{K}\big)_{ab}=2\mu\Y_{ab}$, then $\mS_{ab}$ vanishes identically. The computation relies on \eqref{aux} and uses \eqref{ell(n)} and \eqref{nablaell} as follows
	\begin{align*}
2\mS_{ab} & = 2\mu\nablacero_{(a}\ell_{b)} +2\mu\bm\ell(n)\Y_{ab} +2\mu\elltwo\big(\U_{ab}+\ntwo\Y_{ab}\big) -2\mu\Y_{ab}\\
&= - 2\mu\elltwo\U_{ab} +2\mu\big(1-\ntwo\elltwo\big)\Y_{ab}+2\mu\elltwo \big(\U_{ab}+\ntwo\Y_{ab}\big)-2\mu\Y_{ab}=0.
	\end{align*}
\end{rmk}
Note that \eqref{2K} together with \eqref{S2}-\eqref{S3} provide the contractions of $\Sigma_{\alpha\beta\mu}$ with one rigging and two tangential directions (in any order) in terms of hypersurface data and $\mS_{ab}$. Specifically, one has 
\begin{align}
\hspace{-0.5cm} {}^{(2)}\Sigma_{ab} = {}^{(3)}\Sigma_{ab} &= -\mathbb{\Sigma}_{ab} + \big({}^{(2)}\nabla\k\big)_{ba}\nonumber\\
&=-\mathbb{\Sigma}_{ab} + \nablacero_b\mf{K}_a +\mf{K}(n)\Y_{ba} + \mathbb{k}\big(\U_{ba}+\ntwo\Y_{ba}\big) \nonumber\\
&\quad\, - \left(\r_b-\s_b + \dfrac{1}{2}\ntwo \nablacero_b\elltwo\right)\mf{K}_a - V^c{}_b \mc{K}_{ac}.\label{symmetries}
\end{align}
In particular, after using \eqref{Un}, \eqref{Vn} as well as \eqref{contra1}-\eqref{contra2} in Lemma \ref{lemmannabla} for $\theta_c = \kt_c$ 
we find that the contractions of ${}^{(2)}\Sigma_{ab}$ with $n^a$ and $n^b$ are (note that ${}^{(2)}\Sigma_{ab}$ is not symmetric)
\begin{equation}
	\label{symmetriesn0}
	\begin{aligned}
		{}^{(2)}\Sigma_{ab}n^a &= -\mS_{ab}n^a + \nablacero_b\big(\kt(n)\big) - P^{ac}\U_{bc}\kt_a - V^c{}_b \k_{ac}n^a + \dfrac{1}{2}\ktt\nablacero_b\ntwo\\
		&\quad\, + \ntwo\left(\dfrac{1}{2}\big(\kt(n)+\ktt\big)\nablacero_b\elltwo + P^{ac}\F_{bc}\kt_a + \ktt(\r_b-\s_b)\right) 
	\end{aligned}
\end{equation}
and
\begin{equation}
	\label{symmetriesn}
	\begin{aligned}
{}^{(2)}\Sigma_{ab}n^b &= -\mS_{ab}n^b +\lie_n\kt_a - P^{bc}\U_{ac}\kt_b + \kt(n)\big(\r_a-\s_a\big)+\left(\kappa_n - \dfrac{1}{2}\ntwo n(\elltwo)\right)\kt_a  \\
&\quad\,- \left(P^{cb}\big(\r_b+\s_b\big)+\dfrac{1}{2}n(\elltwo)n^c\right) \k_{ac}+\dfrac{1}{2}\ktt\nablacero_a\ntwo  \\
&\quad\, +\ntwo\left(\kt(n)\nablacero_a\elltwo + P^{bc}\F_{ac}\kt_b+\ktt\left(\r_a-\s_a+\dfrac{1}{2}\ntwo\nablacero_a\elltwo\right)\right).
	\end{aligned}
\end{equation}
We now take on the task of obtaining the tensor $\mS_{ab}$ in terms of extended hypersurface data and the pair $(C,\bar\eta)$.
\begin{prop}
	\label{propmS}
	Let $\{\mc H,\bg,\bm\ell,\elltwo,\bY,\bZ^{(2)}\}$ be extended hypersurface data $(\Phi,\xi)$-embedded on $(\mc M,g)$ and $\eta\in\X(\mc M)$. Decomposing $\eta\st{\Phi(\mc H)}{=} C\xi + \Phi_{\star}\bar\eta$, then
\begin{equation}
	\label{Sigmaxitt}
	\begin{aligned}
		\mathbb{\Sigma}_{ab} &=  -C\Z^{(2)}_{ab} -\lie_{\bar\eta}\Y_{ab} + \left(\dfrac{1}{2}Cn(\elltwo)+ \elltwo n(C)+ (\lie_{\bar\eta}\bm\ell)(n)\right) \Y_{ab}  +\nablacero_{(a}\elltwo \nablacero_{b)}C \\
		&\quad\,+ \dfrac{1}{2}\big(\bar\eta(\elltwo)+\mathbb{k}\big)\big(\U_{ab}+\ntwo\Y_{ab}\big)   +\dfrac{1}{2}C\nablacero_a\nablacero_b\elltwo   + \elltwo\nablacero_a\nablacero_b C   + \nablacero_{(a}\lie_{\bar\eta}\ell_{b)}.
	\end{aligned}
\end{equation}	
	\begin{proof}
From Lemma \ref{lema_Marc} [Eq. \eqref{zetaSigma1}] with $\zeta=\xi$ we can write
\begin{equation*}
\mS_{ab}=e_a^{\alpha} e_b^{\beta}	\xi^{\mu}\Sigma_{\mu\alpha\beta} = e_a^{\alpha} e_b^{\beta}\nabla_{(\alpha}(\lie_{\eta}\bm\xi)_{\beta)}-\dfrac{1}{2}e_a^{\alpha} e_b^{\beta}\big(\lie_{\eta}\lie_{\xi}g\big)_{\alpha\beta},
\end{equation*} 
which after using Proposition \ref{proppullback} [Eq. \eqref{identity1}] applied to $T=\bm\xi$ in the first term gives 
\begin{equation}
	\label{inter}
	\mS_{ab} = \nablacero_{(a}(\lie_{\eta}\bm\xi)_{b)} +   \Y_{ab}n^c(\lie_{\eta}\bm\xi)_c + \K_{ab} \xi^{\mu}(\lie_{\eta}\bm\xi)_{\mu}-\dfrac{1}{2}e_a^{\alpha} e_b^{\beta}\big(\lie_{\eta}\lie_{\xi}g\big)_{\alpha\beta}.
\end{equation}
Hence to compute the first three terms it suffices to calculate $\Phi^{\star}\lie_{\eta}\bm\xi$ and $\big(\lie_{\eta}\bm\xi\big)(\xi)$. Since the the left-hand side of \eqref{inter} does not depend on the extension of $\xi$ off $\Phi(\mc H)$ we may assume without loss of generality $\nabla_{\xi}\xi=0$ to simplify the computation. Applying \eqref{liezetaT} in Proposition \ref{propliezetaT} with $\zeta=\eta$ and $T=\bm\xi$ together with
\begin{flalign}
	\label{recall}
&&	\Phi^{\star}\lie_{\xi}\bm\xi = \Phi^{\star}\left(\nabla_{\xi}\bm\xi + \dfrac{1}{2} d\big(g(\xi,\xi)\big)\right)=  \dfrac{1}{2}d\elltwo && (\nabla_{\xi}\xi=0)
\end{flalign}
gives 
\begin{equation}
	\label{inter2}
\big(\lie_{\eta}\bm\xi\big)_b = \dfrac{1}{2}C \nablacero_b\elltwo+ \elltwo \nablacero_bC + \big(\lie_{\bar\eta}\bm\ell\big)_b.
\end{equation}
In order to compute $\big(\lie_{\eta}\bm\xi\big)(\xi)$ note that 
\begin{equation}
	\label{lieetaxixi}
	\big(\lie_{\eta}\bm\xi\big)(\xi) =\big( \eta^{\mu}\nabla_{\mu}\xi_{\alpha}+\xi_{\mu}\nabla_{\alpha}\eta^{\mu}\big)\xi^{\alpha} =\dfrac{1}{2}\big(\eta^{\mu}\nabla_{\mu}\big(\xi_{\alpha}\xi^{\alpha}\big) + \xi^{\mu}\xi^{\alpha}\big(\lie_{\eta}g\big)_{\mu\alpha} \big) \st{\mc H}{=}\dfrac{1}{2} \big( \bar\eta(\elltwo)+\ktt\big),
\end{equation}
where in the last equality we used
\begin{flalign}
	\label{recall2}
&& \xi^{\mu} \nabla_{\mu} (\xi_{\alpha} \xi^{\alpha}) = 	\lie_{\xi}\big(g(\xi,\xi)\big)=2g\big(\nabla_{\xi}\xi,\xi\big)=0. && (\nabla_{\xi}\xi=0).
\end{flalign} 
For the last term in \eqref{inter} we apply again Proposition \ref{propliezetaT} [Eq. \eqref{liezetaT}] with $\zeta=\eta$ and $T=\lie_{\xi}g$. Taking into account that now $\bZ^{(2)} = \dfrac{1}{2}\Phi^{\star}\big(\lie^{(2)}_{\xi}g\big)$ (cf. \eqref{Z2embedded}) together with ${}^{(1)}T = \lie_{\xi}\bm\xi$ and \eqref{recall} it follows 
\begin{equation}
	\label{lieetaliexig}
	e_a^{\alpha} e_b^{\beta}\big(\lie_{\eta}\lie_{\xi}g\big)_{\alpha\beta} = 2C\Z^{(2)}_{ab} +\nablacero_{(a}\elltwo \nablacero_{b)}C+2\lie_{\bar\eta}\Y_{ab}.
\end{equation} 
Inserting \eqref{inter2}, \eqref{lieetaxixi} and \eqref{lieetaliexig} into \eqref{inter} and using \eqref{defK} the result is established.
	\end{proof}
\end{prop}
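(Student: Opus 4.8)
The plan is to start from the specialisation of identity \eqref{zetaSigma1} of Lemma \ref{lema_Marc} to $\zeta=\xi$, namely $\xi^{\alpha}\Sigma_{\alpha\beta\mu}=\nabla_{(\beta}(\lie_{\eta}\bm\xi)_{\mu)}-\frac{1}{2}(\lie_{\eta}\lie_{\xi}g)_{\beta\mu}$, and to pull back its two free slots to $\mc H$. For the first term I would invoke the general pullback rule \eqref{identity1} of Proposition \ref{proppullback} applied to the one-form $T=\lie_{\eta}\bm\xi$; this produces $\nablacero_{(a}(\lie_{\eta}\bm\xi)_{b)}$ together with a $\Y_{ab}$-term weighted by $n^c(\lie_{\eta}\bm\xi)_c$ and a $\K_{ab}$-term weighted by $\xi^{\mu}(\lie_{\eta}\bm\xi)_{\mu}$. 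Thus $\mS_{ab}$ is reduced to three explicit ingredients on $\mc H$: the pulled-back one-form $\Phi^{\star}(\lie_{\eta}\bm\xi)$, the scalar $(\lie_{\eta}\bm\xi)(\xi)$, and the pulled-back symmetric tensor $\Phi^{\star}(\lie_{\eta}\lie_{\xi}g)$; the last one is where $\bZ^{(2)}$, and hence the extended-data hypothesis, enters.

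Because $\mS_{ab}=e_a^{\alpha}e_b^{\beta}\xi^{\mu}\Sigma_{\mu\alpha\beta}$ only involves $\xi$ on $\Phi(\mc H)$, it is insensitive to the choice of extension of $\xi$, so I would fix the geodesic extension $\nabla_{\xi}\xi=0$ to simplify all auxiliary computations. With this choice $\Phi^{\star}(\lie_{\xi}\bm\xi)=\frac{1}{2}d\elltwo$ (from $\lie_{\xi}\bm\xi=\nabla_{\xi}\bm\xi+\frac{1}{2}d(g(\xi,\xi))$) and the defining relation \eqref{Z2embedded} collapses to $\Phi^{\star}(\lie_{\xi}^{(2)}g)=2\bZ^{(2)}$. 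Then: (i) $\Phi^{\star}(\lie_{\eta}\bm\xi)$ follows from the general Lie-derivative pullback identity \eqref{liezetaT} of Proposition \ref{propliezetaT} (with $\zeta=\eta$, $T=\bm\xi$), giving $(\lie_{\eta}\bm\xi)_b=\frac{1}{2}C\nablacero_b\elltwo+\elltwo\nablacero_bC+(\lie_{\bar\eta}\bm\ell)_b$; (ii) $(\lie_{\eta}\bm\xi)(\xi)$ is computed directly, rewriting it as $\frac{1}{2}\big(\eta^{\mu}\nabla_{\mu}(\xi_{\alpha}\xi^{\alpha})+\xi^{\mu}\xi^{\alpha}\k_{\mu\alpha}\big)$ and using $\lie_{\xi}(g(\xi,\xi))=0$ together with $\eta|_{\mc H}(\elltwo)=\bar\eta(\elltwo)$ (the $C\xi$ part of $\eta$ drops out) to get $\frac{1}{2}(\bar\eta(\elltwo)+\ktt)$; (iii) $\Phi^{\star}(\lie_{\eta}\lie_{\xi}g)$ follows from \eqref{liezetaT} once more, now with $T=\lie_{\xi}g$, whose pullback is $2\bY$ and whose one-transverse part is $\lie_{\xi}\bm\xi$, yielding $2C\Z^{(2)}_{ab}+\nablacero_{(a}\elltwo\nablacero_{b)}C+2\lie_{\bar\eta}\Y_{ab}$.

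The final step is algebraic: substitute (i)--(iii) back, expand $\nablacero_{(a}(\lie_{\eta}\bm\xi)_{b)}$ by Leibniz (using the torsion-freeness of $\nablacero$, so the Hessians $\nablacero_a\nablacero_bC$ and $\nablacero_a\nablacero_b\elltwo$ are automatically symmetric), replace $\K_{ab}=\U_{ab}+\ntwo\Y_{ab}$ via \eqref{defK}, and collect terms; the three separate $\nablacero C\,\nablacero\elltwo$ contributions merge into a single $\nablacero_{(a}\elltwo\nablacero_{b)}C$ and the rest organise into the stated $\Y_{ab}$, $(\U_{ab}+\ntwo\Y_{ab})$, Hessian and $\lie_{\bar\eta}$ pieces, giving \eqref{Sigmaxitt}. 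I expect the only genuinely delicate points to be the extension-independence argument that licenses the choice $\nabla_{\xi}\xi=0$ — which ultimately rests on the fact (Lemma \ref{propriemxixi1} and the remark following it) that $\Phi^{\star}\big(\lie_{\xi}^{(2)}g-\lie_{\nabla_{\xi}\xi}g\big)=2\bZ^{(2)}$ is precisely the extension-independent combination occurring here — and the bookkeeping of the many cross-terms in the last step; no further conceptual obstacle is anticipated.
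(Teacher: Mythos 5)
Your proposal is correct and follows essentially the same route as the paper's proof: the identity \eqref{zetaSigma1} with $\zeta=\xi$, the pullback rule \eqref{identity1} applied to the one-form $\lie_{\eta}\bm\xi$, the choice $\nabla_{\xi}\xi=0$ justified by extension-independence, and the evaluation of $\Phi^{\star}(\lie_{\eta}\bm\xi)$, $(\lie_{\eta}\bm\xi)(\xi)$ and $\Phi^{\star}(\lie_{\eta}\lie_{\xi}g)$ via Proposition \ref{propliezetaT} are exactly the steps taken in the text. The only minor difference is that you correctly identify the tensor fed into \eqref{identity1} as $\lie_{\eta}\bm\xi$ (the paper's wording ``applied to $T=\bm\xi$'' is a slight imprecision).
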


Having understood the contraction of $\nabla\k$ and $\Sigma$ with one transversal and two tangential directions, we proceed with the computation of the contraction of $\Sigma$ with more than one $\xi$. From \eqref{Sigmaindexdown},
\begin{align}
\Sigma_{\mu\alpha\beta}\xi^{\alpha}\xi^{\beta} &= \xi^{\alpha}\xi^{\beta} \nabla_{\alpha}\k_{\mu\beta} - \dfrac{1}{2}\xi^{\alpha}\xi^{\beta}\nabla_{\mu}\k_{\alpha\beta}\nonumber\\
&= \xi^{\beta} \big(\lie_{\xi}\k_{\mu\beta} - \k_{\mu\rho}\nabla_{\beta}\xi^{\rho} - \k_{\rho\beta}\nabla_{\mu}\xi^{\rho}\big) - \dfrac{1}{2}\nabla_{\mu}\big(\k(\xi,\xi)\big) + \k_{\alpha\beta}\xi^{\alpha}\nabla_{\mu}\xi^{\beta}\nonumber\\
&= \xi^{\beta}\lie_{\xi}\k_{\mu\beta} - \k_{\mu\rho}\nabla_{\xi}\xi^{\rho} - \dfrac{1}{2}\nabla_{\mu}\big(\k(\xi,\xi)\big),\label{23Sigma}
\end{align}
and as consequence,
\begin{equation}
	\Sigma_{\mu\alpha\beta}\xi^{\mu}\xi^{\alpha}\xi^{\beta} = \dfrac{1}{2}\xi^{\mu}\xi^{\beta}\lie_{\xi}\k_{\mu\beta} - \k_{\mu\rho}\xi^{\mu}\xi^{\beta}\nabla_{\beta}\xi^{\rho}.\label{123Sigma}
\end{equation}
Similarly from \eqref{Sigmaindexdown},
\begin{align}
\Sigma_{\mu\alpha\beta}\xi^{\mu}\xi^{\alpha}  =\Sigma_{\mu\beta\alpha}\xi^{\mu}\xi^{\alpha}= \dfrac{1}{2}\xi^{\mu}\xi^{\alpha} \nabla_{\beta}\k_{\mu\alpha}= \dfrac{1}{2}\nabla_{\beta}\big(\k(\xi,\xi)\big)  - \k_{\rho\mu}\xi^{\mu}\nabla_{\beta}\xi^{\rho} .\label{12Sigma}
\end{align}
Defining $\mf{K}^{(2)} \d \Phi^{\star}\big(\lie_{\xi}\mc{K}(\xi,\cdot)\big)$, $\ktt^{(2)}\d \big(\lie_{\xi}\k\big)(\xi,\xi)$ and $\beta,a_{\para}$ by $\nabla_{\xi}\xi\st{\mc H}{=} \beta\xi + \Phi_{\star}a_{\para}$ it follows that the pullbacks of \eqref{23Sigma}-\eqref{12Sigma} are
\begin{align}
\big({}^{(2,3)}\Sigma\big)_c &= \kt^{(2)}_c - \beta\kt_c - a_{\para}^b \k_{bc} - \dfrac{1}{2} \nablacero_c\ktt,\label{23p}\\
{}^{(1,2,3)}\Sigma  &= \dfrac{1}{2}\ktt^{(2)}-\beta\ktt  - a_{\para}^b\kt_b,\label{123p}\\
\big({}^{(1,2)}\Sigma\big)_c = \big({}^{(1,3)}\Sigma\big)_c &= \dfrac{1}{2}\nablacero_c\ktt -  \left(\r_c-\s_c+\dfrac{1}{2}\ntwo\nablacero_c\elltwo\right)\ktt  - \kt_a V^a{}_c,\label{12p}
\end{align}
where in the last one we used \eqref{nablaxi}. We shall also need the contraction of the first and third ones with $n^c$. The first is immediate, and for the second one we use \eqref{Vn}. The result is
\begin{align}
	\hspace{-0.6cm}\big({}^{(2,3)}\Sigma\big)_cn^c &= \kt^{(2)}(n) - \beta\kt(n)-a_{\para}^b\k_{bc}n^c - \dfrac{1}{2}n(\ktt),\label{23n}\\
	\hspace{-0.6cm}\big({}^{(1,2)}\Sigma\big)_cn^c = \big({}^{(1,3)}\Sigma\big)_cn^c &= \dfrac{1}{2}n(\ktt)+\left(\kappa_n-\dfrac{1}{2}\ntwo n(\elltwo)\right)\ktt - P^{ab}\big(\r+\s\big)_a\kt_b -\dfrac{1}{2}n(\elltwo)\kt(n).\label{12n}
\end{align}
Observe that, from $\nu=\Phi_{\star}n+\ntwo\xi$, these expressions imply 
\begin{align}
\hspace{-0.3cm}\Sigma(\nu,\xi,\xi) & \st{\mc H}{=} \kt^{(2)}(n) - \beta\kt(n)-a_{\para}^b\k_{bc}n^c - \dfrac{1}{2}n(\ktt) + \ntwo\left(\dfrac{1}{2}\ktt^{(2)}-\beta\ktt  - a_{\para}^b\kt_b\right),\label{23null}\\
\hspace{-0.3cm} \Sigma(\xi,\nu,\xi) = \Sigma(\xi,\xi,\nu) &\st{\mc H}{=}´\dfrac{1}{2}n(\ktt)+\left(\kappa_n-\dfrac{1}{2}\ntwo n(\elltwo)\right)\ktt - P^{ab}\big(\r+\s\big)_a\kt_b -\dfrac{1}{2}n(\elltwo)\kt(n)\nonumber\\
&\quad\, + \ntwo\left(\dfrac{1}{2}\ktt^{(2)}-\beta\ktt  - a_{\para}^b\kt_b\right).\label{12null}
\end{align}


We conclude this section by computing the tangential and transverse components of $\mc{Q}[\eta]$ on the hypersurface. The result will play a crucial role both in the non-null and in the null cases considered below.


\begin{prop}
	\label{lemaQs}
Let $\{\mc H,\bg,\bm\ell,\elltwo,\bY\}$ be hypersurface data $(\Phi,\xi)$-embedded in $(\mc M,g)$, $\eta\in\X(\mc M)$, $\k\d\lie_{\eta}g$, $\Sigma\d\lie_{\eta}\nabla$ and let $\mc{Q}_c \d e_c^{\mu}\mc{Q}_{\mu}$ and $\mf{q}\d \xi^{\mu}\mc{Q}_{\mu}$. Then,
\begin{align}
\mc{Q}_c & = - 2\mS_{ca}n^a + 2\lie_n\kt_c -2P^{ab}\U_{bc}\kt_a + \big(2\kappa_n + \tr_P\bU\big)\kt_c + 2\kt(n)\big(\r_c-\s_c + \ntwo\nablacero_c\elltwo\big)\nonumber \\
&\quad\,  + P^{ab}\nablacero_a\k_{bc} - \dfrac{1}{2}P^{ab}\nablacero_c \k_{ab} + \big(\tr_P\bY-n(\elltwo)\big)\k_{cd}n^d  -2P^{db}(\r_b+\s_b)\k_{dc} +\ktt \nablacero_c \ntwo  \nonumber \\
&\quad\, +\ntwo \Bigg(\kt^{(2)}_c - \beta\kt_c -a_{\para}^b\k_{bc} + \big(\tr_P\bY-n(\elltwo)\big)\kt_c - \dfrac{1}{2}\nablacero_c\ktt+2P^{ab}\F_{cb}\kt_a \nonumber\\
&\qquad\qquad + 2\left(\r_c-\s_c +\dfrac{1}{2} \ntwo \nablacero_c\elltwo\right)\ktt\Bigg)\label{identityQ}
\end{align}
and
\begin{equation}
	\label{q}
	\begin{aligned}
\mf{q}  &=  \tr_P\mathbb{\Sigma} + n\big(\mathbb{k}\big)+2\kappa_n\ktt -2P^{ab}(\r+\s)_a\mf{K}_b -n(\elltwo) \kt(n)  \\
&\quad\, + \ntwo \left(\dfrac{1}{2}\mathbb{k}^{(2)}-\big(n(\elltwo)+\beta\big)\mathbb{k}-a_{\para}^b\mf{K}_b\right) ,
	\end{aligned}
\end{equation}
where $\k$, $\kt$, $\ktt$, $\mS$, $\beta$, $a_{\para}$, $\mf{K}^{(2)}$ and $\ktt^{(2)}$ are as before.
\begin{proof}
	In order to prove the first identity we contract the definition $\mc{Q}_{\mu} = g^{\alpha\beta}\Sigma_{\mu\alpha\beta}$ with $e^{\mu}_c$ and insert \eqref{inversemetric} to get (recall that $\Sigma$ is symmetric in the last two indices)
	\begin{align*}
\mc{Q}_c &= \left(P^{ab}e_a^{\alpha}e_b^{\beta}  +2n^b e_b^{\beta} \xi^{\alpha} + \ntwo \xi^{\alpha}\xi^{\beta}\right) e_c^{\mu} \Sigma_{\mu\alpha\beta}=P^{ab} \Sigma_{cab} + 2\big({}^{(2)}\Sigma\big)_{cb}n^b + \ntwo \big({}^{(2,3)}\Sigma\big)_c.
	\end{align*}
Identity \eqref{identityQ} follows after inserting \eqref{Sigmattt}, \eqref{symmetriesn} and \eqref{23p}. To prove \eqref{q} we contract $\mc{Q}_{\mu} = g^{\alpha\beta}\Sigma_{\mu\alpha\beta}$ with $\xi^{\mu}$,
\begin{equation}
	\label{a0}
\mf{q}  = \left(P^{ab}e_a^{\alpha}e_b^{\beta}  + 2n^b e_b^{\beta} \xi^{\alpha} + \ntwo \xi^{\alpha}\xi^{\beta}\right) \xi^{\mu} \Sigma_{\mu\alpha\beta}= P^{ab}\mS_{ab} + 2\big({}^{(1,2)}\Sigma\big)_b n^b +\ntwo \big({}^{(1,2,3)}\Sigma\big),
\end{equation}
and substitute \eqref{12n} and \eqref{123p}.
\end{proof}
\end{prop}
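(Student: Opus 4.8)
The plan is to derive both identities directly from the definition $\mathcal{Q}_{\mu}=g^{\alpha\beta}\Sigma_{\mu\alpha\beta}$ by inserting the decomposition \eqref{inversemetric} of the inverse ambient metric at points of $\Phi(\mc H)$ and then replacing each resulting contraction of $\Sigma$ by the expressions already obtained earlier in this section. No new geometric input is required: the argument amounts to a careful regrouping of previously computed terms.

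For the tangential component I would contract $\mathcal{Q}_{\mu}=g^{\alpha\beta}\Sigma_{\mu\alpha\beta}$ with $e_c^{\mu}$. Since $\Sigma$ is symmetric in its last two slots by \eqref{S3}, the two mixed pieces of \eqref{inversemetric} coincide after contraction, so
\[
\mathcal{Q}_c = P^{ab}\Sigma_{cab} + 2\big({}^{(2)}\Sigma\big)_{cb}n^b + \ntwo\big({}^{(2,3)}\Sigma\big)_c .
\]
Into this I would substitute: the fully tangential components \eqref{Sigmattt} with the free index $c$ placed in the first slot and then traced against $P^{ab}$ (this turns $P^{ab}\nablacero_{(a}\mc{K}_{b)c}$ into $P^{ab}\nablacero_a\mc{K}_{bc}$ and produces the trace terms $(\tr_P\bU+\ntwo\tr_P\bY)\kt_c$ and $(\tr_P\bY)\mc{K}_{cd}n^d$ from the last two summands of \eqref{Sigmattt}); the one-transverse/two-tangent contraction \eqref{symmetriesn} with its free index equal to $c$ and contracted with $n^b$; and the two-transverse/one-tangent contraction \eqref{23p}. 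Collecting the result is the only laborious step: one merges the $\mc{K}_{cd}n^d$ pieces coming from \eqref{Sigmattt} and from the $\tfrac{1}{2}n(\elltwo)n^{d}\mc{K}_{cd}$ term of \eqref{symmetriesn} into $(\tr_P\bY-n(\elltwo))\mc{K}_{cd}n^d$, merges $2\kt(n)(\r_c-\s_c)$ with $2\ntwo\kt(n)\nablacero_c\elltwo$, combines $\tr_P\bU\,\kt_c$ with the $2\kappa_n\kt_c$ coming from \eqref{symmetriesn}, and checks that all the remaining $\ntwo$-weighted contributions of \eqref{symmetriesn}, together with $\ntwo\big({}^{(2,3)}\Sigma\big)_c$, reorganise exactly into the bracket multiplying $\ntwo$ in \eqref{identityQ}.

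For the transverse component the computation is shorter. Contracting $\mathcal{Q}_{\mu}=g^{\alpha\beta}\Sigma_{\mu\alpha\beta}$ with $\xi^{\mu}$ and using \eqref{inversemetric} gives, by the very definition $\mS=\Phi^{\star}\big(\Sigma(\xi,\cdot,\cdot)\big)$,
\[
\mf{q} = P^{ab}\mS_{ab} + 2\big({}^{(1,2)}\Sigma\big)_b n^b + \ntwo\big({}^{(1,2,3)}\Sigma\big) = \tr_P\mS + 2\big({}^{(1,2)}\Sigma\big)_b n^b + \ntwo\big({}^{(1,2,3)}\Sigma\big).
\]
I would then substitute \eqref{12n} for $\big({}^{(1,2)}\Sigma\big)_b n^b$ and \eqref{123p} for $\big({}^{(1,2,3)}\Sigma\big)$, and move the $\ntwo n(\elltwo)\ktt$ term produced by \eqref{12n} (after the overall factor $2$) into the $\ntwo$-bracket, where it joins the $\ntwo\beta\ktt$ term of \eqref{123p} to form the coefficient $\big(n(\elltwo)+\beta\big)\mathbb{k}$; this gives \eqref{q} at once.

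The main obstacle is purely organisational. Identity \eqref{identityQ} accumulates, from three separate substitutions, terms of several structural types --- $\nablacero$-derivatives of $\mc{K}$ and of $\kt$, the transport term $\lie_n\kt$, products of $\kt$ with $\r$, $\s$, $\bU$, $\F$, and the second-transverse objects $\kt^{(2)}$ and $\ktt$ --- so one must keep careful track of all signs and of the $\ntwo$-grading in order to recognise the stated regrouping. All the auxiliary contraction identities that are needed (in particular \eqref{Un}, \eqref{Vn} and the contraction lemmas already invoked to establish \eqref{symmetriesn} and \eqref{12n}) are in place, so beyond this bookkeeping nothing further is required.
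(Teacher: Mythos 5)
Your proposal is correct and follows essentially the same route as the paper: contract $\mathcal{Q}_{\mu}=g^{\alpha\beta}\Sigma_{\mu\alpha\beta}$ with $e_c^{\mu}$ and $\xi^{\mu}$, insert \eqref{inversemetric} using the symmetry of $\Sigma$ in its last two slots, and substitute \eqref{Sigmattt}, \eqref{symmetriesn}, \eqref{23p} (respectively \eqref{12n}, \eqref{123p}). The regroupings you describe (merging the $\mc{K}_{cd}n^d$, $\kt(n)$, $\kappa_n\kt_c$ and $\ntwo$-weighted terms) check out exactly against \eqref{identityQ} and \eqref{q}.
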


\section{Non-null case}

\label{sec_nonnull}

In this section we revisit the KID problem for spacelike Cauchy data in the language of hypersurface data. The existence results are known, both in the case of Killing vectors \cite{moncrief1975spacetime,coll1977evolution,beig1997killing} and in the homothetic case \cite{garcia2019conformal}. However, the form of the Killing initial data equations that we find in this paper are more flexible as they are written in an arbitrary gauge. As we shall discuss below, this form can be useful in practical applications. We reproduce the existence argument in the present setting for completeness. First we restate the classic well-posedness theorem of the Einstein field equations \cite{Choquet,Geroch} in the language of hypersurface data. Throughout this section we will assume $\bg$ to be positive definite and $\mc{A}$ of Lorentzian signature. Recall that by ``$\lambda$-vacuum'' we mean that equation \eqref{notacion} holds.

\begin{teo}
	\label{existencespacelike}
Let $\{\mc{H},\bg,\bm\ell,\elltwo,\bY\}$ be hypersurface data of dimension $\mf n$, $\lambda\in\real$ and assume the following constraint equations (cf. \eqref{Einxi}-\eqref{Eine})
	\begin{align}
P^{ac}A_{abc}n^b + \dfrac{1}{2}P^{ac}P^{bd}B_{abcd} &= -\dfrac{\mf{n}-1}{2}\lambda\label{cons1},\\
\ntwo P^{bd}A_{bcd} - A_{bcd}n^b n^d + P^{bd}B_{abcd}n^a &=0\label{cons2},
	\end{align}
where $A$ and $B$ are as in \eqref{A}-\eqref{B}. Then there exists a unique maximal and globally hyperbolic $\lambda$-vacuum spacetime $(\mc M,g)$, embedding $\Phi:\mc{H}\hookrightarrow\mc M$ and rigging $\xi$ such that $\{\mc{H},\bg,\bm\ell,\elltwo,\bY\}$ is $(\Phi,\xi)$-embedded on $(\mc M,g)$. 
\end{teo}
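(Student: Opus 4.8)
The statement is essentially a dictionary translation of the classical Choquet-Bruhat / Choquet-Bruhat--Geroch well-posedness theorem \cite{Choquet,Geroch} (in its cosmological-constant version, see e.g.\ \cite{ringstrom}) into the hypersurface data language, so the plan is to \emph{reduce} to that theorem rather than to redo any analysis. First I would use the gauge freedom \eqref{transgamma}--\eqref{transY} to bring the data into a ``normal gauge''. Since $\bg$ is gauge invariant it defines a Riemannian metric $h\d\bg$ on $\mc H$, and because $\mc A$ is Lorentzian with $\bg$ positive definite there is a unique $(z,V)\in\mc F^\star(\mc H)\times\X(\mc H)$ such that the transformed data has $\bm\ell=0$ and $\elltwo=-1$; by \eqref{gamman}--\eqref{ell(n)} this forces $n=0$ and $\ntwo=-1$, and by \eqref{Yembedded} the tensor $\bY$ then coincides with the candidate second fundamental form $K$, so the abstract data is equivalent to the ADM pair $(h,K)$.

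Next I would check that, in this gauge, \eqref{cons1}--\eqref{cons2} are exactly the Hamiltonian and momentum constraints for $(h,K)$. This follows from \eqref{Einxi}--\eqref{Eine}: once the data is embedded the left-hand sides of \eqref{cons1} and \eqref{cons2} equal $G^\alpha{}_\beta\nu_\alpha\xi^\beta$ and $G^\alpha{}_\beta\nu_\alpha e_c^\beta$, i.e.\ the normal--normal and normal--tangential projections of the ambient Einstein tensor, which by the Gauss--Codazzi identities are differential expressions in $(h,K)$ alone; imposing \eqref{notacion} and using $\nu_\alpha\xi^\alpha=1$, $\nu_\alpha e_c^\alpha=0$ turns these into $-\tfrac{\mf n-1}{2}\lambda$ and $0$ respectively, which are \eqref{cons1}--\eqref{cons2}. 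One also has to observe that the system \eqref{cons1}--\eqref{cons2} is gauge covariant (the left-hand side of \eqref{cons2} merely rescales, and modulo \eqref{cons2} the left-hand side of \eqref{cons1} is gauge invariant), so it suffices to impose the constraints in the normal gauge; equivalently, the given data solves \eqref{cons1}--\eqref{cons2} if and only if $(h,K)$ solves the $\lambda$-vacuum constraints.

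At this point I would invoke the classical result: a smooth Riemannian $(\mc H,h)$ together with a symmetric tensor $K$ satisfying the $\lambda$-vacuum constraints admits a unique maximal globally hyperbolic $\lambda$-vacuum development $(\mc M,g)$ and an embedding $\Phi:\mc H\hookrightarrow\mc M$ for which $h=\Phi^\star g$ is the first fundamental form and $K=\tfrac12\Phi^\star(\lie_\nu g)$ the second fundamental form with respect to the future unit normal $\nu$. Setting the rigging $\xi\d\nu$ one verifies directly against Definition \ref{defi_embedded} that the normal-gauge data $\{\mc H,\bg,\bm\ell=0,\elltwo=-1,\bY=K\}$ is $(\Phi,\nu)$-embedded; undoing the gauge transformation of the first step (which on the spacetime side is just the change of rigging $\xi'=z(\xi+\Phi_\star V)$) then gives the statement for the data as originally prescribed and for the intended rigging. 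The only genuinely nonroutine ingredient is this bookkeeping of how the abstract gauge move \eqref{transgamma}--\eqref{transY} matches the change of rigging on the spacetime side, together with the verification that the combinations of $A,B$ in \eqref{cons1}--\eqref{cons2} reduce to the standard constraints in the normal gauge (which \eqref{Einxi}--\eqref{Eine} already encode); uniqueness and maximality of $(\mc M,g)$ are inherited verbatim from \cite{Geroch}, and no new PDE analysis is needed.
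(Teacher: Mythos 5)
Your proposal is correct and follows essentially the route the paper itself takes: the paper states Theorem \ref{existencespacelike} without a formal proof, as a restatement of the Choquet-Bruhat--Geroch theorem, and the accompanying Remark \ref{rmk} carries out exactly your reduction (pass to the gauge $\bm\ell=0$, $\elltwo=-1$ so that $\bY$ encodes the second fundamental form and \eqref{cons1}--\eqref{cons2} become the standard Hamiltonian and momentum constraints \eqref{cons3}--\eqref{cons4}). Your additional remarks on the gauge covariance of the constraint system and on matching the abstract gauge transformation with the change of rigging are the right bookkeeping points, and they are consistent with what the paper asserts.
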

\begin{rmk}
	\label{rmk}
For hypersurface data of Lorentzian signature with $\bg$ positive definite there always exists a gauge satisfying $\bm\ell=0$ and $\elltwo=-1$ (and hence $P^{ab}=\gamma^{ab}$, $n=0$, $\ntwo=-1$) \cite{Marc1}. This gauge corresponds geometrically to a choice of rigging $\xi$ that is unit and normal to $\Phi(\mc H)$. Since in this case $\nu = -\xi$ (see \eqref{nuthetaa}) the tensor $\bY$ coincides with minus the second fundamental form $\bK$ of $\Phi(\mc H)$ w.r.t the normal $\bm{\nu}$. In addition, $\nablacero$ agrees with the Levi-Civita connection of $\bg$, ${}^{(\gamma)}\nabla$. Therefore (cf. \eqref{A}-\eqref{B}) $A_{bcd}=-2{}^{(\gamma)}\nabla_{[d}\K_{c]b}$ and $B_{abcd} = R^{(\gamma)}_{abcd}-2\K_{b[c}\K_{d]a}$, so the constraint equations \eqref{cons1}-\eqref{cons2} simplify to 
\begin{align}
R^{(\gamma)} - \K_{ab}\K^{ab} + \big(\tr_{\gamma}\bK\big)^2 &= -(\mf n-2)\lambda,\label{cons3}\\
{}^{(\gamma)}\nabla_b\big(\K^b{}_a-(\tr_{\gamma}\K)\delta^b{}_a\big) & = 0,\label{cons4}
\end{align}
which are the standard constraint equations for spacelike hypersurfaces. Although \eqref{cons1}-\eqref{cons2} are certainly more complicated than \eqref{cons3}-\eqref{cons4}, they are useful because they have the key property of being fully gauge covariant, and hence one can adapt the gauge to simplify the problem at hand. For instance, if one wants to solve the KID problem with a Killing vector that is transverse to the initial hypersurface (but not necessarily normal), one can fix the gauge so that $C=1$ and $\bar\eta=0$, and hence $\bY=0$ (see \eqref{Yembedded}). Geometrically this gauge corresponds to choosing $\xi = \eta|_{\Phi(\mc H)}$.
\end{rmk}
To construct the candidate to Killing/homothetic vector the strategy is to integrate the equation one gets by setting $\mc{Q}=0$ and $\ric=\lambda g$ in \eqref{defQ}. The initial data $\eta|_{\mc{H}}$ and $\lie_{\xi}\eta|_{\mc H}$ are to be selected so that $\mc{K}\st{\mc H}{=} \mu g$ and $\lie_{\xi}\mc{K} \st{\mc H}{=} \mu \lie_{\xi} g$ with $\mu$ a constant restricted to satisfy $\lambda\mu=0$. Since \eqref{equationk} is then a homogeneous wave equation with trivial initial data it follows that $\mc{K}=\mu g$ everywhere on $(\mc M,g)$. This approach requires first of all being able to write down $\k_{ab}$ and $\big(\lie_{\xi}\k\big)_{ab}$ in terms of hypersurface data in order to ensure $\k_{ab}=\mu\gamma_{ab}$ and $\big(\lie_{\xi}\k\big)_{ab}=2\mu\Y_{ab}$ at the abstract level. The former is simply (see \eqref{Adown})
\begin{equation}
	\label{kid1}
\mu\gamma_{ab} =2C\Y_{ab}  + 2\ell_{(a}\nablacero_{b)}C +\lie_{\bar\eta}\gamma_{ab}.
\end{equation}
For the latter we first insert \eqref{Sigmaxitt} into \eqref{aux} and use \eqref{Z2} to get the (still general) identity
\begin{align}
	\hspace{-0.19cm}\Phi^{\star} \big(\lie_{\xi}\mc{K}\big)_{ab}& = \dfrac{2C}{\ntwo}\left(\Rcero_{(ab)}- {R}_{ab} -2\lie_n\Y_{ab} - \left(2\kappa_n+\tr_P\bU-\ntwo\big(n(\elltwo)-\tr_P\bY\big)\right)\Y_{ab}  \right.\nonumber\\
	&\quad\, \left.+ \nablacero_{(a}\big(\s+2\r\big)_{b)}- 2\r_a\r_b +4\r_{(a}\s_{b)} - \s_a\s_b -\big(\tr_P\bY\big) \U_{ab} + 2P^{cd}\U_{c(a}\big(2\Y+\F\big)_{b)d}\right)\nonumber\\
	&\quad\,	+2C\left(-2\r_{(a}\nablacero_{b)}\elltwo+P^{cd}\big(\Pi_{ca}\Pi_{bd}+\Pi_{ac}\Pi_{db}\big) + \dfrac{1}{4}\ntwo \nablacero_a\elltwo\nablacero_b\elltwo-\dfrac{1}{2}n(\elltwo)\Y_{ab}\right.\nonumber\\
	&\qquad\qquad \, \left.  -\dfrac{1}{2}\nablacero_a\nablacero_b\elltwo\right) -2\elltwo\big(n(C)\Y_{ab}+\nablacero_a\nablacero_b C \big)-2\nablacero_{(a}C\nablacero_{b)}\elltwo+\big(\mathbb{k}-\bar\eta(\elltwo)\big)\K_{ab}\nonumber\\	
	&\quad\, +2\lie_{\bar\eta}\Y_{ab}-2\left((\lie_{\bar\eta}\bm\ell)(n)-\kt(n)\right) \Y_{ab}  -2\nablacero_{(a}\lie_{\bar\eta}\ell_{b)}+2\nablacero_{(a}\mf{K}_{b)}.\label{liexiK2}
\end{align} 
Thus, the equation one must impose at the initial data level is the same one but replacing $\big(\lie_{\xi}\k\big)_{ab}$ by $2\mu\Y_{ab}$, $\ktt$ by $\mu\elltwo$, $\kt$ by $\mu\bm\ell$ and ${R}_{ab}$ by $\lambda\gamma_{ab}$, i.e.
\begin{align}
2\mu\Y_{ab}& = \dfrac{2C}{\ntwo}\left(\Rcero_{(ab)}- \lambda\gamma_{ab} -2\lie_n\Y_{ab} - \left(2\kappa_n+\tr_P\bU-\ntwo\big(n(\elltwo)-\tr_P\bY\big)\right)\Y_{ab}  \right.\nonumber\\
	&\quad\, \left.+ \nablacero_{(a}\big(\s+2\r\big)_{b)}- 2\r_a\r_b +4\r_{(a}\s_{b)} - \s_a\s_b -\big(\tr_P\bY\big) \U_{ab} + 2P^{cd}\U_{c(a}\big(2\Y+\F\big)_{b)d}\right)\nonumber\\
	&\quad\,	+2C\left(-2\r_{(a}\nablacero_{b)}\elltwo+P^{cd}\big(\Pi_{ca}\Pi_{bd}+\Pi_{ac}\Pi_{db}\big) + \dfrac{1}{4}\ntwo \nablacero_a\elltwo\nablacero_b\elltwo-\dfrac{1}{2}n(\elltwo)\Y_{ab} \right.\nonumber\\
	&\qquad\qquad\,\left. -\dfrac{1}{2}\nablacero_a\nablacero_b\elltwo\right) -2\elltwo\big(n(C)\Y_{ab}+\nablacero_a\nablacero_b C \big)-2\nablacero_{(a}C\nablacero_{b)}\elltwo-\big(\mu\elltwo+\bar\eta(\elltwo)\big)\K_{ab} \nonumber\\	
	&\quad\,+2\lie_{\bar\eta}\Y_{ab}-2\left((\lie_{\bar\eta}\bm\ell)(n)-\mu\right) \Y_{ab}  -2\nablacero_{(a}\lie_{\bar\eta}\ell_{b)}.\label{kid2}
\end{align} 

The hypersurface data version of the existence theorem for homothetic/Killing initial data is the following.

\begin{teo}
	\label{teo_espacial}
	Let $\{\mc{H},\bg,\bm\ell,\elltwo,\bY\}$ be hypersurface data and $\lambda\in\real$ such that the constraint equations \eqref{cons1}-\eqref{cons2} hold. Assume that $(C,\bar\eta)\in\mc{F}(\mc H)\times\X(\mc H)$ satisfy \eqref{kid1} and \eqref{kid2} with $\mu\in\real$ such that $\lambda\mu=0$. Then there exists a unique maximal, globally hyperbolic $\lambda$-vacuum spacetime $(\mc M,g)$, embedding $\Phi:\mc{H}\hookrightarrow\mc M$, rigging $\xi$ and vector field $\eta$ such that $\{\mc{H},\bg,\bm\ell,\elltwo,\bY\}$ is $(\Phi,\xi)$-embedded on $(\mc M,g)$, $\eta|_{\Phi(\mc H)} = C\xi + \Phi_{\star}\bar\eta$ and $\lie_{\eta}g=\mu g$ on $\mc M$.
	\begin{proof}
By Theorem \ref{existencespacelike} there exists a unique maximal, globally hyperbolic $\lambda$-vacuum spacetime $(\mc M,g)$, embedding $\Phi:\mc{H}\hookrightarrow\mc M$ and rigging $\xi$ such that $\{\mc{H},\bg,\bm\ell,\elltwo,\bY\}$ is $(\Phi,\xi)$-embedded on $(\mc M,g)$. Let us extend $\xi$ off $\Phi(\mc H)$ by $\nabla_{\xi}\xi=0$ and construct the (unique) vector field $\eta\in\X(\mc M)$ by integrating (cf. \eqref{defQ}) $$\square_g \eta^{\mu} + \lambda\eta^{\mu}=0$$ with initial conditions $\eta|_{\mc H} = C\xi + \Phi_{\star}\bar\eta$ and 
\begin{equation}
	\label{condini}
	\big(\lie_{\xi}\bm\eta\big)_a = \mu\ell_a - \elltwo\nablacero_aC - \nablacero_a\big(\bm\ell(\bar\eta)\big)+2\big(\Y_{ac}+\F_{ac}\big)\bar\eta^c,\qquad \big(\lie_{\xi}\bm\eta\big)(\xi) = \dfrac{1}{2}\mu\elltwo.
\end{equation} 
By the identities \eqref{liet} and \eqref{lietr} it follows that $\kt=\mu\bm\ell$ and $\ktt=\mu\elltwo$. Equation \eqref{kid1} guarantees (by identity \eqref{Adown}) that $\k_{ab}=\mu\gamma_{ab}$, and hence $\mc{K}\st{\mc H}{=} \mu g$. Therefore, condition \eqref{kid2} along with identity \eqref{liexiK2} implies $\big(\lie_{\xi}\k\big)_{ab} \st{\mc H}{=} 2\mu \Y_{ab}$. Observe also that $\mS=0$ (see Remark \ref{rmkmS}). To make sure that $\lie_{\xi}\k=\mu\lie_{\xi} g$ on the initial hypersurface $\Phi(\mc H)$ it only remains to show that $\kt_c^{(2)} =\mu \Phi^{\star}\big(\big(\lie_{\xi}g\big)(\xi,\cdot)\big)_c$ and $\ktt^{(2)} = \Phi^{\star}\big(\big(\lie_{\xi}g\big)(\xi,\xi)\big)$.\\

Inserting $\mc{K}_{ab}=\mu\gamma_{ab}$, $\kt_a=\mu\ell_a$, $\ktt=\mu\elltwo$ and $\mS=0$ into \eqref{identityQ} with $\mc{Q}_c=0$ (and $\beta=a_{\para}=0$) and using \eqref{gamman}-\eqref{Pgamma} and \eqref{nablagamma} gives
\begin{align*}
0 &=  2\mu\lie_n\ell_c +2\mu\elltwo \U_{bc}n^b + \big(2\kappa_n + \tr_P\bU \big)\mu\ell_c + 2\mu \big(1-\ntwo\elltwo\big) \big(\r_c-\s_c + \ntwo\nablacero_c\elltwo\big)\nonumber \\
&\quad\,  -2 \mu P^{ab} \U_{a(b}\ell_{c)} +\mu P^{ab}\U_{c(a}\ell_{b)}   - 2\mu\big(\delta^b_c - n^b\ell_c\big)(\r_b+\s_b)  + \mu\elltwo\nablacero_c \ntwo \nonumber \\
&\quad\, +\ntwo \left(\kt_c^{(2)} - \dfrac{1}{2}\mu\nablacero_c\elltwo+2\mu\elltwo \s_c + \left(2\r_c-2\s_c + \ntwo \nablacero_c\elltwo\right)\mu\elltwo\right)\\
&= 2\mu\lie_n\ell_c +2\mu\elltwo \U_{bc}n^b  -4\mu\s_c + \mu\elltwo\nablacero_c \ntwo +\ntwo\left(\kt^{(2)}_c + \dfrac{3}{2}\mu\nablacero_c\elltwo +2\mu\elltwo\s_c - \mu\ntwo\elltwo \nablacero_c\elltwo\right).
\end{align*}
Using now \eqref{lienell} and \eqref{Un} this equation becomes
\begin{equation}
	\ntwo \left(\kt_c^{(2)}-\dfrac{1}{2}\mu\nablacero_c\elltwo\right)=0 \qquad \Longrightarrow\qquad \kt_c^{(2)} = \dfrac{1}{2}\mu\nablacero_c\elltwo.
\end{equation}
By the equality $\Phi^{\star}\big(\big(\lie_{\xi}g\big)(\xi,\cdot)\big) = \Phi^{\star}\big(\lie_{\xi}\bm\xi\big) = \frac{1}{2}\nablacero\elltwo$ valid when $\nabla_{\xi}\xi=0$ (see \eqref{recall}) we conclude $\kt^{(2)} =\mu \Phi^{\star}\big(\big(\lie_{\xi}g\big)(\xi,\cdot)\big)$, as needed. Similarly, inserting $\mS=0$, $\ktt=\mu\elltwo$ and $\kt=\mu\bm\ell$ into \eqref{q} with $\mf{q}=0$ (and $\beta= a_{\para}=0$) gives 
\begin{equation}
	\ntwo \ktt^{(2)} = 0\qquad \Longrightarrow \qquad \ktt^{(2)} = 0.
\end{equation}
Since $\Phi^{\star}\big(\big(\lie_{\xi}g\big)(\xi,\xi)\big) = 0$ (see \eqref{recall2}) we conclude $\ktt^{(2)} = \Phi^{\star}\big(\big(\lie_{\xi}g\big)(\xi,\xi)\big)$, and therefore $\lie_{\xi}\k|_{\Phi(\mc H)} = \mu \lie_{\xi}g|_{\Phi(\mc H)}$. Equation \eqref{equationk} with $\mc Q=0$, namely 
\begin{equation*}
		\square_g \big(\mc{K}_{\beta\nu}-\mu g_{\beta\nu}\big) - 2R_{\sigma(\beta}\big(\mc{K}^{\sigma}{}_{\nu)} - \mu\delta^{\sigma}_{\nu)}\big)		+ 2R_{\sigma\beta\mu\nu}\big(\mc{K}^{\sigma\mu}-\mu g^{\sigma\mu}\big) + 2\lambda\big(\mc{K}_{\beta\nu}-\mu g_{\beta\nu}\big)  = 0,
\end{equation*}
then proves $\mc{K} = \mu g$ everywhere on $\mc M$, and hence $\eta$ is a homothety or Killing vector (depending on the value of $\mu$).
	\end{proof}
\end{teo}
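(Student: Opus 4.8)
The plan is to realise the classical spacelike KID argument entirely within the hypersurface data formalism, using the general identities of Sections \ref{sec_general}--\ref{sec_hip}. First I would invoke Theorem \ref{existencespacelike}: since the constraints \eqref{cons1}--\eqref{cons2} hold, there is a unique maximal, globally hyperbolic $\lambda$-vacuum development $(\mc M,g)$ in which $\{\mc H,\bg,\bm\ell,\elltwo,\bY\}$ is $(\Phi,\xi)$-embedded. I would then extend the rigging off $\Phi(\mc H)$ by the geodesic condition $\nabla_{\xi}\xi=0$, which sets $\beta=0$, $a_{\para}=0$ and gives $\Phi^{\star}\big((\lie_{\xi}g)(\xi,\cdot)\big)=\tfrac12 d\elltwo$ and $\Phi^{\star}\big((\lie_{\xi}g)(\xi,\xi)\big)=0$ (cf. \eqref{recall}--\eqref{recall2}); these are exactly the ``transverse derivative of $g$'' values that a homothety with factor $\mu$ must reproduce on $\mc H$. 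Finally I would define $\eta$ as the unique solution of the linear wave equation $\square_g\eta^{\mu}+\lambda\eta^{\mu}=0$ — which, since $\ric=\lambda g$, coincides with $\mc Q[\eta]=0$ by \eqref{defQ} — with Cauchy data $\eta|_{\Phi(\mc H)}=C\xi+\Phi_{\star}\bar\eta$ and $\lie_{\xi}\bm\eta|_{\mc H}$ prescribed by \eqref{condini}.

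The next step is to verify that $\mc K[\eta]=\mu g$ and $\lie_{\xi}\mc K[\eta]=\mu\lie_{\xi}g$ on $\Phi(\mc H)$. The choice \eqref{condini} is tailored, via the general identities \eqref{liet} and \eqref{lietr}, so that $\kt=\mu\bm\ell$ and $\ktt=\mu\elltwo$. Then identity \eqref{Adown} together with hypothesis \eqref{kid1} gives $\mc K_{ab}=\mu\gamma_{ab}$, and these three relations together say exactly that the full spacetime tensor $\mc K[\eta]|_{\Phi(\mc H)}$ equals $\mu g|_{\Phi(\mc H)}$. Substituting $\mc K_{ab}=\mu\gamma_{ab}$, $\kt=\mu\bm\ell$, $\ktt=\mu\elltwo$ and $R_{ab}=\lambda\gamma_{ab}$ into the general identity \eqref{liexiK2} and invoking hypothesis \eqref{kid2} yields $\big(\lie_{\xi}\mc K\big)_{ab}=2\mu\Y_{ab}$; with these facts in hand Remark \ref{rmkmS} then forces $\mS=0$ on $\mc H$.

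It remains to pin down the transverse--tangent and transverse--transverse parts of $\lie_{\xi}\mc K|_{\mc H}$, namely $\kt^{(2)}$ and $\ktt^{(2)}$, which are not free data but are fixed by the wave equation already imposed on $\eta$. Here I would use $\mc Q[\eta]=0$, so that $\mc Q_c=0$ and $\mf q=0$, and substitute $\mc K_{ab}=\mu\gamma_{ab}$, $\kt=\mu\bm\ell$, $\ktt=\mu\elltwo$, $\mS=0$, $\beta=a_{\para}=0$ into the explicit expressions \eqref{identityQ} and \eqref{q}. Using the structure relations \eqref{gamman}--\eqref{Pgamma}, \eqref{nablagamma}, \eqref{lienell} and \eqref{Un}, and crucially dividing by the nowhere-vanishing scalar $\ntwo$ (available precisely because the hypersurface is non-null), these equations collapse after simplification to $\kt^{(2)}_c=\tfrac12\mu\nablacero_c\elltwo$ and $\ktt^{(2)}=0$, i.e. to $\mu\Phi^{\star}\big((\lie_{\xi}g)(\xi,\cdot)\big)$ and $\Phi^{\star}\big((\lie_{\xi}g)(\xi,\xi)\big)$ in the geodesic gauge. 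Combined with the previous paragraph, this gives $\lie_{\xi}\mc K=\mu\lie_{\xi}g$ on all of $\Phi(\mc H)$. This is the step I expect to present the main difficulty: one must carefully separate the components of $\lie_{\xi}\mc K$ that are dictated by the gauge and field equations from those imposed by the KID hypotheses, and confirm that the former take exactly the homothetic values.

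To conclude, set $T\d\mc K[\eta]-\mu g$. Because $\mc Q[\eta]=0$, $\ric=\lambda g$ and $\lambda\mu=0$, identity \eqref{equationk} reduces to a homogeneous linear second-order hyperbolic equation for $T$. The steps above show $T=0$ and $\lie_{\xi}T=0$ on the spacelike hypersurface $\Phi(\mc H)$; since $\xi$ is transverse this is trivial Cauchy data, and uniqueness for the wave equation gives $T\equiv0$ on $\mc M$. Hence $\lie_{\eta}g=\mu g$ everywhere, so $\eta$ is a Killing vector when $\mu=0$ and a proper homothety (necessarily with $\lambda=0$, forced by $\lambda\mu=0$) otherwise.
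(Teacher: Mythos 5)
Your proposal is correct and follows essentially the same route as the paper's proof: existence from Theorem \ref{existencespacelike}, geodesic extension of $\xi$, the wave equation $\square_g\eta+\lambda\eta=0$ with data \eqref{condini}, verification of $\mc K=\mu g$ and $\lie_{\xi}\mc K=\mu\lie_{\xi}g$ on $\Phi(\mc H)$ via \eqref{liet}, \eqref{lietr}, \eqref{Adown}, \eqref{liexiK2} and the vanishing of $\mS$, the determination of $\kt^{(2)}$ and $\ktt^{(2)}$ from $\mc Q_c=0$ and $\mf q=0$ using $\ntwo\neq0$, and the final propagation argument via \eqref{equationk}. No gaps.
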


The existence result for the spacelike KID problem can be generalized to include general matter \cite{coll1977evolution,racz1999existence}, in the sense that the matter field itself is invariant along the Killing vector.

\begin{rmk}
Equations \eqref{kid1}-\eqref{kid2} are the hypersurface data version of the standard KID equations. Indeed, if one chooses the gauge as in Remark \ref{rmk} these equations become
\begin{align}
0 & =-2C\K_{ab}  +2{}^{(\gamma)}\nabla_{(a}\bar\eta_{b)} -	\mu\gamma_{ab},\label{kid1b}\\
0& = C\left(-R^{(\gamma)}_{ab}+\lambda\gamma_{ab}+2\gamma^{cd}\K_{ac}\K_{bd}-(\tr_{\gamma}\bK)\K_{ab}\right)- \lie_{\bar\eta}\K_{ab}  +\dfrac{1}{2}\mu \K_{ab} +\nabla_a^{(\gamma)}\nabla_b^{(\gamma)} C,\label{kid2b}
\end{align}
 which are the standard KID equations (see \cite{moncrief1975spacetime,coll1977evolution,beig1997killing} for the Killing case and \cite{garcia2019conformal} for homo\-theties). The advantage of \eqref{kid1}-\eqref{kid2} w.r.t \eqref{kid1b}-\eqref{kid2b} is that the gauge is fully free, so it can be adapted to one's convenience. For instance, if one wants to construct a Killing/homothetic vector transverse to the initial hypersurface (but not necessarily normal) it may be easier to choose the gauge such that $C=1$ and $\bar\eta=0$, because then $\bY=2\mu\bg$ and the constraint and KID equations take a much simpler form. Geometrically this gauge corresponds to choosing $\xi = \eta|_{\Phi(\mc H)}$.
\end{rmk}

\begin{rmk}
The KID equations \eqref{kid1} and \eqref{kid2} are at the level of the initial data, in the sense that they do not involve more data that the one in Theorem \ref{existencespacelike} (apart from the pair $(C,\bar\eta)$, of course).
\end{rmk}

\section{Null hypersurfaces}

\label{sec_null}

In this section we find necessary conditions on a null hypersurface to be embeddable in a manifold $(\mc M,g)$ with a homothetic vector field $\eta$, i.e. $\k=\mu g$. We will then analyse their sufficiency in two specific cases, namely the characteristic problem and the smooth spacelike-characteristic problem. Other Cauchy problems (such as the spacelike-characteristic with corner) could be treated similarly. \\

In \cite{chrusciel2013kids} the authors perform a similar analysis for Killing vectors and derive a set of equations on an embedded null hypersurface that relate the would-be Killing vector on the hypersurface, the spacetime Levi-Civita connection and Riemann tensor, as well as the induced metric on cross-sections of $\mc H$. Then they study their sufficiency in two specific scenarios, namely the light-cone case and two null hypersurfaces intersecting in a spacelike surface. The equations presented in \cite{chrusciel2013kids} are formulated in a particular coordinate system and with a specific choice of transverse vector, while in this paper we find analogous equations in a gauge covariant and coordinate-free manner. More importantly, in Section \ref{sec_charact} we will show that the equations can be written solely in terms of the initial data for the characteristic $\lambda$-vacuum equations, thus suppressing any reference to the ambient space one wishes to construct. Consequently, we will be able to reformulate the existence theorem of homothetic/Killing vectors for the characteristic problem in a fully detached way. This puts the Killing initial data problem in the characteristic case in an equal footing to the KID problem in the standard Cauchy case, where the KID equations are written in terms of initial value quantities, with a priori no reference to the spacetime to be constructed.\\

In any of the two cases we will examine, the initial data for the wave equation in the null subset of the hypersurface consists only of the value of the field, and not any of its transverse derivatives. This makes the analysis of the KID conditions different to the spacelike case, as now well-posedness of equations \eqref{defQ} and \eqref{equationk} only requires initial data for $\eta|_{\mc H}$ and $\mc{K}|_{\mc H}$ (and not for their first transverse derivative). The first immediate consequence is that one cannot use \eqref{liet}-\eqref{lietr} to achieve $\kt=\mu\bm\ell$ and $\ktt=\mu\elltwo$ as in the non-null case. Instead, one must prove that $\kt=\mu\bm\ell$ and $\ktt=\mu\elltwo$ hold by some other means. The strategy is to find transport equations for the tensors $\kt-\mu\bm\ell$ and $\ktt-\mu\elltwo$. To do that we start by writing down \eqref{identityQ}-\eqref{q} with $\ntwo=0$,
\begin{align}
\mc{Q}_c &=-2\mS_{ac}n^a + 2\lie_n\kt_c -2P^{ab}\U_{cb}\kt_a +\big(2\kappa_n+\tr_P\bU\big)\kt_c + 2\kt(n)\big(\r-\s \big)_c \nonumber\\
&\quad\, + P^{ab}\nablacero_a\k_{bc} - \dfrac{1}{2}P^{ab}\nablacero_c\k_{ab} + \left(\tr_P\bY-n(\elltwo)\right)\k_{cd}n^d - 2P^{bd}(\r+\s)_b\k_{dc} ,\label{eqQ}\\
\mf{q} & =\tr_P\mathbb{\Sigma} + n\big(\mathbb{k}\big) + 2\kappa_n \mathbb{k} - 2P^{ab}(\r+\s)_a\mf{K}_b    - n(\elltwo)\mf{K}(n),\label{eqq}
\end{align} 
as well as the contraction of \eqref{eqQ} with $n^c$ (we shall use that $\U_{cb}n^c = 0$ repeatedly),
\begin{equation}
	\label{eqQn}
	\begin{aligned}
\mc{Q}_cn^c &= -2\mathbb{\Sigma}_{bc}n^bn^c +2\lie_n\big(\kt(n)\big) +\big(\tr_P\bU\big) \mf{K}(n)+P^{ab}n^c\nablacero_a\mc{K}_{bc}- \dfrac{1}{2}P^{ab}n^c\nablacero_c\k_{ab}\\
&\quad\, + \big(\tr_P\bY-n(\elltwo)\big)\mc{K}_{cd}n^cn^d-2P^{bd}(\s+\r)_b\k_{dc}n^c.
\end{aligned}
\end{equation}

As we show explicitly in the following lemma, the contraction of $\mS_{ab}$ with $n^b$ only depends on hypersurface data and the pair $(C,\bar\eta)$. This will be crucial in Section \ref{sec_charact} to show that the KID equations are at the level of the initial data.

\begin{lema}
	\label{mSnn}
Let $\{\mc H,\bg,\bm\ell,\elltwo,\bY\}$ be null hypersurface data $(\Phi,\xi)$-embedded in $(\mc M,g)$. Let $\eta\in\X(\mc M)$ and define $\eta\st{\mc H}{=} C\xi + \Phi_{\star}\bar\eta$. Then,
\begin{align}
\hspace{-0.3cm}\mS_{ab}n^b &= -C\left( \ric(\xi,e_a) +P^{bc}A_{bca} + P^{bc}\big(\r_b+\s_b\big)\big(\Y_{ac}+\F_{ac} \big)  - \dfrac{1}{2}\kappa_n\nablacero_a\elltwo -2\nablacero_a\big(n(\elltwo)\big)+2P^{bc}\U_{ab}\nablacero_c\elltwo\right) \nonumber \\
&\quad\, -n^b \lie_{\bar\eta}\Y_{ab} + \left(\elltwo n(C)+\big(\lie_{\bar\eta}\bm\ell\big)(n)\right)\big(\r_a-\s_a\big) + \dfrac{1}{2}n(\elltwo)\nablacero_aC + \dfrac{1}{2}n(C)\nablacero_a\elltwo\nonumber \\
&\quad\,   + \elltwo\big(\nablacero_a\big(n(C)\big) - P^{bc}\U_{ab}\nablacero_cC  \big) +\dfrac{1}{2}\lie_n\lie_{\bar\eta}\ell_a  +\dfrac{1}{2} \nablacero_a \big((\lie_{\bar\eta}\bm\ell)(n)\big)  - P^{bc}\U_{ab}\lie_{\bar\eta}\ell_c\label{mSn}
\end{align}
and
\begin{align}
	\mS_{ab}n^an^b &= -C\left( \ric(\xi,\nu) +P^{bc}A_{bca}n^a + P\big(\br+\bs,\br+\bs\big) - \dfrac{1}{2}\kappa_nn(\elltwo) -2\lie_n^{(2)}\elltwo\right)-n^an^b \lie_{\bar\eta}\Y_{ab} \nonumber \\
	&\quad\,  - \left(\elltwo n(C)+\big(\lie_{\bar\eta}\bm\ell\big)(n)\right)\kappa_n + n(\elltwo)n(C)   + \elltwo\lie_n^{(2)}C  +\lie_n\big(\big(\lie_{\bar\eta}\bm\ell\big)(n)\big)  .\label{mSnn2}
\end{align}
\begin{proof}
The starting point is \eqref{Sigmaxitt}, which after contraction with $n^b$ gives
\begin{align*}
\mS_{ab}n^b &= -C\z_a^{(2)} - n^b\lie_{\bar\eta}\Y_{ab}+ \left(\dfrac{1}{2}C n(\elltwo) + \elltwo n(C)+\big(\lie_{\bar\eta}\bm\ell\big)(n)\right)\r_a + \dfrac{1}{2}n(\elltwo)\nablacero_aC + \dfrac{1}{2}n(C)\nablacero_a\elltwo \\
&\quad\, + \dfrac{1}{2}C n^b \nablacero_a\nablacero_b \elltwo + \elltwo n^b\nablacero_a\nablacero_b C + n^b\nablacero_{(a}\lie_{\bar\eta}\ell_{b)}.
\end{align*}
We already know that $\bz^{(2)}$ is expressible in terms of hypersurface data and Ricci tensor components. Thus, \eqref{mSn} follows after inserting \eqref{z2} and using $$n^b \nablacero_a\nablacero_b f = \nablacero_a n(f) - (\nablacero_b f)\nablacero_an^b \st{\eqref{derivadan}}{=} \nablacero_a \big(n(f) \big)- P^{bc}\U_{ab}\nablacero_c f - n(f) \s_a$$ together with \eqref{nnablacerotheta} in Lemma \ref{lemmannabla} applied to $\theta_b=\lie_{\eta}\ell_b$. To obtain \eqref{mSnn2} it suffices to contract \eqref{mSn} with $n^a$ again.
\end{proof}
\end{lema}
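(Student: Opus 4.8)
The plan is to start from the formula for $\mathbb{\Sigma}_{ab}$ obtained in Proposition \ref{propmS}, equation \eqref{Sigmaxitt}, and simply contract it with $n^b$. Because $\ntwo=0$ here, the only place the second transverse derivative appears is through $\Z^{(2)}_{ab}n^b=\bz^{(2)}_a$, and the crucial point is that in the null case $\bz^{(2)}$ is not a free piece of data: it is fixed by the ambient Ricci tensor via \eqref{z2}. Substituting \eqref{z2} for $-C\bz^{(2)}_a$ therefore removes every occurrence of $\bZ^{(2)}$ and leaves an expression built purely from hypersurface data, the components $\ric(\xi,e_a)$ and the pair $(C,\bar\eta)$ --- which is exactly the structure asserted in \eqref{mSn}, and the reason the lemma matters for Section \ref{sec_charact}.

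What remains is bookkeeping. After contracting \eqref{Sigmaxitt} with $n^b$ and using $\U_{ab}n^b=0$, $\Y_{ab}n^b=\r_a$ (recall $\br\d\bY(n,\cdot)$) and $\F_{ab}n^a=\s_b$, the differential pieces still to be processed are the two Hessian contractions, of the form $C\,n^b\nablacero_a\nablacero_b\elltwo$ and $\elltwo\,n^b\nablacero_a\nablacero_b C$, together with $n^b\nablacero_{(a}\lie_{\bar\eta}\ell_{b)}$. I would rewrite the Hessians with the elementary identity $n^b\nablacero_a\nablacero_b f=\nablacero_a\big(n(f)\big)-(\nablacero_b f)\nablacero_a n^b$ and then expand $\nablacero_a n^b$ via \eqref{derivadan}, which in the null case reduces to $\nablacero_a n^b=\s_a n^b+P^{bc}\U_{ca}$. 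For the Lie-derivative term I would split it into its two pieces and apply \eqref{nnablacerotheta} of Lemma \ref{lemmannabla} to the one-form $\theta_a=(\lie_{\bar\eta}\bm\ell)_a$, which rewrites $n^b\nablacero_b\theta_a$ as $\lie_n\theta_a$ plus the same connection corrections. Collecting everything --- and noting that the $\r_a$ and $\s_a$ contributions reorganise into $\big(\elltwo n(C)+(\lie_{\bar\eta}\bm\ell)(n)\big)(\r_a-\s_a)$ --- produces \eqref{mSn}.

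For the double contraction \eqref{mSnn2} I would then contract \eqref{mSn} once more with $n^a$. Several simplifications happen automatically: every term carrying a factor $P^{bc}\U_{ab}$ drops out since $\U_{ab}n^a=0$; the one-forms $\br-\bs$ and $\nablacero\elltwo$ collapse to the scalars $\kappa_n=-\br(n)$ and $n(\elltwo)$; the iterated $n$-derivatives become $\lie_n^{(2)}\elltwo$, $\lie_n^{(2)}C$ and $\lie_n\big((\lie_{\bar\eta}\bm\ell)(n)\big)$; $\ric(\xi,e_a)n^a=\ric(\xi,\nu)$ because $\nu=\Phi_{\star}n$ in the null case (cf.\ \eqref{nuthetaa} with $\ntwo=0$); and $P^{bc}(\r_b+\s_b)(\Y_{ac}+\F_{ac})n^a=P(\br+\bs,\br+\bs)$. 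The hard part is not conceptual but purely organisational: one must carry a large number of terms through the $n^b$-contraction and track the (non-symmetric) connection corrections produced by Lemma \ref{lemmannabla}; the payoff, and the point of the lemma, is that once \eqref{z2} has been inserted the right-hand sides are manifestly built from data known before the Einstein equations are solved.
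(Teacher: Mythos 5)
Your proposal is correct and follows essentially the same route as the paper: contract \eqref{Sigmaxitt} with $n^b$, eliminate $\bz^{(2)}$ via \eqref{z2}, process the Hessian terms with $n^b\nablacero_a\nablacero_b f=\nablacero_a(n(f))-(\nablacero_b f)\nablacero_a n^b$ together with \eqref{derivadan}, handle $n^b\nablacero_{(a}\lie_{\bar\eta}\ell_{b)}$ via \eqref{nnablacerotheta}, and contract once more with $n^a$ for the second identity. No gaps.
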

In the following lemma we show that \eqref{eqQ}-\eqref{eqQn} possess a hierarchical structure that eventually leads to transport equations for $\kt$ and $\ktt$.
\begin{lema}
	\label{lemaagrupado}
Assume $\k$ is of the form $\k = \mu\bg$ for some $\mu\in\real$. Then,
\begin{enumerate}
	\item $\mc{Q}_cn^c = -2\mathbb{\Sigma}_{ac}n^an^c +2\lie_n\big(\kt(n)-\mu\big) +\big(\kt(n)-\mu\big)\tr_P\bU$.
	\item If, moreover, $\kt(n)=\mu$, then $$\mc{Q}_c = -2\mathbb{\Sigma}_{ac}n^a + 2\lie_n\big(\mf{K}_c-\mu\ell_c\big)  -2P^{ab}\U_{bc}\big(\mf{K}_a-\mu\ell_a\big)  +\big(2\kappa_n+\tr_P\bU\big)\big(\mf{K}_c-\mu\ell_c\big).$$
	\item If $\kt=\mu\bm\ell$, then $\mf{q} = \tr_P\mS +n\big(\mathbb{k}-\mu\elltwo\big) + 2\kappa_n \big(\mathbb{k}-\mu\elltwo\big)$.
\end{enumerate}
\begin{proof}
The three items are consequences of the identities in Lemma \ref{lemaQs}, or more specifically, of their simplification \eqref{eqQ}-\eqref{eqq} in the case $\ntwo=0$. First note that under the hypothesis $\k = \mu\bg$ together with $\lie_n\ell_c = 2\s_c$, $P^{ab}\U_{bc}\ell_a = 0$, \eqref{combina} and \eqref{Pgamma}, expression \eqref{eqQ} can be written as 
\begin{align*}
	\mc{Q}_c &= -2\mS_{ca}n^c +2 \lie_n\big(\kt_c-\mu\ell_c\big)  -2P^{ab}\U_{bc}\big(\kt_a-\mu\ell_a\big) +\big(2\kappa_n+\tr_P\bU\big)\big(\kt_c-\mu\ell_c\big)\\
	&\quad\, + 2\big(\kt(n)-\mu\big)\big(\r-\s\big)_c.
\end{align*}
The contraction of this with $n^c$ gives item 1., and the substitution $\kt(n)=\mu$ gives item 2. Finally, the last item is immediate from \eqref{eqq} after using \eqref{Pell}.
\end{proof}
\end{lema}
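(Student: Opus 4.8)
The plan is to derive all three items directly from the simplified identities \eqref{eqQ}--\eqref{eqq}, which are the $\ntwo=0$ specializations of \eqref{identityQ}--\eqref{q} established in Proposition \ref{lemaQs}. The key observation is that under the hypothesis $\k=\mu\bg$ one has $\k_{ab}=\mu\gamma_{ab}$ and $\k_{cd}n^d=\mu\gamma_{cd}n^d=0$ by \eqref{gamman} with $\ntwo=0$; moreover $\mf{K}(n)=\k_{cd}\xi^c n^d$ is not directly controlled but $\k_{cd}n^cn^d=0$, and the only genuinely nontrivial work is rewriting the three terms $P^{ab}\nablacero_a\k_{bc}$, $\tfrac12 P^{ab}\nablacero_c\k_{ab}$ and the $P^{bd}(\r+\s)_b\k_{dc}$ term.

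\textbf{Proof of item 1.} I would first substitute $\k=\mu\bg$ into \eqref{eqQ}. For the connection terms, using \eqref{nablagamma} we get $\nablacero_a\k_{bc}=\mu\nablacero_a\gamma_{bc}=-\mu(\ell_c\U_{ab}+\ell_b\U_{ac})$, hence $P^{ab}\nablacero_a\k_{bc}=-\mu(\ell_c P^{ab}\U_{ab}+\ell_b P^{ab}\U_{ac})=-\mu(\tr_P\bU)\ell_c$ after using \eqref{Pell} with $\ell^{(2)}$ absorbed — actually $P^{ab}\ell_b\U_{ac}=-\ell^{(2)}n^b\U_{bc}$, and in the null case this still needs the identity $\U_{bc}n^b=0$ from \eqref{Un}, so the second piece vanishes and $P^{ab}\nablacero_a\k_{bc}=-\mu(\tr_P\bU)\ell_c$. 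Similarly $\tfrac12 P^{ab}\nablacero_c\k_{ab}=\tfrac12\mu P^{ab}\nablacero_c\gamma_{ab}$; combining with \eqref{combina} (i.e. $2\nablacero_{(a}\gamma_{b)c}-\nablacero_c\gamma_{ab}=-2\ell_c\U_{ab}$) one rewrites $P^{ab}\nablacero_a\k_{bc}-\tfrac12 P^{ab}\nablacero_c\k_{ab}$ cleanly. The $\k_{cd}n^d$ term drops since $\k_{cd}n^d=0$, and $-2P^{bd}(\r+\s)_b\k_{dc}=-2\mu P^{bd}(\r+\s)_b\gamma_{dc}=-2\mu(\r+\s)_c+2\mu n^d\ell_c(\r+\s)_d$ via \eqref{Pgamma}. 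Next I replace $\kt_c\to(\kt_c-\mu\ell_c)+\mu\ell_c$ everywhere and collect: using $\lie_n\ell_c=2\s_c$ (from \eqref{lienell} with $\ntwo=0$) and $P^{ab}\U_{bc}\ell_a=0$, the terms proportional to $\mu$ alone cancel in pairs, leaving exactly the displayed expression in Lemma \ref{lemaagrupado} (the pre-item-1 display in the excerpt). Contracting that with $n^c$: $\mS_{ca}n^cn^a\to\mathbb{\Sigma}_{ac}n^an^c$ (symmetrizing is harmless since we contract with $n^an^c$), $P^{ab}\U_{bc}n^c(\kt_a-\mu\ell_a)=0$ because $\U_{bc}n^c=0$, and $2(\kt(n)-\mu)(\r-\s)_cn^c=2(\kt(n)-\mu)(\r(n)-\s(n))$. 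Recalling $\kappa_n=-\br(n)$ and $\bs(n)=\bF(n,n)=0$, this last term is $-2(\kt(n)-\mu)\kappa_n$. Finally $\lie_n\big((\kt(n)-\mu)\ell_c\big)n^c$... more carefully: the $\lie_n$ term in the display is $2\lie_n(\kt_c-\mu\ell_c)$; contracting with $n^c$ and using $\lie_n(\kt(n))=n^c\lie_n\kt_c+\kt_c\lie_nn^c$ with $\lie_nn^c=0$, one gets $2\lie_n(\kt(n)-\mu)$ plus correction terms $2(\kt_c-\mu\ell_c)\lie_nn^c$ which vanish. Tracking the $\big(2\kappa_n+\tr_P\bU\big)(\kt_c-\mu\ell_c)n^c=\big(2\kappa_n+\tr_P\bU\big)(\kt(n)-\mu)$ term, the $2\kappa_n(\kt(n)-\mu)$ piece cancels against the $-2(\kt(n)-\mu)\kappa_n$ above, leaving precisely $\mc{Q}_cn^c=-2\mathbb{\Sigma}_{ac}n^an^c+2\lie_n(\kt(n)-\mu)+(\kt(n)-\mu)\tr_P\bU$, which is item 1. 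A sanity check against \eqref{eqQn} directly (with $\k=\mu\bg$, so $\k_{cd}n^cn^d=0$ and the connection contractions computed as above) gives the same thing.

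\textbf{Proof of items 2 and 3.} For item 2, the hypothesis $\kt(n)=\mu$ is imposed in the display obtained in the course of item 1: the residual term $2(\kt(n)-\mu)(\r-\s)_c$ vanishes identically, leaving $\mc{Q}_c=-2\mS_{ca}n^c+2\lie_n(\mf{K}_c-\mu\ell_c)-2P^{ab}\U_{bc}(\mf{K}_a-\mu\ell_a)+(2\kappa_n+\tr_P\bU)(\mf{K}_c-\mu\ell_c)$, which is exactly item 2 once one writes $\mS_{ca}n^c=\mathbb{\Sigma}_{ac}n^a$ (the ordering of the first two indices of $\mathbb{\Sigma}$ is irrelevant since $\mS_{ab}$ is defined as the one-transverse--two-tangent pullback and $\mathbb{\Sigma}$ is symmetric in its two tangent slots). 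For item 3, I start from \eqref{eqq}; under $\kt=\mu\bm\ell$ we have $\mf{K}_b=\mu\ell_b$, so $P^{ab}(\r+\s)_a\mf{K}_b=\mu P^{ab}(\r+\s)_a\ell_b=-\mu\ell^{(2)}n^a(\r+\s)_a$ by \eqref{Pell}; but this term appears with coefficient $-2$, i.e. $-2P^{ab}(\r+\s)_a\mf{K}_b=2\mu\ell^{(2)}n^a(\r+\s)_a$... wait, I should instead use \eqref{Pell} in the form $P^{ab}\ell_b=-\ell^{(2)}n^a$; in the null case $\ntwo=0$ but $\ell^{(2)}$ need not vanish, so this term does not obviously drop. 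The cleaner route, as the excerpt hints ("immediate from \eqref{eqq} after using \eqref{Pell}"), is: $\mf{K}(n)=\kt(n)=\mu\bm\ell(n)$, and the combination $-2P^{ab}(\r+\s)_a\mf{K}_b-n(\elltwo)\mf{K}(n)$ with $\mf{K}_b=\mu\ell_b$ becomes $\mu\big(-2P^{ab}(\r+\s)_a\ell_b-n(\elltwo)\bm\ell(n)\big)=\mu\big(2\ell^{(2)}n^a(\r+\s)_a-n(\elltwo)\bm\ell(n)\big)$. Meanwhile $n(\mathbb{k})=n(\mathbb{k}-\mu\elltwo)+\mu n(\elltwo)$ and $2\kappa_n\mathbb{k}=2\kappa_n(\mathbb{k}-\mu\elltwo)+2\mu\kappa_n\elltwo$; and $\tr_P\mathbb{\Sigma}\to\tr_P\mS$ trivially. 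So I need the $\mu$-only remainder $\mu n(\elltwo)+2\mu\kappa_n\elltwo+\mu(2\ell^{(2)}n^a(\r+\s)_a-n(\elltwo)\bm\ell(n))$ to vanish. Here I use $\bm\ell(n)=1-\ntwo\elltwo=1$ (from \eqref{ell(n)} with $\ntwo=0$), so $n(\elltwo)-n(\elltwo)\bm\ell(n)=0$, reducing the remainder to $2\mu\elltwo\kappa_n+2\mu\elltwo n^a(\r+\s)_a$; and $\kappa_n=-\br(n)$ while $n^a\s_a=\bF(n,n)=0$, so this is $2\mu\elltwo(-\br(n)+\br(n))=0$ — wait, $n^a(\r+\s)_a=\r(n)+\s(n)=\r(n)+0=-\kappa_n$, so the remainder is $2\mu\elltwo\kappa_n+2\mu\elltwo(-\kappa_n)=0$. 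Hence $\mf{q}=\tr_P\mS+n(\mathbb{k}-\mu\elltwo)+2\kappa_n(\mathbb{k}-\mu\elltwo)$, which is item 3.

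\textbf{Main obstacle.} The only delicate point is bookkeeping: in each item one must separate every occurrence of $\kt$, $\mathbb{k}$, $\mf{K}$ into a "trace part" ($\mu\bm\ell$, $\mu\elltwo$, $\mu\ell_c$) plus a deviation, and verify that the trace-only remainder cancels identically. This cancellation is not entirely automatic — it relies on the specific structural identities $\lie_n\bm\ell=2\bs$, $\bm\ell(n)=1$ (null case), $\U_{bc}n^c=0$ (null case), $\nablacero_a\gamma_{bc}=-\ell_{(b}\U_{c)a}\cdot 2$ wait $-\ell_c\U_{ab}-\ell_b\U_{ac}$, \eqref{combina}, \eqref{Pell}--\eqref{Pgamma}, and $\kappa_n=-\br(n)$, $\bs(n)=0$ — so one must be careful to invoke each of these at the right place. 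No single step is hard, but the computation for item 1 is the longest because it involves the three connection terms of \eqref{eqQ}; items 2 and 3 are then short corollaries. Throughout, the relabelling $\mS_{ab}n^a=\mS_{ab}n^b=\mathbb{\Sigma}_{ab}n^a$ is used freely since $\mathbb{\Sigma}_{ab}$ is symmetric (from \eqref{S2}: $2\Sigma_{(\mu\nu)\beta}=\nabla_\beta\k_{\mu\nu}$, and here the two tangent indices are the symmetric pair).
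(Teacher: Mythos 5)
Your proof is correct and follows essentially the same route as the paper: substitute $\k=\mu\bg$ into the $\ntwo=0$ identities \eqref{eqQ}--\eqref{eqq}, split $\kt$, $\mf K(n)$, $\ktt$ into trace part plus deviation, and verify the $\mu$-only remainders cancel via $\lie_n\bm\ell=2\bs$, $\bU(n,\cdot)=0$, $\bm\ell(n)=1$, \eqref{combina}, \eqref{Pell}--\eqref{Pgamma} and $\kappa_n=-\br(n)$, $\bs(n)=0$. The only slip is a citation: the symmetry $\mS_{ab}=\mS_{ba}$ you invoke follows from \eqref{S3} (symmetry of $\Sigma$ in its \emph{last} two indices), not \eqref{S2}, but the fact itself is true and the argument is unaffected.
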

\begin{rmk}
Item 3. in this lemma does not require the assumption $\k = \mu\bg$, but presented in this way we emphasize the hierarchical structure that shall be used below.
\end{rmk}

When the candidate to Killing field $\eta$ is constructed by solving $\mc{Q}=0$ in the bulk spacetime, then the identities in Lemma \ref{lemaagrupado} become homogeneous transport equations on $\mc H$ for $\kt$ and $\ktt$ provided one already knew that $\k_{ab}=\mu\gamma_{ab}$, $\mathbb{\Sigma}_{bc}n^b=0$ and $\tr_P\mS=0$. So it is necessary to find conditions to guarantee these three conditions on the null hypersurface. They will be obtained as corollaries of Proposition \ref{prop_lienSigma} below, where we show that the tensor $\mS$ satisfies a transport equation along $n$. The proof of Proposition \ref{prop_lienSigma} is somewhat long, so we postpone it to Appendix \ref{app_proof}.

\begin{prop}
	\label{prop_lienSigma}
Let $\{\mc H,\bg,\bm\ell,\elltwo,\bY,\bZ^{(2)}\}$ be extended null hypersurface data $(\Phi,\xi)$-embedded in $(\mc M,g)$ and let $\eta\in\X(\mc M)$ and $\Sigma\d\lie_{\eta}\nabla$. Then,
\begin{equation}
	\label{identity}
	\begin{aligned}
2\lie_n \mS_{ab} -2\mS(n,n) \Y_{ab} - \lie_{\Xi}\gamma_{ab}- 2\ell_{(a}\nablacero_{b)}\big(\mS(n,n)\big) - 4P^{cd}\U_{c(a}\mS_{b)d}& \\
		+4\big(\r-\s\big)_{(a}\mS_{b)c}n^c + \big(2\kappa_n+\tr_P\bU\big)\mS_{ab} +\big(\tr_P\mS\big)\U_{ab}&=\mc{I}_{ab},
	\end{aligned}
\end{equation} 
where $\Xi^a \d P^{ab}\mS_{bc}n^c$,
\begin{align*}
	\mc{I}_{ab} &=\big(\lie_{\eta}R\big)_{ab} + \dfrac{1}{2}\nablacero_a\nablacero_b \big(\tr_P\k + 2\kt(n)\big) +\dfrac{1}{2} \lie_n\big(\tr_P\k + 2\kt(n)\big) \Y_{ab}  - P^{cd}\nablacero_c \Sigma_{dab} +2V^c{}_{(a} \Sigma_{d|b)c}n^d\\
	&\quad\, + \left(P^{cd}\nablacero_c\kt_d + \left(\tr_P\bY-\dfrac{1}{2}n(\elltwo)\right)\kt(n)  + \left(\tr_P\bU+\kappa_n\right)\ktt- 2P\big(\kt,\br\big) -P^{cd}V^f{}_c\k_{df}+\dfrac{1}{2}n(\ktt)\right)\U_{ab} \\
	&\quad\, +\lie_n\left(\nablacero_{(a}\kt_{b)} + \ktt \U_{ab}\right) - \k(n,n)\Z^{(2)}_{ab} -\dfrac{1}{2} \nablacero_{(a}\elltwo \nablacero_{b)}\big(\k(n,n)\big) -\lie_X\Y_{ab} - \big(\chi+\lie_n(\kt(n))\big)\Y_{ab} \\
	&\quad\,- \ell_{(a}\nablacero_{b)}\big(\chi+2\lie_n(\kt(n))\big) -\dfrac{1}{2}\lie_W\gamma_{ab}+2P^{cd}(\r+\s)_c\Sigma_{dab}-\big(\tr_P\bY-n(\elltwo)\big)\Sigma_{cab}n^c -2P^{cd}\Y_{c(a|}\Sigma_{de|b)}n^e \\
	&\quad\,-2\s_{(a}\nablacero_{b)}\big(\kt(n)\big) -2\big(\r-\s\big)_{(a}V^d{}_{b)}\k_{cd}n^c -2P^{cd}\U_{c(a}\left( \nablacero_{b)}\kt_{d} + \kt(n)\Y_{b)d} + \ktt \U_{b)d}  - V^c{}_{b)} \k_{cd}\right),
\end{align*}
and $X$, $\chi$ and $W$ are defined by $X^a\d P^{ab}\k_{bc}n^c$, $\chi\d -2\br(X)  - \dfrac{1}{2} n(\elltwo)\k(n,n)$, and
\begin{align*}
	W^a&\d P^{ab} \left(\lie_n\kt_b +\nablacero_b \big(\kt(n)\big) -2\kt(n)\s_b -2P^{dc}\U_{bd}\kt_c  - V^c{}_b \k_{dc}n^d - \big(\Y_{cb}+\F_{cb}\big)X^c \right) \\
	&\quad\, +\dfrac{1}{2}\big(n(\ktt)-X(\elltwo)-n(\elltwo) \kt(n)\big)n^a.
\end{align*}
\end{prop}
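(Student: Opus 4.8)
The starting point is the covariant wave identity \eqref{id1} for the deformation tensor $\k$; contracting it with two tangent vectors $e_a^\beta e_b^\nu$ should produce, after reorganisation, the transport equation \eqref{identity}. First I would expand the principal part $\square_g\k_{\beta\nu}=g^{\mu\sigma}\nabla_\mu\nabla_\sigma\k_{\beta\nu}$ using the null inverse metric \eqref{inversemetric} with $\ntwo=0$. The ``tangential'' piece $P^{cd}e_c^\mu e_d^\sigma\nabla_\mu\nabla_\sigma\k_{\beta\nu}$, after two applications of the pullback rules of Appendix \ref{app} for $\nabla\k$ and $\nabla\nabla\k$ together with \eqref{Sigmattt}, \eqref{2K} and \eqref{2Kn}, repackages as $-P^{cd}\nablacero_c\Sigma_{dab}$ plus lower-order terms carrying $V$ and $\U$. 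The two ``mixed'' pieces $n^ae_a^\mu\xi^\sigma\nabla_\mu\nabla_\sigma\k_{\beta\nu}+n^be_b^\sigma\xi^\mu\nabla_\mu\nabla_\sigma\k_{\beta\nu}$ combine — modulo a commutator which feeds ambient curvature into the source — into $2n^\mu\xi^\sigma\nabla_\mu\nabla_\sigma\k_{\beta\nu}$; in this term I would rewrite the inner derivative $\xi^\sigma\nabla_\sigma\k$ via $\lie_\xi\k$ (the corrections $\k_{\rho\nu}\nabla_\beta\xi^\rho+\k_{\beta\rho}\nabla_\nu\xi^\rho$ being treated with \eqref{nablaxi} and the $V$-contractions \eqref{ellV}--\eqref{Vn}), feed in \eqref{aux} to trade $\Phi^\star(\lie_\xi\k)$ for $\mathbb{\Sigma}_{ab}$ plus hypersurface data, and finally act with $n^\mu\nabla_\mu$. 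This yields the principal term $-4\lie_n\mathbb{\Sigma}_{ab}$ (the factor $4$ being $2\times2$, one $2$ from the two mixed pieces and one from the coefficient of $\mathbb{\Sigma}_{ab}$ in \eqref{aux}), while the Christoffel-type corrections of $n^\mu\nabla_\mu$ acting on the rank-three tensor $\Sigma$ — evaluated via \eqref{connections}, using $\bU(n,\cdot)=0$ in the null case — supply the zeroth-order pieces $-4P^{cd}\U_{c(a}\mathbb{\Sigma}_{b)d}$, $4(\r-\s)_{(a}\mathbb{\Sigma}_{b)c}n^c$, $(2\kappa_n+\tr_P\bU)\mathbb{\Sigma}_{ab}$ and the $\mathbb{\Sigma}(n,n)$-terms of the left-hand side.

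On the source side of \eqref{id1} I would keep $2\lie_\eta R_{\beta\nu}$ as is, expand the curvature contractions $-2R_{\sigma(\beta}\k^\sigma{}_{\nu)}+2R_{\sigma\beta\mu\nu}\k^{\sigma\mu}$ with the null inverse metric and the identifications \eqref{ABembedded}, \eqref{A}, \eqref{B} of the tangential and one-transverse ambient curvature components — the genuinely transverse curvature contribution being absorbed into $\Z^{(2)}$ through \eqref{Z2embedded}, which is precisely how the term $-\k(n,n)\Z^{(2)}_{ab}$ enters (note it multiplies $\k(n,n)=\k_{ab}n^an^b$, the quantity that vanishes for homotheties) — and dispose of $\k[\mc{Q}]_{\beta\nu}=2\nabla_{(\beta}\mc{Q}_{\nu)}$ using the expressions for $\mc{Q}_c=e_c^\mu\mc{Q}_\mu$ and $\mf{q}=\xi^\mu\mc{Q}_\mu$ from Proposition \ref{lemaQs} (so that, in particular, the $\tr_P\mathbb{\Sigma}$ sitting inside $\mf{q}$ becomes the $(\tr_P\mathbb{\Sigma})\U_{ab}$ term on the left). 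Dividing the whole contracted identity by $-2$ then normalises the coefficient of $\lie_n\mathbb{\Sigma}_{ab}$ to $2$ and turns $2\lie_\eta R_{\beta\nu}$ into the $(\lie_\eta R)_{ab}$ appearing in $\mc{I}_{ab}$.

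What remains is a long but mechanical reorganisation: one has to recognise the compact combinations that absorb the many leftover terms — $X^a\d P^{ab}\k_{bc}n^c$ and the scalar $\chi$ coming from $\k^{\sigma\mu}$ contracted against riggings, the vector $W^a$ (a $P$-projected combination of $\nabla\k$- and $\Sigma$-contractions along $n$, cf. \eqref{symmetriesn0}--\eqref{symmetriesn}), and $\Xi^a\d P^{ab}\mathbb{\Sigma}_{bc}n^c$ — and then use the null simplifications $\lie_n\bm\ell=2\bs$, $\bm\ell(n)=1$, $\bU(n,\cdot)=0$, $\gamma_{ab}n^b=0$, \eqref{combina}, \eqref{nablagamma}--\eqref{derivadaP} and \eqref{ellV}--\eqref{Vn} to collapse the $\nablacero_{(a}(\,\cdot\,)_{b)}$-structures into the Lie-derivative building blocks $\lie_\Xi\gamma_{ab}$, $\lie_X\Y_{ab}$ and $\lie_W\gamma_{ab}$, exactly as $\lie_{\bar\eta}\gamma_{ab}$ arises in \eqref{Adown}. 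I expect the main obstacle to be this bookkeeping: since $\Sigma$ is a rank-three tensor differentiated once along $n$ and once along $\xi$, there are two nested layers of connection corrections — each introducing $\nabla\xi$, hence $V$, plus the $\Phi_\star n$- and $\xi$-components forced by \eqref{connections} — and one must verify that every ambient Riemann and Ricci term other than $\lie_\eta R_{ab}$ either cancels outright or is repackaged into $\Z^{(2)}$, $A$, $B$ or $\Rcero$. That verification is what is deferred to Appendix \ref{app_proof}.
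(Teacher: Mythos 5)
Your strategy is viable but organised quite differently from the paper's. The paper starts from the divergence identity \eqref{id2} rather than the wave identity \eqref{id1}; since \eqref{id1} is just \eqref{id2} with \eqref{traza} substituted, the two routes are algebraically equivalent, but the paper's grouping keeps $\Sigma$ as the primary object throughout: it decomposes $\nabla_\mu\Sigma^{\mu}{}_{\alpha\beta}$ with \eqref{inversemetric} (after fixing $\nabla_\xi\xi=0$ and $\lie_\xi\nu=0$), so that the two mixed pieces become $\lie_\xi(\nu^\rho\Sigma_{\rho\alpha\beta})$ and $\lie_\nu(\xi^\rho\Sigma_{\rho\alpha\beta})$, each contributing one $\lie_n\mS_{ab}$. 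The first of these is handled with Lemma \ref{lema_Marc} for $\zeta=\nu$, i.e. $\nu^{\rho}\Sigma_{\rho\alpha\beta}=\tfrac{1}{2}\lie_{k^{(\nu)}}g_{\alpha\beta}-\tfrac{1}{2}\lie_{\nu}\k_{\alpha\beta}$, and it is the explicit decompositions $k^{(\nu)}=\k(n,n)\xi+X+\kt(n)n$ and of $[\xi,k^{(\nu)}]$ (Lemmas \ref{lema_vector} and \ref{lemainter}) that deliver $X$, $\chi$, $W$ and $\Xi$ in closed form, rather than by pattern-matching at the end; this is the key device your sketch does not identify, and it is the main reason your route --- which must push the genuinely second-order object $\nabla\nabla\k$ through two nested layers of pullback corrections --- will be substantially heavier, while buying nothing in return. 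Two specific points need care if you pursue your version. First, the tangential piece $P^{cd}e_c^{\mu}e_d^{\sigma}\nabla_\mu\nabla_\sigma\k_{ab}$ does not by itself repackage into $-P^{cd}\nablacero_c\Sigma_{dab}$: since $\Sigma_{\mu\alpha\beta}$ mixes $\nabla_{(\alpha}\k_{\beta)\mu}$ with $-\tfrac{1}{2}\nabla_\mu\k_{\alpha\beta}$, its tangential divergence also involves the divergence of $\k$, which in \eqref{id1} is hidden inside $\k[\mc{Q}]_{ab}=2\nabla_{(a}\mc{Q}_{b)}$; the reassembly must therefore be carried out jointly across your $\square_g\k$ and $\k[\mc{Q}]$ terms, not term by term as you describe. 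Second, the paper has to verify that the single quantity \emph{not} expressible in extended hypersurface data, $\Sigma(\nu,\xi,\xi)$ (which carries $\kt^{(2)}(n)$, cf.\ \eqref{23null}), enters both sides with the same coefficient of $\U_{ab}$ and cancels; your bookkeeping will meet the same term (e.g.\ through $\xi(\tr_g\k)$) and must dispose of it in the same way, so this check cannot be waved off as part of the "mechanical reorganisation".
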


In the following corollary we show that when $\k|_{\Phi(\mc H)}$ is pure trace, identity \eqref{identity} admits again a hierarchical structure and can be rewritten as transport equations for $\tr_P\k$, all components of $\mS_{ab}n^b$ except $\mS(n,n)$ and $\tr_P\mS$. Before proving this we recall that given a 2-covariant, symmetric tensor field $T$, it can be decomposed uniquely as \cite{MarcAbstract} 
\begin{equation}
	\label{decomposition}
	T_{ab} = \dfrac{\tr_P T}{\mf n-1}\gamma_{ab} + 2\ell_{(a}T_{b)c}n^c + T(n,n)\left(\dfrac{\elltwo}{n-1}\gamma_{ab}-\ell_a\ell_b\right) + \wh{T}_{ab},
\end{equation} 
where $\wh{T}$ is a symmetric tensor that satisfies $P^{ab}\wh{T}_{ab}=0$ and $\wh{T}_{ab}n^a=0$. It follows that when $T = \mu \gamma_{ab}$ then $\wh{T}_{ab}=0$ and $T_{ab} n^b=0$. For the first item in the next corollary we only need to impose these two last conditions on $\k_{ab}$ while for items 2. and 3. we also need restrictions on the transverse components of $\k_{\mu\nu}$. The conditions needed here are well adapted to the ones appearing in Lemma \ref{lemaagrupado}. This is what allows one to close the argument in Theorem \ref{teorema_null} below.
\begin{cor}
\label{cor_junto}
Assume $\k_{ab}n^b=0$, $\wh{\k}=0$ and $R_{\alpha\beta}=\lambda g_{\alpha\beta}$ in a neighbourhood\footnote{In fact it suffices that this equality holds up to first derivatives on $\Phi(\mc H)$.} of $\Phi(\mc H)$ and let $\mu\in\real$ satisfying $\lambda\mu=0$. Then, 
\begin{enumerate}
\item $\left(\lie_n^{(2)}+\left(\dfrac{2\tr_P\bU}{\mf{n-1}}-\kappa_n\right)\lie_n\right)\tr_P\k = 2\big(\tr_P\bU \big)\mS(n,n).$
\item If, moreover, $\k_{ab}=\mu\gamma_{ab}$ and $\kt(n)=\mu$, one has $$\lie_n\big(\mS_{ab}n^b\big)-\nablacero_a\big(\mS(n,n)\big)+\big(\tr_P\bU\big) \mS_{ab}n^b = 0.$$
\item If $\k_{ab}=\mu\gamma_{ab}$ and in addition $\kt=\mu\bm\ell$ and $\mS(n,\cdot)=0$, then 
\begin{equation*}
	\begin{aligned}
		2\lie_n\big(\tr_P\mS\big) +  2\big(\kappa_n+\tr_P\bU\big)\tr_P\mS&= \big((\ktt-\mu\elltwo)\tr_P\bU + n\big(\ktt-\mu\elltwo\big) (\ktt-\mu\elltwo)\kappa_n\big) \\
		&\quad\, +\big(\ktt-\mu\elltwo\big) \lie_n(\tr_P\bU) .
	\end{aligned}
\end{equation*}
\end{enumerate}
\textbf{Remark:} Contracting the expression in item 2. with $n^a$ gives $(\tr_P\bU)\mS(n,n)=0$, which implies $\mS(n,n)=0$ at the points where $\tr_P\bU\neq 0$. The same conclusion follows from item 1. provided $\k_{ab}=\mu\gamma_{ab}$. However, if $\tr_P\bU$ vanishes on open sets of $\mc H$ one cannot use these equations to obtain the value of $\mS(n,n)$. This is why we do not assume $\k_{ab}=\mu\gamma_{ab}$ from the beginning. 
\begin{proof}
We start by noting that the quantities $X^a$ and $\chi$ in Proposition \ref{prop_lienSigma} vanish identically and that $\k_{ab}=\frac{1}{\mf n-1}(\tr_P\k)\gamma_{ab}$. Inserting this into \eqref{Sigmattt} yields 
\begin{align}
\hspace{-0.5cm}\Sigma_{abc} &= \dfrac{1}{2(\mf n-1)}\big(2\gamma_{a(c}\nablacero_{b)}(\tr_P\k)-\gamma_{bc}\nablacero_a(\tr_P\k)\big)+\dfrac{\tr_P\k}{2(\mf n-1)}\big(2\nablacero_{(b}\gamma_{c)a}-\nablacero_a\gamma_{bc}\big)+\kt_a\U_{bc}\nonumber\\
&\st{\eqref{nablagamma}}{=}\dfrac{1}{2(\mf n-1)}\left(2\gamma_{a(c}\nablacero_{b)}(\tr_P\k)´-\gamma_{bc}\nablacero_a(\tr_P\k)\right) + \left(\kt_a-\dfrac{\tr_P\k}{\mf n-1}\ell_a\right)\U_{bc}.\label{Sigmafacil}
\end{align}
The following contractions are immediate
\begin{equation}
	\label{sigmafacil2}
	\Sigma_{abc}n^b =\dfrac{n(\tr_P\k)}{2(\mf n-1)}\gamma_{ac},\qquad \Sigma_{abc}n^an^b =0,\qquad \Sigma_{abc}n^b n^c = 0,
\end{equation}
and hence
\begin{equation}
	\label{facil}
	P^{cd}n^an^b \nablacero_c\Sigma_{dab}  = -2 P^{cd} \Sigma_{dab} n^a \nablacero_c n^b \st{\eqref{derivadan}}{=} -2P^{cd}P^{bf}\U_{fc}\Sigma_{dab}n^a= -\dfrac{\tr_P\bU}{\mf n-1} n\big(\tr_P\mc{K}\big) ,
\end{equation}
where in the last equality we used $P^{cd}P^{bf}\gamma_{db} = P^{cf}+\elltwo n^c n^f$, which is a consequence of \eqref{Pell}-\eqref{Pgamma}. Many terms in the contraction of $\mc{I}_{ab}$ with $n^a n^b$ vanish identically. Using $\big(\lie_{\eta}R\big)_{ab} =\lambda\k_{ab}$, $n^a\nablacero_a n^b=0$ and inserting \eqref{facil} then gives
\begin{equation}
	\label{Inn}
	\mc{I}(n,n) = \dfrac{1}{2}\left(\lie_n^{(2)}+\left(\dfrac{2\tr_P\bU}{\mf{n-1}}-\kappa_n\right)\lie_n\right)\tr_P\k.
\end{equation}
The contraction of the LHS of \eqref{identity} with $n^a n^b$ simplifies to $(\tr_P\bU)\mS(n,n)$, and hence item 1. of the corollary follows.\\

To prove the second item we observe that now $\tr_P\k = (\mf n-1)\mu$, so \eqref{Sigmafacil} becomes $\Sigma_{abc} = \big(\kt_a-\mu\ell_a\big)\U_{bc}$ and hence $\Sigma_{abc}$ contracted with $n$ in any of its indices vanishes. Moreover, the vector $W^a$ (see Prop. \ref{prop_lienSigma}) simplifies to $$W^a = P^{ab}\left(\lie_n\big(\kt_b-\mu\ell_b\big)-2P^{cd}\U_{bc}\big(\kt_d-\mu\ell_d\big)\right)+\dfrac{1}{2} n(\ktt-\mu\elltwo) n^a,$$ where we used $2\bs=\lie_n\bm\ell$ (see \eqref{lienell}) and $P^{cd}\U_{bc}\ell_d=0$. These two relations will be used throughout the rest of the proof without further notice. Then, since the term $\lie_{\eta}R_{ab}$ in \eqref{identity} is zero because $\lambda\mu=0$,
\begin{align*}
\mc{I}_{ab} &=  -P^{cd} \nablacero_c\big((\kt_d-\mu\ell_d)\U_{ab}\big) +\left(P^{cd}\nablacero_c\kt_d + \left(\tr_P\bY-\dfrac{1}{2}n(\elltwo)\right)\mu + \big(\tr_P\bU+\kappa_n\big)\ktt\right.\nonumber\\
&\quad\,\left. -2P(\kt,\br) - \mu\big(\delta^c_f - n^c\ell_f\big)V^f{}_c + \dfrac{1}{2}n(\ktt)\right)\U_{ab} +\lie_n\big(\nablacero_{(a}\kt_{b)} + \ktt\U_{ab} \big)-\dfrac{1}{2}\lie_W\gamma_{ab} \nonumber\\
&\quad\, +2P^{cd}(\r+\s)_c \big(\kt_d-\mu\ell_d\big)\U_{ab}-2P^{cd}\U_{c(a}\left(\nablacero_{b)}\kt_d + \mu\Y_{b)d} + \ktt\U_{b)d} - \mu V^c{}_{b)}\gamma_{cd}\right),
\end{align*}
which after using $V^c{}_c = \tr_P\bY+\frac{1}{2}n(\elltwo)$, $n^c\ell_f V^f{}_c=\kappa_n\elltwo + \dfrac{1}{2}n(\elltwo)$ (see \eqref{Vn}) and $V^c{}_b\gamma_{cd}=\Y_{bd}+\F_{bd}-(\r-\s)_b\ell_d$ (see \eqref{gammaV}) simplifies to
\begin{align}
	\mc{I}_{ab} &=  -P^{cd} \nablacero_c\big((\kt_d-\mu\ell_d)\U_{ab}\big) +\left(P^{cd}\nablacero_c\kt_d + \big(\tr_P\bU\big)\ktt +\kappa_n\big(\ktt+\mu\elltwo\big)-2P(\kt,\br) \right.\nonumber\\
	&\quad\,\left.  + \dfrac{1}{2}n\big(\ktt-\mu\elltwo\big)\right)\U_{ab} +\lie_n\big(\nablacero_{(a}\kt_{b)} + \ktt\U_{ab} \big)-\dfrac{1}{2}\lie_W\gamma_{ab} +2P^{cd}(\r+\s)_c \big(\kt_d-\mu\ell_d\big)\U_{ab}\nonumber\\
	&\quad\, -2P^{cd}\U_{c(a}\left(\nablacero_{b)}\kt_d - \mu\F_{b)d} + \ktt\U_{b)d} \right).\label{Ifacil}
\end{align}
The contraction of $\mc{I}_{ab}$ with $n^b$ then gives, after using \eqref{derivadan} and $\U_{ab}n^b=0$ in the first equality and identities \eqref{contra2} and \eqref{nnablacerotheta} in Lemma \ref{lemmannabla} applied to $\theta_c=\kt_c$ (with $\ntwo=0$) in the second,
\begin{align}
	\mc{I}_{ab}n^b &= P^{cd}P^{bf}\U_{fc} \U_{ab}(\kt_d-\mu\ell_d) + \lie_n\big(n^b\nablacero_{(a}\kt_{b)}\big)  - \dfrac{1}{2}n^b \lie_W\gamma_{ab} - P^{cd}\U_{ca}\big(n^b\nablacero_b\kt_d - \mu\s_{d}\big)\nonumber\\
	&=2P^{cd}P^{bf}\U_{fc}\U_{ab} (\kt_d-\mu\ell_d)+ \dfrac{1}{2}\lie_n \left(\lie_n(\kt_a-\mu\ell_a)-2P^{cd}\U_{ac}(\kt_d-\mu\ell_d)\right)- \dfrac{1}{2}n^b \lie_W\gamma_{ab}  \nonumber \\
	&\quad\, - P^{cd}\U_{ca}\lie_n\big(\kt_d-\mu\ell_d\big).\label{In}
\end{align}
From Lemma \ref{lemagammalie} and the fact that $\lie_n\big(\kt_c-\mu\ell_c\big)-2P^{fd}\U_{cf}\big(\kt_d-\mu\ell_d\big)$ contracted with $n^c$ is identically zero, the term $n^b \lie_W \gamma_{ab}$ is
\begin{equation}
	\label{lienW}
	n^b \lie_W \gamma_{ab}=\big(\delta^c_a\lie_n -2P^{bc}\U_{ab}\big)\left(\lie_n\big(\kt_c-\mu\ell_c\big)-2P^{fd}\U_{cf}\big(\kt_d-\mu\ell_d\big)\right).
\end{equation}
This implies that all the terms in the RHS of \eqref{In} cancel each other and we arrive at $\mc{I}_{ab}n^b = 0$. Contracting the LHS of \eqref{identity} with $n^b$ and using Lemma \ref{lemagammalie} [Eq. \eqref{nliegamma}] applied to $X^a = P^{ab}\mS_{bc}n^c$ gives $\lie_n\big(\mS_{ab}n^b\big)-\nablacero_a\big(\mS(n,n)\big)+\big(\tr_P\bU\big) \mS_{ab}n^b$, so item 2. follows.\\


 Finally, to prove item 3. we note that now $\Sigma_{abc} = 0$ and $W^a = \dfrac{1}{2} n(\ktt-\mu\elltwo)n^a$, so $\lie_W\gamma_{ab} = n(\ktt- \mu\elltwo)\U_{ab}$. In addition, $\kt=\mu\bm\ell$ implies $\nablacero_a\kt_b \st{\eqref{nablaell}}{=} \mu\big(\F_{ab}-\elltwo\U_{ab}\big)$, so \eqref{Ifacil} becomes
 \begin{align*}
 \mc{I}_{ab} &=\left(\big(\ktt-\mu\elltwo\big)\tr_P\bU +\big(\ktt-\mu\elltwo\big)\kappa_n  + \dfrac{1}{2}n\big(\ktt-\mu\elltwo\big)\right)\U_{ab} +\lie_n\big(\big(\ktt-\mu\elltwo\big)\U_{ab} \big)\\
 &\quad\, -\dfrac{1}{2}n(\ktt- \mu\elltwo)\U_{ab}  -2\big(\ktt -\mu\elltwo\big)P^{cd}\U_{c(a}\U_{b)d}.
 \end{align*}
Taking the trace with respect to $P^{ab}$ and using \eqref{lietrPY} in Lemma \ref{lemalieP} with $T_{ab}=\mc{I}_{ab}$,
\begin{equation*}
	\begin{aligned}
		P^{ab}\mc{I}_{ab} &=  \left((\ktt-\mu\elltwo)\tr_P\bU +(\ktt-\mu\elltwo)\kappa_n+ n\big(\ktt-\mu\elltwo\big) \right)\tr_P\bU+\big(\ktt-\mu\elltwo\big) \lie_n(\tr_P\bU).
	\end{aligned}
\end{equation*}
Contracting the LHS of \eqref{identity} with $P^{ab}$ and using again identity \eqref{lietrPY} now with $T_{ab}=\mS_{ab}$ gives $2\lie_n\big(\tr_P\mS\big) +  2\big(\kappa_n+\tr_P\bU\big)\tr_P\mS$, which proves item 3. of the corollary.
\end{proof}
\end{cor}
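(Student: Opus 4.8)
The plan is to derive all three identities from Proposition~\ref{prop_lienSigma}, that is, from the transport equation \eqref{identity} for $\mS_{ab}$ with source $\mc{I}_{ab}$, by contracting \eqref{identity} with $n^an^b$ (for item~1), with $n^b$ (for item~2) and with $P^{ab}$ (for item~3), and simplifying both sides under the hypotheses of each item. The reductions I would establish first are common to all three. Since $\k_{ab}n^b=0$ and $\wh\k=0$, the canonical decomposition \eqref{decomposition} forces $\k_{ab}=\tfrac{1}{\mf n-1}(\tr_P\k)\gamma_{ab}$, so that $\k(n,n)=0$, $X^a\d P^{ab}\k_{bc}n^c=0$ and $\chi=0$; inserting this pure-trace form of $\k$ into \eqref{Sigmattt} and using \eqref{nablagamma} gives a short closed formula for $\Sigma_{abc}$ whose contractions with $n$ are elementary; and, since $R_{\alpha\beta}=\lambda g_{\alpha\beta}$ near $\Phi(\mc H)$, one has $\big(\lie_\eta R\big)_{ab}=\lambda\k_{ab}$, which is pure trace and vanishes outright once $\k_{ab}=\mu\gamma_{ab}$ (using $\lambda\mu=0$).

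For item~1 I would contract \eqref{identity} with $n^an^b$. On the left, every term carrying a free symmetrized $\ell$- or $(\r-\s)$-index drops out and $P^{cd}\U_{c(a}\mS_{b)d}n^an^b$ vanishes because $\U_{ab}n^b=0$, leaving just $(\tr_P\bU)\,\mS(n,n)$. On the right, most of the many terms of $\mc{I}_{ab}$ disappear once the explicit $\Sigma_{abc}$ is substituted; the single surviving contribution through a derivative of $\Sigma$ is treated using $n^a\nablacero_a n^b=0$ together with \eqref{derivadan}, and what remains reassembles into $\tfrac12\big(\lie_n^{(2)}+(\tfrac{2\tr_P\bU}{\mf n-1}-\kappa_n)\lie_n\big)\tr_P\k$. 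Equating the two sides yields item~1.

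For item~2 the extra hypotheses $\k_{ab}=\mu\gamma_{ab}$ and $\kt(n)=\mu$ make $\tr_P\k$ constant, so $\Sigma_{abc}=(\kt_a-\mu\ell_a)\U_{bc}$ and all of its $n$-contractions vanish, while the vector $W^a$ of Proposition~\ref{prop_lienSigma} collapses, using $2\bs=\lie_n\bm\ell$ from \eqref{lienell} and $P^{cd}\U_{bc}\ell_d=0$, to a vector built purely from $\kt-\mu\bm\ell$ and its $n$-derivative. I would then contract $\mc{I}_{ab}$ with $n^b$: using \eqref{derivadan}, $\U_{ab}n^b=0$, the $n$-contraction identities of Lemma~\ref{lemmannabla} applied to $\theta_c=\kt_c$, and the evaluation of $n^b\lie_W\gamma_{ab}$ furnished by Lemma~\ref{lemagammalie}, everything cancels and $\mc{I}_{ab}n^b=0$. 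Contracting the left of \eqref{identity} with $n^b$ and invoking Lemma~\ref{lemagammalie} for $\Xi^a=P^{ab}\mS_{bc}n^c$ then leaves exactly $\lie_n(\mS_{ab}n^b)-\nablacero_a(\mS(n,n))+(\tr_P\bU)\mS_{ab}n^b$, which is item~2.

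For item~3, adding $\kt=\mu\bm\ell$ and $\mS(n,\cdot)=0$ gives $\Sigma_{abc}=0$, $\nablacero_a\kt_b=\mu(\F_{ab}-\elltwo\U_{ab})$ by \eqref{nablaell}, and $\lie_W\gamma_{ab}=n(\ktt-\mu\elltwo)\U_{ab}$; feeding these into the already-simplified $\mc{I}_{ab}$ leaves a tensor proportional to $\U_{ab}$ together with the term $\lie_n\big((\ktt-\mu\elltwo)\U_{ab}\big)$ and a $P^{cd}\U_{c(a}\U_{b)d}$ term. Taking the $P$-trace and using Lemma~\ref{lemalieP} to commute the trace past $\lie_n$ gives the claimed $P^{ab}\mc{I}_{ab}$, while the same lemma applied to $T_{ab}=\mS_{ab}$ turns the $P$-trace of the left of \eqref{identity} into $2\lie_n(\tr_P\mS)+2(\kappa_n+\tr_P\bU)\tr_P\mS$, which establishes item~3. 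The hardest part will be the bookkeeping in the long source $\mc{I}_{ab}$; the one genuinely delicate step is proving $\mc{I}_{ab}n^b=0$ in item~2, which works only because $W^a$ is built purely from $\kt-\mu\bm\ell$, so that the $\lie_W\gamma$ contribution cancels precisely against the remaining $\lie_n$ and $P^{cd}\U$ terms.
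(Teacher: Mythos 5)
Your proposal is correct and follows essentially the same route as the paper: it reduces $\k_{ab}$ to pure-trace form via \eqref{decomposition}, obtains the explicit $\Sigma_{abc}$ from \eqref{Sigmattt}, and then contracts the transport identity \eqref{identity} with $n^an^b$, $n^b$ and $P^{ab}$, using \eqref{derivadan} and Lemmas \ref{lemmannabla}, \ref{lemagammalie} and \ref{lemalieP} exactly where the paper does. You also correctly isolate the one delicate cancellation, namely $\mc{I}_{ab}n^b=0$ in item 2 via the structure of $W^a$, so no gaps to report.
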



We can summarize all the results of this section in the following theorem.
\begin{teo}
	\label{teorema_null}
Let $\{\mc H,\bg,\bm\ell,\elltwo,\bY\}$ be null hypersurface data $(\Phi,\xi)$-embedded in $(\mc M,g)$ with $R_{\alpha\beta}=\lambda g_{\alpha\beta}$ in a neighbourhood\footnote{As before, this condition is only needed up to first derivatives on $\Phi(\mc H)$.} of $\Phi(\mc H)$ and let $\eta$ be a vector field on $\mc M$ satisfying $\mc{Q}\st{\mc H}{=}0$, $\mS(n,n)=0$, $\k_{ab}n^b = 0$ and $\wh{\k}=0$. Assume that $\mc H$ admits a cross-section and let $\mu$ be a constant satisfying $\lambda\mu=0$. Then,
\begin{enumerate}
	\item If $P^{ab}\k_{ab} = (\mf n-1)\mu$ and $\lie_n\big(P^{ab}\k_{ab}\big)=0$ on a cross-section of $\mc H$, then $\mc{K}_{ab}=\mu\gamma_{ab}$ everywhere on $\mc H$. Moreover, if $\mf{K}(n)=\mu$ on a cross-section, then $\mf{K}(n)=\mu$ everywhere.
	\item If in addition to item 1. one has $\mf{K}_c=\mu\ell_c$ and $\mS_{bc}n^b=0$ on a cross-section of $\mc H$, then $\mf{K}_c=\mu\ell_c$ and $\mS_{bc}n^b=0$ everywhere on $\mc H$.
	\item Moreover, if in addition to items 1. and 2. one has $\mathbb{k}=\mu\elltwo$ and $\tr_P\mS=0$ on a cross-section of $\mc H$, then $\mathbb{k}=\mu\elltwo$ and $\tr_P\mS=0$ everywhere on $\mc H$.
\end{enumerate}
\begin{proof}
From the first equation in Corollary \ref{cor_junto}, $\tr_P\k$ satisfies a second order homogeneous transport equation along $n$. So, if $P^{ab}\k_{ab} = (\mf n-1)\mu$ and $\lie_n\big(P^{ab}\k_{ab}\big)=0$ on a cross-section of $\mc H$ it follows that $P^{ab}\k_{ab} = (\mf n-1)\mu$ everywhere on $\mc H$, and hence by decomposition \eqref{decomposition} we conclude $\k=\mu\bg$ everywhere. As a consequence of item 1. in Lemma \ref{lemaagrupado}, if $\mf{K}(n)=\mu$ on a cross-section, then $\mf{K}(n)=\mu$ everywhere. This proves item 1. of the theorem. The second item follows at once from item 2. in Lemma \ref{lemaagrupado} and item 2. in Corollary \ref{cor_junto}, because $\mf{K}_c-\mu\ell_c$ and $\mS_{bc}n^b$ satisfy a homogeneous system of transport equations and they vanish initially. Similarly, item 3. follows from the third item in Lemma \ref{lemaagrupado} and item 3. in Corollary \ref{cor_junto}, since $\mathbb{k}-\mu\elltwo$ and $\tr_P\mS$ also satisfy a homogeneous system and also vanish initially.
\end{proof}
\end{teo}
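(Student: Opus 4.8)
The plan is to read Theorem \ref{teorema_null} directly off the hierarchical family of transport equations along the null generator $n$ that is already packaged in Lemma \ref{lemaagrupado} and Corollary \ref{cor_junto}. The only analytic input needed is the existence of a cross-section $\mc S$: then every integral curve of $n$ meets $\mc S$ exactly once and these curves foliate $\mc H$, so any tensor field obeying a homogeneous linear ODE along $n$ (of first or second order) with vanishing Cauchy data on $\mc S$ must vanish on all of $\mc H$. The standing hypotheses $\mc Q\st{\mc H}{=}0$, $\mS(n,n)=0$, $\k_{ab}n^b=0$, $\wh{\k}=0$, $R_{\alpha\beta}=\lambda g_{\alpha\beta}$ near $\Phi(\mc H)$ and $\lambda\mu=0$ are precisely what makes each member of the hierarchy homogeneous: $\lambda\mu=0$ kills the source $\lie_\eta R_{ab}=\lambda\k_{ab}$, $\mS(n,n)=0$ kills the right-hand sides in Corollary \ref{cor_junto} (and, being valid on all of $\mc H$, forces $\nablacero_a(\mS(n,n))=0$ as well), while $\mc Q\st{\mc H}{=}0$ removes the left-hand sides $\mc Q_c$, $\mc Q_cn^c$ and $\mf q$ in Lemma \ref{lemaagrupado}.

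For item 1 I would first invoke the unique decomposition \eqref{decomposition} applied to $T=\k$: since $\k_{ab}n^b=0$ and $\wh{\k}=0$, it collapses to $\k_{ab}=\tfrac{\tr_P\k}{\mf n-1}\gamma_{ab}$, so the whole of item 1 reduces to the single scalar $\tr_P\k$. By item 1 of Corollary \ref{cor_junto} with $\mS(n,n)=0$, this scalar satisfies the second-order homogeneous equation $\big(\lie_n^{(2)}+(\tfrac{2\tr_P\bU}{\mf n-1}-\kappa_n)\lie_n\big)\tr_P\k=0$ along $n$; the Cauchy data $\tr_P\k=(\mf n-1)\mu$ and $\lie_n(\tr_P\k)=0$ on $\mc S$ then propagate to give $\tr_P\k\equiv(\mf n-1)\mu$, i.e. $\k=\mu\bg$ on $\mc H$. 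With $\k=\mu\bg$ in hand, item 1 of Lemma \ref{lemaagrupado} together with $\mc Q\st{\mc H}{=}0$ and $\mS(n,n)=0$ gives $2\lie_n(\kt(n)-\mu)+(\kt(n)-\mu)\tr_P\bU=0$, a first-order homogeneous equation for $\kt(n)-\mu$, and $\mf{K}(n)=\mu$ on $\mc S$ propagates to all of $\mc H$. This settles item 1.

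Items 2 and 3 are the same mechanism applied to coupled pairs. For item 2, now that $\k=\mu\bg$ and $\kt(n)=\mu$ hold globally, item 2 of Lemma \ref{lemaagrupado} (with $\mc Q\st{\mc H}{=}0$) reads $0=-2\mS_{ac}n^a+2\lie_n(\kt_c-\mu\ell_c)-2P^{ab}\U_{bc}(\kt_a-\mu\ell_a)+(2\kappa_n+\tr_P\bU)(\kt_c-\mu\ell_c)$, while item 2 of Corollary \ref{cor_junto} reads $\lie_n(\mS_{ab}n^b)+(\tr_P\bU)\mS_{ab}n^b=0$; thus the pair $(\kt_c-\mu\ell_c,\,\mS_{ab}n^b)$ solves a closed first-order homogeneous linear ODE system along $n$, vanishing on $\mc S$, hence on $\mc H$. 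For item 3, with $\k=\mu\bg$, $\kt=\mu\bm\ell$ and $\mS(n,\cdot)=0$ now all in force, item 3 of Lemma \ref{lemaagrupado} (with $\mf q=0$) gives $n(\ktt-\mu\elltwo)+2\kappa_n(\ktt-\mu\elltwo)=-\tr_P\mS$, and item 3 of Corollary \ref{cor_junto} gives $2\lie_n(\tr_P\mS)+2(\kappa_n+\tr_P\bU)\tr_P\mS$ equal to an expression linear in $\ktt-\mu\elltwo$ and $n(\ktt-\mu\elltwo)$, the latter eliminated via the preceding relation; so $(\ktt-\mu\elltwo,\,\tr_P\mS)$ again satisfies a closed first-order homogeneous linear ODE system, vanishing on $\mc S$ and hence on $\mc H$.

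Since all the hard computational work already lives inside Lemma \ref{lemaagrupado} and Corollary \ref{cor_junto}, the only real obstacle at this stage is bookkeeping the dependency order of the hierarchy: items 2 and 3 of Corollary \ref{cor_junto} and items 2 and 3 of Lemma \ref{lemaagrupado} presuppose the conclusions of the preceding steps (first $\k=\mu\bg$, then $\kt(n)=\mu$, then $\kt=\mu\bm\ell$ and $\mS(n,\cdot)=0$), so the propagation must be carried out strictly in the order: $\tr_P\k$ first, then $\kt(n)$, then the pair $(\kt_c,\,\mS_{bc}n^b)$, and finally the pair $(\ktt,\,\tr_P\mS)$. Once this ordering is respected, each individual step is a routine uniqueness statement for a homogeneous linear ODE along the generators of $\mc H$ with trivial data on the cross-section.
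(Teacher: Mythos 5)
Your proposal is correct and follows essentially the same route as the paper: item 1 of Corollary \ref{cor_junto} gives the second-order homogeneous transport equation for $\tr_P\k$ (reduced to $\k=\mu\bg$ via the decomposition \eqref{decomposition} with $\k_{ab}n^b=0$ and $\wh{\k}=0$), and then Lemma \ref{lemaagrupado} together with items 2 and 3 of Corollary \ref{cor_junto} yield the closed homogeneous first-order systems for $\kt(n)-\mu$, for the pair $(\kt_c-\mu\ell_c,\mS_{bc}n^b)$, and for the pair $(\ktt-\mu\elltwo,\tr_P\mS)$, propagated in the same hierarchical order from trivial data on the cross-section. Your explicit remarks that $\mS(n,n)=0$ on all of $\mc H$ forces $\nablacero_a(\mS(n,n))=0$ in item 2 of the corollary, and that $n(\ktt-\mu\elltwo)$ must be eliminated from the source of item 3 using the $\mf q=0$ relation to close the system, are details the paper leaves implicit but are exactly right.
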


\section{Review of double null data}
\label{sec_DND}

In this section we review the main aspects of \cite{Mio1,Mio2} where we proved that the characteristic Cauchy problem for the Einstein equations can be formulated in a completely detached way via the notion of double null data. We start by recalling that given two null metric hypersurface data $\mc{D}=\{\mc H,\bg,\bm\ell,\elltwo\}$ and $\ul{\mc D}=\{\wh{\mc H},\wh{\bg},\wh{\bm\ell},\wh{\ell}{}^{(2)}\}$ with non-degenerate and isometric boundaries $\partial\mc{H}$ and $\partial\wh{\mc H}$ one can construct a so-called $\partial$-isometry $\Psi:T\mc H\oplus \mc{F}(\partial\mc H)\to T\mc{\ul H}\oplus\mc{F}(\partial\mc{\ul H})$ (see Definition 4.1 of \cite{Mio2}) that is uniquely defined from the assumed isometry $\phi:\partial\mc H\to \partial\mc{\ul H}$ and a nowhere vanishing function $\sigma\in \mc{F}^{\star}(\partial\ul{\mc H})$ by the condition $\sigma=\mc{\ul A}\big((\ul n,0),\Psi(n,0)\big)$.\\

The $\partial$-isometry $\Psi$ is a basic prerequisite for two null hypersurface data to be glueable along their boundaries, but on top of that there are several additional \textit{compatibility conditions} constructed from the data and the $\partial$-isometry that need to be fulfilled on the would-be intersection surface $\mc S$. Geometrically they imply that the torsion one-form, the two null second fundamental forms and the (pullback of the) ambient Ricci tensor on $\mc S$ are well-defined quantities. They key point is that these conditions can be written solely in terms of abstract data. Whenever these conditions hold, we say that the tuple $\{\mc D,\mc{\ul D},\sigma,\phi\}$ is double null data (see Definition 4.6 of \cite{Mio2}). \\

As proven in \cite{Mio2}, the compatibility conditions form a complete set in the sense that they are everything one needs to embed $\{\mc D,\mc{\ul D},\sigma,\phi\}$ in some ambient spacetime $(\mc M,g)$. As shown in Theorem 7.15 of \cite{Mio1}, in order for $(\mc M,g)$ to be $\lambda$-vacuum one also needs to assume the so-called \textit{constraint equations}, as we review next.
\begin{teo}
	\label{main}
	Let $\{\mc D,\mc{\ul D},\sigma,\phi\}$ be double null data satisfying the abstract constraint equations 
	\begin{equation}
		\label{constraintsL}
		\bm{\mc R} = \lambda\bg \quad \text{and}\quad \bm{\mc{\ul{R}}}=\lambda\ul\bg,
	\end{equation}
	where $\bm{\mc R}$ is the constraint tensor as defined in \eqref{constraint}, $\ul{\bm{\mc R}}$ is the corresponding tensor for the underlined data and $\lambda\in\real$. Then there exists a development $(\mc M,g)$ of $\{\mc D,\mc{\ul D},\sigma,\phi\}$ (possibly shrinking the data if necessary), solution of the $\lambda$-vacuum Einstein equations. Moreover, for any two such developments $(\mc M,g)$ and $(\mc{\wh M},\wh g)$, there exist neighbourhoods of $\mc H\cup\mc{\ul H}$, $\mc U\subseteq\mc M$ and $\wh{\mc U}\subseteq\mc{\wh M}$, and a diffeomorphism $\phi: \mc U\to \wh{\mc U}$ such that $\phi^{\star}\wh g=g$.
\end{teo}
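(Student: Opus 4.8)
The plan is to recall the argument of \cite{Mio1,Mio2}, which reads Theorem~\ref{main} as the detached reformulation of the classical characteristic Cauchy problem of Rendall~\cite{Rendall} and proceeds in four stages: realising the abstract data as genuine characteristic initial data, solving a gauge-reduced Einstein system, propagating the gauge condition by means of the constraint equations, and finally deducing geometric uniqueness. \textbf{Realisation of the data.} First I would use the $\partial$-isometry $\Psi$ (built from $\phi$ and $\sigma$) together with the compatibility conditions encoded in the notion of double null data to construct the underlying ``kinematic'' manifold: let $\mc M$ be a neighbourhood of $\{0\}\times\{0\}\times\mc S$ in $[0,\epsilon)_u\times[0,\epsilon)_{\ul u}\times\mc S$, with $\mc S$ diffeomorphic to $\partial\mc H$ and to $\partial\mc{\ul H}$, identify $\mc H$ with the face $\{u=0\}$ and $\mc{\ul H}$ with $\{\ul u=0\}$, and let $n$, $\ul n$ generate the two null faces. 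The data $\{\bg,\bm\ell,\elltwo\}$, together with the tensor $\bY$ determined on each face by integrating the abstract constraint equation $\bm{\mc R}=\lambda\bg$ along $n$ from corner values (and likewise for the underlined data), fix the restriction of the metric $g$ to be constructed and of its transverse derivative on each null face; the compatibility conditions guarantee that the two prescriptions agree on $\mc S$, i.e.\ that the induced metric, the torsion one-form and the two null second fundamental forms on $\mc S$ are well defined, so the package defines admissible characteristic initial data for $\ric[g]=\lambda g$.

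\textbf{Reduced system and local existence.} Next I would pass to a generalised wave (harmonic) gauge adapted to the double null structure, turning $\ric[g]=\lambda g$ into the reduced system $\ric^{(H)}[g]_{\mu\nu}=\lambda g_{\mu\nu}$, a quasilinear system of wave equations for $g_{\mu\nu}$. Rendall's theorem~\cite{Rendall}, together with the refinements of \cite{Luk,chrusciel2023neighborhood} producing a solution on a full neighbourhood of $\mc S$ rather than only on its future, yields a unique local solution $g$ (after possibly shrinking the data) realising the characteristic data; here one must check that the free data for the reduced system can be chosen so that $\Phi^{\star}g=\bg$, $\Phi^{\star}(g(\xi,\cdot))=\bm\ell$, $\Phi^{\star}(g(\xi,\xi))=\elltwo$ and $\tfrac{1}{2}\Phi^{\star}\lie_\xi g=\bY$ hold on both faces, which is where the hypersurface-data repackaging of Rendall's free data is used.

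\textbf{Propagation of constraints.} The metric produced in the previous step solves only the reduced equations, and the harmonicity defect one-form $H^\mu$ satisfies a homogeneous linear wave equation by the contracted Bianchi identity. The crucial point, which I expect to be the main obstacle, is to show that $H^\mu$ vanishes on $\mc H\cup\mc{\ul H}$; this is exactly where the hypotheses $\bm{\mc R}=\lambda\bg$ and $\bm{\mc{\ul R}}=\lambda\ul\bg$ enter. Combining these with \eqref{ricci}--\eqref{constraint}, the tangential components of $\ric[g]-\lambda g$ vanish on each face, and the hierarchical transport structure of the characteristic problem then upgrades this to the full vanishing of $H^\mu$ along the generators of $\mc H$ and $\mc{\ul H}$; uniqueness for the wave equation obeyed by $H^\mu$ gives $H^\mu\equiv 0$ on the development, hence $\ric[g]=\lambda g$ there.

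\textbf{Geometric uniqueness.} Finally, given two $\lambda$-vacuum developments $(\mc M,g)$ and $(\wh{\mc M},\wh g)$, I would bring both into the same wave gauge adapted to $\mc S$: harmonic-type coordinates are determined near $\mc S$ by their values and those of the generators on the two initial null hypersurfaces, and in this gauge the characteristic data is completely fixed by the double null data, so both metrics solve the same reduced quasilinear system with the same data. Uniqueness for that system then supplies neighbourhoods $\mc U\subseteq\mc M$ and $\wh{\mc U}\subseteq\wh{\mc M}$ of $\mc H\cup\mc{\ul H}$ and a diffeomorphism $\phi:\mc U\to\wh{\mc U}$ with $\phi^{\star}\wh g=g$. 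Besides the propagation-of-constraints step, the other delicate ingredient is the full-neighbourhood existence used above, which rests on the non-trivial extensions of Rendall's original local result.
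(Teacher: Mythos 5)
You should first be aware that this paper does not actually prove Theorem \ref{main}: it is imported verbatim from the authors' earlier works \cite{Mio1,Mio2} (Theorem 7.15 of \cite{Mio1}) and is only reviewed here, so there is no in-paper proof to compare against. That said, your outline is the standard strategy and is essentially the one followed in those references: repackage the double null data as Rendall-type characteristic initial data, solve the harmonically reduced system (using Rendall's theorem and its full-neighbourhood extensions), propagate the gauge condition, and obtain geometric uniqueness from uniqueness of the reduced system together with the gauge covariance of the formalism.

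Two caveats. First, a misstatement: in the double null data $\{\mc D,\mc{\ul D},\sigma,\phi\}$ the tensors $\bY$ and $\ul\bY$ are part of the \emph{given} data and $\bm{\mc R}=\lambda\bg$, $\bm{\mc{\ul R}}=\lambda\ul\bg$ are hypotheses imposed on them — they are not ``determined by integrating the constraint along $n$ from corner values.'' (One may of course \emph{generate} admissible data that way, since \eqref{constraint} contains $\lie_n\bY$, but that is a construction of examples, not part of this theorem.) Second, and more importantly, your write-up is a roadmap rather than a proof: the two steps you correctly identify as delicate — (i) that the compatibility conditions plus the abstract constraints are exactly equivalent to the existence of a spacetime-metric prescription on the two null faces satisfying Rendall's constraint equations in the chosen gauge, and (ii) that the harmonicity defect $H^{\mu}$ can be made to vanish on $\mc H\cup\mc{\ul H}$ (recalling that on a characteristic surface only $H^{\mu}$ itself, not its transverse derivative, is at one's disposal) — are asserted, and they are precisely where the bulk of the work in \cite{Mio1,Mio2} is spent, via the identities expressing the ambient Ricci components \eqref{ricci}, \eqref{trPZ2}, \eqref{z2} in terms of hypersurface data. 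The geometric-uniqueness step likewise hides the verification that the abstract gauge group \eqref{transgamma}--\eqref{transY} and the $\partial$-isometry $\Psi$ account for all the freedom in adapting coordinates and riggings, which is what allows one to reduce two arbitrary developments to the same reduced system with the same data.
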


We conclude this short review by recalling some notation from \cite{Mio1,Mio2}.
\begin{nota}
Let $\{\mc H,\bg,\bm\ell,\elltwo\}$ be null metric hypersurface data with non-degenerate boun\-dary $\partial\mc H$. Let $\{e_A\}$ be a basis on $\partial\mc H$ and define the metric $h$ by $h_{AB}\d \bg(e_A,e_B)$ (we denote its inverse by $h^{AB}$), as well as $\ell_A\d \bm\ell(e_A)$, $\ell^A\d h^{AB}\ell_B$ and $\ell^{(2)}_{\sharp}\d h^{AB}\ell_A\ell_B$. From relations \eqref{Pell}-\eqref{Pgamma} one can easily check that the tensor $P^{ab}$ on $\partial\mc H$ can be decomposed as \cite{Mio1} 
	\begin{equation}
		\label{decoP}
		P \st{\partial\mc H}{=} h^{AB}e_A \otimes e_B -2\ell^A e_A\otimes_s n -\big(\elltwo-\ell_{\sharp}^{(2)}\big)n\otimes_s n.
	\end{equation}
\end{nota}

As proven in \cite{Mio2}, given double null data $\{\mc D,\mc{\ul D},\sigma,\phi\}$ embedded in $(\mc M,g)$ with embeddings $\Phi$ and $\ul\Phi$, the vectors $\nu$ and $\ul\nu$ on $\mc S$ are related with $\ul\xi$ and $\xi$ by means of
\begin{equation}
	\label{nuandxi}
\nu \st{\mc S}{=} \sigma(\ul\xi + \ul\Phi_{\star}\ul t),\qquad \ul\nu \st{\mc S}{=} \sigma(\xi + \Phi_{\star}t),
\end{equation}
where the vectors $t$ and $\ul t$ are defined by
\begin{equation}
	\label{ts}
t\d -\frac{1}{2}(\elltwo-\ell_{\sharp}^{(2)}) n - \ell^Ae_A,\qquad 	\ul t\d \frac{1}{2}(\ul{\ell}^{(2)}-\ul{\ell}_{\sharp}^{(2)}) \ul n - \ul\ell^A \ul e_A.
\end{equation}
We denote by $\{\ul e_A\}$ the basis on $\puH$ induced by $\{e_A\}$ on $\pH$, i.e. $\ul{e}_A\d \phi_{\star} e_A$. Observe that $\Phi_{\star} e_A = \ul\Phi_{\star}\ul e_A$.

\section{Characteristic homothetic KID problem}
\label{sec_charact}

In this section we present the homothetic (including Killing) KID problem for two characte\-ristic hypersurfaces in the language of hypersurface data. In particular, we want to show that the initial data for the KID problem is at the same level as the characteristic data, meaning that all the extra restrictions one needs to impose can be written solely in terms of double null data (i.e. at the abstract level). The existence problem for Killing vectors in the characteristic case has already been addressed in \cite{chrusciel2013kids}. In that work, the authors assume the characteristic data to be embedded in a vacuum spacetime and find necessary and sufficient conditions that guarantee the existence of an ambient Killing vector. As already mentioned in Section \ref{sec_null}, these conditions are written in a particular coordinate system and involve the ambient Riemann tensor, so in principle it is not clear whether they are at the level of the initial data or not. In this work we generalize the KID equations to the homothetic case and we show that, both in the homothetic and Killing cases, they can be written in terms of abstract data at the hypersurfaces. \\

Let $\{\mc D,\mc{\ul D},\sigma,\phi\}$ be double null data satisfying the constraint equations $\R=\lambda\bg$ and $\ul{\bm{\mc R}}=\lambda\ul{\bg}$ and assume $\mc H$ and $\mc{\ul H}$ admit cross-sections. Consider two pairs $(\bar\eta,C)\in\X(\mc H)\times\mc{F}(\mc H)$ and $(\ul{\bar\eta},\ul C)\in\X(\mc{\ul H})\times\mc{F}(\ul{\mc H})$ (generally speaking we denote with an underline all quantities on $\ul{\mc D}$) which on the respective boundaries $\partial\mc H$ and $\partial\mc{\ul H}$ are related by $\Psi$, namely $\Psi\big((\bar\eta,C)\big) \st{\puH}{=} (\ul{\bar\eta},\ul C)$. By Theorem \ref{main} there exists a $\lambda$-vacuum spacetime $(\mc M,g)$ where $\{\mc D,\mc{\ul D},\sigma,\phi\}$ is embedded (with respective embeddings $\Phi$ and $\ul\Phi$), and the abstract condition $\Psi\big((\bar\eta,C)\big) \st{\puH}{=} (\ul{\bar\eta},\ul C)$ gua\-rantees that the vector field $\eta|_{\Phi(\mc H)} \d C\xi + \Phi_{\star}\bar\eta$ agrees with $\ul{\eta}|_{\ul\Phi(\ul{\mc H})} \d \ul C\ul\xi + \ul\Phi_{\star}\ul{\bar\eta}$ on $\mc S$. Then, from the well-posedness of the wave equation with data on two intersecting characteristic hypersurfaces (see e.g. \cite{Rendall,cabet}) one can construct a vector field $\eta$ on $(\mc M,g)$ by solving $\square\eta = -\lambda\eta$ (i.e. $\mc{Q}[\eta]=0$) with initial data $\eta|_{\Phi(\mc H)} = C\xi + \Phi_{\star}\bar\eta$ and $\eta|_{\ul\Phi(\mc{\ul H})} = \ul C\ul\xi + \ul\Phi_{\star}\ul{\bar\eta}$. The equation obtained from \eqref{equationk} after setting $\mc{Q}[\eta]=0$ is homogeneous, so $\eta$ will be a homothety in $(\mc M,g)$ (i.e. satisfies $\k[\eta]=\mu g$ on $\mc M$ for some $\mu\in \real$) if and only if one can guarantee that $\k[\eta]= \mu g$ along $\Phi(\mc H)\cup \ul\Phi(\mc{\ul H})$. Thus, one must look for sufficient conditions on $\{\mc D,\mc{\ul D},\sigma,\phi\}$ (i.e. at the abstract level) that a posteriori imply $\k|_{\Phi(\mc H)}=\mu g |_{\Phi(\mc H)}$ and $\k|_{\ul\Phi(\mc{\ul H})}=\mu g |_{\ul\Phi(\mc{\ul H})}$ once the data is embedded. \\

In Theorem \ref{teorema_null} we have found sufficient conditions (they are also necessary) that the tensors $\k_{ab}$, $\kt_a$, $\ktt$ and $\mS$ must satisfy on $\mc H\cup\mc{\ul H}$ and also on $\mc S$ in order to have $\k=\mu g$ at $\Phi(\mc H)\cup\ul\Phi(\mc{\ul H})$. On the full $\mc H$ the conditions are $$\mbox{1. } \mS(n,n)=0,\qquad \mbox{2. } \k_{ab}n^b=0, \qquad \mbox{3. } \wh{\k}=0,$$
and similarly on $\mc{\ul H}$, while on $\mc S$ the conditions are

\begin{minipage}{0.3\textwidth}
\noindent
\begin{enumerate}
	\item[4.] $P^{ab}\k_{ab} = (\mf n-1)\mu$,
	\item[5.] $\ul P^{ab}\k_{ab}=(\mf n-1)\mu$,
	\item[6.] $\lie_n\big(P^{ab}\k_{ab}\big)=0$,
	\item[7.] $\lie_{\ul n}\big(\ul{P}^{ab}\ul{\k}_{ab}\big)=0$,
\end{enumerate}
\end{minipage}
\begin{minipage}{0.25\textwidth}
\noindent
\begin{enumerate}
\item[8.] $\kt(n)=\mu$,
\item[9.] $\ul\kt(\ul n)=\mu$,
\item[10.] $ \kt= \mu\bm\ell$,
\item[11.] $\ul{\kt}=\mu\bm\ell$,
\end{enumerate}
\end{minipage}
\begin{minipage}{0.25\textwidth}
	\noindent
	\begin{enumerate}
\item[12.] $\mS_{ab}n^b = 0$,
\item[13.] $\ul{\mS}_{ab}\ul{n}^b=0$,
\item[14.] $\ktt=\mu\elltwo$,
\item[15.] $\ul{\ktt}=\mu\ul{\ell}{}^{(2)}$,
	\end{enumerate}
\end{minipage}
\begin{minipage}{0.25\textwidth}
	\noindent
	\begin{enumerate}
\item[16.] $\tr_P\mS=0$,
\item[17.] $\tr_{\ul P}\ul{\mS}=0$.
	\end{enumerate}
\end{minipage}
\vspace{0.2cm}

These conditions come from the analysis on $\Phi(\mc H)$ and $\ul\Phi(\ul{\mc H})$ as separate null hypersurfaces. However, they are glued to each other across their boundaries. This makes some of these conditions redundant. Our task now is to identify a complete subset of sufficient conditions and to prove that they can be written solely in terms of double null data. It is clear that conditions 1., 2., 3. and their underlined versions must necessarily be included in the set of sufficient conditions. However, as already mentioned many of the conditions on $\mc S$ are redundant because they are related to information coming from the other hypersurface. To get some intuition let us for the moment simplify the problem and assume a choice of riggings satisfying $\nu\st{\mc S} = \sigma \ul{\xi}$ and $\ul{\nu}\st{\mc S}{=} \sigma\xi$, which at the abstract level means $\elltwo\st{\pH}{=}0$, $\ell_{A}\st{\pH}{=}0$, $\ul{\ell}{}^{(2)}\st{\puH}{=}0$ and $\ul\ell_{A}\st{\puH}{=}0$. With this choice of gauge $P^{ab}\st{\mc S}{=} h^{AB}e_A^a e_B^b \st{\mc S}{=}\ul{P}^{ab}$ (see \eqref{decoP}), so it is clear that condition 4. implies condition 5. Similarly, condition 8. implies condition 9., because $\mu\st{\mc S}{=}\k(\xi,\nu)\st{\mc S}{=}\k(\sigma^{-1}\ul{\nu},\sigma\ul{\xi})\st{\mc S}{=}\k(\ul\nu,\ul\xi)$, and conditions 10., 11., 14. and 15. are automatically fulfilled provided 2. and 8. hold. We discuss e.g. 15. with similar arguments applying to the other three cases. Condition 2. implies $\k(\nu,\nu)=0$ and since $\k(\nu,\nu) \st{\mc S}{=} \k(\sigma^{-1}\ul\xi,\sigma^{-1}\ul\xi)$ one gets $\ul{\ktt}\st{\puH}{=}0$, and by our choice of gauge $\ul{\ell}{}^{(2)}\st{\puH}{=}0$, so 15. follows. From \eqref{S2} one can also guess that conditions 12. and 13. are not independent, so most likely one only needs to impose one of them. That 16. and 17. are also redundant is less intuitive but we shall prove this below in an arbitrary gauge. In summary, the suggestion is that the complete subset of necessary conditions are 1., 2. and 3. on $\mc H$, their underlined versions on $\mc{\ul H}$, and 4., 6., 7., 8. and 12. on $\mc S$. We now take on the task of recasting these conditions in terms of double null data in an arbitrary gauge and of proving that they are indeed sufficient to guarantee $\k=\mu g$ on $\Phi(\mc H)\cup\ul\Phi(\mc{\ul H})$. Let us start with the former. That 2., 3., 4., 6. and 7. are writeable in terms of double null data is clear from \eqref{Adown}. In the following lemma we show that 1., 8. and 12. can also be written in terms of the data.

\begin{lema}
Let $\{\mc D,\mc{\ul D},\sigma,\phi\}$ be double null data embedded in a $\lambda$-vacuum spacetime $(\mc M,g)$ and let $\eta\in\X(\mc M)$. Then,
\begin{align}
\mS(n,n) &\st{\H}{=} -C\left( \lambda +P^{bc}A_{bca}n^a + P\big(\br+\bs,\br+\bs\big) - \dfrac{1}{2}\kappa_nn(\elltwo) -2\lie_n^{(2)}\elltwo\right)-n^an^b \lie_{\bar\eta}\Y_{ab} \nonumber \\
&\quad\,  - \left(\elltwo n(C)+\big(\lie_{\bar\eta}\bm\ell\big)(n)\right)\kappa_n + n(\elltwo)n(C)   + \elltwo\lie_n^{(2)}C  +\lie_n\big(\big(\lie_{\bar\eta}\bm\ell\big)(n)\big)  ,\label{mSnn4}
\end{align}
with analogous expressions on $\mc{\ul H}$, and 
\begin{align}
\hspace{-0.45cm}\kt(n) &\st{\mc S}{=} \elltwo n(C) +\dfrac{1}{2}C n(\elltwo) -(\br-\bs)(\bar\eta) + \bm\ell\big(\lie_n\bar\eta\big) + \ul{\ell}^{(2)} \ul{n}(\ul C) +\dfrac{1}{2}\ul{C}\ul{n}(\ul\ell{}^{(2)}) - (\ul{\br}-\ul{\bs})(\bar{\ul\eta}) +\ul{\bm\ell}\big(\lie_{\ul n}\bar{\ul\eta}\big)\nonumber\\
\hspace{-0.45cm}	& \quad\, -t^a \big(\nablacero_aC + C(\r-\s)_a -\U_{ac}\bar\eta^c \big) +\sigma^{-1}\ul{t}^b\big( \ul{n}(\ul C)\ul\ell_b +\ul{C}(\ul\r+\ul\s)_b + \ul{\gamma}_{bc}\lie_{\ul n}\ul{\bar\eta}^c +\ul\U_{ab}\ul{\bar\eta}^a\big),\label{kn0}
\end{align}
\vspace{-0.7cm}
\begin{align}
\hspace{-0.45cm}\mS_{ab}n^b &\st{\mc S}{=} -C\left( \lambda\ell_a +P^{bc}A_{bca} + P^{bc}\big(\r_b+\s_b\big)\big(\Y_{ac}+\F_{ac} \big)  - \dfrac{1}{2}\kappa_n\nablacero_a\elltwo -2\nablacero_a\big(n(\elltwo)\big)+2P^{bc}\U_{ab}\nablacero_c\elltwo\right) \nonumber \\
\hspace{-0.45cm}&\quad\, -n^b \lie_{\bar\eta}\Y_{ab} + \left(\elltwo n(C)+\big(\lie_{\bar\eta}\bm\ell\big)(n)\right)\big(\r_a-\s_a\big) + \dfrac{1}{2}n(\elltwo)\nablacero_aC + \dfrac{1}{2}n(C)\nablacero_a\elltwo\nonumber \\
\hspace{-0.45cm}&\quad\,   + \elltwo\big(\nablacero_a\big(n(C)\big) - P^{bc}\U_{ab}\nablacero_cC  \big) +\dfrac{1}{2}\lie_n\lie_{\bar\eta}\ell_a  +\dfrac{1}{2} \nablacero_a \big((\lie_{\bar\eta}\bm\ell)(n)\big)  - P^{bc}\U_{ab}\lie_{\bar\eta}\ell_c.\label{mSn1}
\end{align}
\begin{proof}
Relations \eqref{mSnn4} and \eqref{mSn1} follow immediately from \eqref{mSnn2} and \eqref{mSn} after inserting $R_{\alpha\beta}=\lambda g_{\alpha \beta}$. To prove \eqref{kn0} we use the fact recalled at the end of Section \ref{sec_DND}, namely that $\nu \st{\mc S}{=} \sigma(\ul\xi + \ul\Phi_{\star}\ul t)$ and $\ul\nu \st{\mc S}{=} \sigma(\xi + \Phi_{\star}t)$ with $t$ and $\ul t$ given by \eqref{ts}, from where it follows
\begin{align}
\kt(n) &= \nu^{\alpha}\xi^{\beta}\nabla_{\alpha}\eta_{\beta} + \nu^{\alpha}\xi^{\beta}\nabla_{\beta}\eta_{\alpha}\nonumber\\
& \st{\mc S}{=} \nu^{\alpha}\xi^{\beta}\nabla_{\alpha}\eta_{\beta} + \ul\xi^{\alpha}\ul\nu^{\beta}\nabla_{\beta}\eta_{\alpha} -\nu^{\alpha}t^{\beta}\nabla_{\beta}\eta_{\alpha} +\sigma^{-1} \ul t^{\alpha}\ul\nu^{\beta}\nabla_{\beta}\eta_{\alpha}\nonumber\\
&\st{\mc S}{=} n^b\big({}^{(2)}\nabla\bm\eta\big)_b + \ul{n}^b\big({}^{(2)}\nabla\bm\eta\big)_b - t^a n^b \big(\nabla\bm\eta\big)_{ab} + \sigma^{-1}\ul{n}^a \ul{t}^b \big(\nabla\bm\eta\big)_{ab}  .\label{k(n)}
\end{align}
So we need to compute the tensors $\big(\nabla\bm\eta\big)_{ab}$ and $\big({}^{(2)}\nabla\bm\eta\big)_b$ and their contractions with $n$. For $\big(\nabla\bm\eta\big)_{ab}$ we apply equation \eqref{identity1} in Proposition \ref{proppullback} for $T=\bm\eta$ to get
\begin{align*}
	\big(\nabla\bm\eta\big)_{ab} &= \nablacero_a\bm\eta_b +\Y_{ab}\bm\eta_c n^c + \U_{ab}\bm\eta(\xi)\nonumber\\
	&\st{\eqref{etatangxi}}{=}\nablacero_a\big(C\ell_b + \gamma_{bc}\bar\eta^c\big) +C\Y_{ab} +\big(C\elltwo+\bm\ell(\bar\eta)\big) \U_{ab}\nonumber\\
	&= \ell_b \nablacero_a C +C\big(\Y_{ab}+\F_{ab}\big)-\ell_b\U_{ac}\bar\eta^c + \gamma_{bc}\nablacero_a\bar\eta^c,
\end{align*}
where in the last line we used \eqref{nablagamma}-\eqref{nablaell}. The contractions with $n^a$ and $n^b$ follow after using $n^a\nablacero_a\bar\eta^c = \lie_n\bar\eta^c + \bar\eta^a\nablacero_a n^c$ and \eqref{derivadan},
\begin{align}
	n^a \big(\nabla\bm\eta\big)_{ab} &= n(C)\ell_b +C(\r+\s)_b +\gamma_{bc} \lie_n\bar\eta^c + \U_{ab}\bar\eta^a,\label{nablaetan1}\\
	n^b \big(\nabla\bm\eta\big)_{ab} &= \nablacero_aC + C(\r-\s)_a -\U_{ac}\bar\eta^c .\label{nablaetan2}
\end{align}
To compute $\big({}^{(2)}\nabla\bm\eta\big)_b$ we use equation \eqref{identity3} in Proposition \ref{proppullback} for $T=\bm\eta$ and $j=1$ together with \eqref{etatangxi} to get
\begin{align*}
	\big({}^{(2)}\nabla\bm\eta\big)_b &= \nablacero_b \big(C\elltwo+\bm\ell(\bar\eta)\big)-\big(C\elltwo+\bm\ell(\bar\eta)\big)\big(\r-\s\big)_b -V^c{}_b\big(C\ell_c+\gamma_{cd}\bar\eta^d\big)\nonumber\\
	&=\elltwo \nablacero_b C + \dfrac{1}{2}C\nablacero_b\elltwo + \nablacero_b\big(\bm\ell(\bar\eta)\big) -\big(\Y_{bd}+\F_{bd}\big)\bar\eta^d,
\end{align*}
where in the second equality we used \eqref{ellV}-\eqref{gammaV}. Contracting this with $n^b$ and using $\lie_n\big(\bm\ell(\bar\eta)\big) \st{\eqref{lienell}}{=} 2\bs(\bar\eta) + \ell_a\lie_n\bar\eta^a$ we arrive at
\begin{equation}
	\label{2nablaetan}
	n^b \big({}^{(2)}\nabla\bm\eta\big)_b = \elltwo n(C) +\dfrac{1}{2}C n(\elltwo) -(\br-\bs)(\bar\eta) + \bm\ell\big(\lie_n\bar\eta\big).
\end{equation}
Inserting \eqref{nablaetan1}-\eqref{2nablaetan} into \eqref{k(n)} yields the result.
\end{proof}
\end{lema}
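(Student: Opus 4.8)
The first and third identities are essentially corollaries of Lemma~\ref{mSnn}. That lemma already expresses $\mS_{ab}n^b$ and $\mS_{ab}n^an^b$ through hypersurface data, the pair $(C,\bar\eta)$, and the pullbacks $\ric(\xi,e_a)$, $\ric(\xi,\nu)$ of the ambient Ricci tensor (Eqs.~\eqref{mSn}--\eqref{mSnn2}). Under the $\lambda$-vacuum hypothesis $R_{\alpha\beta}=\lambda g_{\alpha\beta}$ these Ricci contractions collapse: $\ric(\xi,e_a)=\lambda\,\Phi^{\star}\!\big(g(\xi,\cdot)\big)(e_a)=\lambda\ell_a$, and, since $\ntwo=0$ forces $\nu=\Phi_{\star}n$ and hence $g(\xi,\nu)=\bm\ell(n)=1$ by \eqref{ell(n)}, one has $\ric(\xi,\nu)=\lambda$. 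Substituting into \eqref{mSn}--\eqref{mSnn2} gives \eqref{mSn1} and \eqref{mSnn4} directly, so the real work is \eqref{kn0}.

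For \eqref{kn0}, the plan is to compute $\kt(n)=\Phi^{\star}\!\big(\mc K(\xi,\cdot)\big)(n)=\mc K_{\alpha\beta}\xi^{\alpha}\nu^{\beta}$ (using $\nu=\Phi_{\star}n$), and expand via $\mc K_{\alpha\beta}=2\nabla_{(\alpha}\eta_{\beta)}$ together with the symmetry of $\mc K$ to obtain $\kt(n)=\nu^{\alpha}\xi^{\beta}\nabla_{\alpha}\eta_{\beta}+\nu^{\alpha}\xi^{\beta}\nabla_{\beta}\eta_{\alpha}$. The first summand is a derivative of $\eta$ along $\nu$, which is tangent to $\mc H$, so it equals $n^{b}\big({}^{(2)}\nabla\bm\eta\big)_{b}$ and will be computable purely from data on $\mc H$ once $\bm\eta$ is written through $(C,\bar\eta)$ via \eqref{etatangxi}. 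The second summand carries a derivative along the transverse direction $\xi$ and is \emph{a priori} not hypersurface data. Here the double null structure is used: on $\mc S$ the gluing relations \eqref{nuandxi}--\eqref{ts} allow one to write $\xi\st{\mc S}{=}\sigma^{-1}\ul\nu-\Phi_{\star}t$ and $\nu\st{\mc S}{=}\sigma(\ul\xi+\ul\Phi_{\star}\ul t)$. Since $\ul\nu=\ul\Phi_{\star}\ul n$ is tangent to $\mc{\ul H}$ --- where $\eta=\ul C\ul\xi+\ul\Phi_{\star}\ul{\bar\eta}$ is likewise known --- and $\Phi_{\star}t$ is tangent to $\mc H$, the transverse derivative splits into a piece living on $\mc{\ul H}$ and a piece living on $\mc H$, yielding
\[
\kt(n)\st{\mc S}{=}n^{b}\big({}^{(2)}\nabla\bm\eta\big)_{b}+\ul n^{b}\big({}^{(2)}\nabla\bm\eta\big)_{b}-t^{a}n^{b}\big(\nabla\bm\eta\big)_{ab}+\sigma^{-1}\ul n^{a}\ul t^{b}\big(\nabla\bm\eta\big)_{ab},
\]
the two $\big({}^{(2)}\nabla\bm\eta\big)$ terms being computed, respectively, with the $\mc H$- and the $\mc{\ul H}$-structure.

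It then remains to evaluate the building blocks $\big(\nabla\bm\eta\big)_{ab}$ and $\big({}^{(2)}\nabla\bm\eta\big)_{b}$ (and their underlined analogues) and contract them with $n$. Applying the pullback formula \eqref{identity1} of Proposition~\ref{proppullback} to $T=\bm\eta$ gives $\big(\nabla\bm\eta\big)_{ab}=\nablacero_a\bm\eta_b+\Y_{ab}\,\bm\eta(n)+\U_{ab}\,\bm\eta(\xi)$; inserting \eqref{etatangxi} and simplifying with \eqref{nablagamma}--\eqref{nablaell} produces a clean expression whose contractions with $n^a$ and $n^b$ follow from \eqref{derivadan}. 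Likewise, \eqref{identity3} with $T=\bm\eta$ and $j=1$, after \eqref{etatangxi} and \eqref{ellV}--\eqref{gammaV}, expresses $\big({}^{(2)}\nabla\bm\eta\big)_b$ in hypersurface data, its $n^b$-contraction reducing further via \eqref{lienell}. Plugging the four contractions into the displayed identity --- and reading the $\mc{\ul H}$ contributions off the underlined computations --- assembles \eqref{kn0}.

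The main obstacle is conceptual rather than computational: one must recognise that, although $\xi$ is genuinely transverse to $\mc H$, on the intersection surface $\mc S$ it can be traded, via the $\partial$-isometry relations \eqref{nuandxi}, for a direction tangent to the \emph{other} hypersurface $\mc{\ul H}$ plus a direction tangent to $\mc H$; this is precisely what removes every reference to ambient transverse derivatives and renders the formula expressible in double null data alone. Once that step is in place, the rest is bookkeeping --- keeping the pullback conventions (${}^{(2)}\nabla\bm\eta$ versus $\nabla\bm\eta$, and the $\sigma$-factors carried by \eqref{nuandxi}) consistent, and checking that the $\mc H$- and $\mc{\ul H}$-contributions combine into the symmetric-looking expression \eqref{kn0}.
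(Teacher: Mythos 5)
Your proposal is correct and follows essentially the same route as the paper: equations \eqref{mSnn4} and \eqref{mSn1} are obtained by substituting $R_{\alpha\beta}=\lambda g_{\alpha\beta}$ into Lemma \ref{mSnn}, and \eqref{kn0} is derived by trading $\xi$ on $\mc S$ for $\sigma^{-1}\ul\nu-\Phi_{\star}t$ via \eqref{nuandxi}, then evaluating $\big(\nabla\bm\eta\big)_{ab}$ and $\big({}^{(2)}\nabla\bm\eta\big)_b$ with Proposition \ref{proppullback} and contracting with $n$. The identification of the gluing relations as the key step that eliminates all ambient transverse derivatives is exactly the paper's argument.
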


\begin{rmk}
	Note that in a gauge where $\elltwo\st{\pH}{=}0$, $\ell_{A}\st{\pH}{=}0$, $\ul{\ell}{}^{(2)}\st{\puH}{=}0$ and ${\ul\ell}_{A}\st{\puH}{=}0$ (and hence $\nu \st{\mc S}{=} \sigma\ul\xi $ and $\ul\nu \st{\mc S}{=} \sigma\xi$) the expression for $\kt(n)$ at $\mc S$ simplifies drastically. We keep the gauge arbitrary so that Theorem \ref{teo_char} below is written in a gauge-covariant form.
\end{rmk}

From this lemma together with expression \eqref{Adown} it follows that conditions 1., 2., 3. on $\mc H$ can be written in terms of the data as
\begin{align}
	0 &\st{\mc H}{=} -C\left( \lambda +P^{bc}A_{bca}n^a + P\big(\br+\bs,\br+\bs\big) - \dfrac{1}{2}\kappa_nn(\elltwo) -2\lie_n^{(2)}\elltwo\right)-n^an^b \lie_{\bar\eta}\Y_{ab} \nonumber \\
	&\quad\,  - \left(\elltwo n(C)+\big(\lie_{\bar\eta}\bm\ell\big)(n)\right)\kappa_n + n(\elltwo)n(C)   + \elltwo\lie_n^{(2)}C  +\lie_n\big(\big(\lie_{\bar\eta}\bm\ell\big)(n)\big)  ,\label{mSnn3}\\
	0&\st{\mc H}{=}  2C\r_a + \nablacero_a C+ n(C)\ell_a + n^b\lie_{\bar\eta}\gamma_{ab},\label{Kn=0}\\
	0& \st{\mc H}{=}  2C\Y_{ab}  + 2\ell_{(a}\nablacero_{b)}C +\lie_{\bar\eta}\gamma_{ab} - \dfrac{1}{\mf n-1}\left(2C\tr_P\bY - 2\elltwo n(C) + P^{ab}\lie_{\bar\eta}\gamma_{ab}\right)\gamma_{ab},\label{whK=0}
\end{align}
where in the last one we used $\wh{\k} = \k - \frac{\tr_p\k}{\mf n-1}\bg$ because $\k_{ab}n^b=0$. On $\uH$ the equations are analogous, while on $\pH$ and $\puH$ conditions 4., 6., 7., 8. and 12. can be written as 
\begin{align}
(\mf n-1)\mu& \st{\pH}{=}  2C\tr_P\bY - 2\elltwo n(C) + P^{ab}\lie_{\bar\eta}\gamma_{ab} ,\label{conds1}\\
0&\st{\pH}{=}\lie_n \left(C\tr_P\bY - \elltwo n(C) + \dfrac{1}{2}P^{ab}\lie_{\bar\eta}\gamma_{ab}\right) ,\\
0&\st{\puH}{=}\lie_{\ul n} \left(\ul C\tr_P\ul\bY - \ul\ell{}^{(2)} \ul n(\ul C) + \dfrac{1}{2}\ul P^{ab}\lie_{\bar{\ul\eta}}\ul\gamma_{ab}\right) ,\label{conds2}
\end{align}
\vspace{-0.8cm}
\begin{align}
\mu &\st{\pH}{=} \elltwo n(C) +\dfrac{1}{2}C n(\elltwo) -(\br-\bs)(\bar\eta) + \bm\ell\big(\lie_n\bar\eta\big) -t^a \big(\nablacero_aC + C(\r-\s)_a -\U_{ac}\bar\eta^c \big)\nonumber\\
	& \quad\, +\phi^{\star}\left(\ul{\ell}^{(2)} \ul{n}(\ul C) +\dfrac{1}{2}\ul{C}\ul{n}(\ul\ell{}^{(2)}) - (\ul{\br}-\ul{\bs})(\bar{\ul\eta}) +\ul{\bm\ell}\big(\lie_{\ul n}\bar{\ul\eta}\big)\right)\nonumber\\
	&\quad\, +\phi^{\star}\left(\sigma^{-1}\ul{t}^b\big( \ul{n}(\ul C)\ul\ell_b +\ul{C}(\ul\r+\ul\s)_b + \ul{\gamma}_{bc}\lie_{\ul n}\ul{\bar\eta}^c +\ul\U_{ab}\ul{\bar\eta}^a\big)\right),\label{kn}\\
0 &\st{\pH}{=} -C\left( \lambda\ell_a +P^{bc}A_{bca} + P^{bc}\big(\r_b+\s_b\big)\Pi_{ac}  - \dfrac{1}{2}\kappa_n\nablacero_a\elltwo -2\nablacero_a\big(n(\elltwo)\big)+2P^{bc}\U_{ab}\nablacero_c\elltwo\right) \nonumber \\
	&\quad\, -n^b \lie_{\bar\eta}\Y_{ab} + \left(\elltwo n(C)+\big(\lie_{\bar\eta}\bm\ell\big)(n)\right)\big(\r_a-\s_a\big) + \dfrac{1}{2}n(\elltwo)\nablacero_aC + \dfrac{1}{2}n(C)\nablacero_a\elltwo\nonumber \\
	&\quad\,   + \elltwo\big(\nablacero_a\big(n(C)\big) - P^{bc}\U_{ab}\nablacero_cC  \big) +\dfrac{1}{2}\lie_n\lie_{\bar\eta}\ell_a  +\dfrac{1}{2} \nablacero_a \big((\lie_{\bar\eta}\bm\ell)(n)\big)  - P^{bc}\U_{ab}\lie_{\bar\eta}\ell_c.\label{mSn2}
\end{align}

Next we prove that whenever the data is embedded, these conditions imply the rest on $\mc S$, and thus that $\k$ is pure trace on ${\Phi(\mc H)\cup\ul\Phi(\mc{\ul H})}$.

\begin{lema}
	Let $\{\mc D,\mc{\ul D},\sigma,\phi\}$ be double null data embedded in a $\lambda$-vacuum spacetime and satisfying \eqref{mSnn3}-\eqref{mSn2}. Then, conditions 5., 9., 10., 11., 13., 14., 15., 16. and 17. hold on $\mc S$, and as a consequence, $\k_{\alpha\beta}=\mu g_{\alpha\beta}$ on ${\Phi(\mc H)\cup\ul\Phi(\mc{\ul H})}$.
\begin{proof}
Since the data satisfies \eqref{mSnn3}-\eqref{mSn2} the embedded conditions 1., 2., 3., 4., 6., 7., 8. and 12. are true. Let us prove that they imply conditions 5., 9., 10., 11., 13., 14., 15., 16. and 17. on $\mc S$. We start with condition 5. From $P^{ab}\k_{ab}\st{\mc S}{=} (\mf n-1)\mu$, $\k_{ab}n^b=0$ and using the decomposition \eqref{decoP} it follows $h^{AB}\k_{AB}= (\mf n-1)\mu$ on $\mc S$, and hence condition $\ul P^{ab}\ul{\k}_{ab}=(\mf n-1)\mu$ is automatically fulfilled on $\mc S$. To prove 9. note that conditions 1., 2., 3., 4. and 6. imply, by item 1. in Theorem \ref{teorema_null}, that $\k_{ab}=\mu\gamma_{ab}$ on $\mc H$ (and analogously $\ul{\k}_{ab}=\mu\ul{\gamma}_{ab}$ on $\ul{\mc H}$). Then, using \eqref{nuandxi} $\kt(n)\st{\mc S}{=}\k(\nu,\xi)\st{\mc S}{=} \k(\ul\xi+\Phi_{\star}\ul t,\ul\nu)- \k(\nu,\Phi_{\star}t)\st{\mc S}{=} \k(\ul\xi,\ul\nu)\st{\mc S}{=}\ul{\kt}(\ul n)$. So item 1. in Theorem \ref{teorema_null} gives $\kt(n)=\mu$ (resp. $\ul{\kt}(\ul n)=\mu$) everywhere on $\mc H$ (resp. on $\ul{\mc H}$). Note also that $\kt_a=\mu\ell_a$ on $\mc S$ because $\kt(n)\st{\mc S}{=}\mu = \mu\bm\ell(n)$ and $$\kt(e_A) = \k(\xi,\Phi_{\star} e_A) \st{\mc S}{=} \k\big(\sigma^{-1}\ul\nu,\ul\Phi_{\star}\ul e_A \big) - \k\big(\Phi_{\star}t,\Phi_{\star}e_A \big) \st{\mc S}{=} \mu h_{BA}\ell^B \st{\mc S}{=}\mu\ell_A.$$ Similarly $\ul{\kt}_a=\mu\ul\ell_a$ on $\mc S$, which proves items 10. and 11. To show 13. note that from the symmetry \eqref{S2}, \eqref{nuandxi} and \eqref{2Kn} it follows 
\begin{align*}
	\ul{\mS}(\ul n,\ul e_A) =	\Sigma(\ul\xi,\ul\nu,\ul\Phi_{\star} \ul e_A) &\st{\mc S}{=} -\Sigma(\ul\nu,\ul\xi,\ul\Phi_{\star}\ul e_A) + \big(\nabla\k\big)(\ul\Phi_{\star}\ul e_A,\ul\xi,\ul\nu)\\
	&\st{\mc S}{=} - \Sigma(\xi,\nu,\Phi_{\star}e_A)  + \Sigma(\Phi_{\star}t,\nu,\Phi_{\star}e_A)+ \Sigma(\ul\nu,\ul\Phi_{\star}\ul t, \ul\Phi_{\star}\ul e_A) + \big({}^{(2)}\nabla\ul\k\big)_{ab} \ul e_A^a \ul n^b\\
	& \st{\mc S}{=} - \mS(n,e_A)  + \Sigma(t,n,e_A)+ \Sigma(\ul n,\ul t,\ul e_A) - \ul P^{bc}\ul\U_{ca}\ul\kt_b \ul e_A^a \st{\mc S}{=}0,
\end{align*}
where in the third equality we used $\ul{P}^{bc}\ul\U_{ca}\ul\kt_b  \st{\puH}{=} \mu\ul{P}^{bc}\ul\U_{ca}\ul\ell_b  \st{\puH}{=} 0$, $\mS(n,e_A)\st{\pH}{=}0$ (because of 12.) and that $\ul\Sigma(\ul n,\ul t,\ul e_A)\st{\puH}{=}0$ and $\Sigma(t,n,e_A)\st{\pH}{=}0$ (see Rmk. \ref{rmk1}). By item 2. of Theorem \ref{teorema_null} we have that $\kt=\mu\bm\ell$ and $\mS(n,\cdot)=0$ on $\H$, and $\ul\kt=\mu\ul{\bm\ell}$ and $\ul\mS(\ul n,\cdot)=0$ on $\uH$.\\

From $\k(\ul\nu,\ul\nu)\st{\mc S}{=} 0$ and \eqref{nuandxi} it follows $$0=\sigma^{-2}\k(\ul\nu,\ul\nu) \st{\mc S}{=} \k(\xi,\xi)+2\k(\xi,\Phi_{\star}t)+\k(\Phi_{\star}t,\Phi_{\star}t) \st{\mc S}{=}\ktt +2\kt(t)+\k(t,t),$$ which after inserting $\kt=\mu\bm\ell$, $\k_{ab}=\mu\gamma_{ab}$ and \eqref{ts} gives $$0\st{\pH}{=} \ktt + 2\mu\bm\ell(t) + \mu\bg(t,t)\st{\pH}{=}\ktt + \mu\big(-\big(\elltwo - \ell_{\sharp}^{(2)}\big) - 2\ell_{\sharp}^{(2)} + \ell_{\sharp}^{(2)}\big) \st{\pH}{=} \ktt - \mu\elltwo \qquad \Longrightarrow\qquad \ktt\st{\pH}{=}\mu\elltwo.$$ Similarly $\ul{\ktt}\st{\puH}{=}\mu\ul{\ell}{}^{(2)}$. Let us conclude by proving conditions 16. and 17. (we just prove 16. since 17. is analogous). Using again that $\kt=\mu\bm\ell$ and $\k_{ab}=\mu\gamma_{ab}$ it follows 
\begin{align*}
	\big({}^{(2)}\nabla\k\big)_{ab}& \st{\eqref{2K}}{=} \mu\big(\F_{ab} - \elltwo \U_{ab}\big) + \mu\Y_{ab} + \ktt\U_{ab} - \mu\big(\r-\s\big)_a\ell_b - \mu V^c{}_a\gamma_{bc}\nonumber\\
	&\hspace{0.0cm} \st{\eqref{gammaV}}{=} \mu\big(\F_{ab}+\Y_{ab}\big) + \big(\ktt-\mu\elltwo\big)\U_{ab} -\mu\big(\r-\s\big)_a\ell_b-\mu\big(\F_{ab}+\Y_{ab}\big) + \mu\big(\r-\s\big)_a\ell_b\nonumber\\
	&\hspace{0.1cm}= \big(\ktt-\mu\elltwo\big)\U_{ab},
\end{align*} 
which in particular implies $\big({}^{(2)}\nabla\k\big)_{ab}\st{\pH}{=} 0$. Now, from \eqref{S2}, \eqref{nuandxi} and using $\Sigma_{abc}\st{\pH}{=} 0$ and $\ul\Sigma_{abc}\st{\puH}{=} 0$ (see Remark \ref{rmk1}) we get
\begin{align*}
	\Sigma(\xi,\Phi_{\star}e_A,\Phi_{\star}e_B)&\st{\mc S}{=} - \Sigma(\Phi_{\star}e_A,\xi,\Phi_{\star}e_B) + \big(\nabla\k\big)(\Phi_{\star} e_B,\xi,\Phi_{\star}e_A) \\
	&\st{\mc S}{=} - \Sigma(\ul\Phi_{\star} \ul e_A,\sigma^{-1}\ul\nu,  \ul\Phi_{\star} \ul e_B) + \Sigma(\Phi_{\star}e_A,\Phi_{\star}t,\Phi_{\star}e_B) + \big({}^{(2)}\nabla\k\big)_{ab} e_A^a e_B^b\st{\mc S}{=} 0.
\end{align*}
This, combined with the decomposition \eqref{decoP} and $\mS(n,\cdot)\st{\pH}{=}0$ implies $\tr_P\mS\st{\pH}{=}0$, and hence by item 3. in Theorem \ref{teorema_null} it follows that $\ktt = \mu\elltwo$ and $\tr_P\mS=0$ everywhere on $\mc H$ (and similarly $\ul{\ktt}=\mu\ul{\ell}{}^{(2)}$ and $\tr_{\ul P}\ul\mS=0$ on $\ul{\mc H}$).
\end{proof}
\end{lema}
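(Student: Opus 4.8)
The plan is to peel off the nine outstanding conditions on $\mc S$ one at a time, in an order chosen so that the transport statements of Theorem~\ref{teorema_null} can be invoked on $\mc H$ and on $\mc{\ul H}$ as soon as enough data on the cross-section is available, and then to close with the last item of that theorem. Here $\eta$ is the vector field constructed by solving $\mc{Q}[\eta]=0$ (that is, $\square\eta=-\lambda\eta$) with initial data $C\xi+\Phi_{\star}\bar\eta$ on $\mc H$ and $\ul{C}\ul\xi+\ul\Phi_{\star}\ul{\bar\eta}$ on $\mc{\ul H}$, matched on $\mc S$. By \eqref{mSnn3}--\eqref{mSn2}, conditions 1., 2., 3.\ (and their underlined analogues) hold on $\mc H$ ($\mc{\ul H}$) and 4., 6., 7., 8., 12.\ hold on $\mc S$; the remaining ingredients are the gluing identities \eqref{nuandxi}--\eqref{ts}, the decomposition \eqref{decoP} of $P$ on $\partial\mc H$, Remark~\ref{rmk1}, the symmetry \eqref{S2}, and the equality $\Phi_{\star}e_A=\ul\Phi_{\star}\ul{e}_A$ on $\mc S$.

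First I would deal with the metric-level conditions. Substituting \eqref{decoP} into $P^{ab}\k_{ab}$ and using $\k_{ab}n^b=0$ leaves $h^{AB}\k_{AB}$, which coincides with $\ul{P}^{ab}\ul{\k}_{ab}$ by the same computation on $\mc{\ul H}$; hence condition 4.\ forces condition 5. With conditions 1.--4.\ and 6.\ in hand, item~1 of Theorem~\ref{teorema_null} propagates $\k_{ab}=\mu\gamma_{ab}$ to all of $\mc H$, and symmetrically $\ul{\k}_{ab}=\mu\ul{\gamma}_{ab}$ to all of $\mc{\ul H}$. Conditions 9., 10., 11.\ then come out by expanding $\kt(n)=\k(\nu,\xi)$ and $\kt(e_A)=\k(\xi,\Phi_{\star}e_A)$ (and the underlined analogues) with \eqref{nuandxi}--\eqref{ts}: every term containing a contraction $\k_{ab}n^b$ or $\ul{\k}_{ab}\ul{n}^b$ drops, and on $\mc S$ one is left with $\kt(n)=\k(\ul\xi,\ul\nu)=\ul{\kt}(\ul{n})$ and $\kt(e_A)=-\mu\gamma_{ab}t^a e_A^b=\mu\ell_A$; together with condition 8.\ this gives conditions 9., 10., 11.

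The core of the argument is the $\mS$-part. Once $\k,\ul{\k}$ are pure trace and $\kt=\mu\bm\ell$, $\ul{\kt}=\mu\ul{\bm\ell}$ hold on $\mc S$, Remark~\ref{rmk1} makes the fully tangential components $\Sigma_{abc}$, $\ul\Sigma_{abc}$ vanish there. I would compute $\ul{\mS}(\ul{n},\ul{e}_A)=\Sigma(\ul\xi,\ul\nu,\ul\Phi_{\star}\ul{e}_A)$ by using \eqref{S2} to exchange its first two slots, substituting \eqref{nuandxi}, and checking that the result reduces to a copy of $\mS(n,\cdot)$ (zero by condition 12.), fully tangential $\Sigma$'s on $\mc H$ and $\mc{\ul H}$ (zero by Remark~\ref{rmk1}), and a $\big({}^{(2)}\nabla\ul{\k}\big)_{ab}\ul{n}^b$ term, which by \eqref{2Kn}, \eqref{Pell} and $\ul\U_{ab}\ul{n}^b=0$ collapses to $-\mu\ul{P}^{bc}\ul\U_{ca}\ul\ell_b=0$; this is condition 13. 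Now items~1 and 2 of Theorem~\ref{teorema_null} apply separately on $\mc H$ (via conditions 1., 2., 3., 4., 6., 8., 10., 12.) and on $\mc{\ul H}$ (via their underlined versions together with 5., 7., 9., 11., 13.), carrying $\kt=\mu\bm\ell$, $\mS(n,\cdot)=0$ and $\ul{\kt}=\mu\ul{\bm\ell}$, $\ul{\mS}(\ul{n},\cdot)=0$ to the whole of each hypersurface. Conditions 14., 15.\ then follow from $0=\k(\nu,\nu)=\ktt+2\mu\bm\ell(t)+\mu\bg(t,t)$, the last two summands evaluating to $-\mu\elltwo$ via \eqref{ts}. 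For conditions 16., 17.\ I would use \eqref{2K} with $\k=\mu\bg$, $\kt=\mu\bm\ell$ to obtain $\big({}^{(2)}\nabla\k\big)_{ab}=(\ktt-\mu\elltwo)\U_{ab}$, which vanishes on $\mc S$ by condition 14.; then, arguing as for 13.\ but with the two tangential slots, \eqref{S2} and \eqref{nuandxi} reduce $\Sigma(\xi,\Phi_{\star}e_A,\Phi_{\star}e_B)$ to vanishing fully tangential $\Sigma$'s plus $\big({}^{(2)}\nabla\k\big)_{AB}=0$, so $\mS_{AB}=0$ on $\mc S$; contracting \eqref{decoP} with $\mS$ and using $\mS(n,\cdot)=0$ yields $\tr_P\mS=0$, and $\tr_{\ul{P}}\ul{\mS}=0$ analogously. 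With all of 1.--17.\ in place, item~3 of Theorem~\ref{teorema_null} on $\mc H$ and on $\mc{\ul H}$ transports $\ktt=\mu\elltwo$ and $\tr_P\mS=0$ off $\mc S$ to the entire hypersurfaces, and this together with $\k_{ab}=\mu\gamma_{ab}$ and $\kt=\mu\bm\ell$ already established there and the embedding relations \eqref{embedded_equations} is exactly $\k_{\alpha\beta}=\mu g_{\alpha\beta}$ on $\Phi(\mc H)\cup\ul\Phi(\mc{\ul H})$.

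I expect the main obstacle to be the bookkeeping in the $\mS$-steps (conditions 13., 16., 17.). One would need to keep the substitution from \eqref{nuandxi} gauge-covariant, correctly sort the $\Sigma$-components it produces into fully tangential pieces (killed by Remark~\ref{rmk1}) and pieces of the form ${}^{(2)}\nabla\k$ contracted with $n$ (killed by the explicit formulae of Section~\ref{sec_hip}), and respect the order of implications: condition 13.\ must be obtained before propagating $\ul{\mS}(\ul{n},\cdot)=0$ off $\mc S$, and $\kt=\mu\bm\ell$ must be secured on the whole of each hypersurface before Remark~\ref{rmk1} can be applied to the tangential $\Sigma$'s needed in conditions 16., 17.
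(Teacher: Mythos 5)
Your proposal is correct and follows essentially the same route as the paper's proof: condition 5 via the decomposition \eqref{decoP}, conditions 9--11 via item 1 of Theorem \ref{teorema_null} together with \eqref{nuandxi}--\eqref{ts}, condition 13 via the symmetry \eqref{S2}, Remark \ref{rmk1} and \eqref{2Kn}, propagation by item 2, then conditions 14--17 and closure by item 3. The only (harmless) slip is writing $\k(\nu,\nu)$ where you mean $\k(\ul\nu,\ul\nu)=\sigma^{2}\big(\ktt+2\kt(t)+\k(t,t)\big)$ in the step for conditions 14--15; the expansion you give makes clear the intended vector is $\ul\nu\st{\mc S}{=}\sigma(\xi+\Phi_{\star}t)$.
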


Thus, conditions \eqref{mSnn3}-\eqref{mSn2} are everything one needs to impose to guarantee $\k_{\alpha\beta}=\mu g_{\alpha\beta}$ on ${\Phi(\mc H)\cup\ul\Phi(\mc{\ul H})}$, and hence $\lie_{\eta}g =\mu g$ on $\mc M$. This concludes the proof of one of our main results of the paper.
\begin{teo}
	\label{teo_char}
	Let $\{\mc D,\mc{\ul D},\sigma,\phi\}$ be double null data satisfying the constraint equations \eqref{constraintsL}. Let $(\bar\eta,C)\in\X(\mc H)\times\mc{F}(\mc H)$ and $(\ul{\bar\eta},\ul{C})\in\X(\mc{\ul H})\times\mc{F}(\mc{\ul H})$ satisfy $\Psi(\bar\eta,C) \st{\puH}{=} (\ul{\bar\eta},\ul{C})$. Assume conditions \eqref{mSnn3}-\eqref{whK=0} on $\mc H$, the analogous ones in $\mc{\ul H}$, and conditions \eqref{conds1}-\eqref{mSn2} on $\pH$ with $\mu\in\real$ satisfying $\lambda\mu=0$. Then, there exists a $\lambda$-vacuum development $(\mc M,g)$ of $\{\mc{D},\ul{\mc D},\sigma,\phi\}$ and a vector field $\eta$ on $\mc M$ satisfying $\lie_{\eta}g = \mu g$ such that $\eta|_{\Phi(\mc H)} = C\xi + \bar\eta$ and $\eta|_{\Phi(\mc{\ul H})} = C\xi + \bar\eta$.
\end{teo}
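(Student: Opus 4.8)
The plan is to assemble results already in place. First I would invoke Theorem~\ref{main}: the constraint equations \eqref{constraintsL} yield a $\lambda$-vacuum development $(\mc M,g)$ in which $\{\mc D,\mc{\ul D},\sigma,\phi\}$ is embedded, with embeddings $\Phi,\ul\Phi$ and riggings $\xi,\ul\xi$. The hypothesis $\Psi(\bar\eta,C)=(\ul{\bar\eta},\ul C)$ on the boundaries is precisely the abstract statement that the vectors $C\xi+\Phi_{\star}\bar\eta$ on $\Phi(\mc H)$ and $\ul C\ul\xi+\ul\Phi_{\star}\ul{\bar\eta}$ on $\ul\Phi(\mc{\ul H})$ coincide on the intersection, so they constitute admissible characteristic data. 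I would then define $\eta$ on a neighbourhood of $\mc H\cup\mc{\ul H}$ by solving $\square_g\eta=-\lambda\eta$ with these initial values; well-posedness of the characteristic initial value problem for linear wave equations (e.g.\ \cite{Rendall,cabet}) supplies a unique such $\eta$, and since $R_{\mu\nu}=\lambda g_{\mu\nu}$ this is exactly $\mc Q[\eta]=0$ on $\mc M$.

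Next I would check that the abstract hypotheses are the data-level form of the embedded KID conditions. By \eqref{Adown} together with \eqref{mSnn4}, equations \eqref{mSnn3}, \eqref{Kn=0} and \eqref{whK=0} on $\mc H$ (and their counterparts on $\mc{\ul H}$) are equivalent to $\mS(n,n)=0$, $\k_{ab}n^b=0$ and $\wh\k=0$ there; and through \eqref{Adown}, \eqref{kn0} and \eqref{mSn1}, equations \eqref{conds1}--\eqref{conds2} and \eqref{kn}--\eqref{mSn2} on $\pH$ are equivalent to conditions 4, 6, 7, 8 and 12 at $\mc S$ in the list of Section~\ref{sec_charact}. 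The lemma immediately preceding the theorem then shows that, once the data is embedded, these force the remaining conditions 5, 9--11, 13--17 at $\mc S$, so all the hypotheses of Theorem~\ref{teorema_null} hold on $\mc H$ and on $\mc{\ul H}$, each admitting the cross-section $\mc S$. Applying Theorem~\ref{teorema_null} on each hypersurface gives $\k[\eta]=\mu\bg$, $\kt=\mu\bm\ell$ and $\ktt=\mu\elltwo$ on all of $\mc H$, and likewise on $\mc{\ul H}$; by \eqref{etatangxi} and \eqref{inversemetric} this is equivalent to $\k[\eta]_{\alpha\beta}=\mu g_{\alpha\beta}$ on $\Phi(\mc H)\cup\ul\Phi(\mc{\ul H})$.

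Finally I would close the argument with identity \eqref{equationk}: with $\mc Q[\eta]=0$, $R_{\mu\nu}=\lambda g_{\mu\nu}$ and $\lambda\mu=0$ (so the $2\lambda\mu g$ term vanishes), it becomes a linear homogeneous wave equation for the symmetric tensor $\k[\eta]-\mu g$. Since this tensor vanishes on the two initial null hypersurfaces, uniqueness for the characteristic Cauchy problem for linear wave systems forces $\k[\eta]-\mu g\equiv 0$ on the development, i.e.\ $\lie_\eta g=\mu g$ on $\mc M$; the prescribed values of $\eta$ on $\Phi(\mc H)$ and $\ul\Phi(\mc{\ul H})$ hold by construction.

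The substantive part of the argument --- and where nearly all the work lies --- is the reduction performed in the preceding lemma: the list of conditions 1--17 spread over $\mc H$, $\mc{\ul H}$ and $\mc S$ is heavily redundant, and one must verify that the subset \eqref{mSnn3}--\eqref{mSn2} is sufficient. This rests on (i) the hierarchical transport structure of Lemma~\ref{lemaagrupado} and Corollary~\ref{cor_junto}, which turns the surviving conditions into homogeneous transport equations along $n$ with trivial data at $\mc S$, so they propagate off $\mc S$; and (ii) the gauge-covariant relations \eqref{nuandxi}--\eqref{ts} between $\nu,\ul\nu$ and $\xi,\ul\xi$ at $\mc S$, which allow a condition on one hypersurface to be traded for its twin on the other. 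A delicate point, already addressed in Corollary~\ref{cor_junto}, is that several of these implications degenerate where $\tr_P\bU=0$; this is why the propagation is organised around $\tr_P\k$ rather than $\wh\k$.
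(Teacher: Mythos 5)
Your proposal is correct and follows essentially the same route as the paper: existence of the development via Theorem \ref{main}, construction of $\eta$ by the characteristic wave equation $\mc Q[\eta]=0$, identification of the abstract conditions \eqref{mSnn3}--\eqref{mSn2} with the embedded conditions 1--3 on each hypersurface and 4, 6, 7, 8, 12 on $\mc S$, the redundancy lemma to recover the remaining conditions at $\mc S$, propagation via Theorem \ref{teorema_null}, and closure through the homogeneous equation \eqref{equationk}. You also correctly locate where the real work lies (the redundancy analysis and the transport hierarchy), so nothing further is needed.
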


\begin{rmk}
In our opinion this theorem is relevant for at least two reasons. Firstly because the homothetic KID equations are written solely in terms of the initial data for the characteristic problem (and hence detached from the spacetime). This allows one to decide whether an initial data set will give rise to an ambient spacetime admitting a homothetic/Killing vector before solving the Einstein equations. The second reason is that both the constraints and the KID equations are fully gauge and diffeomorphism covariant, which makes it possible to choose the gauge and the coordinates at will.
\end{rmk}

To the best of our knowledge the characteristic KID problem has not yet been analysed in the presence of matter fields. The generalization of Theorem \ref{teo_char} to more general matter models is not immediate and would require additional work. One possible strategy would be to compute the derivative of the matter field along the candidate to Killing vector, and use the matter equations to eliminate the transverse derivatives. This would lead to necessary conditions that could then be imposed at the initial data level and (hopefully) establish that they imply the existence of a Killing vector leaving the matter fields invariant.

\section{Smooth Spacelike-Null homothetic KID problem}
\label{sec_sp-ch}

In this section we illustrate another application of the identities found in the previous sections, namely the homothetic KID problem in the context of smooth spacelike-characteristic initial data. Given a smooth achronal hypersurface $\mc H$ with normal $\nu$ embedded in a Lorentzian manifold $(\mc M,g)$, the equation $\square_g F + L_1F=f$, where $L_1$ is a a first order differential operator with smooth coefficients and $f$ is also smooth, admits a unique solution on the domain of dependence of $\H$ in $\mc M$ with initial data given by $(F|_{\mc H},\nu(F)|_{\mc H})$ at the points where the hypersurface is spacelike and $F|_{\mc H}$ where it is null \cite{hormander1990remark}. As far as we are aware, such a result is not yet known for arbitrary tensor fields, but it is believed to be true. Hence, one can also expect that the Cauchy problem for the Einstein equations with initial data posed on a smooth spacelike-null hypersurface is also well-posed. Addressing these issues is beyond the scope of this paper, so we simply state the following conjectures.

\begin{con}
	\label{conj0}
Let $\mc H$ be a smooth achronal hypersurface embedded in a Lorentzian manifold $(\mc M,g)$ with normal $\nu$. Let $L_1$ be a first order differential operator with smooth coefficients and $f$, $T_0$ and $T_1$ smooth $(p,q)$ tensor fields. Then there exists a unique $(p,q)$ smooth tensor field $T$ satisfying $\square_g T + L_1 T = f$ on the domain of dependence of $\mc H$ in $\mc M$ with initial conditions $\big(T,\lie_{\nu}T\big)|_{\mc H}=(T_0,T_1)$ at the points where $\H$ is spacelike and $T|_{\mc H}=T_0$ where it is null.
\end{con}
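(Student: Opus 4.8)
The natural strategy is to reduce the tensorial problem to the scalar case of \cite{hormander1990remark} by passing to frame components, and to absorb the coupling between the components into an iteration controlled by energy estimates on compact subsets of the domain of dependence $D(\mc H)$. I would work locally first: cover a neighbourhood of $\mc H$ in $D(\mc H)$ by charts carrying a smooth local frame $\{E_{(1)},\dots,E_{(\mf n+1)}\}$ and the dual coframe, so that $T$ is represented by its $(\mf n+1)^{p+q}$ scalar component functions $T^{A}$. A direct computation with the Levi--Civita connection gives $(\square_g T)^{A}=\square_g(T^{A})+\mathcal B^{A}{}_{B}{}^{\mu}\nabla_{\mu}T^{B}+\mathcal C^{A}{}_{B}T^{B}$, where $\square_g(T^{A})$ now denotes the \emph{scalar} wave operator applied to the function $T^{A}$ and $\mathcal B,\mathcal C$ are smooth and built from the connection coefficients. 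Hence $\square_g T+L_1T=f$ becomes a system $\square_g T^{A}=\tilde f^{A}-\tilde L^{A}{}_{B}T^{B}$ with diagonal principal part (each diagonal entry the scalar $\square_g$) and a smooth first-order coupling $\tilde L$. The causal structure, and therefore $D(\mc H)$, is governed entirely by the scalar principal symbol, so the domain on which one solves is unambiguous.

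Next I would remove the initial data: extend $T_0$ to an arbitrary smooth tensor field and add a smooth correction supported near the spacelike part of $\mc H$ to adjust the normal derivative to $T_1$ there, producing a smooth $T_{*}$ with $T_{*}|_{\mc H}=T_0$ and $\lie_{\nu}T_{*}=T_1$ on the spacelike part. Writing $T=T_{*}+S$ converts the problem into one for $S$ with vanishing mixed data and smooth source $F:=f-\square_g T_{*}-L_1 T_{*}$. Then I would run the iteration $S_{(0)}=0$, $\square_g S_{(k+1)}^{A}=F^{A}-\tilde L^{A}{}_{B}S_{(k)}^{B}$ with zero data, solving each step componentwise by the scalar result of \cite{hormander1990remark}. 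On a thin slab adapted to $\mc H$, an energy inequality of the form $\|S_{(k+1)}\|_{H^{s}}\le C\epsilon\,\|S_{(k)}\|_{H^{s}}+C\|F\|_{H^{s}}$ (the $\epsilon$ coming from integrating the lower-order coupling over the slab) makes the map a contraction, yielding convergence in $H^{s}_{\mathrm{loc}}$ for every $s$ and hence a $C^{\infty}$ limit $S$; exhausting $D(\mc H)$ by such slabs and invoking finite propagation speed gives existence on all of $D(\mc H)$. Uniqueness follows from the same energy estimate applied to the difference of two solutions with zero data, the Gr\"onwall step closing because the coupling is strictly lower order.

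The main obstacle is the \emph{corner}, i.e. the set where the spacelike and null portions of $\mc H$ meet, across which one must run the energy estimates and — more delicately — establish $C^{\infty}$ regularity up to and through it. This is exactly the content supplied by \cite{hormander1990remark} in the scalar case (via carefully chosen pseudoconvex weight functions, or equivalently by approximating the null piece by slightly spacelike hypersurfaces and passing to the limit), so the real work is to check that the componentwise reduction is faithful near the corner: one must verify that the first-order coupling $\tilde L$ does not destroy the relevant pseudoconvexity/energy inequality, and that the uniform bounds of the iteration survive there, so that the smoothness of each $S_{(k)}$ inherited from the scalar theory passes to the limit. A secondary delicate point is that $\nu$ and the notion of $\lie_{\nu}T$ degenerate as one approaches the null part, so $T_{*}$ must be constructed so that it — and the whole scheme — remain smooth there; this is where one uses that the achronal hypersurface is \emph{smooth} as a submanifold even though its induced causal character changes.
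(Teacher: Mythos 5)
You should first be aware that the paper does not prove this statement at all: it is deliberately labelled a \emph{conjecture}, and the authors state explicitly that the scalar case is due to H\"ormander \cite{hormander1990remark} while ``such a result is not yet known for arbitrary tensor fields, but it is believed to be true.'' So there is no proof in the paper to compare against; what you have written is a proposed resolution of an open problem, and it has to be judged on whether it actually closes it.

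It does not, although the strategy is the natural one and you correctly locate the difficulties. Two gaps are substantive. First, the contraction step. Your iteration map is $S_{(k)}\mapsto S_{(k+1)}$ where $S_{(k+1)}$ solves a wave equation whose source contains $\tilde L^{A}{}_{B}S_{(k)}^{B}$, a genuine first-order operator. For the claimed estimate $\|S_{(k+1)}\|_{H^{s}}\le C\epsilon\|S_{(k)}\|_{H^{s}}+C\|F\|_{H^{s}}$ you need the solution operator for $\square_g$ with mixed spacelike--null data to gain one full derivative over the source, uniformly as the slab crosses the set where the causal character of $\mc H$ changes. On the spacelike part the standard energy inequality does this; on the characteristic part the energy identity controls only tangential derivatives on the null hypersurface and the gain of a transverse derivative is exactly the delicate point of H\"ormander's argument. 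Asserting that ``the real work is to check'' that the coupling does not destroy the pseudoconvexity/energy inequality is not a proof of that fact, and this is precisely where the authors place the open content of the conjecture. Second, the construction of $T_{*}$. Since $\nu$ becomes tangent to $\mc H$ on the null portion, $\lie_{\nu}T$ there is already determined by $T_0$; for a smooth $T_{*}$ (and a smooth solution) to exist, $(T_0,T_1)$ must satisfy compatibility conditions along $\partial\mc H_S$ to all orders, which neither the conjecture as stated nor your construction addresses. Your frame-component reduction with diagonal principal part is correct and unobjectionable, and componentwise application of the scalar theorem at each iteration step is legitimate; but the convergence of the scheme across the causal-character transition, with uniform smooth bounds, is the theorem, not a routine verification. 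As written, the proposal is a credible research plan rather than a proof.
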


\begin{con}
	\label{conj}
Let $\{\mc H,\bg,\bm\ell,\elltwo,\bY\}$ be hypersurface data satisfying $\ntwo\le 0$ such that the tensor $\mc A$ defined in \eqref{def_A} is Lorentzian. Assume the constraint equations \eqref{cons1}-\eqref{cons2} are fulfilled on $\mc H_S\d \{\ntwo < 0\}$ and $\R=\lambda\bg$ on $\mc H_0\d \{\ntwo = 0\}$. Then there exists a $\lambda$-vacuum spacetime $(\mc M,g)$, embedding $\Phi:\mc H\hookrightarrow\mc M$ and rigging $\xi$ such that $\{\mc H,\bg,\bm\ell,\elltwo,\bY\}$ (possibly shrinking the data in the null part if necessary) is $(\Phi,\xi)$-embedded.
\end{con}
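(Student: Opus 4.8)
The plan is to run the classical reduction of the Einstein equations to a quasilinear wave system, now using Conjecture \ref{conj0} (in its quasilinear version, reached from the linear statement by the usual iteration scheme) as the substitute for the standard spacelike or characteristic well-posedness theorems. Working in a neighbourhood of an arbitrary point $p\in\mc H$, I would introduce a generalized wave-map gauge relative to a fixed smooth background $\hat g$, so that the unknown metric $g$ satisfies a reduced $\lambda$-vacuum system of the form
\begin{equation*}
\square_g g_{\mu\nu} = \mc F_{\mu\nu}(g,\partial g) + 2\lambda\, g_{\mu\nu},
\end{equation*}
with $\mc F$ quadratic in $\partial g$ and smooth in $g$. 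One checks that the principal part is Lorentzian (automatic, since $g$ is) and that on $\mc H_0=\{\ntwo=0\}$ its characteristic cone meets $\mc H$ along the null directions carried by the data, so the problem falls within the scope of Conjecture \ref{conj0}.

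Next I would build the initial data for the reduced system from $\{\mc H,\bg,\bm\ell,\elltwo,\bY\}$. Fixing a rigging $\xi$ at the abstract level via the gauge freedom \eqref{transgamma}--\eqref{transY}, relations \eqref{embedded_equations} and \eqref{inversemetric} determine $g|_{\mc H}$ in the frame $\{\xi,e_a\}$; on the spacelike part $\mc H_S$ one prescribes in addition the transverse derivative of $g$ through \eqref{Yembedded} (choosing the coordinate transverse vector equal to $\xi$), while on $\mc H_0$ one prescribes only $g|_{\mc H_0}$, exactly as the characteristic structure demands (cf.\ Theorem \ref{main}). Two compatibility points then have to be settled: that these data match smoothly across the transition surface $\partial\mc H_S=\partial\mc H_0$, and that the wave-gauge vector $H^{\alpha}$ (and, along $\mc H_S$, its transverse derivative) can be made to vanish on all of $\mc H$ by an appropriate choice of $\hat g$. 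Granting this, Conjecture \ref{conj0} yields a smooth metric $g$ on the domain of dependence of $\mc H$, shrinking towards the null side if necessary as in Theorem \ref{main}.

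It remains to pass from the reduced to the full equations and to check the embedding. The contracted Bianchi identity shows, as usual, that $H^{\alpha}$ solves a homogeneous linear wave equation $\square_g H^{\alpha}+(\text{lower order in }H)=0$; its initial data vanish precisely because the constraint equations \eqref{cons1}--\eqref{cons2} hold on $\mc H_S$ and $\R=\lambda\bg$ holds on $\mc H_0$ --- these being the hypersurface-data incarnations of the Hamiltonian, momentum and null constraints, and exactly what forces $H^{\alpha}|_{\mc H}=0$ together with $\partial_{\nu}H^{\alpha}|_{\mc H_S}=0$. Uniqueness in Conjecture \ref{conj0} then gives $H^{\alpha}\equiv 0$, hence $\ric[g]=\lambda g$. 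Finally, by construction the coordinate hypersurface $\mc H\hookrightarrow\mc M$ with the transverse vector $\xi$ of the previous step satisfies \eqref{embedded_equations} and \eqref{Yembedded}; patching the local developments and gluing by uniqueness (as in the argument behind Theorem \ref{existencespacelike}) produces the global $(\mc M,g)$.

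The main obstacle is Conjecture \ref{conj0} itself: the well-posedness of wave equations for \emph{tensorial} unknowns with \emph{mixed} initial data --- a full Cauchy pair on the spacelike portion and only the field on the null portion --- glued across a surface where $\mc H$ changes causal character. The scalar case is H\"ormander's \cite{hormander1990remark}, but the systems version, and especially uniform control near the spacelike--null transition, is not in the literature. A secondary difficulty is constructing a single generalized wave gauge regular across that transition; this should require care but no conceptual novelty beyond what is already present in the spacelike and characteristic treatments.
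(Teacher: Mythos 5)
The statement you are proving is labelled a \emph{Conjecture} in the paper and is stated there explicitly without proof --- the authors write that addressing these well-posedness issues is beyond the scope of the work. So there is no proof in the paper to compare against, and what you have written is, by your own admission in the last paragraph, a strategy outline rather than a proof. The outline is the standard and reasonable one (generalized wave-map gauge, reduced system, propagation of the gauge condition via the contracted Bianchi identity, cf.\ Theorem \ref{existencespacelike} and Theorem \ref{main}), but it does not close the argument, for the following concrete reasons.

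First, the analytic core is missing. Conjecture \ref{conj0} is itself unproven (the paper only cites the scalar result of H\"ormander), and even granting it as stated --- a \emph{linear} equation for a tensor field with smooth coefficients --- the passage to the quasilinear reduced Einstein system is not a routine iteration here: the location of $\mc H_0=\{\ntwo=0\}$ relative to the characteristic cone of $\square_g$ changes at each step of the iteration unless one works in a gauge where the null portion of $\mc H$ is exactly characteristic for every iterate, and obtaining uniform energy estimates across the spacelike--null transition surface $\partial\mc H_S$ is precisely the open difficulty. Second, the gauge-propagation step is asserted rather than verified. On $\mc H_S$ the constraints \eqref{cons1}--\eqref{cons2} give $\partial_{\nu}H^{\alpha}=0$ only \emph{after} one has arranged $H^{\alpha}|_{\mc H_S}=0$ by a choice of lapse/shift-type data; on $\mc H_0$ the condition $\R=\lambda\bg$ does not directly give $H^{\alpha}|_{\mc H_0}=0$ --- in the characteristic reduction one obtains $H$ as the solution of transport equations along the null generators with data on $\partial\mc H_0$, and whether these transport equations, the spacelike gauge choice, and the mixed initial data for the wave equation satisfied by $H$ (full Cauchy pair on $\mc H_S$, field only on $\mc H_0$) are mutually compatible at $\partial\mc H_S$ is exactly the kind of corner/transition analysis that the conjecture is meant to encapsulate. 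Until Conjecture \ref{conj0} (in a tensorial, quasilinear form) and this compatibility analysis are supplied, the statement remains a conjecture, which is how the paper treats it.
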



Assuming these results to be true we can determine the necessary and sufficient conditions on $\{\mc H,\bg,\bm\ell,\elltwo,\bY\}$ for the existence of a homothetic vector field $\eta$ on $(\mc M,g)$. From the results of the previous sections it seems reasonable to impose $\k_{ab}=\mu\gamma_{ab}$ and $\big(\lie_{\xi}\k\big)_{ab} = 2\mu\Y_{ab}$ on $\mc H_S$ and also $\k_{ab}n^b = 0$, $\wh{\k} =0$ and $\mS(n,n)=0$ on $\mc H_0$. As we show next, the remaining conditions needed on $\pH_S$ for the argument to work are automatically fulfilled.

\begin{teo}
	\label{teo_smooth}
Let $\{\mc H,\bg,\bm\ell,\elltwo,\bY\}$ be hypersurface data with the properties listed in Conjecture \ref{conj} and assume further that the boundary of $\mc H_S$ is a smooth cross-section of $\mc H_0$. Suppose the constraint equations \eqref{cons1}-\eqref{cons2} are satisfied on $\mc H_S$ and $\R=\lambda\bg$ on $\mc H_0$. Assume that $(C,\bar\eta)\in\mc{F}(\mc H)\times\X(\mc H)$ satisfy \eqref{kid1} and \eqref{kid2} on $\mc H_S$ and \eqref{mSnn3}-\eqref{whK=0} on $\mc H_0$ with $\mu\in\real$ such that $\lambda\mu=0$. Then, assuming the validity of Conjectures \ref{conj0} and \ref{conj}, there exists a $\lambda$-vacuum spacetime $(\mc M,g)$, vector field $\eta\in\X(\mc M)$, embedding $\Phi:\mc{H}\hookrightarrow\mc M$ and rigging $\xi$ such that $\{\mc{H},\bg,\bm\ell,\elltwo,\bY\}$ (possible shrunk if necessary) is $(\Phi,\xi)$-embedded on $(\mc M,g)$, $\lie_{\eta}g=\mu g$ and $\eta|_{\Phi(\mc H)}=C\xi+\Phi_{\star}\bar\eta$. 
	\begin{proof}
By Conjecture \ref{conj} there exists a $\lambda$-vacuum spacetime $(\mc M,g)$, embedding $\Phi:\mc{H}\hookrightarrow\mc M$ and rigging $\xi$ such that $\{\mc{H},\bg,\bm\ell,\elltwo,\bY\}$ is $(\Phi,\xi)$-embedded on $(\mc M,g)$. Let us extend $\xi$ to a neighbourhood of $\Phi(\mc H)$ by $\nabla_{\xi}\xi=0$ and construct a vector field $\eta\in\X(\mc M)$ by solving $\square_g\eta = -\lambda\eta$ (i.e. $\mc{Q}[\eta]=0$, see \eqref{defQ}) with initial data $\eta = C\xi + \Phi_{\star}\bar\eta$ on $\Phi(\H)$ and \eqref{condini} on $\Phi(\mc H_S)$. Under this condition, \eqref{equationk} becomes a homogeneous PDE. By the assumed validity of Conjecture \ref{conj0} we only need to show that $\k_{\alpha\beta}=\mu g_{\alpha\beta}$ on $\Phi(\mc H)$ and $\lie_{\xi}\k_{\alpha\beta} = \mu\lie_{\xi}g_{\alpha\beta}$ on $\Phi\big(\mc H_S\big)$. By the same argument as the one employed in Theorem \ref{teo_espacial} it follows $\k_{\alpha\beta}=\mu g_{\alpha\beta}$ and $\lie_{\xi}\k_{\alpha\beta} = \mu\lie_{\xi}g_{\alpha\beta}$ on $\Phi\big(\mc H_S\big)$. So, we only need to show that $\k_{\alpha\beta}=\mu g_{\alpha\beta}$ on the null region. If $\H_0$ has empty interior it follows by continuity. Otherwise, by Theorem \ref{teorema_null} it suffices to prove that (i) $P^{ab}\k_{ab}=(\mf n-1)\mu$, (ii) $\lie_n\big(P^{ab}\k_{ab}\big)=0$, (iii) $\kt_a=\mu\ell_a$, (iv) $\ktt=\mu\elltwo$, (v) $\mS(n,\cdot)=0$ and (vi) $\tr_P\mS=0$ on $\partial\mc H_S$. Since $\k_{ab}=\mu\gamma_{ab}$, $\kt_a=\mu\ell_a$ and $\ktt=\mu\elltwo$ on $\mc H_S$ and the hypersurface data is smooth it is in particular continuous and hence the three conditions also hold on the boundary $\partial\mc H_S$. So items (i), (iii) and (iv) are automatically fulfilled. Item (ii) follows after noting that again by continuity the vector $n$ cannot vanish in a neighbourhood of $\partial\mc H_S$, and since $$\lie_n\big(P^{ab}\k_{ab}\big)\st{\H_S}{=} \mu \lie_n\big(P^{ab}\gamma_{ab}\big)\st{\H_S}{=} \mu \lie_n \big(\ntwo\elltwo\big) \st{\H_S}{=} \mu \ntwo n(\elltwo)+\mu\elltwo n(\ntwo),$$ it follows that $\lie_n\big(P^{ab}\k_{ab}\big)\st{\pH_S}{=}0$, since $\ntwo\st{\pH_S}{=}0$ and $n(\ntwo)\st{\pH_S}{=} 0$ (because $\ntwo$ is identically zero on $\H_0$). To prove items (v) and (vi) observe that since $\k_{\alpha\beta}=\mu g_{\alpha\beta}$ and $\lie_{\xi}\k_{\alpha\beta} = \mu\lie_{\xi}g_{\alpha\beta}$ on $\Phi\big(\mc H_S\big)$ then the full tensor $\Sigma$ vanishes on $\Phi(\mc H_S)$, so by continuity $\Sigma$ also vanishes on $\Phi(\mc S)$, which in particular implies $\mS(n,\cdot)=0$ and $\tr_P\mS=0$ on $\partial\mc H_S$. Applying Theorem \ref{teorema_null} it follows $\k_{\alpha\beta}=\mu g_{\alpha\beta}$ on $\Phi(\mc H_0)$, which proves the theorem.
	\end{proof}
\end{teo}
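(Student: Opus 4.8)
The plan is to run the spacelike KID argument of Theorem~\ref{teo_espacial} on the region $\mc H_S$, the null transport argument of Theorem~\ref{teorema_null} on $\mc H_0$, and to close the loop with a global uniqueness statement for a homogeneous wave equation. First, by Conjecture~\ref{conj} the data embeds in a $\lambda$-vacuum $(\mc M,g)$ with embedding $\Phi$ and rigging $\xi$; I would extend $\xi$ off $\Phi(\mc H)$ by $\nabla_{\xi}\xi=0$ and, invoking Conjecture~\ref{conj0} for tensor fields, construct $\eta\in\X(\mc M)$ as the unique solution of $\square_g\eta=-\lambda\eta$ (equivalently $\mc Q[\eta]=0$, see \eqref{defQ}) with $\eta|_{\Phi(\mc H)}=C\xi+\Phi_{\star}\bar\eta$ everywhere and the transverse datum \eqref{condini} prescribed on the spacelike region $\Phi(\mc H_S)$. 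Since $\mc Q[\eta]=0$, $\ric=\lambda g$ and $\lambda\mu=0$, identity \eqref{equationk} becomes a homogeneous linear wave equation for the $(0,2)$ tensor $\k-\mu g$; by Conjecture~\ref{conj0} again, it therefore suffices to establish $\k=\mu g$ on all of $\Phi(\mc H)$ and $\lie_{\xi}\k=\mu\lie_{\xi}g$ on $\Phi(\mc H_S)$.

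On $\mc H_S$ one has $\ntwo<0$, so the proof of Theorem~\ref{teo_espacial} applies pointwise. Equation \eqref{kid1} with \eqref{Adown} gives $\k_{ab}=\mu\gamma_{ab}$; the datum \eqref{condini} with \eqref{liet}-\eqref{lietr} gives $\kt=\mu\bm\ell$ and $\ktt=\mu\elltwo$, so $\k=\mu g$ and, by Remark~\ref{rmkmS}, $\mS=0$ on $\Phi(\mc H_S)$. Then \eqref{kid2} with \eqref{liexiK2} forces $(\lie_{\xi}\k)_{ab}=2\mu\Y_{ab}$, while substituting $\mc Q_c=0$, $\mf q=0$ (and $\ntwo\ne0$) into \eqref{identityQ} and \eqref{q} yields $\kt^{(2)}=\frac12\mu\,\nablacero\elltwo$ and $\ktt^{(2)}=0$, i.e. $\lie_{\xi}\k=\mu\lie_{\xi}g$ on $\Phi(\mc H_S)$. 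A consequence worth recording is that all first covariant derivatives of $\k-\mu g$ vanish on $\Phi(\mc H_S)$: the tangential ones because $\k-\mu g$ vanishes along the hypersurface, the transverse one by the relation just established. Since $\Sigma=\lie_{\eta}\nabla$ is built from $\nabla\k$ via \eqref{Sigmaindexdown} and $\nabla g=0$, the full tensor $\Sigma$ then vanishes identically on $\Phi(\mc H_S)$.

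For the null part, if $\mc H_0$ has empty interior then $\k=\mu g$ on $\Phi(\mc H_0)$ by density and continuity; otherwise I would apply Theorem~\ref{teorema_null} on $\mc H_0$ with the cross-section $\mc S\d\partial\mc H_S$, which is admissible by the hypotheses. Its global assumptions on $\mc H_0$, namely $\mS(n,n)=0$, $\k_{ab}n^b=0$, $\wh{\k}=0$, are precisely the assumed equations \eqref{mSnn3}, \eqref{Kn=0} and \eqref{whK=0} (the last two obtained from \eqref{Adown}), while $R_{\alpha\beta}=\lambda g_{\alpha\beta}$ near $\Phi(\mc H_0)$ holds because $(\mc M,g)$ is $\lambda$-vacuum. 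The remaining cross-section hypotheses follow by continuity from $\mc H_S$: smoothness and $\k_{ab}=\mu\gamma_{ab}$, $\kt=\mu\bm\ell$, $\ktt=\mu\elltwo$ on $\mc H_S$ force the same relations on $\mc S$, giving $P^{ab}\k_{ab}=(\mf n-1)\mu$, $\kt_a=\mu\ell_a$ (hence $\kt(n)=\mu$, as $\ntwo=0$ there) and $\ktt=\mu\elltwo$; the transport input equals $\lie_n(P^{ab}\k_{ab})=\mu\,\lie_n(\ntwo\elltwo)$ on $\mc H_S$, which vanishes on $\mc S$ since $\ntwo$ and $n(\ntwo)$ vanish there ($\ntwo\equiv0$ on $\mc H_0$); and $\mS(n,\cdot)=0$, $\tr_P\mS=0$ on $\mc S$ follow from $\Sigma=0$ on $\Phi(\mc S)$, itself a continuity consequence of $\Sigma\equiv0$ on $\Phi(\mc H_S)$. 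Theorem~\ref{teorema_null} then propagates these data along $n$ and yields $\k_{\alpha\beta}=\mu g_{\alpha\beta}$ on $\Phi(\mc H_0)$.

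Putting the two regions together, $\k=\mu g$ holds on all of $\Phi(\mc H)$ and $\lie_{\xi}\k=\mu\lie_{\xi}g$ on $\Phi(\mc H_S)$, so $\k-\mu g$ solves the homogeneous form of \eqref{equationk} with trivial data (Cauchy data on the spacelike portion and the value alone on the null portion); Conjecture~\ref{conj0} then forces $\k-\mu g\equiv0$ on the domain of dependence of $\Phi(\mc H)$, which after shrinking $\mc M$ we may take to be all of $\mc M$, so that $\lie_{\eta}g=\mu g$. Modulo the conjectural well-posedness results the remaining work is essentially bookkeeping, and the one delicate point is the interface at $\mc S$: one must be certain that every cross-section hypothesis of Theorem~\ref{teorema_null} is forced by the spacelike side. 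The crucial observation there is that the vanishing of $\mS(n,\cdot)$ and $\tr_P\mS$ at $\mc S$ requires the whole tensor $\Sigma$ to vanish on $\Phi(\mc H_S)$, which uses not just $\k=\mu g$ but also its transverse derivative $\lie_{\xi}\k=\mu\lie_{\xi}g$ there, so both conditions imposed on $\mc H_S$ have to be carried to the boundary.
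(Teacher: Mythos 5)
Your proposal is correct and follows essentially the same route as the paper's proof: construct $\eta$ via the wave equation with the mixed spacelike/null data, run the Theorem \ref{teo_espacial} argument on $\mc H_S$, and feed the resulting boundary values (including $\Sigma=0$ on $\Phi(\mc H_S)$ by continuity of $\k-\mu g$ and its transverse derivative) into Theorem \ref{teorema_null} on $\mc H_0$. The only difference is that you spell out more explicitly why $\nabla(\k-\mu g)=0$ on $\Phi(\mc H_S)$ forces $\Sigma\equiv 0$ there via \eqref{Sigmaindexdown}, a detail the paper leaves implicit.
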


\section{Conclusions and future work}

In this paper, we have developed a set of general identities that relate the deformation tensor $\k\d\lie_{\eta}g$ of an arbitrary vector field $\eta$ to the tensor $\Sigma\d\lie_{\eta}\nabla$ on an abstract hypersurface $\mc H$ of arbitrary causal character. These identities have allowed us to derive explicit necessary conditions that an ambient vector field must satisfy on $\mc H$ in order to be a homothety. In the spacelike case, we recovered the well-known homothetic KID equations \cite{coll1977evolution,moncrief1975spacetime,beig1997killing,garcia2019conformal} within the hypersurface data formalism. For null hypersurfaces, we identified two distinct types of conditions: those that must hold on the entire hypersurface and those that only need to be sa\-tisfied on a cross-section of $\mc H$. We formulated both sets of conditions in a fully detached manner.\\

Applying this framework to the characteristic Cauchy problem, we extended the results in \cite{chrusciel2013kids} by expressing the KID equations in a fully gauge-covariant form, placing them at the same level as the initial data for the characteristic problem (i.e. detached from the spacetime one wishes to construct). This puts the characteristic KID problem on equal footing as the spacelike one. Furthermore, we showed the versatility of the formalism by applying it to the smooth spacelike-characteristic setting. \\

A natural continuation of this work would be generalizing the results to conformal Killing vectors, i.e. vectors satisfying $\lie_{\eta}g = 2\psi g$. In the spacelike case this has been studied in \cite{garcia2019conformal}, where the authors show that two extra equations (coming from the fact that $\psi$ is no longer determined a priori) need to be imposed at the initial data level. In the characteristic case the only result we know is \cite{paetz2014kids}, where the problem of existence of Killing vectors in an \mbox{asymptotically} flat vacuum spacetime is analysed in the conformally compactified spacetime, including null infinity. Dealing with the case of general conformal Killing characteristic data with our detached framework is considerably more involved, but should be doable exploiting some of the ideas in \cite{paetz2014kids}.\\

The identities established in this work have several notable implications. Firstly, they are formulated in terms of abstract hypersurface data, making them applicable in a broad range of contexts. Additionally, they are fully gauge and diffeomorphism covariant and are valid for hypersurfaces of any causal character. Beyond their immediate applications to the problems studied here, these identities provide a systematic approach for analysing homothetic Killing initial data in other well-posed Cauchy problems, showing that the conditions for the existence of a homothetic Killing field can be imposed at the level of the initial data. A particularly straightforward application is the spacelike-characteristic problem with corners \cite{chrusciel2015characteristic,czimek2022spacelike}, where one can check that the required conditions reduce to \eqref{kid1} and \eqref{kid2} in the spacelike region and \eqref{mSnn3}-\eqref{whK=0} in the null region, without any additional constraints on the intersection surface. This result reinforces the conclusion that the homothetic KID problem is fully characterized at the level of abstract initial data. Another physically interesting application of our results is to determine the Killing algebra of a stationary non-degenerate black hole spacetime in terms of the data at the bifurcation surface. The idea is that equations \eqref{mSnn3}-\eqref{whK=0} can be used to obtain transport equations for $C$ and $\bar\eta$ along $n$, which can then be solved from initial values at the bifurcation surface. Since the whole system of KID equations is overdetermined, the initial data cannot be given freely. Studying the compatibility conditions should give a handle on the determination of all Killing vectors depending on the free black hole data on the bifurcation surface.\\

Moreover, the identities are valid for any vector $\eta$ (not necessarily a homothety) and hold independently of any specific field equations, making them applicable to a wider class of initial value problems beyond the vacuum Einstein equations $\ric=\lambda g$. A particularly promising avenue for future research is the study of the asymptotic characteristic (homothetic) KID conditions within the conformal Einstein field equations. Another interesting application of the formalism is the construction of candidates to homotheties from data at a cross-section of $\mc H$. We intend to explore these two topics in subsequent works.

\section*{Acknowledgements}
This work has been supported by Projects PID2021-122938NB-I00 (Spanish Ministerio de Ciencia e Innovación and FEDER ``A way of making Europe''). M. Mars acknowledges financial support under projects SA097P24 (JCyL) and RED2022-134301-T funded by MCIN/AEI/10.13039/ 501100011033. G. Sánchez-Pérez also acknowledges support of the PhD. grant FPU20/03751 from Spanish Ministerio de Universidades.

\begin{appendices}
	
	\section{Useful identities}
	\label{app}
In the following proposition we extend a result from \cite{Mio3} valid for null hypersurface data to the context of general hypersurface data. We follow the same notation as in Appendix A of \cite{Mio3}. In particular, given a $(0,p)$ tensor field $T_{\alpha_1\cdots\alpha_p}$ on $\mc M$ we use the standard notation $T_{a_1\cdots a_p}$ to denote the pullback of $T$ to $\mc H$. Moreover, we introduce the notation ${}^{(i)}T_{\alpha_1\cdots \alpha_{p-1}}$ for $\xi^{\mu}T_{\alpha_1\cdots \alpha_{i-1}\mu\alpha_i\cdots \alpha_{p-1}}$ and ${}^{(i)}T_{a_1\cdots a_{p-1}}$ for the pullback of ${}^{(i)}T_{\alpha_1\cdots \alpha_{p-1}}$ to $\mc H$. In addition, we use ${}^{(i,j)}T_{a_1\cdots a_{p-2}}$ for the pullback to $\mc H$ of the tensor obtained by first the contraction of $T$ with $\xi$ in the j-th slot and then in the i-th slot of the resulting $(0,p-1)$ tensor, i.e. ${}^{(i,j)}T={}^{(i)}\big({}^{(j)}T\big)$. This notation requires care with the order of indices. For instance, note that ${}^{(1,1)}T={}^{(1,2)}T$ and ${}^{(1,3)}T={}^{(2,1)}T\neq {}^{(1,2)}T$.
	\begin{prop}
		\label{proppullback}
		Let $(\mc M,g)$ be a semi-Riemannian manifold and $\Phi:\mc H\hookrightarrow\mc M$ a smooth embedded hypersurface with rigging $\xi$. Let $T$ be a $(0,p)$-tensor on $\mc M$ and $f\in\mc{F}(\mc M)$. Then, for any $j\ge 1$
		\begin{align}
\hspace{-0.4cm}	\left(\nabla T\right)_{b a_1\cdots a_p} &= \nablacero_b T_{a_1\cdots a_p} + \sum_{i=1}^p \Y_{ba_i}T_{a_1\cdots a_{i-1} c a_{i+1}\cdots a_p} n^c \nonumber\\
&\quad\, + \sum_{i=1}^p \big(\U_{ba_i}+\ntwo \Y_{b a_i}\big)\big({}^{(i)} T\big)_{a_1\cdots a_{i-1} a_{i+1}\cdots  a_p},\hfill\label{identity1}\\
\hspace{-0.4cm}			\left({}^{(1)}\nabla T\right)_{a_1\cdots a_p} &= (\lie_{\xi}T)_{a_1\cdots a_p} - \sum_{i=1}^p \left(\r_{a_i}-\s_{a_i}+\dfrac{1}{2}\ntwo\nablacero_{a_i}\elltwo\right)\big({}^{(i)} T\big)_{a_1\cdots a_{i-1} a_{i+1}\cdots a_p}\nonumber\\
\hspace{-0.4cm}			& - \sum_{i=1}^p V^c{}_{a_i} T_{a_1\cdots a_{i-1} c a_{i+1}\cdots a_p},\hfill\label{identity2}
\end{align}
\begin{align}
	\left({}^{(j+1)}\nabla T\right)_{ba_1\cdots a_{p-1}} &= \nablacero_{b} \big({}^{(j)} T\big)_{a_1\cdots a_{p-1}} + \sum_{i=1}^{p-1} \Y_{b a_i} \big({}^{(j)} T\big)_{a_1\cdots a_{i-1} c a_{i+1}\cdots a_{p-1}}n^c\hfill \nonumber\\
		&+ \sum_{i=1}^{p-1} \big(\U_{b a_i}+\ntwo\Y_{b a_i}\big) \big({}^{(i,j)}T\big)_{a_1\cdots a_{i-1} a_{i+1}\cdots a_{p-1}} \nonumber\\
		&  - \left(\r_{b}-\s_{b}+\dfrac{1}{2}\ntwo\nablacero_{b}\elltwo\right) \big({}^{(j)} T\big)_{a_1\cdots a_{p-1}} - V^c{}_{b}T_{a_1\cdots a_{j-1}c a_{j}\cdots a_{p-1}},\hfill\label{identity3}\\
	\left(\hess f\right)_{ab} &= \nablacero_a\nablacero_b f + n(f) \Y_{ab} + \xi(f) \big(\U_{ab}+\ntwo\Y_{ab}\big),\label{hess}
		\end{align}
		where $V^{b}{}_{a} \d P^{bc}\big(\Y_{a c}+\F_{a c}\big) + \dfrac{1}{2}(d\elltwo)_{a}n^b$.
		\begin{proof}
			The proof of \eqref{identity1}-\eqref{identity3} is analogous to the one presented in \cite{Mio3} for the null case, with the only difference that this time $\ntwo\neq 0$. In order to prove \eqref{hess} we contract the (ambient) Hessian $\nabla_{\alpha}\nabla_{\beta}f$ with $e_a^{\alpha} e_b^{\beta}$ and use \eqref{connections},
			\begin{align*}
				e_a^{\alpha} e_b^{\beta}\nabla_{\alpha}\nabla_{\beta}f & = \nabla_{e_a}\nabla_{e_b} f -\big(\nabla_{e_a} e_b \big)(f)\\
				&= \nablacero_{e_a}\nablacero_{e_b} f - \big(\nablacero_{e_a} e_b\big) (f)  + n(f)\Y_{ab}+ \xi(f)\big(\U_{ab}+\ntwo\Y_{ab}\big)\\
				&= \nablacero_a\nablacero_b f + n(f) \Y_{ab} + \xi(f) \big(\U_{ab}+\ntwo\Y_{ab}\big).
			\end{align*}
		\end{proof}
	\end{prop}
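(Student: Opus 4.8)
The plan is to reduce each of the four identities to a direct computation using the fundamental relation \eqref{connections} between the ambient Levi-Civita connection $\nabla$ and the induced connection $\nablacero$, together with the expression \eqref{nablaxi} for the $\nabla$-derivative of the rigging $\xi$ along tangential directions, and the decomposition of ambient vectors in the basis $\{\xi,\wh e_a\}$. The starting observation is that, given the embedding $\Phi$, a $(0,p)$ tensor $T$ on $\mc M$ and the basis $\{\wh e_a,\xi\}$, every contraction of $\nabla T$ with some combination of $\wh e_a$'s and $\xi$'s can be expanded by the Leibniz rule once we know how $\nabla$ acts on $\wh e_a$ and on $\xi$ in tangential directions. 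These are precisely \eqref{connections} and \eqref{nablaxi}, so the whole proof is bookkeeping of which index of $T$ is hit by a tangential leg and which by $\xi$.

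First I would prove \eqref{identity1}. Contract $(\nabla T)_{\alpha\alpha_1\cdots\alpha_p}$ with $e_b^\alpha e_{a_1}^{\alpha_1}\cdots e_{a_p}^{\alpha_p}$ and write the left side as $e_b^\alpha\nabla_\alpha\big(T_{\alpha_1\cdots\alpha_p}e_{a_1}^{\alpha_1}\cdots\big)-\sum_i T(\dots,\nabla_{e_b}\wh e_{a_i},\dots)$. The first term is $\nablacero_b T_{a_1\cdots a_p}$ plus the terms where $\nabla_{e_b}\wh e_{a_i}$ lands in a tangential slot; using \eqref{connections}, $\nabla_{e_b}\wh e_{a_i}=\Phi_\star\nablacero_{e_b}e_{a_i}-\bY_{ba_i}\Phi_\star n-(\bU+\ntwo\bY)_{ba_i}\xi$, so the $\nablacero$-piece recombines into $\nablacero_b T_{a_1\cdots a_p}$, the $\Phi_\star n$-piece produces $\sum_i\Y_{ba_i}T_{\cdots c\cdots}n^c$, and the $\xi$-piece produces $\sum_i(\U_{ba_i}+\ntwo\Y_{ba_i})\,({}^{(i)}T)_{\cdots}$. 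That is exactly \eqref{identity1}. The identity \eqref{hess} is the special case $p=1$, $T=df$: indeed $({}^{(1)}T)=\xi(f)$ and $T_c n^c=n(f)$, so \eqref{identity1} collapses to \eqref{hess}. For \eqref{identity2}, instead contract the first index of $\nabla T$ with $\xi$ and the rest with tangential vectors; write $\xi^\alpha\nabla_\alpha T_{a_1\cdots a_p}$ in terms of the Lie derivative $(\lie_\xi T)_{a_1\cdots a_p}=\xi^\alpha\nabla_\alpha T_{a_1\cdots a_p}+\sum_i T_{\cdots\mu\cdots}\nabla_{e_{a_i}}\xi^\mu$ (valid since $\nabla$ is torsion-free and the $e_{a_i}$ are tangential), and then substitute \eqref{nablaxi}, $\nabla_{e_{a_i}}\xi=\big(\r_{a_i}-\s_{a_i}+\tfrac12\ntwo\nablacero_{a_i}\elltwo\big)\xi+V^c{}_{a_i}e_c$. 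The $\xi$-component of this last expression contracts against the $i$-th slot of $T$ to give $-({}^{(i)}T)$ times that coefficient, and the tangential component gives $-V^c{}_{a_i}T_{\cdots c\cdots}$; summing over $i$ yields \eqref{identity2}.

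Finally, \eqref{identity3} is obtained by combining the two mechanisms. Write $({}^{(j+1)}\nabla T)_{ba_1\cdots a_{p-1}}=e_b^\beta\,\xi^\mu(\nabla T)$ with $\xi$ sitting in the $(j+1)$-th slot of $\nabla T$, i.e. in the $j$-th slot of $T$ \emph{after} the derivative index $\beta$. The cleanest route is to apply \eqref{identity1} to the $(0,p-1)$ tensor ${}^{(j)}T\d \iota_\xi^{(j)}T$ — but being careful that ${}^{(j)}T$ is defined by contracting $\xi$ into a \emph{fixed} slot of $T$, which does not commute with $\nabla$. Concretely, $\nabla_b\big({}^{(j)}T\big)_{a_1\cdots a_{p-1}}=\big({}^{(j+1)}\nabla T\big)_{ba_1\cdots a_{p-1}}+T_{a_1\cdots a_{j-1}\,\mu\,a_j\cdots a_{p-1}}\nabla_b\xi^\mu$, so solving for $({}^{(j+1)}\nabla T)$ and inserting \eqref{identity1} for $\nablacero_b({}^{(j)}T)$ together with \eqref{nablaxi} for $\nabla_b\xi$ produces all the stated terms: the $\nablacero_b({}^{(j)}T)$ term, the $\Y$-term from \eqref{identity1}, the $(\U+\ntwo\Y)\,{}^{(i,j)}T$ term, and from the correction $-T_{\cdots\mu\cdots}\nabla_b\xi^\mu$ the two terms $-\big(\r_b-\s_b+\tfrac12\ntwo\nablacero_b\elltwo\big)({}^{(j)}T)$ and $-V^c{}_b T_{a_1\cdots a_{j-1}c a_j\cdots a_{p-1}}$. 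The main obstacle — really the only subtlety — is keeping the index slots straight in \eqref{identity3}: one must track carefully that the $\xi$ contraction sits in the $j$-th slot of the \emph{undifferentiated} $T$ while the new derivative index $b$ is prepended, so that the $V^c{}_b$ correction attaches to that specific slot and the ${}^{(i,j)}$ double contractions appear with the correct ordering of $i$ versus $j$. Everything else is the same Leibniz-plus-substitute computation carried out in \cite{Mio3} for the null case, the only change being that now $\ntwo$ need not vanish, which merely propagates the $\ntwo\Y$ pieces through unchanged.
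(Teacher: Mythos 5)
Your proposal is correct and follows essentially the same route as the paper: the paper delegates \eqref{identity1}--\eqref{identity3} to the computation in \cite{Mio3} (Leibniz rule plus \eqref{connections} and \eqref{nablaxi}, with the $\ntwo$-terms carried along), which is exactly the computation you spell out, and your index bookkeeping in \eqref{identity3} is right. The only cosmetic difference is that you obtain \eqref{hess} as the $p=1$, $T=df$ case of \eqref{identity1}, whereas the paper recomputes it directly from \eqref{connections}; both rest on the same identity.
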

	In addition to the previous proposition we shall also need to relate contractions of $\nabla T$ with two or more $\xi$ with either Lie derivatives or with pullbacked quantities in $\mc H$.
	\begin{prop}
		\label{proppullback3}
		Let $(\mc M,g)$ be a semi-Riemannian manifold, $\Phi:\mc H\hookrightarrow\mc M$ a smooth embedded hypersurface with rigging $\xi$ and let $T$ be a $(0,p)$-tensor on $\mc M$. Then,
\begin{equation}
	\label{a2id1}
\begin{aligned}
	e_b^{\mu}\xi^{\alpha_1}\cdots \xi^{\alpha_p}\nabla_{\mu}T_{\alpha_1\cdots \alpha_p} &\st{\mc H}{=}\nablacero_b\big(T(\xi,\cdots,\xi)\big) - p T(\xi,\cdots,\xi) \left(\r_b-\s_b+\frac{1}{2}\ntwo \nablacero_b\elltwo\right)\\
&\quad\, -\sum_{i=1}^p V^{c}{}_b \big({}^{(1,...,i-1,i+1,...p)}T\big)_{c}, 
	\end{aligned}
\end{equation}
\begin{equation}
	\label{a2id2}
		\xi^{\mu}\xi^{\alpha_1}\cdots \xi^{\alpha_p}\nabla_{\mu}T_{\alpha_1\cdots \alpha_p} \st{\mc H}{=} \big(\lie_{\xi}T\big)(\xi,\cdots,\xi) - p\beta T(\xi,\cdots,\xi) - \sum_{i=1}^p a_{\para}^c \big({}^{(1,...,i-1,i+1,...,p)}T\big)_{c},
\end{equation}
\begin{equation}
	\label{a2id3}
	\begin{aligned}
\hspace{-0.4cm}	\xi^{\mu}\xi^{\nu} e^{\alpha_1}_{a_1}\cdots e^{\alpha_{p-1}}_{a_{p-1}}\nabla_{\mu}T_{\nu\alpha_1\cdots \alpha_{p-1}} &\st{\mc H}{=} \big(\lie_{\xi}\big({}^{(1)}T\big)\big)_{a_1\cdots a_{p-1}}- \beta \, \big({}^{(1)}T\big)_{a_1\cdots a_{p-1}} - a_{\para}^b T_{ba_1\cdots  a_{p-1}}\\
&\quad\, - \sum_{i=1}^{p-1}V^b{}_{a_i} \big({}^{(1)}T\big)_{a_1\cdots a_{i-1} b a_{i+1}\cdots a_{p-1}}\\
&\quad\, -\sum_{i=1}^{p-1}\left(\r_{a_i}-\s_{a_i}+\dfrac{1}{2}\ntwo\nablacero_{a_i}\elltwo\right)\big({}^{(i,1)}T\big)_{a_1\cdots a_{i-1} a_{i+1}\cdots  a_{p-1}},
	\end{aligned}
\end{equation}
where $\nabla_{\xi}\xi \st{\mc H}{=}\beta\xi + \Phi_{\star} a_{\para}$.
		\begin{proof}
The first identity follows at once from the expression of $\nabla_{e_c}\xi$ in \eqref{nablaxi}, and the second and third identities are a consequence of the relation between $\lie_{\xi}$ and $\nabla$, namely $$\xi^{\mu}\nabla_{\mu} T_{\alpha_1\cdots\alpha_p} = \lie_{\xi}T_{\alpha_1\cdots\alpha_p} - \sum_{i=1}^p T_{\alpha_1\cdots\alpha_{i-1}\mu\alpha_{i+1}\cdots\alpha_p}\nabla_{\alpha_i}\xi^{\mu}.$$
		\end{proof}
	\end{prop}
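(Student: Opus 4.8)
The three identities \eqref{a2id1}--\eqref{a2id3} share a common engine: each left-hand side is a contraction of $\nabla T$ against several copies of the rigging $\xi$ (and, in the last case, tangent vectors $e_{a_i}$), so the plan is to move the single covariant derivative off the $\xi$-factors, convert what remains into either a tangential derivative of a scalar or a Lie derivative, and then feed in the two structural inputs already available: the tangential derivative of the rigging \eqref{nablaxi}, namely $e_a^{\mu}\nabla_{\mu}\xi^{\beta}\st{\mc H}{=}(\r_a-\s_a+\tfrac12\ntwo\nablacero_a\elltwo)\xi^{\beta}+V^b{}_a e_b^{\beta}$, together with the decomposition $\nabla_{\xi}\xi\st{\mc H}{=}\beta\xi+\Phi_{\star}a_{\para}$.

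For \eqref{a2id1} I would start from the fully contracted scalar $T(\xi,\dots,\xi)$ on $\mc M$ and apply the Leibniz rule to $e_b^{\mu}\nabla_{\mu}\big(T(\xi,\dots,\xi)\big)$. The left-hand side is the derivative of a scalar along $\Phi_{\star}e_b$, hence equals $\nablacero_b$ of its pullback; the right-hand side splits as the wanted term $e_b^{\mu}\xi^{\alpha_1}\cdots\xi^{\alpha_p}\nabla_{\mu}T_{\alpha_1\cdots\alpha_p}$ plus $p$ terms in which the derivative lands on a $\xi$-factor. Substituting \eqref{nablaxi} into each such factor, the $\xi$-component collectively produces the coefficient $-p(\r_b-\s_b+\tfrac12\ntwo\nablacero_b\elltwo)T(\xi,\dots,\xi)$, while the $V^c{}_b e_c$-component replaces the $i$-th $\xi$ by $e_c$, which is by definition $V^c{}_b\,\big({}^{(1,\dots,i-1,i+1,\dots,p)}T\big)_c$. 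Solving for the wanted term yields \eqref{a2id1} at once.

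For \eqref{a2id2} and \eqref{a2id3} the starting point is the standard Lie--covariant commutation $\xi^{\mu}\nabla_{\mu}T_{\alpha_1\cdots\alpha_p}=\lie_{\xi}T_{\alpha_1\cdots\alpha_p}-\sum_i T_{\alpha_1\cdots\mu\cdots\alpha_p}\nabla_{\alpha_i}\xi^{\mu}$. To obtain \eqref{a2id2} I contract every free slot with $\xi$; each correction term then carries $\xi^{\alpha_i}\nabla_{\alpha_i}\xi^{\mu}=\nabla_{\xi}\xi^{\mu}\st{\mc H}{=}\beta\xi^{\mu}+a_{\para}^c e_c^{\mu}$, whose $\xi$-part sums to $-p\beta\,T(\xi,\dots,\xi)$ and whose $a_{\para}$-part yields $-a_{\para}^c\big({}^{(1,\dots,i-1,i+1,\dots,p)}T\big)_c$, while the Lie term gives $(\lie_{\xi}T)(\xi,\dots,\xi)$. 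For \eqref{a2id3} I instead contract the first slot with $\xi$ and the remaining $p-1$ slots with $e_{a_i}$. The essential simplification is that the Lie term collapses to $\xi^{\nu}(\lie_{\xi}T)_{\nu\alpha_1\cdots}=\big(\lie_{\xi}({}^{(1)}T)\big)_{\alpha_1\cdots}$, since $\lie_{\xi}$ commutes with contraction and $\lie_{\xi}\xi=0$; the derivative in the first slot contributes $\nabla_{\xi}\xi=\beta\xi+\Phi_{\star}a_{\para}$, giving the $-\beta\,({}^{(1)}T)$ and $-a_{\para}^b T_{ba_1\cdots a_{p-1}}$ terms; and the derivatives in the tangential slots contribute $\nabla_{e_{a_i}}\xi$ via \eqref{nablaxi}, whose $V^b{}_{a_i}e_b$-part gives $-V^b{}_{a_i}({}^{(1)}T)_{\cdots b\cdots}$ and whose $\xi$-part gives $-(\r_{a_i}-\s_{a_i}+\tfrac12\ntwo\nablacero_{a_i}\elltwo)\,\big({}^{(i,1)}T\big)_{\cdots}$.

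The computations are essentially mechanical once these substitutions are in place, so the only genuine difficulty is the combinatorial bookkeeping of the contraction notation. The point requiring the most care is the identification of the doubly contracted objects in \eqref{a2id3}: after contracting the first slot of $T$ with $\xi$ to form ${}^{(1)}T$, the original $\alpha_i$ slot becomes the $i$-th slot of the resulting $(0,p-1)$ tensor, so contracting $\xi$ there produces exactly ${}^{(i)}\big({}^{(1)}T\big)={}^{(i,1)}T$, as written. I would also record that all equalities are understood on $\Phi(\mc H)$ and are independent of the extension of $\xi$ off the hypersurface, since every term involves only $\nabla_{\xi}\xi$ and tangential derivatives of $\xi$, both of which are fixed at $\mc H$ by \eqref{nablaxi} and the chosen decomposition of $\nabla_{\xi}\xi$.
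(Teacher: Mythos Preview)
Your argument is correct and follows the same route as the paper: \eqref{a2id1} via the Leibniz rule on $e_b^{\mu}\nabla_{\mu}\big(T(\xi,\dots,\xi)\big)$ combined with \eqref{nablaxi}, and \eqref{a2id2}--\eqref{a2id3} via the commutation $\xi^{\mu}\nabla_{\mu}T = \lie_{\xi}T - \sum_i T_{\cdots\mu\cdots}\nabla_{\alpha_i}\xi^{\mu}$ followed by the appropriate contractions with $\nabla_{\xi}\xi=\beta\xi+\Phi_{\star}a_{\para}$ and \eqref{nablaxi}. Your treatment of the index bookkeeping, in particular the identification of ${}^{(i,1)}T$ in \eqref{a2id3}, is accurate.
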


We conclude the appendix by computing the pullback of the Lie derivative of an arbitrary covariant tensor field along any vector.

\begin{prop}
	\label{propliezetaT}
Let $(\mc M,g)$ be a semi-Riemannian manifold, $\Phi:\mc H\hookrightarrow\mc M$ a smooth embedded hypersurface with rigging $\xi$, $\zeta\in\X(\mc M)$ and $T$ a $(0,p)$ tensor field on $\mc M$. Define $\zeta_{t}\in\mc F(\mc H)$ and $\zeta_{\para}\in\X(\mc H)$ by $\zeta\st{\mc H}{=}\zeta_t\xi + \Phi_{\star}\zeta_{\para}$. Then,
\begin{equation}
	\label{liezetaT}
\Phi^{\star}\big(\lie_{\zeta}T\big)_{a_1\cdots a_p} = \zeta_t\big(\lie_{\xi}T\big)_{a_1\cdots a_p} + \sum_{i=1}^p\big( \nablacero_{a_i}\zeta_t \big)\big({}^{(i)}T\big)_{a_1\cdots a_{i-1} a_{i+1}\cdots a_p} + \big(\lie_{\zeta_{\para}}T\big)_{a_1\cdots a_p}.
\end{equation}
\begin{proof}
Inserting $\zeta\st{\mc H}{=}\zeta_t\xi + \Phi_{\star}\zeta_{\para}$, $$\lie_{\zeta}T_{\alpha_1\cdots \alpha_p} \st{\mc H}{=} \lie_{\zeta_t\xi + \Phi_{\star}\zeta_{\para}}T_{\alpha_1\cdots \alpha_p}  \st{\mc H}{=} \zeta_t\big(\lie_{\xi}T\big)_{\alpha_1\cdots \alpha_p} + \sum_{i=1}^p \big(\nabla_{\alpha_i}\zeta_t\big) T_{\alpha_1\cdots \alpha_{i-1}\mu\alpha_{i+1}\cdots \alpha_p}\xi^{\mu} + \big(\lie_{\Phi_{\star}\zeta_{\para}}T\big)_{\alpha_1\cdots \alpha_p}.$$ Taking the pullback to $\mc H$ and using the property $\Phi^{\star}\lie_{\Phi_{\star}\zeta_{\para}}T = \lie_{\zeta_{\para}}\Phi^{\star}T$ the result follows.
\end{proof}
\end{prop}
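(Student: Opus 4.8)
The statement to prove is Proposition~\ref{propliezetaT}, the pullback formula for $\lie_{\zeta}T$.

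\textbf{Approach.} The plan is to reduce the Lie derivative along $\zeta$ to a combination of the Lie derivative along the rigging $\xi$ and the Lie derivative along a vector field tangent to $\mc H$, and then invoke the two basic facts that (i) $\Phi^{\star}$ commutes with Lie derivatives along tangent vector fields, and (ii) in the term coming from the ``non-tensorial'' part of decomposing $\zeta$ as $\zeta_t\xi$, the derivatives of the coefficient $\zeta_t$ produce exactly the ${}^{(i)}T$ contractions. The decomposition $\zeta\st{\mc H}{=}\zeta_t\xi+\Phi_{\star}\zeta_{\para}$ holds only on $\Phi(\mc H)$, but since $\lie_{\zeta}T$ at a point depends on $\zeta$ and its first derivatives, one must be slightly careful: I would first extend $\zeta_t$ and $\zeta_{\para}$ arbitrarily off $\Phi(\mc H)$ and observe that the final identity is independent of the extension, because the left-hand side only involves $\zeta|_{\Phi(\mc H)}$ and tangential derivatives of the $T$-dependent quantities.

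\textbf{Key steps, in order.} First I would write, using linearity and the Leibniz rule for Lie derivatives with respect to the splitting $\zeta=\zeta_t\xi+\zeta_{\para}$ (after extension),
\begin{equation*}
\lie_{\zeta}T_{\alpha_1\cdots\alpha_p}=\lie_{\zeta_t\xi}T_{\alpha_1\cdots\alpha_p}+\lie_{\Phi_{\star}\zeta_{\para}}T_{\alpha_1\cdots\alpha_p}.
\end{equation*}
Second, I would expand the first term via the standard identity $\lie_{fX}T=f\lie_X T+df\otimes(\iota_X T)$ applied slot by slot, i.e.
\begin{equation*}
\lie_{\zeta_t\xi}T_{\alpha_1\cdots\alpha_p}=\zeta_t(\lie_{\xi}T)_{\alpha_1\cdots\alpha_p}+\sum_{i=1}^p(\nabla_{\alpha_i}\zeta_t)\,T_{\alpha_1\cdots\alpha_{i-1}\mu\alpha_{i+1}\cdots\alpha_p}\xi^{\mu},
\end{equation*}
where $\nabla$ can be replaced by the ordinary coordinate derivative on a scalar, but keeping $\nabla$ makes the pullback step cleaner. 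Third, I would take $\Phi^{\star}$ of both sides. For the $\zeta_t(\lie_\xi T)$ term the pullback just acts componentwise giving $\zeta_t(\lie_\xi T)_{a_1\cdots a_p}$; for the middle sum, the pullback of $\nabla_{\alpha_i}\zeta_t$ contracted tangentially becomes $\nablacero_{a_i}\zeta_t$ (a scalar, so its pullbacked gradient is the $\nablacero$-gradient of $\Phi^{\star}\zeta_t$, using that $\nablacero$ and $\nabla$ agree on the differential of a scalar up to the tangential pullback), and $T_{\alpha_1\cdots\mu\cdots\alpha_p}\xi^{\mu}$ pulls back to $({}^{(i)}T)_{a_1\cdots a_{i-1}a_{i+1}\cdots a_p}$ by the very definition of ${}^{(i)}T$ in Appendix~\ref{app}. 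Finally, for the tangential term I would invoke the naturality of Lie derivatives under $\Phi$: since $\Phi_{\star}\zeta_{\para}$ is $\Phi$-related to $\zeta_{\para}$, one has $\Phi^{\star}\lie_{\Phi_{\star}\zeta_{\para}}T=\lie_{\zeta_{\para}}\Phi^{\star}T$, which is the last summand in \eqref{liezetaT}. Collecting the three pieces gives the claim.

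\textbf{Main obstacle.} None of these steps is computationally heavy; the only genuine subtlety is the extension issue — making sure the formula does not depend on how $\zeta_t$ and $\zeta_{\para}$ are extended off $\Phi(\mc H)$, and that the tangential derivative $\nabla_{\alpha_i}\zeta_t$ contracted with a tangent vector really only sees the restriction $\Phi^{\star}\zeta_t$ and hence can be written with $\nablacero$. This is where I would spend the one line of care; everything else is bookkeeping with the ${}^{(i)}T$ notation and the naturality of the Lie derivative. In the write-up I would simply observe that both sides of the desired identity depend on $\zeta$ only through $\zeta|_{\Phi(\mc H)}=\zeta_t\xi+\Phi_{\star}\zeta_{\para}$ and through first tangential derivatives, so it suffices to verify it pointwise for one convenient extension, which the three-step computation above does.
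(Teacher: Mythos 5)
Your proposal is correct and follows essentially the same route as the paper: decompose $\zeta\st{\mc H}{=}\zeta_t\xi+\Phi_{\star}\zeta_{\para}$, expand $\lie_{\zeta_t\xi}T$ via the Leibniz-type identity $\lie_{fX}T=f\lie_{X}T+df\otimes_{\text{slot}}\iota_{X}T$, pull back, and use $\Phi^{\star}\lie_{\Phi_{\star}\zeta_{\para}}T=\lie_{\zeta_{\para}}\Phi^{\star}T$. Your extra paragraph on independence of the extension of $\zeta_t$ and $\zeta_{\para}$ off $\Phi(\mc H)$ is a legitimate point of care that the paper leaves implicit, but it does not alter the argument.
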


\section{Proof of Proposition \ref{prop_lienSigma}}
	\label{app_proof}
	
Before proving Proposition \ref{prop_lienSigma} we start with two preliminary results.

	\begin{lema}
		\label{lema_vector}
		Let $\{\mc H,\bg,\bm\ell,\elltwo\}$ be null metric hypersurface data $(\Phi,\xi)$-embedded in $(\mc M,g)$ and $Z\in\X(\mc M)$. Then, $$Z^{\alpha}\st{\Phi(\mc H)}{=}\xi^{\alpha} n^a Z_a+ P^{ab}e_a^{\alpha}Z_b +\nu^{\alpha}\bm{Z}(\xi) ,$$ where $\bm Z=g(Z,\cdot)$ and $Z_a \d \Phi^{\star}(\bm Z)_a$.
		\begin{proof}
			Using \eqref{inversemetric} with $\ntwo =0$ it follows $$Z^{\alpha} = g^{\alpha\beta}Z_{\beta} = \left(P^{ab}e_a^{\alpha}e_b^{\beta} + \xi^{\alpha}\nu^{\beta} + \xi^{\beta}\nu^{\alpha}\right)Z_{\beta} = P^{ab}e_a^{\alpha}Z_b + \xi^{\alpha} n^a Z_a+\nu^{\alpha} \bm{Z}(\xi) .$$
		\end{proof}
	\end{lema}

	\begin{lema}
		\label{lemainter}
		Let $\{\mc H,\bg,\bm\ell,\elltwo,\bY,\bZ^{(2)}\}$ be extended null metric hypersurface data $(\Phi,\xi)$-embedded in $(\mc M,g)$, $\eta\in\X(\mc M)$ and $\k\d\lie_{\eta}g$. Assume $\xi$ and $\nu$ are extended off $\Phi(\mc H)$ by $\nabla_{\xi}\xi=0$ and $\lie_{\xi}\nu=0$. Then,
		\begin{align}
			\big(\lie_{k^{(\nu)}}\lie_{\xi}g\big)_{ab} & = 2\k(n,n)\Z^{(2)}_{ab}+  \nablacero_{(a}\elltwo \nablacero_{b)}\big(\k(n,n)\big)+ 2\lie_X\Y_{ab}\nonumber\\
			&\quad\,+2\kt(n)\lie_n\Y_{ab} + 4\r_{(a}\nablacero_{b)}(\kt(n)),\label{lie1}\\
			\big(\lie_{[\xi,k^{(\nu)}]}g\big)_{ab} &= 2\big(\chi+2\lie_n(\kt(n))-2\mS(n,n)\big) \Y_{ab}+ 2\ell_{(a} \nablacero_{b)}\big(\chi+2\lie_n(\kt(n))-2\mS(n,n)\big)\nonumber\\
			&\quad\,+\lie_W\gamma_{ab} - 2\lie_{\Xi}\gamma_{ab} +2\Sigma(\nu,\xi,\xi)\U_{ab}, \label{lie2}
		\end{align}
		where $\Xi$, $X$, $\chi$ and $W$ are as in Proposition \ref{prop_lienSigma}.\\
		
\textbf{Remark:} The two terms involving Lie derivative of $\gamma_{ab}$ in \eqref{lie2} could be grouped together, but our interest is to keep the terms depending on $\mS_{ab}$ separated from the rest.
		\begin{proof}
To show \eqref{lie1} we use Lemma \ref{lema_vector} with $Z=k^{(\nu)}$ together with $g\big(k^{(\nu)},e_a\big)=\k_{ab}n^b$ and $g\big(k^{(\nu)},\xi\big)=\kt(n)$ to write 
			\begin{equation}
				\label{auxMarc}
				k^{(\nu)} = \k(n,n)\xi+\left(P^{ab}\k_{bc}n^c + \kt(n)n^a\right)e_a= \k(n,n)\xi + X + \kt(n) n,
			\end{equation}
			and hence by applying Proposition \ref{propliezetaT} to $T=\lie_{\xi}g$ with $\zeta_t=\k(n,n)$ and $\zeta_{\para}=X+\kt(n)n$ and using \eqref{Z2embedded}, \eqref{recall} and \eqref{Yembedded} in the respective terms,
			\begin{align*}
				e_a^{\alpha}e_b^{\beta}\lie_{k^{(\nu)}} \lie_{\xi} g_{\alpha\beta} &=2\k(n,n)\Z^{(2)}_{ab}+ \nablacero_{(a}\elltwo \nablacero_{b)}\big(\k(n,n)\big)+ 2\lie_X\Y_{ab} + 2\lie_{\kt(n)n}\Y_{ab}\\
				&=2\k(n,n)\Z^{(2)}_{ab}+ \nablacero_{(a}\elltwo \nablacero_{b)}\big(\k(n,n)\big)+ 2\lie_X\Y_{ab} +2\kt(n)\lie_n\Y_{ab} + 4\r_{(a}\nablacero_{b)}(\kt(n)).
			\end{align*}
			In order to prove \eqref{lie2} we need the commutator $[\xi,k^{(\nu)}]$. It is easier to compute the one-form $[\xi,k^{(\nu)}]_{\alpha}$ and then use Lemma \ref{lema_vector},
			\begin{align}
				[\xi,k^{(\nu)}]_{\alpha} &= \xi^{\mu}\nabla_{\mu}\big(\k_{\alpha\beta}\nu^{\beta}\big)- \k^{\mu}{}_{\beta}\nu^{\beta} \nabla_{\mu}\xi_{\alpha} \nonumber\\
				& = \xi^{\mu}\nu^{\beta}\nabla_{\mu}\k_{\alpha\beta} + \k_{\alpha\beta} \xi^{\mu}\nabla_{\mu}\nu^{\beta} - \k^{\mu}{}_{\beta}\nu^{\beta} \nabla_{\mu}\xi_{\alpha}\nonumber\\
				&= \xi^{\mu}\nu^{\beta}\nabla_{\mu}\k_{\alpha\beta} + \k_{\alpha\beta} \nu^{\mu}\nabla_{\mu}\xi^{\beta} - \big(X^b+\kt(n)n^b\big) e_b^{\mu} \nabla_{\mu}\xi_{\alpha},\label{commudown}
			\end{align}
			where in the third line we used $\lie_{\xi}\nu=0$, inserted the expression for $\k^{\mu}{}_{\beta}\nu^{\beta} = (k^{(\nu)})^{\mu}$ in \eqref{auxMarc} and used $\nabla_{\xi}\xi=0$. Contracting \eqref{commudown} with $\xi^{\alpha}$ and using the symmetries \eqref{S2}-\eqref{S3},
			\begin{align*}
				[\xi,k^{(\nu)}]_{\alpha}\xi^{\alpha} &= \Sigma(\nu,\xi,\xi) + \Sigma(\xi,\xi,\nu) +\k_{\alpha\beta}\xi^{\alpha}\nu^{\mu}\nabla_{\mu}\xi^{\beta} - \big(X^b+\kt(n)n^b\big) e_b^{\mu} \xi^{\alpha}\nabla_{\mu}\xi_{\alpha}\\
				&\st{\Phi(\mc H)}{=} \Sigma(\nu,\xi,\xi) +\dfrac{1}{2}\big(n(\ktt)-X(\elltwo)-\kt(n) n(\elltwo)\big),
			\end{align*}
			where in the second line we used \eqref{12Sigma} contracted with $\nu$. Note that the scalar $\Sigma(\nu,\xi,\xi)$ on $\Phi(\mc H)$ cannot be written in terms of extended hypersurface data and $\k|_{\Phi(\mc H)}$ (see \eqref{23null} and observe the appearance of $\kt^{(2)}$). Now, the contraction of \eqref{commudown} with $e_a^{\alpha}$ gives, after using \eqref{nablanuxiconV},
			\begin{align*}
				[\xi,k^{(\nu)}]_{\alpha}e_a^{\alpha}&= \big({}^{(1)}\nabla\k\big)_{ab}n^b + \k_{\alpha\beta} e_a^{\alpha} \left(-\kappa_n \xi^{\beta} + V^c{}_bn^be_c^{\beta}\right) - \big(X^b+\kt(n)n^b\big) e_b^{\mu} e_a^{\alpha} \nabla_{\mu}\xi_{\alpha}\\
				&= \lie_n\kt_a +\nablacero_a \big(\kt(n)\big) -2\kt(n)\s_a -2P^{bc}\U_{ac}\kt_b  - V^c{}_a \k_{bc}n^b  - \big(\Y_{ba}+\F_{ba}\big)X^b -2\mS_{ab}n^b,
			\end{align*}
where in the second line we inserted \eqref{1Kn} and used $e_b^{\mu} e_a^{\alpha} \nabla_{\mu}\xi_{\alpha} =\frac{1}{2}e_b^{\mu} e_a^{\alpha}\big(\lie_{\xi}g_{\mu\alpha}+(d\bm\xi)_{\mu\alpha}\big)= \Y_{ba}+\F_{ba}$. Using Lemma \ref{lema_vector} applied to $Z=[\xi,k^{(\nu)}]$ and simplifying one then has $$[\xi,k^{(\nu)}]^{\alpha} = \big(\chi+2\lie_n(\kt(n))-2\mS(n,n)\big)\xi^{\alpha}+\big(W^a -2P^{ab}\mS_{bc}n^c + \Sigma(\nu,\xi,\xi)n^a\big) e_a^{\alpha} .$$ Identity \eqref{lie2} follows from this after applying Proposition \ref{propliezetaT} to $T=g$ and $\zeta=W$.
		\end{proof}
	\end{lema}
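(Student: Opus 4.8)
The two identities compute the $\mc H$-pullback of the Lie derivatives of $\lie_{\xi}g$ and of $g$ along the vector fields $k^{(\nu)}$ and $[\xi,k^{(\nu)}]$ respectively, so the overarching plan is to decompose each of these vector fields in the adapted frame $\{\xi,\Phi_{\star}e_a,\nu\}$ by means of Lemma \ref{lema_vector}, and then feed the decomposition into the general pullback formula for Lie derivatives of covariant tensors, Proposition \ref{propliezetaT}. The extension conditions $\nabla_{\xi}\xi=0$ and $\lie_{\xi}\nu=0$ are used throughout to simplify covariant derivatives of $\xi$ and $\nu$.

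First I would establish \eqref{lie1}. Applying Lemma \ref{lema_vector} to $Z=k^{(\nu)}$, whose metrically associated one-form is $\k_{\alpha\beta}\nu^{\beta}$, the contraction with $\xi$ equals $\kt(n)$ and the pullback to $\mc H$ equals $\k_{ab}n^b$; since $\nu\st{\mc H}{=}\Phi_{\star}n$ in the null case this yields $k^{(\nu)}=\k(n,n)\xi+X+\kt(n)n$ with $X^a=P^{ab}\k_{bc}n^c$. Then I would apply Proposition \ref{propliezetaT} [Eq. \eqref{liezetaT}] to $T=\lie_{\xi}g$ with $\zeta_t=\k(n,n)$ and $\zeta_{\para}=X+\kt(n)n$. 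The scalar term reproduces $2\k(n,n)\Z^{(2)}_{ab}$ through the embedding relation \eqref{Z2embedded} (valid because $\nabla_{\xi}\xi=0$); the two $\nablacero\zeta_t$-terms give $\nablacero_{(a}\elltwo\nablacero_{b)}(\k(n,n))$ after inserting $\Phi^{\star}\lie_{\xi}\bm\xi=\frac12 d\elltwo$ (Eq. \eqref{recall}); and the tangential term $\lie_{\zeta_{\para}}(\lie_{\xi}g)$ becomes $\lie_{\zeta_{\para}}(2\bY)$ by \eqref{Yembedded}. Expanding $\lie_{\kt(n)n}\bY$ by Leibniz and using $\bY(n,\cdot)=\br$ produces the remaining $2\kt(n)\lie_n\Y_{ab}+4\r_{(a}\nablacero_{b)}(\kt(n))$, completing \eqref{lie1}.

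The identity \eqref{lie2} is the substantial part and requires computing the commutator $[\xi,k^{(\nu)}]$ explicitly. I would work with the lowered form, expanding $\xi^{\mu}\nabla_{\mu}(\k_{\alpha\beta}\nu^{\beta})-(k^{(\nu)})^{\mu}\nabla_{\mu}\xi_{\alpha}$ and using $\lie_{\xi}\nu=0$ to replace $\xi^{\mu}\nabla_{\mu}\nu$ by $\nu^{\mu}\nabla_{\mu}\xi$ and $\nabla_{\xi}\xi=0$ to kill the $\xi$-part of $k^{(\nu)}$. I would then extract the three frame components by contracting with $\xi$ and with $e_a$: the $\xi$-contraction, after using the symmetries \eqref{S2}-\eqref{S3} and the null-case contractions \eqref{12null}, \eqref{nablanuxiconV} and \eqref{Vn}, collapses to the $\nu$-coefficient $\Sigma(\nu,\xi,\xi)+\frac12(n(\ktt)-X(\elltwo)-\kt(n)n(\elltwo))$; the $e_a$-contraction, after substituting \eqref{1Kn} for $\big({}^{(1)}\nabla\k\big)_{ab}n^b$ and \eqref{nablanuxiconV} for $\nu^{\mu}\nabla_{\mu}\xi$, reduces to precisely the tangential one-form appearing in the definitions of $W$ and $\Xi$. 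Reassembling via Lemma \ref{lema_vector} with $\nu\st{\mc H}{=}\Phi_{\star}n$, I expect the two copies of $\frac12(n(\ktt)-X(\elltwo)-n(\elltwo)\kt(n))n^a$ (one in the $\nu$-coefficient, one inside the tangential coefficient) to cancel, leaving $[\xi,k^{(\nu)}]=(\chi+2\lie_n(\kt(n))-2\mS(n,n))\xi+(W-2\Xi+\Sigma(\nu,\xi,\xi)n)$. A last application of Proposition \ref{propliezetaT} with $T=g$, using $\Phi^{\star}\lie_{\xi}g=2\bY$, $\Phi^{\star}(g(\xi,\cdot))=\bm\ell$ and $\lie_n\bg=2\bU$ (from \eqref{defU}), together with $\gamma_{ab}n^b=0$ (Eq. \eqref{gamman}) so that $\lie_{fn}\bg=2f\bU$, then yields \eqref{lie2}.

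The main obstacle is the bookkeeping in \eqref{lie2}: recognizing which spacetime contractions of $\Sigma$ and $\nabla\k$ assemble into the exact combinations defining $\chi$, $W$ and $\Xi$, and checking the cancellation of the $n^a$-terms so that $\Sigma(\nu,\xi,\xi)$ survives only in the $\U_{ab}$ coefficient. Keeping the $\mS$-dependent pieces separated from the pure hypersurface-data terms (as the Remark emphasizes) requires care, but the manipulation is otherwise essentially linear algebra once the frame decomposition of $[\xi,k^{(\nu)}]$ is secured.
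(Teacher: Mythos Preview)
Your approach is essentially the same as the paper's: decompose $k^{(\nu)}$ via Lemma \ref{lema_vector}, feed into Proposition \ref{propliezetaT} for \eqref{lie1}; then compute the lowered commutator $[\xi,k^{(\nu)}]_{\alpha}$, read off its $\xi$- and $e_a$-contractions, reassemble with Lemma \ref{lema_vector}, and apply Proposition \ref{propliezetaT} to $T=g$ for \eqref{lie2}. One small correction on the bookkeeping: the term $\tfrac12\big(n(\ktt)-X(\elltwo)-n(\elltwo)\kt(n)\big)n^a$ does not appear twice and cancel. There is a single copy, coming from the $\nu^{\alpha}\bm{Z}(\xi)$ piece of Lemma \ref{lema_vector} (recall $\nu=\Phi_{\star}n$ here), and it is precisely what is \emph{absorbed} into the definition of $W^a$ in Proposition \ref{prop_lienSigma}; the pure $P^{ab}Z_b$ part supplies only the $P$-contracted terms of $W$ minus $2\Xi^a$, and the $n^a\bm{Z}(\xi)$ part completes $W$ and adds $\Sigma(\nu,\xi,\xi)n^a$. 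With that adjustment your decomposition $[\xi,k^{(\nu)}]=(\chi+2\lie_n(\kt(n))-2\mS(n,n))\xi+(W-2\Xi+\Sigma(\nu,\xi,\xi)n)$ is exactly what the paper obtains, and the final step follows as you describe.
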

	
We are ready to show that the tensor $\mS$ satisfies a transport equation along $n$. 

\begin{proof}[Proof of Proposition \ref{prop_lienSigma}]
The proof is based on identity \eqref{id2}, which after pulled back to $\mc{H}$ will be shown to provide the transport equation \eqref{identity} with a right hand side that depends only on extended hypersurface data and $\k_{ab}$, $\kt_a$ and $\ktt$ but no transverse derivatives of $\k_{\alpha\beta}$ on the hypersurface. We start by rewriting the divergence $\nabla_{\mu}\Sigma^{\mu}{}_{\alpha\beta}$ at points on $\mc H$ using \eqref{inversemetric} for $g^{\mu\rho}$ and replacing covariant derivatives by Lie derivatives,
\begin{align*}
	\nabla_{\mu}\Sigma^{\mu}{}_{\alpha\beta} & = g^{\mu\rho}\nabla_{\mu}\Sigma_{\rho\alpha\beta}\\
	&=\big(P^{cd}e_c^{\mu}e_d^{\rho}+\xi^{\mu}\nu^{\rho}+\xi^{\rho}\nu^{\mu}\big)\nabla_{\mu}\Sigma_{\rho\alpha\beta}\\
	&= P^{cd}e_c^{\mu}e_d^{\rho}\nabla_{\mu}\Sigma_{\rho\alpha\beta} +\nu^{\rho}\big(\lie_{\xi}\Sigma_{\rho\alpha\beta} - \Sigma_{\sigma\alpha\beta}\nabla_{\rho}\xi^{\sigma} - 2\Sigma_{\rho\sigma(\alpha}\nabla_{\beta)}\xi^{\sigma}\big) \\
	&\quad\, + \xi^{\rho}\big(\lie_{\nu}\Sigma_{\rho\alpha\beta} - \Sigma_{\sigma\alpha\beta}\nabla_{\rho}\nu^{\sigma} - 2\Sigma_{\rho\sigma(\alpha}\nabla_{\beta)}\nu^{\sigma}\big)\\
	&=  P^{cd}e_c^{\mu}e_d^{\rho}\nabla_{\mu}\Sigma_{\rho\alpha\beta} +\nu^{\rho}\lie_{\xi}\Sigma_{\rho\alpha\beta} + \xi^{\rho}\lie_{\nu}\Sigma_{\rho\alpha\beta}\\
	&\quad\, -\Sigma_{\sigma\alpha\beta}\big(\nu^{\rho}\nabla_{\rho}\xi^{\sigma}+\xi^{\rho}\nabla_{\rho}\nu^{\sigma}\big)-2\Sigma_{\rho\sigma(\alpha}\big(\nu^{\rho}\nabla_{\beta)}\xi^{\sigma}+\xi^{\rho}\nabla_{\beta)}\nu^{\sigma}\big).
\end{align*}
Since the result is independent of the extension of $\nu$ and $\xi$ off $\Phi(\mc H)$ we assume without loss of generality $\nabla_{\xi}\xi=0$ and $\lie_{\xi}\nu=0$, so the pullback of the divergence of $\Sigma$ on $\mc H$ takes the form
\begin{equation}
	\label{LHS}
	\begin{aligned}
		e_a^{\alpha}e_b^{\beta}	\nabla_{\mu}\Sigma^{\mu}{}_{\alpha\beta} & = P^{cd}e_c^{\mu}e_d^{\rho}e_a^{\alpha}e_b^{\beta}\nabla_{\mu}\Sigma_{\rho\alpha\beta} +e_a^{\alpha}e_b^{\beta}\lie_{\xi}\big(\nu^{\rho}\Sigma_{\rho\alpha\beta}\big) + e_a^{\alpha}e_b^{\beta}\lie_{\nu}\big(\xi^{\rho}\Sigma_{\rho\alpha\beta}\big)\\
		&\quad\, -2e_a^{\alpha}e_b^{\beta}\Sigma_{\sigma\alpha\beta}\nu^{\rho}\nabla_{\rho}\xi^{\sigma}-2e_a^{\alpha}e_b^{\beta}\Sigma_{\rho\sigma(\alpha}\big(\nu^{\rho}\nabla_{\beta)}\xi^{\sigma}+\xi^{\rho}\nabla_{\beta)}\nu^{\sigma}\big).
	\end{aligned}
\end{equation}
We now proceed with the explicit computation of each term. The first term in \eqref{LHS} can be evaluated using \eqref{identity1} applied to $T=\Sigma$ and \eqref{symmetries}, namely
\begin{align}
	\hspace{-0.5cm} P^{cd}e_c^{\mu}e_d^{\rho}e_a^{\alpha}e_b^{\beta}\nabla_{\mu}\Sigma_{\rho\alpha\beta} &= P^{cd}\nablacero_c\Sigma_{dab} + (\tr_P\bY)\Sigma_{eab}n^e +2P^{cd}\Y_{c(a|}\Sigma_{de|b)}n^e + (\tr_P\bU) \mS_{ab}+ 2P^{cd}\U_{c(a|}\big({}^{(2)}\Sigma\big)_{d|b)}\nonumber\\
	&= P^{cd}\nablacero_c\Sigma_{dab} + (\tr_P\bY)\Sigma_{eab}n^e +2P^{cd}\Y_{c(a|}\Sigma_{de|b)}n^e  + (\tr_P\bU) \mS_{ab}- 2P^{cd}\U_{c(a}\mS_{b)d}\nonumber\\
	&\quad\,  + 2P^{cd}\U_{c(a}\big(\nablacero_{b)}\kt_d + \kt(n)\Y_{b)d} + \ktt\U_{b)d}- \big(\r-\s\big)_{b)}\kt_d - V^e{}_{b)}\k_{de}\big).\label{1st}
\end{align} 
The third term in \eqref{LHS} is directly
\begin{equation}
	\label{3rd}
	e_a^{\alpha}e_b^{\beta} \lie_{\nu}\big(\xi^{\rho}\Sigma_{\rho\alpha\beta}\big) = \lie_n\mS_{ab}.
\end{equation}
For the three terms in the second line of \eqref{LHS} we use \eqref{nablanuxiconV}, namely $$\nu^{\rho}\nabla_{\rho}\xi^{\sigma} = -\kappa_n \xi^{\sigma} + \left(P^{cd}\big(\r+\s\big)_c+\dfrac{1}{2}n(\elltwo) n^d\right) e_d^{\sigma}$$ and $$\nabla_{e_a}\nu^{\sigma} = \big(P^{cd}\U_{bd}-\big(\r-\s\big)_b n^c\big)e_c^{\sigma},$$ which is a direct consequence of \eqref{connections} and \eqref{derivadan}. Hence,
\begin{align}
	-2e^{\alpha}_a e_b^{\beta}\Sigma_{\sigma\alpha\beta}\nu^{\rho}\nabla_{\rho}\xi^{\sigma} &= 2\kappa_n\mS_{ab} -2P^{cd}\big(\r+\s)_{c}\Sigma_{dab} -n(\elltwo) \Sigma_{cab}n^c,\label{4th}\\
	-2e^{\alpha}_a e_b^{\beta}\Sigma_{\rho\sigma(\alpha}\xi^{\rho}\nabla_{\beta)}\nu^{\sigma}  &= -2\mS_{c(a}\big(P^{cd}\U_{b)d}-\big(\r-\s\big)_{b)}n^c\big),\label{5th}\\
	-2e^{\alpha}_a e_b^{\beta}\Sigma_{\rho\sigma(\alpha}\nu^{\rho}\nabla_{\beta)}\xi^{\sigma} & \st{\eqref{nablaxi1}}{=} -2n^c\big({}^{(2)}\Sigma\big)_{c(a}\big(\r-\s\big)_{b)} - 2n^cP^{de}\Sigma_{cd(a}\big(\Y+\F\big)_{b)e}  - n^c n^d \Sigma_{cd(a} \nablacero_{b)}\elltwo\nonumber\\
	&\st{\eqref{symmetriesn0}}{=} 2n^c\mS_{c(a}\big(\r-\s\big)_{b)} - 2n^cP^{de}\Sigma_{cd(a}\big(\Y+\F\big)_{b)e} - n^c n^d \Sigma_{cd(a} \nablacero_{b)}\elltwo\nonumber\\
	&\quad\, -2\big(\r-\s\big)_{(a}\Big(\nablacero_{b)}\big(\kt(n)\big) - P^{cd}\U_{b)d}\kt_c - V^{d}{}_{b)}\k_{cd}n^c\Big).\label{6th}
\end{align} 
In order to compute the remaining term of \eqref{LHS}, namely $e_a^{\alpha}e_b^{\beta}\lie_{\xi}\big(\nu^{\rho}\Sigma_{\rho\alpha\beta}\big)$, we make use of the second identity in Lemma \ref{lema_Marc} applied to $\zeta=\nu$, 
	\begin{equation}
		\label{nuliexisigma}
		\begin{aligned}
			\lie_{\xi}\left(\nu^{\rho}\Sigma_{\rho\alpha\beta}\right)= \dfrac{1}{2}\lie_{\xi}\left(\lie_{k^{(\nu)}}g_{\alpha\beta}-\lie_{\nu}\k_{\alpha\beta}\right)=\dfrac{1}{2}\lie_{k^{(\nu)}} \lie_{\xi} g_{\alpha\beta} -\dfrac{1}{2}\lie_{\nu}\lie_{\xi}\k_{\alpha\beta} + \dfrac{1}{2}\lie_{[\xi,k^{(\nu)}]}g_{\alpha\beta}.
		\end{aligned}
	\end{equation}
	The pullbacks of the first and third terms have been already computed in Lemma \ref{lemainter}. The pullback of the second term is, by \eqref{aux}, 
	\begin{align*}
		-\dfrac{1}{2}e_a^{\alpha}e_b^{\beta}\lie_{\nu}\lie_{\xi}\k_{\alpha\beta}&=	\lie_n \mS_{ab} - \lie_n\left(\nablacero_{(a}\kt_{b)} + \kt(n)\Y_{ab} + \ktt \U_{ab}\right).
	\end{align*}
	Combining the three,
	\begin{align}
	e_a^{\alpha}e_b^{\beta}\lie_{\xi}\big(\nu^{\rho}\Sigma_{\rho\alpha\beta}\big) &= \k(n,n)\Z^{(2)}_{ab}+ \dfrac{1}{2} \nablacero_{(a}\elltwo \nablacero_{b)}\big(\k(n,n)\big)+ \lie_X\Y_{ab} + 2\r_{(a}\nablacero_{b)}(\kt(n)) + \lie_n \mS_{ab} \nonumber\\
		&\quad\, - \lie_n\left(\nablacero_{(a}\kt_{b)}  + \ktt \U_{ab}\right) + \big(\chi+\lie_n(\kt(n))-2\mS(n,n)\big) \Y_{ab}\label{2nd}\\
		&\quad\, + \ell_{(a} \nablacero_{b)}\big(\chi+2\lie_n(\kt(n))-2\mS(n,n)\big)+ \dfrac{1}{2}\lie_W\gamma_{ab}- \lie_{\Xi}\gamma_{ab} +\Sigma(\nu,\xi,\xi)\U_{ab} .\nonumber
	\end{align}
	With this, all the terms in \eqref{LHS} have been evaluated. Inserting the corresponding expressions \eqref{1st}, \eqref{3rd}, \eqref{4th}, \eqref{5th}, \eqref{6th} and \eqref{2nd} into \eqref{LHS} the pullback of the divergence term is, finally,
	\begin{align*}
		e_a^{\alpha}e_b^{\beta}	\nabla_{\mu}\Sigma^{\mu}{}_{\alpha\beta} & = P^{cd}\nablacero_c\Sigma_{dab} + \big(\tr_P\bY-n(\elltwo)\big)\Sigma_{cab}n^c +2P^{cd}\Y_{c(a|}\Sigma_{de|b)}n^e  + \big(\tr_P\bU+2\kappa_n\big) \mS_{ab}\nonumber\\
		&\quad\,- 4P^{cd}\U_{c(a}\mS_{b)d}  + 2P^{cd}\U_{c(a}\big(\nablacero_{b)}\kt_d + \kt(n)\Y_{b)d} + \ktt\U_{b)d} - V^e{}_{b)}\k_{de}\big) +2\lie_n \mS_{ab} \\
		&\quad\, - \lie_n\left(\nablacero_{(a}\kt_{b)}  + \ktt \U_{ab}\right)+ \k(n,n)\Z^{(2)}_{ab}+ \dfrac{1}{2} \nablacero_{(a}\elltwo \nablacero_{b)}\big(\k(n,n)\big) + \lie_X\Y_{ab} + \dfrac{1}{2}\lie_W\gamma_{ab}\\
		&\quad\,  -\lie_{\Xi}\gamma_{ab} - 2\ell_{(a}\nablacero_{b)}\big(\mS(n,n)\big) +\Sigma(\nu,\xi,\xi)\U_{ab} + \big(\chi+\lie_n(\kt(n))-2\mS(n,n)\big) \Y_{ab}\nonumber\\
		&\quad\, + \ell_{(a} \nablacero_{b)}\big(\chi+2\lie_n(\kt(n))\big)  -2P^{cd}\big(\r+\s)_{c}\Sigma_{dab} +4\big(\r-\s\big)_{(a}\mS_{b)c}n^c  - 2n^cP^{de}\Sigma_{cd(a}\big(\Y+\F\big)_{b)e}\nonumber\\
		&\quad\,  - n^c n^d \Sigma_{cd(a} \nablacero_{b)}\elltwo +2\s_{(a}\nablacero_{b)}\kt(n) +2 \big(\r-\s\big)_{(a}V^{d}{}_{b)}\k_{cd}n^c.
	\end{align*}

	Now we consider the pullback of the RHS of \eqref{id2}. Using Proposition \ref{proppullback} [Eq. \eqref{hess}] applied to $f=\tr_g\k$ and taking into account that $\tr_g \k\st{\mc H}{=} \tr_P\k+2\kt(n)$ (cf. \eqref{inversemetric}),
	\begin{align*}
		e_a^{\alpha} e_b^{\beta}\left(\lie_{\eta}R_{\alpha\beta} + \dfrac{1}{2}\nabla_{\alpha}\nabla_{\beta} \tr_g\k\right) & = \big(\lie_{\eta}R\big)_{ab} + \dfrac{1}{2}\nablacero_a\nablacero_b \big(\tr_P\k + 2\kt(n)\big) +\dfrac{1}{2} n\big(\tr_P\k + 2\kt(n)\big) \Y_{ab}  \\
		&\quad\, + \dfrac{1}{2}{\xi}\big(\tr_g\k\big)\U_{ab}.
	\end{align*}
Let us compute ${\xi}\big(\tr_g\k\big)$. From the decomposition of $g^{\alpha\beta}$ in \eqref{inversemetric} together with ${}^{(1,2)}\nabla\k  = {}^{(1,3)}\Sigma +{}^{(2,3)}\Sigma$ (see \eqref{S2}), $$\lie_{\xi}\big(\tr_g\k\big)  = g^{\alpha\beta}\xi^{\mu}\nabla_{\mu}\k_{\alpha\beta}=P^{ab}\big({}^{(1)}\nabla\k\big)_{ab} + 2\big({}^{(1,2)}\nabla\k\big)_a n^a = P^{ab}\big({}^{(1)}\nabla\k\big)_{ab} + 2\Sigma(\xi,\nu,\xi)+2\Sigma(\nu,\xi,\xi).$$ Inserting \eqref{1K} and \eqref{12null} (with $\ntwo=\beta=a_{\para}=0$), but keeping the term $\Sigma(\nu,\xi,\xi)$ as it depends on transverse derivatives of $\k_{\alpha\beta}$ (because of the term $\kt^{(2)}(n)$ in \eqref{23null}) gives
	\begin{align*}
{\xi}\big(\tr_g\k\big)&=2P^{ab}\nablacero_a\kt_b  +\big(2\tr_P\bY-n(\elltwo)\big)\kt(n) +2\big(\tr_P\bU+\kappa_n\big)\ktt \\
		&\quad\, - 4P^{ab}\r_a\kt_b -2P^{ab}V^c{}_a\k_{bc}- 2\tr_P\mS   +n(\ktt)+ 2\Sigma(\nu,\xi,\xi),
	\end{align*}
	and thus, the pullback of the RHS of \eqref{id2} is
	\begin{align*}
		\mbox{RHS}_{ab} & = \big(\lie_{\eta}R\big)_{ab} + \dfrac{1}{2}\nablacero_a\nablacero_b \big(\tr_P\k + 2\kt(n)\big) +\dfrac{1}{2} n\big(\tr_P\k + 2\kt(n)\big) \Y_{ab}  \\
		&\quad\, +\Bigg(P^{cd}\nablacero_c\kt_d +\left(\tr_P\bY-\dfrac{1}{2}n(\elltwo)\right)\kt(n) +\big(\tr_P\bU+\kappa_n\big)\ktt - 2P^{cd}\r_c\kt_d \\
		&\qquad\qquad\, -P^{cd}V^f{}_c\k_{df}- \tr_P\mS+\dfrac{1}{2}n(\ktt) + \Sigma(\nu,\xi,\xi)\Bigg)\U_{ab}.
	\end{align*}
	Identity \eqref{identity} follows by equating the LHS and the RHS and rearranging terms. Observe that the only term that is not expressible in terms of extended hypersurface data and $\k|_{\Phi(\mc H)}$ (namely $\Sigma(\nu,\xi,\xi)$) cancels out.
\end{proof}
	
\section{Useful contractions}
\label{app_conn}

In this appendix we compute several contractions that are needed in the main text of the paper. We start with all possible contractions of the tensor $\nablacero\bm\theta$ with $n$ for any one-form $\bm\theta$.

\begin{lema}
	\label{lemmannabla}
Let $\{\mc H,\bg,\bm\ell,\elltwo\}$ be metric hypersurface data and $\bm\theta$ any one-form. Then,
\begin{align}
\hspace{-0.15cm}n^b\nablacero_a\theta_b &=\nablacero_a\big(\bm\theta(n)\big)  -\bm\theta(n)\s_a- P^{bc}\U_{ac}\theta_b + \ntwo \big(\bm\theta(n)\nablacero_a\elltwo + P^{bc}\F_{ac}\theta_b\big),\label{contra1}\\
\hspace{-0.15cm}n^b\nablacero_b \theta_a &=\lie_n\theta_a -\bm\theta(n)\s_a- P^{bc}\U_{ac}\theta_b + \ntwo \big(\bm\theta(n)\nablacero_a\elltwo + P^{bc}\F_{ac}\theta_b\big),\label{contra2}\\
\hspace{-0.15cm}2n^b \nablacero_{(a}\theta_{b)}& = \lie_n\theta_a + \nablacero_a\big(\bm\theta(n)\big) - 2\big(\bm\theta(n)\s_a + P^{bc}\U_{ac}\theta_b\big) + 2\ntwo\big(\bm\theta(n) \nablacero_a\elltwo + P^{bc}\F_{ac}\theta_b\big),\label{nnablacerotheta}\\
\hspace{-0.15cm}n^an^b \nablacero_{a}\theta_{b} &= \lie_n\big(\bm\theta(n)\big)  + \ntwo\bm\theta(n) n(\elltwo) + P^{bc}\theta_b \left(2\ntwo \s_c -\dfrac{1}{2}(\ntwo)^2 \nablacero_c\elltwo - \dfrac{1}{2} \nablacero_c\ntwo\right).\label{nnablacerothetan}
\end{align}
\begin{proof}
From \eqref{derivadan} it follows $$n^b\nablacero_a\theta_b = \nablacero_a\big(\bm\theta(n)\big) - \theta_b\nablacero_a n^b = \nablacero_a\big(\bm\theta(n)\big)  -\bm\theta(n)\s_a- P^{bc}\U_{ac}\theta_b + \ntwo \big(\bm\theta(n)\nablacero_a\elltwo + P^{bc}\F_{ac}\theta_b\big),$$ and $$ n^b\nablacero_b \theta_a = \lie_n\theta_a - \theta_b \nablacero_a n^b = \lie_n\theta_a -\bm\theta(n)\s_a- P^{bc}\U_{ac}\theta_b + \ntwo \big(\bm\theta(n)\nablacero_a\elltwo + P^{bc}\F_{ac}\theta_b\big),$$ which are \eqref{contra1}-\eqref{contra2}. Adding them, \eqref{nnablacerotheta} follows. Another contraction with $n$ then gives $$n^an^b \nablacero_{a}\theta_{b} = \lie_n\big(\bm\theta(n)\big) -P^{bc}\U_{ac}\theta_b n^a + \ntwo\big(\bm\theta(n) n(\elltwo)+P^{bc}\s_c\theta_b\big),$$ which becomes \eqref{nnablacerothetan} after using \eqref{Un}.
\end{proof}
\end{lema}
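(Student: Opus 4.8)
The plan is to derive all four identities from the single formula \eqref{derivadan} for $\nablacero n$, together with the Leibniz rule and the torsion-freeness of $\nablacero$; no new ideas are needed beyond careful bookkeeping. First I would prove \eqref{contra1}: applying the Leibniz rule to the scalar $\bm\theta(n)=\theta_bn^b$ gives $n^b\nablacero_a\theta_b=\nablacero_a\big(\bm\theta(n)\big)-\theta_b\nablacero_an^b$, and substituting \eqref{derivadan} for $\nablacero_an^b$ (using $\nablacero_af=(df)_a$ on scalars) yields the stated right-hand side, the $\s_an^b$ and $\ntwo n^b(d\elltwo)_a$ terms contracting into $\bm\theta(n)\s_a$ and $\ntwo\bm\theta(n)\nablacero_a\elltwo$ while $P^{bc}\theta_b$ multiplies the $\U_{ca}$ and $\F_{ac}$ terms. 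For \eqref{contra2} I would use that for a torsion-free connection $(\lie_n\bm\theta)_a=n^b\nablacero_b\theta_a+\theta_b\nablacero_an^b$, so $n^b\nablacero_b\theta_a=\lie_n\theta_a-\theta_b\nablacero_an^b$ with the same correction term $-\theta_b\nablacero_an^b$ already evaluated; this gives \eqref{contra2} immediately. Adding \eqref{contra1} and \eqref{contra2} and using $2n^b\nablacero_{(a}\theta_{b)}=n^b\nablacero_a\theta_b+n^b\nablacero_b\theta_a$ then produces \eqref{nnablacerotheta}.

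Finally, for \eqref{nnablacerothetan} I would contract \eqref{nnablacerotheta} with $n^a$ and divide by two; the symmetrization becomes trivial after the contraction, so the left-hand side is $n^an^b\nablacero_a\theta_b$. On the right, $n^a\lie_n\theta_a$ and $n^a\nablacero_a\big(\bm\theta(n)\big)$ both collapse to $\lie_n\big(\bm\theta(n)\big)$ (since $\lie_nn=0$ and $n^a\nablacero_a$ on a scalar equals $\lie_n$), the term $\bm\theta(n)\,n^a\s_a$ vanishes by antisymmetry of $\bF$, and $n^a\F_{ac}=\s_c$; what is left is $\lie_n\big(\bm\theta(n)\big)-P^{bc}\U_{ac}n^a\theta_b+\ntwo\big(\bm\theta(n)n(\elltwo)+P^{bc}\s_c\theta_b\big)$. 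Substituting \eqref{Un} for $\U_{ac}n^a$ and collecting the $\bs$, $d\elltwo$ and $d\ntwo$ pieces gives \eqref{nnablacerothetan}; in particular the coefficient $2\ntwo$ in front of $\s_c$ arises as the sum of the explicit $\ntwo P^{bc}\s_c\theta_b$ term and the $-\ntwo\bs$ contribution hidden inside $\U_{ac}n^a$ via \eqref{Un}.

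There is no genuine obstacle here, since every step is a direct substitution; the one place demanding care is this last computation, where one must track precisely which slot of $\bU$ and $\bF$ is contracted with $n$ and invoke \eqref{Un} with the correct sign, so that the two $\bs$-contributions add rather than cancel --- this is exactly what fixes the $2\ntwo\s_c$ coefficient in \eqref{nnablacerothetan}.
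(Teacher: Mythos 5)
Your proposal is correct and follows essentially the same route as the paper: the Leibniz rule plus \eqref{derivadan} for \eqref{contra1}--\eqref{contra2}, addition for \eqref{nnablacerotheta}, and a further contraction with $n$ combined with \eqref{Un} for \eqref{nnablacerothetan}. The only cosmetic difference is that you contract \eqref{nnablacerotheta} with $n^a$ and halve, while the paper contracts one of the first two identities directly, which yields the same intermediate expression.
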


The following two lemmas are only needed in the null case, so from now on we restrict to null metric hypersurface data. In the first one we recall a result from \cite{Mio3} where the commutator $[P^{ab},\lie_n]$ acting on a 2-covariant, symmetric tensor field $T_{ab}$ is computed. 
\begin{lema}
	\label{lemalieP}
	Let $\{\mc H,\bg,\bm\ell,\elltwo\}$ be null metric hypersurface data. Then, 
			\begin{equation}
				\label{lienP}
				\lie_n P^{ab} = -2 \left(P^{ac}n^b + P^{bc}n^a\right)2\s_c -2P^{ac}P^{bd}\U_{cd} - n^an^bn\big(\elltwo\big) .
			\end{equation}
		As a consequence, for any (0,2) symmetric tensor field $T_{ab}$ it holds
		\begin{equation}
			\label{lietrPY}
			P^{ab}\lie_n \T_{ab} = \lie_n(\tr_P\bT) + 4P(\bm t,\bs)+2P^{ac}P^{bd}\U_{cd}T_{ab} + n(\elltwo)\bm t(n),
		\end{equation}	
		where $\bm t\d \bT(n,\cdot)$. 
\end{lema}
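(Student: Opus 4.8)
The plan is to obtain \eqref{lienP} directly from the known $\nablacero$-derivatives of the contravariant data $\{P,n\}$, and then deduce \eqref{lietrPY} from it by a Leibniz-rule manipulation.

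First I would write the Lie derivative of the $(2,0)$ tensor $P$ in terms of the torsion-free connection $\nablacero$,
\begin{equation*}
\lie_n P^{ab} = n^c\nablacero_c P^{ab} - P^{cb}\nablacero_c n^a - P^{ac}\nablacero_c n^b .
\end{equation*}
Into the first term I substitute \eqref{derivadaP} with $\ntwo=0$, i.e. $\nablacero_c P^{ab} = -\big(n^aP^{bd}+n^bP^{ad}\big)\F_{cd} - n^an^b(d\elltwo)_c$; contracting with $n^c$ and using $\bs = \bF(n,\cdot)$ produces $-\big(n^aP^{bd}+n^bP^{ad}\big)\s_d - n^an^b\,n(\elltwo)$. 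Into the last two terms I substitute \eqref{derivadan} with $\ntwo=0$, namely $\nablacero_c n^a = \s_c n^a + P^{ad}\U_{dc}$. Collecting the contributions and using the symmetry of $\bU$ to merge the two $\U$-contractions into $-2P^{ac}P^{bd}\U_{cd}$, identity \eqref{lienP} follows.

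For the consequence \eqref{lietrPY} I would apply the Leibniz rule $\lie_n\big(P^{ab}T_{ab}\big) = \big(\lie_n P^{ab}\big)T_{ab} + P^{ab}\lie_n T_{ab}$, so that
\begin{equation*}
P^{ab}\lie_n T_{ab} = \lie_n\big(\tr_P\bT\big) - \big(\lie_n P^{ab}\big)T_{ab}.
\end{equation*}
Now I insert \eqref{lienP}, use the symmetry of $T$ together with the definition $\bm t \d \bT(n,\cdot)$ (so that $n^bT_{ab}=\bm t_a$ and $n^an^bT_{ab}=\bm t(n)$), and recognise $P^{ac}\s_c\bm t_a = P(\bm t,\bs)$; the two $\s$-terms in \eqref{lienP} then combine into $4P(\bm t,\bs)$, the $\U$-term into $2P^{ac}P^{bd}\U_{cd}T_{ab}$, and the last term into $n(\elltwo)\bm t(n)$, which gives \eqref{lietrPY}.

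There is no genuine obstacle here: both identities are immediate consequences of \eqref{derivadaP}–\eqref{derivadan}. The only point requiring a little care is the index bookkeeping in the Lie derivative of the contravariant tensor $P$ and the correct use of the symmetry of $\bU$ (and, in the preliminary step, of the antisymmetry of $\bF$) when collecting terms.
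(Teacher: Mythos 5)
Your route is sound: the paper itself gives no proof of this lemma (it is recalled from \cite{Mio3}), so a direct computation from \eqref{derivadan}--\eqref{derivadaP} is exactly the right self-contained way to establish it, and your Leibniz-rule deduction of \eqref{lietrPY} from \eqref{lienP} is correct (using $\gamma_{ab}n^b=0$ is not even needed there, only the symmetry of $T$ and of $P$).

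One point deserves attention. Carrying out your computation explicitly, the $n^c\nablacero_cP^{ab}$ term contributes $-\big(n^aP^{bd}+n^bP^{ad}\big)\s_d$ and the two $-P^{\cdot c}\nablacero_c n^{\cdot}$ terms contribute another $-\big(n^aP^{bd}+n^bP^{ad}\big)\s_d$, so the total $\bs$-contribution is $-2\big(P^{ac}n^b+P^{bc}n^a\big)\s_c$, \emph{not} the $-2\big(P^{ac}n^b+P^{bc}n^a\big)\,2\s_c$ displayed in \eqref{lienP}. Your coefficient is the correct one: it is the only one compatible with the $4P(\bm t,\bs)$ term in \eqref{lietrPY} (the displayed coefficient would force $8P(\bm t,\bs)$ there), and it is also the coefficient actually used when \eqref{lienP} is substituted in the proof of Lemma \ref{lemagammalie}. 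So the extra factor $2$ in the displayed \eqref{lienP} is a typo in the statement, and you should say so explicitly rather than assert that "identity \eqref{lienP} follows" verbatim; as written, your derivation and your subsequent claim that the $\bs$-terms combine into $4P(\bm t,\bs)$ are internally consistent with the corrected formula but not with the printed one. With that caveat made explicit, the proof is complete.
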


To conclude the appendix we compute the contraction of $\lie_X\gamma_{ab}$ with $n$ when $X$ is either of the form $X^b=P^{bc}\theta_c$ or $X=fn$. 

\begin{lema}
	\label{lemagammalie}
Let $\{\mc H,\bg,\bm\ell,\elltwo\}$ be null metric hypersurface data, $\bm\theta$ any one-form and $f$ any smooth function. Then,
\begin{align}
n^b\lie_{P(\bm\theta,\cdot)}\gamma_{ab} &= \lie_n\theta_a - \lie_n\big(\bm\theta(n)\big) \ell_a -2 \big(\bm\theta(n)\s_a + P^{bc}\U_{ab}\theta_c\big), \label{nliegamma}\\
n^b\lie_{fn}\gamma_{ab} &=0.\label{nliegamma2}
\end{align}
\begin{proof}
We start by noting that $\bg(n,\cdot)=0$ immediately implies for any vector field $X$ 
\begin{align*}
	n^b\lie_X\gamma_{ab} = -\gamma_{ab}\lie_X n^b= \gamma_{ab}\lie_n X^b.
\end{align*}
When $X$ is of the form $X^b=P^{bc}\theta_c$ this becomes
\begin{align*}
n^b\lie_{P(\bm\theta,\cdot)}\gamma_{ab} & = \gamma_{ab}\theta_c\lie_nP^{bc} + \gamma_{ab}P^{bc}\lie_n\theta_c\\
\hspace{-0.1cm}&\st{\eqref{Pgamma}}{=} \gamma_{ab}\theta_c\lie_nP^{bc} + \lie_n\theta_a - \lie_n\big(\theta(n)\big)\ell_a\\
\hspace{-0.2cm}&\st{\eqref{lienP}}{=}-2\gamma_{ab}P^{bd}\big( n^c\theta_c \s_d + P^{cf}\U_{df}\theta_c\big) + \lie_n\theta_a - \lie_n\big(\theta(n)\big)\ell_a,
\end{align*}
which after using again $\gamma_{ab}P^{bd} = \delta^d_a-n^d\ell_a$, $\bs(n)=0$ and $\bU(n,\cdot)=0$ gives \eqref{nliegamma}. When $X=fn$ then $n^b \lie_{fn}\gamma_{ab} =\gamma_{ab} \lie_n (fn^b) = 0$, which is \eqref{nliegamma2}.
\end{proof}
\end{lema}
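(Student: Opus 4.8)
The key structural fact is that for null metric hypersurface data one has $\bg(n,\cdot)=0$, equivalently $\gamma_{ab}n^b=0$, which is precisely \eqref{gamman} with $\ntwo=0$. The plan is to exploit this single identity for both formulas. Applying $\lie_X$ to $\gamma_{ab}n^b=0$ and using the Leibniz rule gives $(\lie_X\gamma_{ab})n^b + \gamma_{ab}\lie_X n^b = 0$, and since $\lie_X n^b = -\lie_n X^b$ one obtains the master relation
\[
n^b\lie_X\gamma_{ab} = \gamma_{ab}\lie_n X^b,
\]
valid for any vector field $X$. Both claimed identities then reduce to computing $\gamma_{ab}\lie_n X^b$ for the two specified choices of $X$.

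For \eqref{nliegamma2} this is immediate: with $X=fn$ one has $\lie_n(fn^b) = n(f)n^b + f\lie_n n^b = n(f)n^b$, and contracting with $\gamma_{ab}$ annihilates this term because $\gamma_{ab}n^b=0$, giving $n^b\lie_{fn}\gamma_{ab}=0$.

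For \eqref{nliegamma} with $X^b = P^{bc}\theta_c$, I would expand $\lie_n X^b = (\lie_n P^{bc})\theta_c + P^{bc}\lie_n\theta_c$ by Leibniz. The second piece I would handle using \eqref{Pgamma} in the form $\gamma_{ab}P^{bc} = \delta^c_a - n^c\ell_a$, which yields $\lie_n\theta_a - \ell_a\, n^c\lie_n\theta_c$; since $\lie_n n = 0$ one recognises $n^c\lie_n\theta_c = \lie_n\big(\bm\theta(n)\big)$, producing the first two terms of the claim. For the first piece I would insert the commutator formula \eqref{lienP} for $\lie_n P^{bc}$ and contract it with $\gamma_{ab}\theta_c$. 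Here the terms of \eqref{lienP} carrying an explicit factor $n^b$ drop out at once upon contraction with $\gamma_{ab}$, and after again using $\gamma_{ab}P^{bd}=\delta^d_a - n^d\ell_a$ the surviving contributions simplify with the help of $\bs(n)=0$ (as $\bs=\bF(n,\cdot)$ is the contraction of an antisymmetric tensor with its own argument) and $\bU(n,\cdot)=0$ (valid in the null case by \eqref{Un}), leaving exactly $-2\big(\bm\theta(n)\s_a + P^{bc}\U_{ab}\theta_c\big)$.

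The computation is entirely mechanical and presents no genuine obstacle. The only point requiring care is the bookkeeping in the contraction of \eqref{lienP} against $\gamma_{ab}\theta_c$, where one must correctly identify which of the several groups of terms survive and repeatedly invoke the null-data identities $\gamma_{ab}n^b=0$, $\bs(n)=0$ and $\bU(n,\cdot)=0$ to discard the rest.
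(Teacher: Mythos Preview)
Your proposal is correct and follows essentially the same approach as the paper: both derive the master relation $n^b\lie_X\gamma_{ab}=\gamma_{ab}\lie_n X^b$ from $\gamma_{ab}n^b=0$, then handle the two cases by Leibniz together with \eqref{Pgamma}, \eqref{lienP}, $\bs(n)=0$ and $\bU(n,\cdot)=0$. The order of presentation and the level of detail differ slightly, but the argument is the same.
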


\end{appendices}

\begingroup
\let\itshape\upshape

\renewcommand{\bibname}{References}
\bibliographystyle{acm}
\bibliography{biblio} 

\end{document}